\theoremstyle{plain}
\newtheorem{theorem}{Theorem}[section]
\newtheorem{proposition}[theorem]{Proposition}
\newtheorem{lemma}[theorem]{Lemma}
\newtheorem{remark}[theorem]{Remark}
\newtheorem{definition}[theorem]{Definition}
\newtheorem{assumption}[theorem]{Assumption}
\theoremstyle{nonumberplain}
\newenvironment{proof}[1][]
{\ifthenelse{\equal{#1}{}}{\smallskip\noindent\textsl{Proof. }}{\smallskip
\noindent\textsl{Proof #1. }}}{\hfill$\Box$}
\def\Hc{{\cal H}}
\def\Fc{{\cal F}}
\def\Gc{{\cal G}}
\def\Dc{{\cal D}}
\def \a{\alpha}
\def \l{\lambda}
\def \f{\varphi}
\def \o{\omega}
\def \O{\Omega}
\begin{document}
\thispagestyle{empty}

\title{Non-concave optimal investment and no-arbitrage: a measure theoretical approach}
\author{Laurence Carassus   \\\footnotesize LMR, Universit\'e Reims Champagne-Ardenne, France
                            \\\footnotesize laurence.carassus@univ-reims.fr
          \and Romain~Blanchard \\\footnotesize LMR, Universit\'e Reims Champagne-Ardenne, France
          \\\footnotesize romain.blanchard@etudiant.univ-reims.fr \and
 Mikl\'os R\'asonyi   \\\footnotesize MTA Alfr\'ed R\'enyi Institute of Mathematics, Hungary
                            \\\footnotesize rasonyi@renyi.mta.hu         }

\maketitle

\begin{abstract}
We consider non-concave and non-smooth random utility functions with domain of definition equal to the non-negative half-line.
 We use a dynamic programming framework together with measurable selection arguments to establish both the no-arbitrage condition
characterization and the existence of an optimal portfolio in a (generically incomplete) discrete-time financial market model with finite time
horizon.
%In contrast to the existing literature, we propose to consider a probability space which is not necessarily complete.
\end{abstract}
\textbf{Key words}: {no-arbitrage condition ; non-concave utility functions; optimal investment} \\
\textbf{AMS 2000 subject classification}: {Primary 93E20, 91B70, 91B16 ; secondary
 91G10, 28B20}\\

\section{Introduction}

We consider investors trading in a multi-asset and discrete-time financial market.
We revisit two classical problems: the characterization of no arbitrage and the maximisation of the  expected  utility
of the terminal wealth of an investor.

We consider a general random, possibly non-concave and non-smooth utility function $U$, defined on the non-negative half-line
(that can be ``$S$-shaped'' but our
results apply to a broader class of utility functions e.g. to piecewise concave ones) and  we provide sufficient conditions  which
guarantee the existence of an optimal strategy. Similar optimization problems constitute an area of intensive study in
recent years, see e.g. \cite{bensoussan} , \cite{hz}, \cite{jz}, \cite{cd}.

We are working in the setting of \cite{CRR} and remove certain restrictive hypothesis of \cite{CRR}.
Furthermore,  we use methods that are different from the ones in \cite{RS05}, \cite{RS06}, \cite{CR14} and \cite{CRR},
where similar multistep problems were treated. In contrast to the existing literature,
we propose to consider a probability space which is not necessarily complete.

%This paper can also be considered as a kind of review, highlighting the problems of non-concave optimisation which arise in finance.

We extend the paper of \cite{CRR} in several directions. First, we  propose an alternative integrability condition
(see Assumption \ref{PhiUX} and Proposition \ref{propufini}) to the rather restrictive one of \cite{CRR} stipulating
that $E^{-} U(\cdot,0)<\infty$. The property $U(0)=-\infty$ holds for a number of important (non-random and concave)
utility functions (logarithm, $-x^{\alpha}$ for $\alpha<0$). It is a rather natural requirement since it expresses the fear
of investor for defaulting ($i.e$ reaching 0). We also introduce  a new (weaker) version of the asymptotic elasticity assumption
(see Assumption \ref{ae}). In particular, Assumption \ref{ae} holds true for concave functions (see Remark \ref{concave}) and
therefore our result extends the one obtained in \cite{RS06} to random utility function and incomplete probability spaces.
Next, we do not require that the value function is finite for all initial wealth as it was postulated in \cite{CRR}; instead we only assumed the less restrictive and more tractable Assumption \ref{uisfinite}.
Finally, instead of using some Carath\'eodory utility function $U$  as in \cite{CRR} ($i.e$ function measurable in $\o$ and continuous in $x$),
we consider function which is  measurable in $\o$ and upper semicontinuous (usc in the rest of the paper) in $x$.  As $U$ is also non-decreasing,
we point out  that this implies that $U$ is jointly measurable in $(\o,x)$. Note that  in the case of complete sigma-algebra -$U$ is then a normal
integrand (see Definition 14.27 in \cite{rw} or Section 3 of Chapter 5 in \cite{Molchanov} as well as Corollary 14.34 in \cite{rw}). This will play an important role in the dynamic programming part to obtain
certain measurability properties.
Allowing non-continuous $U$ is unusual in the financial mathematics literature (though it is common in optimization). We highlight that this
generalisation has a potential to model investor's behaviour which can change suddenly after reaching a desired wealth level. Such a change can
be expressed by a jump of $U$ at the given level.

To solve our optimisation problem, we use dynamic programming as in \cite{RS05}, \cite{RS06}, \cite{CR14} and \cite{CRR} but here we propose a
different approach which provides simpler  proofs. As in \cite{Nutz}, we consider first a one period case with strategy in $\mathbb{R}^{d}$. Then
we use dynamic programming and measurable selection arguments, namely the Aumann Theorem (see, for example,  Corollary 1 in \citet{bv}) to
solve the multi-period problem.  Our modelisation of $(\Omega, \Fc,\mathfrak{F},P)$ is more general than in \cite{Nutz}  as there is only one probability
measure and we don't have to postulate Borel space or analytic sets. We also use the same methodology to reprove classical results on no-arbitrage
characterization (see \cite{RS05} and \cite{JS98}) in our context of possibly incomplete sigma-algebras.

We do not handle the case where the utility is defined on the whole real line (with a similar set of assumptions) as this would have overburdened
the paper. This is left for further research.

The paper is organized as follows: in section \ref{se2} we introduce our setup; section
\ref{secna} contains the main results on no-arbitrage; section \ref{se3} presents the main theorem on terminal wealth
expected  utility  maximisation; section \ref{seone}
establishes the existence of an optimal strategy for the one period case;
we prove our main theorem on utility  maximisation in section \ref{secmulti}.

Finally, section \ref{seannexe} collects some technical results and proofs as well as elements about random sets measurability.

\section{Set-up}\label{se2}

Fix a time horizon $T\in \mathbb{N}$ and let $(\Omega_{t})_{1\leq t\leq T}$ be a sequence of spaces and $(\Gc_t)_{1\leq t\leq T}$ be a sequence of sigma-algebra where $\Gc_t$ is a sigma-algebra on $\Omega_{t}$ for all $t=1, \ldots, T$.   For $t=1, \ldots, T$, we denote by $\Omega^t$ the $t$-fold Cartesian product
$$\Omega^t=\Omega_{1} \times \ldots \times \Omega_{t}.$$
An element of $\Omega^t$ will be denoted by $\o^t=(\o_1,\ldots,\o_t)$ for $(\o_1,\ldots,\o_t) \in \Omega_{1} \times \ldots \times \Omega_{t}$. We also denote by ${\Fc}_t$ the product sigma-algebra on $\Omega^t$
$${\Fc}_t=\Gc_{1} \otimes \ldots \otimes \Gc_{t}.$$
For the sake of simplicity we consider that the state $t=0$ is deterministic and set $\Omega^{0}:=\{\o_{0}\}$ and
$\mathcal{F}_{0}=\mathcal{G}_{0}=\{\emptyset, \Omega^{0}\}$. To avoid heavy notations we will omit the dependency in $\o_{0}$ in the rest of the paper.
We denote by $\mathfrak{F}$ the filtration $(\Fc_{t})_{0\leq t\leq T}$.

Let $P_1$ be a probability measure on $\Fc_1$ and  $q_{t+1}$ be a stochastic kernel on $\Gc_{t+1} \times \Omega^{t}$ for $t=1,\ldots, T-1$.  Namely we assume that for all $\o^t \in \Omega^t$, $B \in \Gc_{t+1} \to q_{t+1}(B|\o^t)$ is a probability measure on $\Gc_{t+1}$ and
for all $B \in \Gc_{t+1}$, $\o^t \in \Omega^t \to q_{t+1}(B|\o^t)$ is $\Fc_t$-measurable.
Here we DO NOT assume that $\Gc_1$ contains the null sets of $P_1$ and that $\Gc_{t+1}$ contains the null sets of $q_{t+1}(.|\o^t)$ for
all  $\o^t \in \Omega^t$.
Then we define for  $A \in \Fc_t$ the probability $P_t$  by Fubini's Theorem for stochastic kernel (see Lemma \ref{fubini0}).
\begin{eqnarray}
\label{decompoP}
P_t(A)=\int_{\Omega_{1}}\int_{\Omega_{2}}\cdots\int_{\Omega_{t}}1_{A}(\omega_{1}, \ldots, \omega_{t})q_{t}(d\omega_{t}|\omega^{t-1})\cdots q_{2}(d\omega_{2}|\omega^{1})P_{1}(d\omega_{1}).
\end{eqnarray}
Finally $(\Omega, \Fc, \mathfrak{F},P):=(\Omega^{T}, \Fc_{T}, \mathfrak{F},P_T)$ will be our basic measurable space. The expectation under $P_t$ will  be denoted by $E_{P_t}$ ; when $t=T$, we simply write $E$.
\begin{remark}
If we choose for $\Omega$ some Polish space, then any probability measure $P$ can be decomposed in the form of
\eqref{decompoP} (see the measure decomposition theorem in \cite{dm}  III.70-7).
\end{remark}
From now on the positive (resp. negative) part of some number or
random variable $X$ is denoted by $X^+$ (resp. $X^-$). We will also write $f^{\pm}(X)$ for $\left(f(X)\right)^{\pm}$ for any random
variable $X$ and (possibly random) function $f$.\\
In the rest of the paper we will use generalised integral: for some $f_{t}: \Omega^{t} \to \mathbb{R} \cup \{\pm \infty\}$, $\mathcal{F}_{t}$-measurable, such that  $\int_{\Omega^{t}} f^{+}_{t}(\o^{t})P_{t}(d\o^{t})<\infty$ or  $\int_{\Omega^{t}} f^{-}_{t}(\o^{t})P_{t}(d\o^{t})<\infty$, we define
$$\int_{\Omega^{t}} f_{t}(\o^{t})P_{t}(d\o^{t}):= \int_{\Omega^{t}} f^{+}_{t}(\o^{t})P_{t}(d\o^{t})-\int_{\Omega^{t}} f^{-}_{t}(\o^{t})P_{t}(d\o^{t}),$$
where the equality holds in $\mathbb{R} \cup \{\pm \infty\}$.
We refer to Lemma \ref{fubini0}, Definition \ref{DefInt} and Proposition \ref{fubinirem} of the Appendix for more details and properties. In particular, if $f_{t}$ is non-negative or if $f_{t}$ is such that  $\int_{\Omega^{t}} f^{+}_{t}(\o^{t})P_{t}(d\o^{t})<\infty$ (this will be the two cases  of interest in the paper) we can apply Fubini's Theorem  \footnote{From now, we call Fubini's theorem the Fubini theorem for stochastic kernel (see eg  Lemma \ref{fubini0}, Proposition \ref{fubinirem}).} and we have
$$\int_{\Omega^{t}} f_{t}(\o^{t})P_{t}(d\o^{t})=\int_{\Omega_{1}}\int_{\Omega_{2}}\cdots\int_{\Omega_{t}}f_{t}(\omega_{1}, \ldots, \omega_{t})q_{t}(d\omega_{t}|\omega^{t-1})\cdots q_{2}(d\omega_{2}|\omega^{1})P_{1}(d\omega_{1}),$$
where the equality holds in $[0,\infty]$ if $f_{t}$ is non-negative  and in $[-\infty,\infty)$ if $\int_{\Omega^{t}} f^{+}_{t}(\o^{t})P_{t}(d\o^{t})<\infty$.\\
Finally, we give some notations about completion of the probability space $(\Omega^t, \Fc_t,P_t)$ for some $t \in \{1,\ldots,T\}$. We will denote by ${\cal N}_{P_{t}}$ the set of $P_t$ negligible sets of $\Omega^t$ $i.e$ ${\cal N}_{P_{t}}=\{N \subset \Omega^t, \; \exists M \in \Fc_t, \; N \subset M \mbox{ and } P_t(M)=0\}$. Let $\overline{\Fc}_{t}=\{A \cup N, A \in \mathcal{F}_{t}, N \in {\cal N}_{P_{t}}\}$ and
$\overline{P}_{t}(A\cup N)={P}_{t}(A)$ for $A\cup N \in \overline{\Fc}_{t}$. Then it is well known that $\overline{P}_{t}$ is a measure on $\overline{\Fc}_{t}$ which coincides with ${P}_{t}$ on ${\Fc}_{t}$, that $(\Omega^t, \overline \Fc_t,\overline P_t)$ is a complete probability space and that $\overline P_t$ restricted to ${\cal N}_{P_{t}}$ is equal to zero.\\

For $t=0,\ldots, T-1$, let $\Xi_t$ be  the set of $\Fc_t$-measurable random variables mapping $\Omega^t$ to $\mathbb{R}^d$.\\

The following  lemma makes the link between conditional expectation and kernel. To do that, we introduce $\Fc^T_t$, the filtration on $\Omega^{T}$ associated to $\Fc_t$,   defined by
$$\Fc^T_t=\Gc_{1} \otimes \ldots \otimes \Gc_{t} \otimes \{\emptyset,\Omega_{t+1} \} \ldots \otimes \{\emptyset,\Omega_{T} \}.$$
Let $\Xi^T_t$ be the set of $\Fc^T_t$-measurable random variables from $\Omega^T$ to $\mathbb{R}^d$.
Let $X_t$ $:$ $\Omega^T$ $ \to$ $\Omega_t$, $X_t(\o_1,\ldots,\o_T)=\o_t$ be the  coordinate mapping corresponding to $t$.
Then $\Fc^T_t=\sigma(X_1,\ldots,X_t)$. So $h \in \Xi^T_t$ if and only if
there exists some $g  \in \Xi_t$ such that
$h=g(X_1,\ldots,X_t)$. This implies that $h(\o^T)=g(\o^t)$. For ease of notation we will identify $h$ and $g$ and also $\mathcal{F}_{t}$, ${\mathcal{F}}^T_{t}$, $\Xi_t$ and $\Xi^T_t$.
\begin{lemma}
\label{lienespcond}
Let $0\leq s \leq t \leq T$. Let $h \in \Xi_t$ such that $\int_{\O^{t}} h^{+}dP_{t}<\infty$ then
\begin{eqnarray*}
E(h |\Fc_s) & = & \f(X_1,\ldots,X_s) \, P_s \ a.s.\\
\f(\o_{1},\ldots,\o_s) & = & \int_{\Omega_{s+1} \times \ldots \times \Omega_{t}} h(\o_1,\ldots,\o_s,\o_{s+1},\ldots\o_t)q_{t}(\o_{t}|\o^{t-1}) \ldots q_{s+1}(\o_{s+1}|\o^s).
\end{eqnarray*}
\end{lemma}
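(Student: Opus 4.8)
The plan is to verify directly that $\varphi(X_1,\ldots,X_s)$ satisfies the two defining properties of the conditional expectation $E(h|\Fc_s)$ under $P=P_T$: that it is $\Fc_s$-measurable, and that $\int_A \varphi\,dP = \int_A h\,dP$ for every $A\in\Fc_s$. Throughout I would exploit the identification, set up just before the lemma, between functions on $\Omega^t$ and their lifts to $\Omega^T$ (equivalently between $\Fc_s$ and $\Fc^T_s$), which is what lets the conditional expectation on $\Omega^T$ be realised through kernels acting only on the coordinates $s+1,\ldots,t$.

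First I would make sense of $\varphi$ and establish its measurability. Since $\int_{\Omega^t} h^+\,dP_t<\infty$, I apply Fubini's theorem (Lemma \ref{fubini0}) to the non-negative function $h^+$ to get $\int_{\Omega^t} h^+\,dP_t = \int_{\Omega^s}\big[\int_{\Omega_{s+1}\times\cdots\times\Omega_t} h^+ q_t\cdots q_{s+1}\big]\,dP_s$; finiteness of the left-hand side forces the inner integral $\varphi^+(\omega^s):=\int_{\Omega_{s+1}\times\cdots\times\Omega_t} h^+ q_t\cdots q_{s+1}$ to be finite for $P_s$-almost every $\omega^s$. Setting likewise $\varphi^-(\omega^s):=\int_{\Omega_{s+1}\times\cdots\times\Omega_t} h^- q_t\cdots q_{s+1}\in[0,\infty]$, the generalised integral defining $\varphi=\varphi^+-\varphi^-$ is well-defined $P_s$-a.s. with values in $[-\infty,\infty)$ (see Definition \ref{DefInt}). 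Both $\varphi^+$ and $\varphi^-$ are $\Fc_s$-measurable, because integrating a non-negative $\Fc_t$-measurable function against the kernels $q_{s+1},\ldots,q_t$ produces an $\Fc_s$-measurable function — this is exactly the measurability statement built into the stochastic-kernel Fubini theorem (Lemma \ref{fubini0}); hence $\varphi$, and therefore $\varphi(X_1,\ldots,X_s)$, is $\Fc_s$-measurable.

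Next I would check the averaging property. Fix $A\in\Fc_s$, so that $1_A$ depends only on $\omega^s$. Working on $h^+$ first and applying Fubini to the non-negative integrand $1_A h^+$, I peel the iterated integral from the outside: the integrations over $\omega_{t+1},\ldots,\omega_T$ each contribute a factor $1$ since every $q_j(\cdot|\omega^{j-1})$ is a probability measure and neither $1_A$ nor $h^+$ depends on these coordinates, reducing $\int_{\Omega^T} 1_A h^+\,dP$ to $\int_{\Omega^t} 1_A h^+\,dP_t$. Pulling $1_A(\omega^s)$ out of the inner integrations over $\omega_{s+1},\ldots,\omega_t$, on which it does not depend, leaves $\int_{\Omega^s} 1_A\varphi^+\,dP_s=\int_A\varphi^+\,dP$. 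The identical computation applies verbatim to $h^-$ and $\varphi^-$. Subtracting the two identities — legitimate precisely because the $h^+$ side is finite by hypothesis — yields $\int_A h\,dP=\int_A\varphi\,dP$ for every $A\in\Fc_s$, via the generalised-integral version of Fubini (Proposition \ref{fubinirem}). Since $\varphi(X_1,\ldots,X_s)$ is $\Fc_s$-measurable and integrates to the same value as $h$ over every such $A$, it is a version of $E(h|\Fc_s)$.

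The main delicacy I anticipate is the bookkeeping with the generalised integral when $h^-$ fails to be $P_t$-integrable: one must keep $h^+$ and $h^-$ separate throughout, invoke the Tonelli-type form of Fubini for each non-negative part, and only recombine at the very end, using the finiteness of $\int_{\Omega^t} h^+\,dP_t$ to guarantee that no $\infty-\infty$ arises and that $\varphi$ genuinely takes values in $[-\infty,\infty)$. A secondary, purely notational point is to keep track of the lifts between $\Omega^s$, $\Omega^t$ and $\Omega^T$ so that each appeal to Fubini is made at the correct level.
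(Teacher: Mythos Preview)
Your proposal is correct and follows essentially the same route as the paper: establish that $\varphi$ is well-defined and $\Fc_s$-measurable via the stochastic-kernel Fubini theorem, then verify the averaging property of conditional expectation by pushing the iterated integration through with Fubini/Proposition~\ref{fubinirem}. The only cosmetic difference is that the paper tests against all non-negative $\Fc_s$-measurable $g$ with $E(gh)$ well-defined while you test against indicators $1_A$, and the paper appeals to an external reference for the generalised conditional expectation whereas you handle the $h^+/h^-$ bookkeeping explicitly; both are equivalent and your more explicit treatment is perfectly adequate.
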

\begin{proof}
For the sake of completeness, the proof is reported  in Section \ref{proofofres} of the Appendix.
\end{proof}\\

Let $\{S_t,\ 0\leq t\leq T\}$ be a $d$-dimensional $\Fc_t$-adapted process
representing the  price of $d$ risky securities in the
financial market in consideration.
There exists also a riskless
asset for which we assume a constant price  equal to $1$, for the sake of simplicity. Without this assumption, all the developments
below could be carried out using discounted prices.
The notation $\Delta
S_t:=S_t-S_{t-1}$ will often be used.
If $x,y\in\mathbb{R}^d$ then
the concatenation $xy$ stands for their scalar product. The symbol $|\cdot|$ denotes the Euclidean norm
on $\mathbb{R}^d$ (or on $\mathbb{R})$.

Trading
strategies are represented by $d$-dimensional predictable
processes $(\phi_t)_{1\leq t\leq T}$, where $\phi_t^i$ denotes the
investor's holdings in asset $i$ at time $t$; predictability means
that $\phi_t\in\Xi_{t-1}$. The family of all predictable trading
strategies is denoted by $\Phi$.

We assume that trading is self-financing. As the riskless asset's price is constant $1$, the value at time $t$ of a portfolio $\phi$ starting from
initial capital $x\in\mathbb{R}$ is given by
$$
V^{x,\phi}_t=x+\sum_{i=1}^t  \phi_i \Delta S_i.
$$

\section{No-arbitrage condition}
\label{secna}
The following absence of arbitrage condition or NA condition is standard, it is
equivalent to the existence of a risk-neutral measure in discrete-time markets with finite horizon, see e.g. \cite{dmw}.

\medskip

(NA) {\em If $V^{0,\phi}_T\geq 0$ $P$-a.s. for some $\phi \in\Phi$ then
$V^{0,\phi}_T=0$ $P$-a.s.}

\medskip

\begin{remark}
It is proved in Proposition 1.1 of \citet{RS06} that (NA) is equivalent to the no-arbitrage assumption which stipulates that no investor should be allowed to make a profit out of nothing and without risk, even with a budget constraint:
for all $x_{0}\geq0$  if $\phi\in\Phi$ is such that with $V_T^{x_{0},\phi} \geq x_{0}$  a.s., then $V_T^{x_{0},\phi} =x_{0}$ a.s.
\end{remark}
We now provide classical tools and results about the (NA) condition and its ``concrete"  local characterization, see Proposition \ref{AOAmulti}, that we will use in the rest of the paper. We start with the  set $D^{t+1}$ (see Definition \ref{defrs}) where $D^{t+1}(\o^{t})$ is the smallest affine subspace of $\mathbb{R}^{d}$ containing the support of the distribution of $\Delta S_{t+1}(\o^{t},.)$ under $q_{t+1}(.|\o^{t})$. If $D^{t+1}(\o^{t}) = \mathbb{R}^d$ then,
intuitively, there are no redundant assets. Otherwise, for $\phi_{t+1} \in \Xi_{t}$, one may
always replace $\phi_{t+1}(\o^{t},\cdot)$ by its orthogonal projection
$\phi^{\perp}_{t+1}(\o^{t},\cdot)$ on $D^{t+1}(\o^{t})$ without changing the portfolio value since
$\phi_{t+1}(\o^{t}) \Delta S_{t+1}(\o^{t},\cdot)=\phi^{\perp}_{t+1} (\o^{t}) \Delta S_{t+1}(\o^{t},\cdot)$, $q_{t+1}(\cdot|\o^t)$ a.s., see Remark \ref{proj} and Lemma \ref{CfNutz}  below as well as Remark 9.1 of \cite{fs}.

\begin{definition}
\label{defrs}
Let $(\Omega, \mathcal{F})$ be a measurable space and $(T, \mathcal{T})$ a topological space.
A random set $R$ is a set valued function that assigns to each $\o \in \Omega$ a subset  $R(\omega)$ of $T$. We write $R: \Omega \twoheadrightarrow T$. We say that $R$ is measurable if for any open set  $O \in T$  $\{\omega \in \O,\; R(\omega) \cap O \neq \emptyset\} \in \mathcal{F}$.
\end{definition}
\begin{definition}
\label{DefD}
Let  $0 \leq t \leq T$ be fixed.
We define the random set (see Definition \ref{defrs}) $\widetilde{D}^{t+1} : \Omega^{t} \twoheadrightarrow \mathbb{R}^{d}$  by
\begin{align}
\label{defd1}
\widetilde{D}^{t+1}(\o^{t}):=\bigcap \left\{ A \subset \mathbb{R}^{d},\; \mbox{closed}, \; q_{t+1}\left(\Delta S_{t+1}(\o^{t},.) \in A|\o^{t})=1\right)\right\}.
\end{align}
For $\o^t \in {\O}^t$, $\widetilde{D}^{t+1}(\o^t) \subset \mathbb{R}^d$ is the support of the distribution of $\Delta S_{t+1}(\o^t, \cdot)$ under $q_{t+1}(\cdot|\o^t)$. We also define the random set  $D^{t+1} : \Omega^{t} \twoheadrightarrow \mathbb{R}^{d}$ by
\begin{align}
\label{defd2}
D^{t+1}(\o^{t}):= \mbox{Aff} \left( \widetilde{D}^{t+1}(\o^{t}) \right),
\end{align}
where $\mbox{Aff}$ denotes the affine hull of a set.
\end{definition}
The following lemma establishes some important properties of  $\widetilde{D}^{t+1}$ and $D^{t+1}$ and in particular $Graph(D^{t+1}) \in \mathcal{F}_{t} \otimes \mathcal{B}(\mathbb{R}^{d})$. This result will be central in the proof of most of our results.
\begin{lemma}
\label{Dmeasurability}
Let  $0 \leq t \leq T$ be fixed. Let $\widetilde{D}^{t+1} : \Omega^{t} \twoheadrightarrow \mathbb{R}^{d}$ and $D^{t+1} : \Omega^{t} \twoheadrightarrow \mathbb{R}^{d}$ be the random sets defined in \eqref{defd1} and \eqref{defd2} of Definition \ref{DefD}.
Then $\widetilde{D}^{t+1}$ and $D^{t+1}$ are  both non-empty, closed-valued and  $\mathcal{F}_{t}$-measurable random sets (see Definition \ref{defrs}). In  particular, $Graph(D^{t+1}) \in \mathcal{F}_{t} \otimes \mathcal{B}(\mathbb{R}^{d})$.
%Furthermore $\widetilde{D}^{t+1}$ and $D^{t+1}$ are measurable and strongly-measurable  with respect to $\overline{\mathcal{F}}_{t}$.
\end{lemma}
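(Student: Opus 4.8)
The plan is to establish all the assertions for $\widetilde{D}^{t+1}$ first and then transfer them to $D^{t+1}=\mathrm{Aff}(\widetilde{D}^{t+1})$. For non-emptiness and closedness, observe that $\widetilde{D}^{t+1}(\o^t)$ is by construction an intersection of closed subsets of $\mathbb{R}^d$, hence closed, and it is non-empty because it is the support of a probability measure on the second-countable space $\mathbb{R}^d$: were it empty, every point of $\mathbb{R}^d$ would admit an open neighbourhood of $q_{t+1}(\cdot|\o^t)$-measure zero, and by the Lindel\"of property countably many such neighbourhoods would cover $\mathbb{R}^d$, forcing $q_{t+1}(\mathbb{R}^d|\o^t)=0$. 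Consequently $D^{t+1}(\o^t)$ is a non-empty affine subspace of $\mathbb{R}^d$, and affine subspaces of $\mathbb{R}^d$ are automatically closed, so $D^{t+1}$ is closed-valued.

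For the measurability of $\widetilde{D}^{t+1}$ I would use the elementary support characterisation: for any open $O\subset\mathbb{R}^d$ one has $\widetilde{D}^{t+1}(\o^t)\cap O\neq\emptyset$ if and only if $q_{t+1}(\Delta S_{t+1}(\o^t,\cdot)\in O\,|\,\o^t)>0$. Since $\Delta S_{t+1}$ is $\mathcal{F}_{t+1}$-measurable and $O$ is Borel, the set $\{(\o^t,\o_{t+1}):\Delta S_{t+1}(\o^t,\o_{t+1})\in O\}$ lies in $\mathcal{F}_t\otimes\mathcal{G}_{t+1}$; integrating its indicator against the kernel $q_{t+1}$ and invoking the kernel-measurability part of the Appendix (Lemma \ref{fubini0}) shows that $\o^t\mapsto q_{t+1}(\Delta S_{t+1}(\o^t,\cdot)\in O\,|\,\o^t)$ is $\mathcal{F}_t$-measurable. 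Hence $\{\o^t:\widetilde{D}^{t+1}(\o^t)\cap O\neq\emptyset\}\in\mathcal{F}_t$, which is exactly measurability in the sense of Definition \ref{defrs}.

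Because $\widetilde{D}^{t+1}$ is measurable and closed-valued in the Polish space $\mathbb{R}^d$, the Kuratowski--Ryll-Nardzewski selection theorem yields a Castaing representation: countably many $\mathcal{F}_t$-measurable maps $d_n:\Omega^t\to\mathbb{R}^d$ with $\widetilde{D}^{t+1}(\o^t)=\overline{\{d_n(\o^t):n\geq1\}}$. As the affine hull of a set and of its closure coincide in $\mathbb{R}^d$, we get $D^{t+1}(\o^t)=\mathrm{Aff}\{d_n(\o^t):n\geq1\}$. I would then manufacture a measurable affine frame: set $a=d_1$ and, from the increments $d_n-d_1$, select greedily by $\mathcal{F}_t$-measurable index maps a maximal linearly independent subfamily $b_1,\dots,b_{r(\o^t)}$, each selection step being governed by a measurable rank (determinant) condition on the vectors chosen so far. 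The countable family of rational affine combinations $g_m(\o^t)=d_1(\o^t)+\sum_i\lambda_i b_i(\o^t)$, $\lambda_i\in\mathbb{Q}$, is then a Castaing representation of $D^{t+1}$. This immediately gives both conclusions: measurability as a random set, since $\{\o^t:D^{t+1}(\o^t)\cap O\neq\emptyset\}=\bigcup_m\{\o^t:g_m(\o^t)\in O\}\in\mathcal{F}_t$ for open $O$, and graph measurability, since $\mathrm{Graph}(D^{t+1})=\{(\o^t,x):\inf_m|x-g_m(\o^t)|=0\}\in\mathcal{F}_t\otimes\mathcal{B}(\mathbb{R}^d)$.

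The main obstacle is the construction of the affine frame in the third step: the dimension $r(\o^t)$ of $D^{t+1}(\o^t)$ is not constant, so neither a basis nor the affine hull depends continuously on $\o^t$, and one cannot simply take limits. The resolution is that linear (in)dependence of measurable vectors is governed by measurable determinant/rank conditions, so the greedy selection of indices is legitimate and produces measurable $b_i$ on each of the (measurable) pieces $\{r(\o^t)=j\}$, $j\leq d$. An equivalent way to reach the graph statement, bypassing the frame, is to write $D^{t+1}(\o^t)=\bigcup_{F}\mathrm{Aff}\{d_n(\o^t):n\in F\}$ over finite $F\subset\mathbb{N}$ and to note that $y\in\mathrm{Aff}(a_1,\dots,a_k)$ is the Borel rank equality $\mathrm{rank}[a_2-a_1,\dots,a_k-a_1,\,y-a_1]=\mathrm{rank}[a_2-a_1,\dots,a_k-a_1]$. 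I emphasise that none of these arguments invokes completeness of the underlying probability space, in keeping with the standing generality of the paper.
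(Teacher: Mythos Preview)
Your proof is correct and follows the same overall architecture as the paper: you show measurability of $\widetilde{D}^{t+1}$ via the support characterisation $\{\widetilde{D}^{t+1}\cap O\neq\emptyset\}=\{q_{t+1}(\Delta S_{t+1}\in O\,|\,\cdot)>0\}$, then pass to $D^{t+1}$ through a Castaing representation. The paper does exactly this, citing the Castaing representation theorem (Theorem 14.5 in \cite{rw}) for the selectors and Theorem 14.8 in \cite{rw} for the graph conclusion.

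The one place where your route is noticeably heavier than the paper's is the passage from $\widetilde{D}^{t+1}$ to its affine hull. You build a measurable affine frame by a greedy rank-based selection of the $d_n-d_1$, stratifying $\Omega^t$ by the dimension $r(\o^t)$, and only then form rational combinations. The paper bypasses all of this: once $(f_n)_{n\ge1}$ is a Castaing representation of $\widetilde{D}^{t+1}$, it simply observes that
\[
D^{t+1}(\o^t)=\overline{\Bigl\{f_1(\o^t)+\sum_{i=2}^{p}\lambda_i\bigl(f_i(\o^t)-f_1(\o^t)\bigr):(\lambda_2,\dots,\lambda_p)\in\mathbb{Q}^{p-1},\ p\ge2\Bigr\}},
\]
which is already a countable family of $\mathcal{F}_t$-measurable selectors dense in $D^{t+1}(\o^t)$, with no need to control the dimension or to extract a basis. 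Your frame construction works, but the paper's shortcut shows that the ``main obstacle'' you flag is in fact not an obstacle at all: taking all rational affine combinations of all selectors at once makes the varying dimension irrelevant. Your alternative rank-equality description of the graph is likewise correct but unnecessary once a Castaing representation of $D^{t+1}$ is in hand.
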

\begin{proof}
The proof is reported in  Section \ref{proofofres} of the Appendix.
\end{proof}\\

In Lemma \ref{localVectorSpace}, which is used in the proof of Lemma \ref{localNA} for projection purposes,  we obtain a well-know result~:  for $\o^{t} \in \Omega^{t}$ fixed and  under a local version of (NA), $D^{t+1}(\o^t)$ is a vector subspace of $\mathbb{R}^{d}$ (see for instance Theorem 1.48 of  \cite{fs}). Then in  Lemma \ref{localNA} we prove that under the (NA) assumption,  for $P_{t}$ almost all $\o^{t}$, $D^{t+1}(\o^{t})$ is a vector subspace of $\mathbb{R}^{d}$. We also provide a local version of the (NA) condition (see \eqref{EqlocalNA2}). Note that Lemma \ref{localNA} is a direct consequence of  Proposition 3.3 in \cite{RS05} combined with Lemma \ref{lienespcond}  (see Remark \ref{RSOA}).  We propose   alternative proofs of  Lemmata \ref{localVectorSpace} and \ref{localNA} which are coherent with our framework and our methodology.
\begin{lemma}
\label{localVectorSpace}
Let  $0 \leq t \leq T$ and $\o^{t} \in \Omega^{t}$ be fixed. Assume that  for all $h \in D^{t+1}(\o^{t}) \backslash\{0\}$ $$q_{t+1}(h\Delta S_{t+1}(\o^t,\cdot)\geq 0|\o^t)<1.$$
Then $0 \in D^{t+1}(\o^{t})$ and the set $D^{t+1}(\o^{t})$ is actually a vector subspace of $\mathbb{R}^d$.
\end{lemma}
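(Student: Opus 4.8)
The plan is to exploit the fact that, by construction, $D^{t+1}(\o^t)=\mbox{Aff}(\widetilde{D}^{t+1}(\o^t))$ is an affine subspace of $\mathbb{R}^d$, and it is nonempty and closed because $\widetilde{D}^{t+1}(\o^t)$ is so by Lemma \ref{Dmeasurability}. Since an affine subspace of $\mathbb{R}^d$ is a vector subspace if and only if it contains the origin, the entire statement reduces to proving $0\in D^{t+1}(\o^t)$: once this is established the vector-space conclusion is immediate, as $D^{t+1}(\o^t)$ is then an affine set through $0$, hence equals its own direction subspace. So I would concentrate all the work on showing $0\in D^{t+1}(\o^t)$.

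To do this I would argue by contradiction. Write $D:=D^{t+1}(\o^t)$ and assume $0\notin D$. As $D$ is a nonempty closed affine subspace, let $h^{*}$ be the orthogonal projection of $0$ onto $D$; then $h^{*}\in D$ and necessarily $h^{*}\neq 0$, for otherwise $0=h^{*}\in D$. Writing $D=h^{*}+L$ with $L$ the (genuine) direction subspace of $D$, the projection is characterised by $h^{*}\perp L$, so that for every $y=h^{*}+\ell\in D$ with $\ell\in L$ one has $h^{*}y=|h^{*}|^2+h^{*}\ell=|h^{*}|^2>0$, using the paper's convention that $h^{*}y$ denotes the scalar product. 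In particular this holds for every $y$ in the support, since $\widetilde{D}^{t+1}(\o^t)\subset D$.

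Finally I would transfer this to the kernel. By definition of $\widetilde{D}^{t+1}(\o^t)$ in \eqref{defd1} as the smallest closed set of full measure, we have $q_{t+1}(\Delta S_{t+1}(\o^t,\cdot)\in\widetilde{D}^{t+1}(\o^t)\,|\,\o^t)=1$, and on this full-measure event the previous step gives $h^{*}\Delta S_{t+1}(\o^t,\cdot)=|h^{*}|^2>0$. Hence $q_{t+1}(h^{*}\Delta S_{t+1}(\o^t,\cdot)\geq 0\,|\,\o^t)=1$, which, since $h^{*}\in D^{t+1}(\o^t)\setminus\{0\}$, contradicts the standing hypothesis. This forces $0\in D^{t+1}(\o^t)$, and $D^{t+1}(\o^t)$ is a vector subspace. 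The only point requiring care is the projection geometry: one should note that the variational characterisation of the projection onto the \emph{convex} set $D$ already yields $h^{*}y\geq|h^{*}|^2>0$ for all $y\in D$, which is all that is needed, so the affine structure is used only to guarantee that $h^{*}$ is well defined and lands in $D$; the nonemptiness, closedness and measurability of the support are supplied entirely by Lemma \ref{Dmeasurability}, so no additional topological argument is required.
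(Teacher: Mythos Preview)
Your proof is correct and considerably more direct than the paper's. The key observation you make---that $D^{t+1}(\o^t)$ is by construction an affine subspace, hence closed and convex, so one may project $0$ onto it and read off $h^{*}y=|h^{*}|^{2}>0$ for every $y\in D$ via the variational characterisation of the projection---settles the lemma in one stroke.

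The paper takes a longer route: it works with the closed convex hull $C^{t+1}(\o^t)=\overline{\mbox{Conv}}(\widetilde{D}^{t+1}(\o^t))$ rather than the affine hull, and splits into two cases. In the first case (no nonzero $h$ in all of $\mathbb{R}^d$ satisfies $q_{t+1}(h\Delta S_{t+1}\geq 0|\o^t)=1$) it shows the polar cone of $C^{t+1}(\o^t)$ is $\{0\}$, whence $\mbox{cone}(C^{t+1}(\o^t))=\mathbb{R}^d$, and recovers $0$ as a convex combination of $v_1,v_2\in C^{t+1}(\o^t)$ lying on opposite rays. In the second case it projects $\lambda h_0$ onto $C^{t+1}(\o^t)$ and passes to the limit $\lambda\to 0$ to obtain $p(0)\Delta S_{t+1}\geq |p(0)|^2$ $q_{t+1}$-a.s. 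Your argument shows that neither the case distinction nor the limiting procedure is needed: projecting $0$ directly onto the affine hull (or even onto $C^{t+1}(\o^t)$, for that matter) already gives the required inequality by the standard variational characterisation. What your approach buys is brevity and transparency; what the paper's approach buys is, at most, the marginally stronger conclusion $0\in C^{t+1}(\o^t)$, which is not used elsewhere.
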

\begin{proof}
The proof is reported in  Section \ref{proofofres} of the Appendix.
\end{proof}\\

\begin{lemma}
\label{localNA}
Assume that the (NA) condition holds true. Then for all $0\leq t\leq T-1$, there exists a full measure set $\Omega_{NA1}^{t}$ such that for all $\omega^{t} \in \Omega_{NA1}^{t}$, $0 \in D^{t+1}(\o^t)$, $i.e$ $D^{t+1}(\o^t)$ is a vector space of  $\mathbb{R}^d$.
Moreover, for all $\omega^{t} \in \Omega_{NA1}^{t}$ and all $h \in \mathbb{R}^{d}$ we get that
\begin{align}
\label{EqlocalNA}
q_{t+1}(h\Delta S_{t+1}(\o^t,\cdot)\geq 0|\o^t)=1 &\Rightarrow q_{t+1}(h\Delta S_{t+1}(\o^t,\cdot)= 0|\o^t)=1.
\end{align}
In particular, if $\omega^{t} \in \Omega_{NA1}^{t}$ and $h \in D^{t+1}(\o^{t})$ we obtain that
\begin{align}
q_{t+1}(h\Delta S_{t+1}(\o^t,\cdot)\geq 0|\o^t)=1 &\Rightarrow h=0.
\label{EqlocalNA2}
 \end{align}
\end{lemma}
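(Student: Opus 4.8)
The plan is to prove the implication \eqref{EqlocalNA} first, for a fixed $t$, and to show that it can fail only on a $P_{t}$-negligible set of $\o^{t}$; the identity \eqref{EqlocalNA2} and the vector-space property will then follow by a purely deterministic argument combined with Lemma \ref{localVectorSpace}. The guiding idea is that the failure of \eqref{EqlocalNA} at $\o^{t}$ is precisely a \emph{local} (one-step) arbitrage, and that if such local arbitrages existed on a set of positive measure they could be glued into a single predictable strategy realising a \emph{global} arbitrage, contradicting (NA). Concretely, I would introduce
\[
G=\Big\{(\o^{t},h)\in\Omega^{t}\times\mathbb{R}^{d}:\ |h|=1,\ q_{t+1}\big(h\Delta S_{t+1}(\o^{t},\cdot)\geq 0\,\big|\,\o^{t}\big)=1,\ q_{t+1}\big(h\Delta S_{t+1}(\o^{t},\cdot)>0\,\big|\,\o^{t}\big)>0\Big\}
\]
together with its projection $A_{t}=\{\o^{t}\in\Omega^{t}:\ \exists\,h,\ (\o^{t},h)\in G\}$. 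Since $h=0$ can never belong to a fibre of $G$ and the defining conditions are invariant under positive rescaling of $h$, the negation of \eqref{EqlocalNA} at $\o^{t}$ (for some $h\neq 0$, after normalisation) is exactly $\o^{t}\in A_{t}$. Hence it suffices to prove $P_{t}(A_{t})=0$ and to take $\Omega_{NA1}^{t}$ to be the complement of a measurable null superset of $A_{t}$.

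The step I expect to be the main obstacle is measurability and selection. The crux is that $(\o^{t},h)\mapsto q_{t+1}(h\Delta S_{t+1}(\o^{t},\cdot)\geq 0|\o^{t})$ and its analogue with ``$>0$'' are $\mathcal{F}_{t}\otimes\mathcal{B}(\mathbb{R}^{d})$-measurable. This should follow because $(\o^{t},h,\o_{t+1})\mapsto h\Delta S_{t+1}(\o^{t},\o_{t+1})$ is jointly measurable ($\Delta S_{t+1}$ being $\mathcal{F}_{t+1}$-measurable and linear, hence continuous, in $h$), so the relevant indicators are jointly measurable, and the integral of a jointly measurable indicator against the kernel $q_{t+1}(\cdot|\o^{t})$ is again jointly measurable by the Fubini machinery for stochastic kernels (Lemma \ref{fubini0} and Proposition \ref{fubinirem}). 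Consequently $G\in\mathcal{F}_{t}\otimes\mathcal{B}(\mathbb{R}^{d})$. The projection theorem then yields $A_{t}\in\overline{\mathcal{F}}_{t}$, and the Aumann selection theorem (Corollary 1 in \citet{bv}) provides an $\overline{\mathcal{F}}_{t}$-measurable map $\widehat h$ with $(\o^{t},\widehat h(\o^{t}))\in G$ for $\o^{t}\in A_{t}$. Since every $\overline{\mathcal{F}}_{t}$-measurable function agrees $P_{t}$-a.s. with an $\mathcal{F}_{t}$-measurable one, I may replace $\widehat h\,1_{A_{t}}$ by some $\widetilde h\in\Xi_{t}$ with $\widetilde h=\widehat h\,1_{A_{t}}$ $P_{t}$-a.s.; this modification is harmless because the whole argument is formulated up to $P$-null sets.

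I would then argue by contradiction, assuming $P_{t}(A_{t})>0$. Consider the one-step strategy $\phi\in\Phi$ given by $\phi_{t+1}=\widetilde h$ and $\phi_{s}=0$ for $s\neq t+1$, so that $V^{0,\phi}_{T}=\widetilde h\,\Delta S_{t+1}$ depends only on $\o^{t+1}$. Applying Fubini's theorem (Lemma \ref{fubini0}) to the $q_{t+1}$--$P_{t}$ decomposition of $P_{t+1}$ gives $P(V^{0,\phi}_{T}<0)=\int_{\Omega^{t}}q_{t+1}(\widetilde h\Delta S_{t+1}<0|\o^{t})\,P_{t}(d\o^{t})=0$, since the integrand vanishes on $A_{t}$ (where $\widetilde h\Delta S_{t+1}\geq 0$ $q_{t+1}$-a.s.) and $\widetilde h=0$ off $A_{t}$; thus $V^{0,\phi}_{T}\geq 0$ $P$-a.s. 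In the same way $P(V^{0,\phi}_{T}>0)=\int_{A_{t}}q_{t+1}(\widetilde h\Delta S_{t+1}>0|\o^{t})\,P_{t}(d\o^{t})>0$, because the integrand is strictly positive on $A_{t}$ and $P_{t}(A_{t})>0$. This contradicts (NA), so $P_{t}(A_{t})=0$, and \eqref{EqlocalNA} holds on $\Omega_{NA1}^{t}$.

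Finally, on $\Omega_{NA1}^{t}$ I would derive \eqref{EqlocalNA2} and the vector-space property deterministically. If $h\in D^{t+1}(\o^{t})$ satisfies $q_{t+1}(h\Delta S_{t+1}\geq 0|\o^{t})=1$, then \eqref{EqlocalNA} gives $q_{t+1}(h\Delta S_{t+1}=0|\o^{t})=1$; the closed set $\{y:\,hy=0\}$ therefore has $q_{t+1}(\cdot|\o^{t})$-probability one and, by \eqref{defd1}, contains $\widetilde D^{t+1}(\o^{t})$, hence contains its affine hull $D^{t+1}(\o^{t})$. Choosing $y=h\in D^{t+1}(\o^{t})$ yields $|h|^{2}=0$, i.e. $h=0$, which is \eqref{EqlocalNA2}. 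Stated contrapositively, for every $h\in D^{t+1}(\o^{t})\setminus\{0\}$ one has $q_{t+1}(h\Delta S_{t+1}\geq 0|\o^{t})<1$, which is exactly the hypothesis of Lemma \ref{localVectorSpace}; applying it gives $0\in D^{t+1}(\o^{t})$ and that $D^{t+1}(\o^{t})$ is a vector subspace of $\mathbb{R}^{d}$. Carrying this out for each $t=0,\dots,T-1$ completes the proof.
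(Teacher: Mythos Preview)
Your proof is correct and follows the same overall scheme as the paper---measurable selection of a local arbitrage direction, followed by a one-step strategy contradicting (NA)---but with two worthwhile simplifications.

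First, the paper defines its ``bad'' set $\Pi^{t}$ via $h\in D^{t+1}(\o^{t})\setminus\{0\}$ with $q_{t+1}(h\Delta S_{t+1}\geq 0|\o^{t})=1$, \emph{without} the strict-positivity clause. Consequently, after selecting $\overline{h}_{t+1}$, the paper must still argue that $q_{t+1}(\overline{h}_{t+1}\Delta S_{t+1}>0|\o^{t})>0$ on $\Pi^{t}$; this is done by a contradiction using $L^{t+1}(\o^{t})=(D^{t+1}(\o^{t}))^{\perp}$ and Lemma~\ref{CfNutz}. By building $q_{t+1}(h\Delta S_{t+1}>0|\o^{t})>0$ directly into $G$, you sidestep that detour entirely and never need $Graph(D^{t+1})\in\mathcal{F}_{t}\otimes\mathcal{B}(\mathbb{R}^{d})$ for the selection step.

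Second, the order of the final deductions is reversed. The paper first obtains the vector-space property on $\Omega^{t}\setminus\Pi^{t}$ from Lemma~\ref{localVectorSpace}, and only afterwards proves \eqref{EqlocalNA} for general $h$ by projecting onto $D^{t+1}(\o^{t})$ (again using Lemma~\ref{CfNutz}). You instead get \eqref{EqlocalNA} immediately from the definition of $A_{t}$, derive \eqref{EqlocalNA2} by your self-contained support argument (which is exactly one direction of Lemma~\ref{CfNutz}), and only then feed its contrapositive into Lemma~\ref{localVectorSpace}. Both orderings are valid; yours is marginally shorter because the projection step disappears. The paper's route, on the other hand, keeps the action inside $D^{t+1}$ from the outset, which dovetails naturally with the subsequent quantitative characterisation in Proposition~\ref{AOAmulti}.
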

\begin{proof}
Let  $0\leq t \leq T$ be fixed. Recall that $\overline{\mathcal{F}}_{t}$ is the $P_{t}$-completion of $\mathcal{F}_{t}$ and that $\overline{P}_{t}$ is the (unique) extension of $P_{t}$ to  $\overline{\mathcal{F}}_{t}$.
We introduce  the following random set $\Pi^{t}$
\begin{align*}
\Pi^{t}:=\left\{\omega^{t} \in \Omega^{t}, \; \exists h \in D^{t+1}(\o^{t}), h \neq 0, \; q_{t+1}(h\Delta S_{t+1}(\o^t,\cdot)\geq 0|\o^t)=1\right\}.
\end{align*}
Assume for a moment that $\Pi^{t} \in \overline{\Fc}_{t}$ and that $\overline{P}_{t}(\Pi^{t})=0$ (this will be proven below).
Let $\omega^{t} \in \Omega^{t} \setminus \Pi^{t}$. The fact that $0 \in D^{t+1}(\o^{t})$ is a direct consequence of the definition of $\Pi^t$ and of Lemma \ref{localVectorSpace}.
We now prove \eqref{EqlocalNA}. Let $h \in \mathbb{R}^{d}$ be fixed such that $q_{t+1}(h \Delta S_{t+1}(\o^{t},\cdot)\geq 0|\o^{t})=1$.
We prove that $q_{t+1}(h \Delta S_{t+1}(\o^{t},\cdot) =0|\o^{t})=1$. If $h=0$ this is straightforward. If $h \in D^{t+1}(\o^{t})\setminus \{0\}$,  $\omega^{t} \in \Pi^{t}$ which is impossible.
Now if $h \notin D^{t+1}(\o^{t})$ and $h \neq 0$, let $h'$ be the orthogonal projection of $h$ on $D^{t+1}(\o^{t})$ (recall that since $\o^{t} \notin \Pi^{t}$  $D^{t+1}(\o^{t})$ is a vector subspace).
We first show that $q_{t+1}(h'\Delta S_{t+1}(\o^t,\cdot)\geq 0|\o^t)=1$.
Indeed, if it were not the case the set $B:=\{\omega_{t+1} \in \Omega_{t+1},\; h'\Delta S_{t+1}(\o^t,\o_{t+1})<0\}$ would verify $q_{t+1}(B|\o^{t})>0$. Set
\begin{align}
\label{lt}
L^{t+1}(\omega^{t}):=\left(D^{t+1}(\omega^{t})\right)^{\bot}.
\end{align}
As $(h-h') \in L^{t+1}(\o^{t})$ (recall that $D^{t+1}(\o^{t})$ is a vector subspace),  by Lemma \ref{CfNutz} the set $A:=\{\omega_{t+1} \in \Omega_{t+1},\; (h-h')\Delta S_{t+1}(\o^t,\o_{t+1})=0\}$ verify $q_{t+1}(A|\o^{t})=1$. We would therefore obtain that $q_{t+1}(A \cap B|\o^{t})>0$ which implies that $q_{t+1}(h \Delta S_{t+1}(\o^{t},.)\geq 0|\o^{t})<1$, a contradiction. Thus $q_{t+1}(h'\Delta S_{t+1}(\o^t,\cdot)\geq 0|\o^t)=1$.
If $h' \neq 0$ as $h' \in D^{t+1}(\o^{t})$, $\o^{t} \in \Pi^{t}$ which is again a contradiction. Thus  $h'=0$ and
as $A \cap \{h' \Delta S_{t+1}(\o^{t},\cdot) =0\} \subset \{h \Delta S_{t+1}(\o^{t},\cdot) =0\}$, $q_{t+1}(h \Delta S_{t+1}(\o^{t},\cdot) =0|\o^{t})=1$.\\
As $\Omega^{t} \setminus \Pi^{t} \in \overline{\mathcal{F}_{t}}$ there exists $\Omega^{t}_{NA1} \in \mathcal{F}_{t}$ and $N^{t} \in \mathcal{N}_{P_{t}}$ (the collection of negligible set of $(\O^t, P_t)$) such that $\Omega^{t}  \setminus  \Pi^{t}=\Omega^{t}_{NA1} \cup N^{t}$ and $P_{t}(\Omega^{t}_{NA1})=\overline{P}_{t}(\Omega^{t} \backslash\Pi^{t})=1$. Since $\Omega^{t}_{NA1} \subset \Omega^{t}  \setminus \Pi^{t}$, it follows that for all $\o^{t} \in \Omega^{t}_{NA1}$, $0 \in D^{t+1}(\o^{t})$ and for all $h \in \mathbb{R}^{d}$,
 \eqref{EqlocalNA} holds true.\\
We prove \eqref{EqlocalNA2}.  Assume now that $\omega^{t} \in \Omega_{NA1}^{t}$ and $h \in D^{t+1}(\o^{t})$ are such that
$q_{t+1}(h\Delta S_{t+1}(\o^t,\cdot)\geq 0|\o^t)=1$. Using  \eqref{EqlocalNA} and Lemma \ref{CfNutz} we get that $ h \in L^{t+1}(\omega^{t})$. So $h \in D^{t+1}(\o^{t})\cap L^{t+1}(\omega^{t})=\{0\}$ and \eqref{EqlocalNA2} holds true.

It remains to prove that $\Pi^{t} \in \overline{\Fc}_{t}$ and $\overline{P}_{t}(\Pi^{t})=0$. To do that
we introduce  the following random set $H^{t}: \Omega^{t} \twoheadrightarrow \mathbb{R}^{d}$
\begin{align*}
 H^{t}(\omega^{t}):=\left\{ h \in  D^{t+1}(\o^{t}),\; h \neq 0,\; q_{t+1}(h\Delta S_{t+1}(\o^t,\cdot)\geq 0|\o^t)=1\right\}.
\end{align*}
Then
\begin{eqnarray*}
\Pi^{t} & = & \left\{\omega^{t} \in \Omega^{t},\; H^t(\omega^{t})\neq \emptyset \right\}= \mbox{proj}_{|\Omega^{t}}Graph(H^{t})
 \end{eqnarray*}
since $Graph(H^{t})=\{(\o^t, h) \in \Omega^{t}\times\mathbb{R}^{d},\; h \in H^t(\omega^{t})\}$.

We prove now that $Graph(H^{t}) \in \Fc_{t} \otimes \mathcal{B}(\mathbb{R}^{d})$. Indeed, we can rewrite that
\begin{align*}
Graph(H^{t})= Graph(D^{t+1})  \bigcap \left\{(\o^{t},h) \in \Omega^{t}\times\mathbb{R}^{d},\; q_{t+1}(h\Delta S_{t+1}(\o^t,\cdot)\geq 0|\o^t)=1\right\} \bigcap \left(\Omega^{t} \times \mathbb{R}^{d} \backslash \{0\} \right).
\end{align*}

As from Lemma  \ref{LemmaA1}, $\left\{(\o^{t},h) \in \Omega^{t}\times\mathbb{R}^{d},\; q_{t+1}(h\Delta S_{t+1}(\o^t,\cdot)\geq 0|\o^t)=1\right\} \in \Fc_{t} \otimes \mathcal{B}(\mathbb{R}^{d})$
and from Lemma \ref{Dmeasurability}, $Graph(D^{t+1}) \in \Fc_{t} \otimes \mathcal{B}(\mathbb{R}^{d})$, we obtain that  $Graph(H^{t}) \in \Fc_{t} \otimes \mathcal{B}(\mathbb{R}^{d})$.  
The Projection Theorem (see for example Theorem 3.23 in \cite{CV77}) applies and 
$ \Pi^{t}=\{H^t\neq \emptyset\}=\mbox{proj}_{|\Omega^{t}}Graph(H^{t}) \in \overline{\Fc}_{t}$. From the Aumann Theorem (see Corollary 1 in \citet{bv}) there exists a $\overline{\Fc}_{t}$-measurable selector $\overline{h}_{t+1}: \Pi^{t} \to \mathbb{R}^{d}$ such that $\overline{h}_{t+1}(\o^t) \in H^{t}(\o^t)$ for every $\o^t \in \Pi^t$. We now  extend $\overline{h}_{t+1}$ on $\Omega^{t}$ by setting $\overline{h}_{t+1}(\o^{t})=0$ for $\o^{t} \in \Omega^{t}  \backslash \Pi^{t}$. It is clear that $\overline{h}_{t+1}$ remains ${\overline{\Fc}_{t}}$-measurable.
Applying Lemma \ref{completemes}, there exists $h_{t+1}: \Omega^{t} \to \mathbb{R}^{d}$ which is $\mathcal{F}_{t}$-measurable and satisfies  $h_{t+1}=\overline{h}_{t+1}$ $P_{t}$-almost surely. Then if  we set
\begin{align*}
\varphi(\omega^{t}) &=q_{t+1}(h_{t+1}(\o^{t}) \Delta S_{t+1}(\o^{t},.) \geq 0 | \o^{t}),\\
\overline{\varphi}(\omega^{t})&=q_{t+1}(\overline{h}_{t+1}(\o^{t}) \Delta S_{t+1}(\o^{t},.) \geq 0 | \o^{t}),
\end{align*}
we get from Proposition \ref{LemmaA1}  that $\varphi$ is $\mathcal{F}_{t}$-measurable and  from Proposition \ref{fubiniext} $iii)$  that $\overline{\varphi}$ is $\overline{\mathcal{F}}_{t}$-measurable. Furthermore as $\{\o^{t} \in \O^t,\; \varphi(\o^{t}) \neq \overline{\varphi}(\o^{t})\} \subset \{\o^{t}\in \O^t,\; h_t(\o^{t}) \neq \overline{h}_{t+1}(\o^{t})\} $, $\varphi=\overline{\varphi}$ $P_{t}$-almost surely. This implies that  $\int_{\Omega^{t}} \overline{\varphi} d \overline{P}_{t}=\int_{\Omega^{t}} {\varphi} d P_{t}.$
Now we define the predictable process $(\phi_t)_{1\leq t\leq T}$ by $\phi_{t+1}={h}_{t+1}$ and $\phi_i=0$ for $i\neq t+1$. Then
\begin{eqnarray*}
P(V^{0,\phi}_{T}\geq 0)&= & P(h_{t+1} \Delta S_{t+1} \geq 0)=P_{t+1}(h_{t+1} \Delta S_{t+1} \geq 0)\\
&= &  \int_{\Omega^{t}} {\varphi}(\omega^{t})  P_{ t}(d\omega^{t})= \int_{\Omega^{t}} \overline{\varphi}(\omega^{t})  \overline{P}_{ t}(d\omega^{t}) \\
& = & \int_{\Pi^{t}} q_{t+1}\left(\overline{h}_{t}(\omega^{t})\Delta S_{t+1}(\o^t,\cdot)\geq 0|\o^t\right) \overline{P}_{ t}(d\omega^{t}) + \\ & &  \int_{\Omega^{t}\setminus{\Pi^{t}}} q_{t+1}\left(0  \times \Delta S_{t+1}(\o^t,\cdot)\geq 0|\o^t\right) \overline{P}_{ t}(d\omega^{t})\\
& = & \overline{P}_t(\Pi^{t}) +\overline{P}_t(\Omega^{t}\setminus{\Pi^{t}})=1,
\end{eqnarray*} where we have used that if $\o^t \in \Pi^{t}$, $\overline{h}_{t+1}(\omega^{t}) \in H^t(\o^t)$ and  otherwise $\overline{h}_{t+1}(\omega^{t})=0$. With the same arguments we obtain that
 \begin{align*}
 P(V^{0,\phi}_{T}> 0)&=P_t(h_{t+1} \Delta S_{t+1}>0)\\
& =\int_{\Pi^{t}} q_{t+1}\left(\overline{h}_{t+1}(\omega^{t})\Delta S_{t+1}(\o^t,\cdot)> 0|\o^t\right) \overline{P}_{ t}(d\omega^{t}) +\int_{\Omega^{t}\setminus{\Pi^{t}}} q_{t+1}\left(0> 0|\o^t\right) \overline{P}_{ t}(d\omega^{t})\\
& =\int_{\Pi^{t}} q_{t+1}\left(\overline{h}_{t+1}(\omega^{t})\Delta S_{t+1}(\o^t,\cdot)> 0|\o^t\right) \overline{P}_{ t}(d\omega^{t}).
\end{align*}
Let $\omega^{t} \in \Pi^{t}$ then $q_{t+1}\left(\overline{h}_{t+1}(\omega^{t})\Delta S_{t+1}(\o^t,\cdot)>0 |\o^t\right)>0$. Indeed, if it is not the case then \\
$q_{t+1}\left(\overline{h}_{t+1}(\omega^{t})\Delta S_{t+1}(\o^t,\cdot)\leq 0 |\o^t\right)=1$. As $\omega^{t} \in \Pi^{t}$, $\overline{h}_{t+1}(\omega^{t}) \in D^{t+1}(\o^t)$ and  $q_{t+1}\left(\overline{h}_{t+1}(\omega^{t})\Delta S_{t+1}(\o^t,\cdot)\geq 0|\o^t\right)=1$,  Lemma \ref{CfNutz} applies and $\overline{h}_{t+1}(\omega^{t}) \in L^{t+1}(\o^t)$. Thus we get that $\overline{h}_{t+1}(\omega^{t}) \in L^{t+1}(\o^t) \cap D^{t+1}(\o^t)=\{0\}$,  a contradiction.
So if $\overline{P}_t(\Pi^t)>0$ we obtain that $P(V^{0,\phi}_{T}> 0)>0$. This contradicts the (NA) condition and we obtain $\overline{P}_t(\Pi^t)=0$, the required result.
\end{proof}\\
Similarly as in \cite{RS05} and \cite{JS98}, we prove  a ``quantitative'' characterization of (NA).
\begin{proposition}
\label{AOAmulti}
Assume that the $(NA)$ condition holds true and let $0\leq t \leq T$. Then there exists $\Omega_{NA}^{t} \in \mathcal{F}_{t}$ with  $P_{t}(\Omega_{NA}^{t})=1$ and $\Omega_{NA}^{t} \subset \Omega_{NA1}^{t}$ (see Lemma \ref{localNA} for the definition of $\Omega_{NA1}^{t}$) such that for all $\o^t \in {\Omega}_{NA}^t$, there exists
$\alpha_t(\o^t)  \in (0,1]$ such that for all $h \in D^{t+1}(\o^t)$
\begin{eqnarray}
\label{valaki}
q_{t+1}\big(h\Delta S_{t+1}(\o^t,\cdot)\leq-\alpha_t(\o^t)|h||\o^t\big)\geq \alpha_{t}(\o^t).
\end{eqnarray}
Furthermore $\o^{t}  \to \alpha_{t}(\o^{t})$ is $\mathcal{F}_{t}$-measurable.
\end{proposition}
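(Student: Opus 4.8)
The plan is to separate the statement into a pointwise existence claim, proved by a compactness argument on the fibre $D^{t+1}(\o^t)$, and a measurability layer, handled through graph-measurability and the Projection Theorem. By positive homogeneity in $h$ it suffices to produce, for each fixed $\o^t$ in the full-measure set $\Omega_{NA1}^{t}$ of Lemma \ref{localNA}, some $\alpha \in (0,1]$ with $q_{t+1}(h\Delta S_{t+1}(\o^t,\cdot)\leq-\alpha|\o^t)\geq\alpha$ for every $h \in D^{t+1}(\o^t)$ with $|h|=1$; the case $h=0$ is trivial, and a general $h$ reduces to $h/|h|$ since $D^{t+1}(\o^t)$ is a vector space on $\Omega_{NA1}^{t}$, so that $\eqref{valaki}$ follows with threshold $-\alpha|h|$.

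For the pointwise step I would argue by contradiction. If no such $\alpha$ existed, then for each $n\geq 1$ there would be $h_n \in D^{t+1}(\o^t)$ with $|h_n|=1$ and $q_{t+1}(h_n\Delta S_{t+1}\leq -1/n|\o^t)<1/n$. The unit sphere of the finite-dimensional space $D^{t+1}(\o^t)$ is compact, so along a subsequence $h_n \to h^\ast \in D^{t+1}(\o^t)$ with $|h^\ast|=1$. Fix $\e>0$ and $K>0$ and write $h_n\Delta S_{t+1}=h^\ast\Delta S_{t+1}+(h_n-h^\ast)\Delta S_{t+1}$: on the fixed event $\{h^\ast\Delta S_{t+1}\leq-\e\}\cap\{|\Delta S_{t+1}|\leq K\}$ one has $h_n\Delta S_{t+1}\leq -\e+|h_n-h^\ast|K\leq -1/n$ for all large $n$, so this event is contained in $\{h_n\Delta S_{t+1}\leq-1/n\}$ and hence has $q_{t+1}$-measure $<1/n\to 0$, i.e. measure $0$. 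Letting $K\to\infty$ and then $\e\to 0$ gives $q_{t+1}(h^\ast\Delta S_{t+1}\geq 0|\o^t)=1$, and $\eqref{EqlocalNA2}$ then forces $h^\ast=0$, contradicting $|h^\ast|=1$.

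For measurability I would, for $n\geq 1$, introduce
\begin{align*}
\Omega_n^{t} := \left\{\o^t \in \Omega_{NA1}^{t} : \forall h \in D^{t+1}(\o^t),\ |h|=1,\ q_{t+1}(h\Delta S_{t+1}(\o^t,\cdot)\leq -1/n|\o^t)\geq 1/n\right\}.
\end{align*}
The pointwise step shows every $\o^t \in \Omega_{NA1}^{t}$ belongs to $\Omega_n^{t}$ for $n$ large, so the $\Omega_n^{t}$ increase to $\Omega_{NA1}^{t}$. Each $\Omega_n^{t}$ is measurable because its complement in $\Omega_{NA1}^{t}$ is the projection onto $\Omega^t$ of
\begin{align*}
\Gamma_n := Graph(D^{t+1}) \cap \{(\o^t,h):|h|=1\} \cap \left\{(\o^t,h): q_{t+1}(h\Delta S_{t+1}(\o^t,\cdot)\leq -1/n|\o^t)<1/n\right\}.
\end{align*}
By Lemma \ref{Dmeasurability} the first factor lies in $\Fc_t\otimes\mathcal{B}(\mathbb{R}^{d})$, and by the joint-measurability statement of Lemma \ref{LemmaA1} (with threshold $-1/n$ in place of $0$) so does the third; thus $\Gamma_n\in\Fc_t\otimes\mathcal{B}(\mathbb{R}^{d})$ and the Projection Theorem (Theorem 3.23 in \cite{CV77}) yields $\Omega_n^{t}\in\overline{\Fc}_t$. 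I then set $\alpha_t(\o^t):=1/\min\{n:\o^t\in\Omega_n^{t}\}$ on $\bigcup_n\Omega_n^{t}$ and $\alpha_t:=1$ elsewhere, which is $\overline{\Fc}_t$-measurable, takes values in $(0,1]$, and satisfies $\eqref{valaki}$ by construction.

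Finally, to obtain a genuinely $\Fc_t$-measurable object as demanded, I would apply Lemma \ref{completemes} to choose an $\Fc_t$-measurable $\widetilde{\alpha}_t$ with $\widetilde{\alpha}_t=\alpha_t$ $P_t$-a.s., enclose the $P_t$-negligible set $\{\widetilde{\alpha}_t\neq\alpha_t\}$ in an $\Fc_t$-measurable null set $M$, and set $\Omega_{NA}^{t}:=\Omega_{NA1}^{t}\setminus M\in\Fc_t$; then $P_t(\Omega_{NA}^{t})=1$, $\Omega_{NA}^{t}\subset\Omega_{NA1}^{t}$, $\widetilde{\alpha}_t\in(0,1]$ and $\eqref{valaki}$ holds on $\Omega_{NA}^{t}$. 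I expect the main obstacle to be precisely this measurability layer rather than the estimate: because the model lives on a possibly incomplete $(\Omega^t,\Fc_t,P_t)$, the natural formula for $\alpha_t$ only yields $\overline{\Fc}_t$-measurability through the Projection Theorem, and the real work lies in verifying the joint measurability of $(\o^t,h)\mapsto q_{t+1}(h\Delta S_{t+1}\leq -1/n|\o^t)$ and in passing cleanly from $\overline{\Fc}_t$ back to $\Fc_t$ via completion.
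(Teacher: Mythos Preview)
Your proposal is correct and follows essentially the same route as the paper: both define the ``bad'' sets $A_n(\o^t)=\{h\in D^{t+1}(\o^t):|h|=1,\ q_{t+1}(h\Delta S_{t+1}\leq -1/n\mid\o^t)<1/n\}$ (your $\Gamma_n$ is exactly $Graph(A_n)$ and your $\Omega_n^t$ is $\Omega_{NA1}^t\setminus\{A_n\neq\emptyset\}$), prove by compactness of the unit sphere in $D^{t+1}(\o^t)$ together with \eqref{EqlocalNA2} that $A_n(\o^t)=\emptyset$ for large $n$, set $\alpha_t=1/\min\{n:A_n=\emptyset\}$, establish $\overline{\Fc}_t$-measurability via Lemmata \ref{Dmeasurability}, \ref{LemmaA1} and the Projection Theorem, and then descend to $\Fc_t$ through Lemma \ref{completemes}. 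The only cosmetic difference is in the limit step of the pointwise argument: the paper bounds $q_{t+1}(h^*\Delta S_{t+1}<0\mid\o^t)$ by Fatou applied to the indicators of $\{h_n\Delta S_{t+1}\leq -1/n\}$, whereas you reach the same conclusion through your $\varepsilon$--$K$ truncation; both are equally valid.
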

\begin{proof}
Let $\o^{t} \in \Omega^{t}_{NA1}$ be fixed ($ {\O}_{NA1}^t$ is defined in Lemma \ref{localNA}). \\
\emph{Step 1 : Proof of \eqref{valaki}.} Introduce the following set for $n\geq 1$
\begin{align}
\label{defAn}
A_{n}(\o^{t}):=\left\{ h \in  D^{t+1}(\o^{t}),\; |h|=1,\; q_{t+1}\left(h\Delta S_{t+1}(\o^t,\cdot)\leq-\frac{1}{n}|\o^t\right)< \frac{1}{n} \right\}.
\end{align}
Let $\overline{n}_{0}(\o^{t}):=\inf\{n \geq 1, A_{n}(\o^{t})=\emptyset\}$ with the convention that $\inf \emptyset=+\infty$.
Note that if $D^{t+1}(\o^{t}) =\{0\}$, then $\overline{n}_{0}(\o^{t})=1 <\infty$.  We assume now that $D^{t+1}(\o^{t}) \neq \{0\}$ and we prove by
contradiction that  $\overline{n}_{0}(\o^{t})<\infty$.
Assume that $\overline{n}_{0}(\o^{t})=\infty$ $i.e$ for all $n \geq 1$, $A_{n}(\o^{t}) \neq\emptyset$. We thus get $h_{n}(\o^{t}) \in D^{t+1}(\o^t)$ with $|h_{n}(\o^{t})|=1$ and such that
$$q_{t+1}\left(h_n(\o^{t})\Delta S_{t+1}(\o^t,\cdot)\leq-\frac{1}{n} \, |\, \o^t\right) < \frac{1}{n}.$$
By passing to a sub-sequence we can assume that $h_{n}(\o^{t})$ tends to  some $h^{*}(\o^{t})\in D^{t+1}(\o^t)$ (recall that the set $D^{t+1}(\o^t)$ is closed by definition) with $|h^{*}(\o^{t})|=1$.
Introduce
\begin{eqnarray*}
B(\o^{t}) & := &\{\omega_{t+1} \in \Omega_{t+1},\; h^{*}(\o^{t})\Delta S_{t+1}(\o^t,\omega_{t+1})<0 \}\\
B_{n}(\o^{t})& := & \{\omega_{t+1} \in \Omega_{t+1},\; h_{n}(\o^{t})\Delta S_{t+1}(\o^t,\omega_{t+1}) \leq -1/n \}.
\end{eqnarray*}
Then $B(\o^{t}) \subset \liminf_{n} B_{n}(\o^{t})$.
Furthermore as $1_{\liminf_{n} B_{n}(\o^{t})}=\liminf_{n}1_{B_{n}(\o^{t})}$, Fatou's Lemma implies that
\begin{align*}
q_{t+1}\left(h^*(\o^{t})\Delta S_{t+1}(\o^t,\cdot) < 0|\o^t\right) &
  \leq   \int_{\Omega_{t+1}}
1_{\liminf_{n} B_{n}(\o^{t})}(\o_{t+1}) q_{t+1}(\o_{t+1}|\o^t)\\
 & \leq  \liminf_{n} \int_{\Omega_{t+1}}
1_{B_n(\o^{t})}(\o_{t+1}) q_{t+1}(\o_{t+1}|\o^t)=0.
\end{align*}
This implies that $q_{t+1}\left(h^*(\o^{t})\Delta S_{t+1}(\o^t,\cdot) \geq0|\o^t\right)=1$, and thus from \eqref{EqlocalNA2} in Lemma \ref{localNA} we get that $h^{*}(\o^{t})=0$
which contradicts $|h^{*}(\o^{t})|=1$. Thus $\overline{n}_{0}(\o^{t})<\infty$ and we can set for $\o^{t} \in  \Omega^{t}_{NA1}$
$$\overline{\alpha}_{t}(\o^{t})= \frac{1}{\overline{n}_{0}(\o^{t})}.$$
It is clear that $\overline{\alpha}_{t} \in (0,1]$. Then for all $\o^t \in \Omega^{t}_{NA1}$, for all $h \in D^{t+1}(\o^{t})$ with $|h|=1$, by definition of $A_{\overline{n}_{0}(\o^{t})}(\o^{t})$ we obtain
\begin{eqnarray}
\label{valakibar}
q_{t+1}\left(h\Delta S_{t+1}(\o^t,\cdot)\leq-\overline{\alpha}_t(\o^t)|\o^t\right)\geq \overline{\alpha}_t(\o^t).
\end{eqnarray}
\emph{Step 2 : measurability issue.} \\
We now construct a function ${\alpha}_{t}$ which is $\Fc_t$-measurable and satisfies \eqref{valaki} as well. To do that we use the Aumann Theorem again as in the proof of Lemma \ref{localNA} but this time applied to the random set $A_{n}: \Omega^{t} \twoheadrightarrow \mathbb{R}^{d}$ where $A_{n}(\o^t)$ is defined in \eqref{defAn} if $\omega^{t} \in \Omega^{t}_{NA1}$ and $A_{n}(\o^{t})=\emptyset$ otherwise.

We prove that $graph(A_{n})  \in \mathcal{F}_{t}\otimes \mathcal{B}(\mathbb{R}^{d})$. From Lemma \ref{LemmaA1}, the function $(\o^{t},h) \to q_{t+1} \left(h\Delta S_{t+1}(\o^t,\cdot)\leq-\frac{1}{n}|\o^t\right)$ is  $\mathcal{F}_{t} \otimes\mathcal{B}(\mathbb{R}^{d})$-measurable. From Lemma \ref{Dmeasurability}, $Graph(D^{t+1}) \in \mathcal{F}_{t}\otimes \mathcal{B}(\mathbb{R}^{d})$ and the result follows from
\begin{align*}
Graph(A_{n}) & = Graph(D^{t+1}) \bigcap \left(\Omega_{NA1}^{t}  \times \{h \in \mathbb{R}^d,\; |h|=1\}\right) \\
 & \bigcap \left\{(\o^{t},h)\in \O^t \times  \mathbb{R}^d,\; q_{t+1} \left(h\Delta S_{t+1}(\o^t,\cdot)\leq-\frac{1}{n}|\o^t\right)< \frac{1}{n}\right\}.
\end{align*}
Using the Projection Theorem (see for example Theorem 3.23 in \cite{CV77}), we get that $\{\o^{t}\in \O^t, \;A_{n}(\o^{t}) \neq \emptyset\} \in \overline{\mathcal{F}}_{t}$.
We now extend $\overline{n}_{0}$ to $\O^t$ by setting $\overline{n}_{0}(\o^t)=1$ if $\o^t \notin \Omega^{t}_{NA1}$.
Then
$\{\overline{n}_{0} \geq 1\}=\Omega^{t} \in \Fc_t \subset \overline{\mathcal{F}}_{t}$ and for $k > 1$
\begin{align*}
\{\overline{n}_{0} \geq k\} &= \Omega^{t}_{NA1} \cap \bigcap_{1 \leq n \leq k-1} \{A_{n} \neq \emptyset\} \in \overline{\mathcal{F}}_{t},
\end{align*}
this implies that $\overline{n}_{0}$ and thus $\overline{\alpha}_{t}$ is $\overline{\mathcal{F}}_{t}$-measurable.
Using Lemma
\ref{completemes}, we get some $\mathcal{F}_{t}$-measurable function $\alpha_{t}$  such that $\alpha_{t}=\overline{\alpha}_{t}$ $P_{t}$ almost surely, $i.e$  there exists $M^{t} \in \mathcal{F}_{t}$ such that $P_{t}(M^{t})=0$ and $\{\alpha_{t} \neq \overline{\alpha}_{t}\} \subset M^{t}$.
We set $\Omega^{t}_{NA}:=\Omega^{t}_{NA1}  \bigcap \left(\O^t \setminus M_{t}\right)$. Then $P_{t}(\Omega^{t}_{NA})=1$ and as $\a_t$
is $\mathcal{F}_{t}$-measurable it remains to check that \eqref{valaki}  holds true.

For $\omega^{t} \in \Omega^{t}_{NA}$,
$\alpha_{t}(\o^{t})=\overline{\alpha}_{t}(\o^{t})$ (recall that $\omega^{t} \in \O^t \setminus M_{t}$) and since $\omega^{t} \in \Omega^{t}_{NA1}$, \eqref{valakibar} holds true and consequently \eqref{valaki} as well. It is also clear that $\alpha_{t}(\o^{t}) \in (0,1]$ and the proof is completed.
\end{proof}\\
\begin{remark}
\label{t=0}
In Definition \ref{DefD}, Lemmata \ref{Dmeasurability}, \ref{localVectorSpace}, \ref{localNA} and Proposition \ref{AOAmulti} we have included the case $t=0$. Note however that since $\Omega^{0}=\{\o_{0}\}$, the various statements and their respective proofs could be considerably simplified.
\end{remark}
\begin{remark}
The characterization of (NA) given by \eqref{valaki}
works only for $h \in D^{t+1}(\o^t)$. This is the reason why we will have to project
the strategy $\phi_{t+1} \in \Xi_{t}$  onto $D^{t+1}(\o^t)$ in our proofs.
\end{remark}
\begin{remark}
\label{RSOA}
In order to obtain Proposition \ref{AOAmulti} we could have applied directly  Proposition 3.3. of \cite{RS05} (note their proof doesn't use measurable selection arguments and provides directly the $\Fc_{t}$ measurability of $\alpha_{t}$) and
used  Lemma \ref{lienespcond}.
\end{remark}

\section{Utility problem and main result}\label{se3}
We now describe the investor's risk preferences by a possibly non-concave, random utility function.

\begin{definition}
\label{utilite}
	A \emph{random utility}  is any function $U:~\Omega \times \mathbb{R} \rightarrow \mathbb{R}\cup \{\pm \infty\}$  satisfying
		the following  conditions
	\begin{itemize}
		\item
		for every $x\in \mathbb{R}$, the function $U \left(\cdot,x\right):~\Omega\rightarrow\mathbb{R}\cup \{\pm \infty\}$ is $\Fc$-measurable,
		\item
		for all $\omega\in \Omega$, the function $U \left(\omega,\cdot\right):~\mathbb{R} \rightarrow\mathbb{R}\cup \{\pm \infty\}$ is non-decreasing and  usc on $\mathbb{R}$,
		\item
		$U(\cdot,x)=-\infty$, for all $x <0$.
	\end{itemize}
	 \end{definition}
We introduce the following notations.
\begin{definition}
\label{defGeneral}
For all $x\geq 0$, we denote by $\Phi(x)$  the set of all strategies $\phi \in \Phi$ such that $P_{T}(V_{T}^{x,\phi}(\cdot)\geq 0)=1$ and by  $\Phi(U,x)$ the set of all strategies $\phi \in \Phi(x)$ such that
$EU(\cdot,V_{T}^{x,\phi})$ exists in a generalised sense, $i.e.$ either $EU^{+}(\cdot,V_{T}^{x,\phi}(\cdot))<\infty$ or $EU^{-}(\cdot,V_{T}^{x,\phi}(\cdot))<\infty$.
\end{definition}
\begin{remark}
\label{nat}
Under (NA), if $\phi \in \Phi(x)$ then we have that $P_{t}(V_{t}^{x,\phi}(\cdot)\geq 0)=1$  for all $1 \leq t \leq T$ see  Lemma \ref{AOAAPP}.
\end{remark}
We now formulate the problem which is  our main concern in the sequel.
\begin{definition}
\label{DefU}
	Let $x \geq 0 $. The \emph{non-concave portfolio problem} on a finite horizon $T$ with initial wealth $x$  is
	\begin{eqnarray}\label{eq:OP}
		u(x) & := & \sup_{\phi \in \Phi(U,x)} E U(\cdot,V^{x,\phi}_{T}(\cdot)).
	\end{eqnarray}
\end{definition}
\begin{remark}
\label{moregeneral}
Assume that  there exists some $P$-full measure set $\widetilde{\Omega} \in \mathcal{F}$ such that for all $\o \in \widetilde{\Omega}$, $x \to U(\o,x)$ is   non-decreasing and  usc  on $[0,+\infty)$, $i.e.$ $x \to U(\o,x)$ is usc on $(0,\infty)$ and  for any $(x_{n})_{n \geq 1} \subset [0,+\infty)$ converging to $0$, $U(\o,0) \geq \limsup_{n} U(\o,x_{n})$.  We set $\overline{U}:\Omega \times \mathbb{R} \rightarrow \mathbb{R}\cup \{\pm \infty\}$
$$\overline{U}(\o,x):= U(\o,x) 1_{\widetilde{\Omega}\times [0,+\infty)}(\o,x)+ (-\infty) 1_{\Omega\times (-\infty,0)}(\o,x).$$
Then $\overline{U}$ satisfies Definition \ref{utilite}, see Lemma \ref{usc} for the second item. Moreover, the value function does not change
\begin{align*}
u(x)=& \sup_{\phi \in \Phi(U,x)} E \overline{U}(\cdot,V^{x,\phi}_{T}(\cdot)),
\end{align*}
and if there exists some $\phi^{*} \in \Phi(U,x)$ such that $u(x)=E \overline{U}(\cdot,V^{x,\phi^{*}}_{T}(\cdot))$, then $\phi^{*}$ is an optimal solution for \eqref{eq:OP}.\\
\end{remark}
\begin{remark}
\label{choixU}
Let $U$ be a utility function defined only on $(0,\infty)$ and verifying for every $x\in (0,\infty)$, $U \left(\cdot,x\right):~\Omega\rightarrow\mathbb{R}\cup \{\pm \infty\}$ is $\Fc$-measurable and for all $\omega \in \Omega$, $U \left(\omega,\cdot\right):~(0,\infty) \rightarrow\mathbb{R}\cup \{\pm \infty\}$ is non-decreasing and  usc on $(0,\infty)$.  We  may extend $U$ on $\mathbb{R}$ by setting,  for all $\o \in \Omega$, $\overline{U}(\o,0)=\lim_{x \to 0} U(\o,x)$ and for $x<0$, $\overline{U}(\o,x)=-\infty$. Then, as before, $\overline{U}$ verifies  Definition \ref{utilite} and  the value function has not changed. Note that we could have considered a closed interval $F=[a,\infty)$ of $\mathbb{R}$ instead of $[0,\infty)$, we could have adapted our notion of upper semicontinuity  and all the sequel would apply. \end{remark}
We now present conditions on $U$ which allows to assert that if $\phi \in \Phi(x)$ then $E U(\cdot,V^{x,\phi}_{T}(\cdot))$ is well-defined and that there exists some optimal solution for \eqref{eq:OP}.
\begin{assumption}
\label{uisfinite} For all $\phi \in \Phi(U,1)$, $E U^{+} \left(\cdot,V^{1,\phi}_{T}(\cdot)\right) <\infty$.
%$\Phi(U,1)=\{\phi \in \Phi(1),\; E U^{+} \left(\cdot,V^{1,\phi}_{T}(\cdot)\right) <\infty\}$.
\end{assumption}
\begin{assumption}
\label{PhiUX} $\Phi(U,1)=\Phi(1)$.
\end{assumption}
\begin{remark}
\label{rmzero}
Assumptions \ref{uisfinite} and  \ref{PhiUX} are connected but play a different role.
Assumption \ref{PhiUX} guarantees that  $E U \left(\cdot,V^{1,\phi}_{T}(\cdot)\right)$ is well-defined for all $\Phi \in \Phi(1)$ and
allows us to relax  Assumption 2.7 of \cite{CRR}  on the behavior of $U$ around $0$, namely that $EU^{-}(\cdot,0)<\infty$. Then Assumption \ref{uisfinite} (together with Assumption \ref{ae}) is used to show that $u(x)<\infty$ for all $x>0$.  Note that Assumption \ref{uisfinite} is much more easy to verify that the classical assumption that $u(x)<\infty$ (for all or some $x>0$), which is usually made in the theory of maximisation of the terminal wealth utility.

In Proposition \ref{propufini}, we will show that under Assumptions  \ref{uisfinite}, \ref{PhiUX} and \ref{ae},  $E U^{+} \left(\cdot,V^{x,\phi}_{T}(\cdot)\right) <\infty$ for all $x \geq 0$ and $\phi \in \Phi(x)$. Thus $\Phi(U,x)=\Phi(x)$.
Note that if there exists some $\Phi \in \Phi(U,x)$ such that $E U^{+} \left(\cdot,V^{x,\phi}_{T}(\cdot)\right)=\infty$
and $E U^{-} \left(\cdot,V^{x,\phi}_{T}(\cdot)\right) <\infty$ then $u(x)=\infty$ and the problem is ill-posed.

We propose  some examples where Assumptions \ref{uisfinite} or \ref{PhiUX} hold true. Example $ii)$ illustrates the distinction between Assumptions \ref{uisfinite} and  \ref{PhiUX} and justifies we do not merge both assumptions and postulate that  $E U^{+} \left(\cdot,V^{1,\phi}_{T}(\cdot)\right) <\infty$, for all $\phi \in \Phi(1)$.
\begin{itemize}
\item[i)] If $U$ is bounded above then both Assumptions are trivially true. We get directly that $\Phi(U,x)= \Phi(x)$ for all $x \geq 0$.
\item[ii)]   Assume that $E U^{-}(\cdot,0) <\infty$ holds true.  Let $ x\geq 0$ and $\phi \in \Phi(x)$ be fixed. Using that $U^-$ is non-decreasing for all $\o \in \O$  we get that
 $$E U^- (\cdot,V_{T}^{x,\phi}(\cdot)) \leq E U^{-}(\cdot,0)  <  +\infty,$$
Thus $E U(\cdot,V^{x,\phi}_{T}(\cdot))$ is well-defined, $\Phi(U,x)=\Phi(x)$ and  Assumption \ref{PhiUX} holds true.
\item[iii)]  Assume that there exists some $\hat{x} \geq 1$ such that $U(\cdot,\hat{x}-1) \geq 0$ $P$-almost surely and $$\widehat{u}(\hat{x}):= \sup_{\phi \in \Phi(\hat{x})} E U(\cdot,V^{\hat{x},\phi}_{T}(\cdot))<\infty,$$
where we set for $\phi \in \Phi(\hat{x}) \backslash{\Phi(U,\hat{x})}$, $E U(\cdot,V^{\hat{x},\phi}_{T}(\cdot))=-\infty$.
Let  $\phi \in \Phi(1)$ be fixed. Then using that $U$ is non-decreasing for all $\o \in \O$,  we have that $P$-almost surely
$$U(\cdot,V_{T}^{1,\phi}(\cdot)+ \hat{x}-1) \geq  U(\cdot,\hat{x}-1) \geq 0.$$
Therefore $U(\cdot,V_{T}^{1,\phi}(\cdot)+ \hat{x}-1)=U^{+}(\cdot,V_{T}^{1,\phi}(\cdot)+ \hat{x}-1)$ $P$-almost surely. Now using that $U^{+}$ is non-decreasing for all $\o \in \O$ we get that for all $\phi \in \Phi(1)$
$$E U^{+}(\cdot,V_{T}^{1,\phi}(\cdot)) \leq E U^{+}(\cdot,V_{T}^{1,\phi}(\cdot)+\hat{x}-1) = E U(\cdot,V_{T}^{1,\phi}(\cdot)+\hat{x}-1) \leq \widehat{u}(\hat{x})<+\infty$$
and Assumptions \ref{uisfinite} and \ref{PhiUX} are satisfied. Instead of stipulating that $\widehat{u}(\hat{x})<\infty$ it is enough to assume that $E U(\cdot,V_{T}^{\hat{x},\phi}(\cdot)) < \infty$ for all $\phi \in \Phi(\hat{x})$.
\item[iv)] We will prove in Theorem \ref{main2} that under the (NA) condition and Assumption \ref{ae}, Assumptions \ref{uisfinite} and \ref{PhiUX} hold true if  $E U^{+}(\cdot,1)< +\infty$  and  if for all $0 \leq t \leq T$ $|\Delta S_{t}|,\,\frac{1}{\alpha_{t}} \in \mathcal{W}_{t} $ (see \eqref{Wcal} for the definition of $\mathcal{W}_{t}$).
\end{itemize}
\end{remark}

\begin{assumption}\label{ae}
We assume that there
exist some  constants  $\overline{\gamma}\geq 0$, $K>0$,  as well as a random variable $ C$ satisfying $C(\o) \geq 0$ for all $\omega \in \Omega$ and $E( C) <\infty$  such that for all $\omega \in \Omega$, $\lambda\geq 1$ and $x \in \mathbb{R}$, we have
\begin{eqnarray}
\label{ae+}
U(\omega,\lambda x) & \leq & K\lambda^{\overline{\gamma}}\left(U\left(\omega,x+\frac{1}{2}\right)+  C(\omega)\right).
\end{eqnarray}
\end{assumption}
\begin{remark}
First note that the constant $\frac{1}{2}$ in \eqref{ae+} has been chosen arbitrarily to simplify the presentation. This can be done without loss of generality. Indeed, assume there exists some constant $\overline{x} \geq 0$ such that  for all $\omega \in \Omega$, $\lambda\geq 1$ and $x \in \mathbb{R}$
\begin{eqnarray}
\label{aegen}
U(\omega,\lambda x) & \leq & K\lambda^{\overline{\gamma}}\left(U(\omega,x+\overline{x})+  C(\omega)\right).
\end{eqnarray}
Using the monotonicity of $U$, we can always assume $\overline{x}>0$. Set for all $\o \in \O$ and $x \in \mathbb{R}$, $\overline{U}(\o,x)=U(\o,2\overline{x}x)$. Then for all $\omega \in \Omega$, $\lambda\geq 1$ and $x \in \mathbb{R}$, we have that
$$\overline{U}(\o,\lambda x)=U(\o,2\lambda \overline{x}x) \leq K\lambda^{\overline{\gamma}}\left(U(\omega,2\overline{x}x+\overline{x})+  C(\omega)\right)= K\lambda^{\overline{\gamma}}\left(\overline{U}\left(\omega,x+\frac{1}{2}\right)+  C(\omega)\right),$$
and $\overline{U}$ satisfies \eqref{ae+}. It is clear that if $\phi^{*}$ is an optimal solution for the problem \\
$\overline{u}(x):=\sup_{\phi \in \Phi(\overline{U},\frac{x}{2\overline{x}})} E \overline{U}(\cdot,V^{\frac{x}{2\overline{x}},\phi}_{T}(\cdot))$ then $2\overline{x}\phi^{*}$ is an optimal solution for \eqref{eq:OP}. Note as well that, since $K>0$ and $C \geq 0$, it is immediate to see that  for all $\omega \in \Omega$, $\lambda\geq 1$ and $x \in \mathbb{R}$ \begin{eqnarray}
\label{aePos}
U^{+}(\omega,\lambda x) & \leq & K\lambda^{\overline{\gamma}}\left(U^{+}\left(\omega,x+\frac{1}{2}\right)+  C(\omega)\right).
\end{eqnarray}
\end{remark}
\begin{remark}
We now provide some insight  on Assumption \ref{ae}. As the inequality \eqref{ae+} is used to control the behaviour of $U^{+}(\cdot,x)$ for large values of $x$, the usual assumption in the non-concave case (see Assumption 2.10 in \cite{CRR}) is that there exists some $\hat{x} \geq 0$ such that $E U^{+}(\cdot,\hat{x}) <\infty$ as well as a random variable $ C_{1}$ satisfying $E( C_{1}) <\infty$ and $ C_{1}(\o) \geq 0$ for all $\o$  \footnote{In the cited paper $C_{1} \geq 0$ a.s but this is not an issue, see Remark \ref{remae} below} such that for all $x \geq \hat{x}$, $\lambda\geq 1$ and  $\o \in \O$
\begin{align}
\label{aetemp}
U(\omega,\lambda x)  \leq  \lambda^{\overline{\gamma}}\left(U(\omega,x)+  C_{1}(\omega)\right).
 \end{align}
 We prove now  that if \eqref{aetemp} holds true then \eqref{aegen} is verified with $\overline{x}=\hat{x}$, $K=1$ and $C=C_{1}$. Indeed, assume that \eqref{aetemp} is  verified.
For $x \geq 0$, using the monotonicity of $U$, we have for all $\o \in \O$ and $\lambda\geq 1$ that
 $$U(\o,\lambda x) \leq U(\o,\lambda (x+\hat{x})) \leq  \lambda^{\overline{\gamma}}\left(U(\omega,x+\hat{x})+  C_{1}(\omega)\right).$$
 And for $x<0$  this is true as well since $U(\o,x)=-\infty$.\\
Therefore  \eqref{aegen} is a weaker assumption than  \eqref{aetemp}.
Note as well that if we  assume that \eqref{aetemp}  holds true for all $x >0$, then  if $0<x<1$ and $\o \in \O$ we have $$U(\o,1) \leq \left(\frac{1}{x}\right)^{\overline{\gamma}} \left(U(\o, x)+C_{1}(\o)\right),$$
and $U(\o,0):=\lim_{x \to 0, \; x>0} U(\o,x) \geq -C_{1}(\o)$. This excludes for instance the case where $U$ is the logarithm. Furthermore, this also implies that $E U^{-}(\cdot,0) \leq EC_1<\infty$ and we are back to Assumption 2.7 of \cite{CRR}  \\
Alternatively, recalling the way the concave case is handled (see Lemma 2 in \cite{RS05}), we could have introduced that  there exists a random variable $ C_{2}$ satisfying $E( C_{2}) <\infty$ and $ C_{2} \geq 0$ such that for all $x \in \mathbb{R}$, $\o \in \O$
\begin{align}
\label{aetemp2}
U^{+}(\omega,\lambda x)  \leq  \lambda^{\overline{\gamma}}\left(U^{+}(\omega,x)+  C_{2}(\omega)\right).
\end{align}
We have not done so as it is difficult to prove that this inequality is preserved through  the dynamic programming procedure when considering non-concave functions unless we assume that $E U^{-}(\cdot,0)<\infty$ as in \cite{CRR}.
\end{remark}
\begin{remark}
\label{remae}
If there exists some set $\Omega_{AE} \in \mathcal{F}$ with $P(\Omega_{AE})=1$ such that \eqref{ae+}  holds true only for $\o \in \Omega_{AE}$, then setting as in Remark \ref{moregeneral}, $\overline{U}(\o,x):=U(\o,x)1_{\Omega_{AE}\times \mathbb{R}}(\o,x),$
$\overline{U}$ satisfies \eqref{ae+} and the value function in \eqref{eq:OP} does not change.
We also assume without loss of generality that ${C}(\o) \geq 0$ for all $\o$ in \eqref{ae+}. Indeed, if $ C \geq 0$ $P$-a.s, we could consider $\widetilde{C}:= C\mathbb{I}_{\overline C\geq0}$. Then Assumption \ref{ae} would hold true with $\widetilde{C}$ instead of $ C$.\end{remark}

\begin{remark}
In the case where \eqref{aetemp} holds true, we refer to remark 2.5 of \cite{CR14} and remark 2.10 of \cite{CRR} for the interpretation of $\overline{\gamma}$~: for $ C_{1}=0$, it can be seen as a generalization of
the ``asymptotic
elasticity'' of $U$ at $+ \infty$ (see \cite{KS99}). So  \eqref{aetemp}  requires that  the (generalized) asymptotic
elasticity at $+ \infty$ is finite. In this case and if $U$ is differentiable there is a nice economic interpretation of  the ``asymptotic elasticity'' as the ratio of ``marginal utility'': $U'(x)$ and the ``average utility'': $\frac{U(x)}{x}$, see  again Section 6 of \cite{KS99} for  further discussions.
The case $C_1>0$ allows bounded utilities. In \cite{CRR} it is proved that  unlike in the concave case, the fact that $U$ is bounded from
above (and therefore satisfies  \eqref{aegen}) does not implies that the asymptotic elasticity is bounded.   \\
We propose now an example of an unbounded utility function satisfying \eqref{aegen} and such that \\ $\limsup_{x \to \infty} \frac{x U'(x)}{U(x)}=+\infty$. This shows (as the counterexample of \cite{CRR}), that Assumption \ref{ae} is less strong that the usual ``asymptotic elasticity". Let $U: \mathbb{R}  \to \mathbb{R}$ be defined by
\begin{align*}
U(x)= -\infty 1_{(-\infty,0)}(x)+ \sum_{p \geq 0} p1_{[p,p+1-\frac{1}{2^{p+1}})}(x) +f_{p}(x)1_{[p+1-\frac{1}{2^{p+1}},p+1)}(x)
\end{align*}
where $f_{p}(x)=2^{p+1}x+(p+1)\left(1-2^{p+1}\right)$ for $p \in \mathbb{N}$. Then $U$ satisfies Definition \ref{utilite} and we have
$$U'(x)=\sum_{p \geq 0} 2^{p+1}1_{[p+1-\frac{1}{2^{p+1}},p+1)}(x).$$
We prove that \eqref{aegen} holds true. Note that for all $x \geq 0$ we have $ x-1 \leq U(x) \leq x+1$. Let $x \geq 0$ and $\lambda \geq1$ be fixed. Then we get that
$$U(\l x) \leq \l x +1 \leq \l \left(U(x+1)+1\right)+1 \leq  \l \left(U(x+1)+2\right), $$
and \eqref{aegen} is true with $K=\overline{x}=1$ and $C=2$.
Now  for $k \geq 0$, let $x_{k}= k+1-\frac{1}{2^{k+2}}$. We have $U(x_{k})=f_{k}(x_{k})= k+\frac{1}{2}$ and
$$\frac{x_{k} U'(x_{k})}{U(x_k)}= 2^{k+1}\frac{\left(k+1- \frac{1}{2^{k+2}}\right)} {k+\frac{1}{2}} \to_{k \to \infty} +\infty.$$

\end{remark}
\begin{remark}
\label{concave}
We propose further examples where Assumption \ref{ae} holds true.
\begin{itemize}
\item[i)]
Assume that $U$ is bounded from above by some  integrable random constant $C_1 \geq 0$ and that $EU^{-} (\cdot,\frac{1}{2}) <\infty$. Then for all $x \geq 0$, $\lambda \geq 1$, $\o \in \O$ we have
\begin{align*}
U(\o, \l x) \leq C_1(\o) &\leq \l U\left(\o,x + \frac{1}{2} \right)+ \l \left(C_1(\o) -U\left(\o,x + \frac{1}{2} \right)\right) \\
& \leq  \l U\left(\o,x + \frac{1}{2} \right)+ \l \left(C_1(\o) +U^{-}\left(\o,\frac{1}{2} \right)\right),
\end{align*}
and \eqref{ae+} holds true for $x \geq 0$ with $K=1$, $\overline{\gamma}=1$ and $C(\cdot)=C_{1}(\cdot) + U^{-}(\cdot,\frac{1}{2})$. As $U(\cdot,x)=-\infty$ for $x<0$, \eqref{ae+} is true for all  $x \in \mathbb{R}$.
\item[ii)]Assume that $U$ satisfies Definition \ref{utilite} and that the restriction of $U$ to $[0,\infty)$ is concave and non-decreasing and that $E U^{-}(\cdot,1)<\infty$. We use similar arguments as in Lemma 2 in \cite{RS06}.
Indeed, let $x \geq 2$, $\l \geq 1$ be fixed we have
\begin{align*}
U(\o,\l x)  \leq U(\o,x) + U^{'}(\o,x) (\l x-x) &\leq U(\o,x) + \frac{U(\o,x) - U(\o,1)}{x-1}(\l -1)x \\
&\leq U(\o,x) + 2 (\l -1) \left(U(\o,x)-U(\o,1)\right) \\
&\leq U(\o,x) + 3 (\l - \frac{1}{3}) \left(U(\o,x)-U(\o,1)\right)\\
&\leq 3 \l \left(U(\o,x)+ U^{-}(\o,1)\right),
\end{align*}
where we have used the concavity of $U$ for the first two inequalities and the fact that $x \geq 2$ and $U$ is non-decreasing for the other ones. Thus from the proof that  \eqref{aetemp} implies \eqref{aegen}, we obtain that \eqref{aegen} holds true with $K=3$, $\overline{\gamma}=1$, $\overline{x}=2$ and $C(\cdot)=U^{-}(\cdot,1)$.\end{itemize}
\end{remark}
We can now state our main result.
\begin{theorem}
\label{main}
Assume the (NA) condition and that Assumptions \ref{uisfinite}, \ref{PhiUX} and \ref{ae} hold true. Let $x\geq 0$. Then, $u(x)<\infty$ and there exists some optimal strategy $\phi^* \in \Phi(U,x)$ such that
$$u(x) =  EU(\cdot,V^{x,\phi^*}_{T}(\cdot)).$$
Moreover $\phi^*_{t}(\cdot) \in D^{t}(\cdot)$ a.s. for all $0 \leq t \leq T$.
\end{theorem}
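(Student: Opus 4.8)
The plan is to solve \eqref{eq:OP} by backward dynamic programming, reducing the $T$-step problem to a chain of one-period problems and then pasting the one-period optimizers together by measurable selection. First I would set up the Bellman recursion: put $U_T:=U$ and, for $t=T-1,\dots,0$, define on $\Omega^t\times\mathbb{R}$
$$U_t(\omega^t,x):=\sup_{h\in D^{t+1}(\omega^t)}\int_{\Omega_{t+1}}U_{t+1}\big((\omega^t,\omega_{t+1}),\,x+h\,\Delta S_{t+1}(\omega^t,\omega_{t+1})\big)\,q_{t+1}(d\omega_{t+1}\,|\,\omega^t),$$
where restricting $h$ to $D^{t+1}(\omega^t)$ is legitimate by Remark \ref{proj} and Lemma \ref{CfNutz}, since the portfolio value depends only on the projection onto $D^{t+1}$. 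The induction invariant to be propagated is that each $U_t(\omega^t,\cdot)$ is non-decreasing, usc, equal to $-\infty$ on $(-\infty,0)$, that $\omega^t\mapsto U_t(\omega^t,x)$ is $\Fc_t$-measurable, that $U_t$ satisfies an asymptotic-elasticity inequality of the form \eqref{ae+} with an integrable random constant, and that $E U_t^+$ is finite along admissible wealth levels. These are exactly the structural properties required to run the one-period argument at the previous stage.

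The heart of the proof is the one-period step (Section \ref{seone}), invoked at each $t$. Given $U_{t+1}$ enjoying the invariant, I would show that for $P_t$-almost every $\omega^t$ the one-period supremum is attained at some $\hat h_{t+1}(\omega^t)\in D^{t+1}(\omega^t)$, and that $\hat h_{t+1}$ can be chosen $\Fc_t$-measurably. Attainment comes from coercivity supplied by the quantitative (NA) characterization of Proposition \ref{AOAmulti}: for $\omega^t\in\Omega^t_{NA}$ and $h\in D^{t+1}(\omega^t)$ one has $q_{t+1}(h\Delta S_{t+1}\le-\alpha_t(\omega^t)|h|\,|\,\omega^t)\ge\alpha_t(\omega^t)$; since $U_{t+1}(\cdot,y)=-\infty$ for $y<0$ and $U_{t+1}$ is usc, letting $|h|\to\infty$ drives the integral to $-\infty$, so the supremum is effectively over a compact ball where upper semicontinuity yields a maximizer. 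The measurable selection is then obtained exactly as in the proofs of Lemma \ref{localNA} and Proposition \ref{AOAmulti}: one checks that the graph of the argmax random set lies in $\Fc_t\otimes\mathcal{B}(\mathbb{R}^d)$ (using Lemma \ref{Dmeasurability}, Lemma \ref{LemmaA1} and Fubini for kernels), applies the Projection Theorem to place its projection in $\overline{\Fc}_t$, invokes the Aumann Theorem (Corollary 1 in \citet{bv}) to get an $\overline{\Fc}_t$-measurable selector, and descends to an $\Fc_t$-measurable version agreeing $P_t$-a.s. via Lemma \ref{completemes}. One must simultaneously verify that $U_t$ again satisfies the full invariant, the delicate points being upper semicontinuity and measurability of the value and the propagation of the \eqref{ae+}-type bound and of $E U_t^+<\infty$ (this last through Assumptions \ref{uisfinite}, \ref{PhiUX}, \ref{ae} and Proposition \ref{propufini}).

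Finally I would paste the selectors into a global strategy and verify optimality. Define $\phi^*$ recursively along the optimal wealth path by
$$\phi^*_{t+1}(\omega^t):=\hat h_{t+1}\big(\omega^t,\,V^{x,\phi^*}_t(\omega^t)\big);$$
this is predictable, satisfies $\phi^*_{t+1}\in D^{t+1}$ a.s. by construction, and keeps wealth non-negative, so $\phi^*\in\Phi(x)=\Phi(U,x)$. A tower-property argument, identifying conditional expectations with kernel integrals via Lemma \ref{lienespcond}, gives $E U(\cdot,V^{x,\phi}_T)\le U_0(x)$ for every competing $\phi\in\Phi(U,x)$, while forward substitution of $\phi^*$ yields $E U(\cdot,V^{x,\phi^*}_T)=U_0(x)$. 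Together with the finiteness $U_0(x)<\infty$ inherited at $t=0$ from the propagated asymptotic-elasticity bound, this gives $u(x)=U_0(x)=EU(\cdot,V^{x,\phi^*}_T)<\infty$ and the optimality of $\phi^*$.

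The main obstacle is the one-period step in this non-concave, possibly incomplete setting. Because $U$ need not be concave, attainment of the supremum cannot rest on convex analysis and must be extracted from upper semicontinuity together with the (NA)-coercivity of Proposition \ref{AOAmulti}; and because $(\Omega^t,\Fc_t,P_t)$ is not assumed complete, the selection must be performed on $\overline{\Fc}_t$ and transported back to $\Fc_t$. All of this has to be carried out while proving that the value function $U_t$ inherits \emph{every} structural property needed to continue the induction, the asymptotic-elasticity inequality guaranteeing $U_0(x)<\infty$ being the most sensitive to preserve under the dynamic programming operator.
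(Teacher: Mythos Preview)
Your proposal is correct and follows essentially the same route as the paper: backward dynamic programming with the invariant \eqref{amiens}--\eqref{bordeaux}, a one-period existence result based on the quantitative (NA) bound of Proposition~\ref{AOAmulti} together with upper semicontinuity, measurable selection via the Projection/Aumann Theorems with descent from $\overline{\Fc}_t$ to $\Fc_t$, and a forward pasting plus two-sided comparison $U_0(x)=EU(\cdot,V_T^{x,\phi^*})\leq u(x)\leq U_0(x)$. The only point where the paper is slightly more careful than your sketch is the definition of $U_t$: taking the supremum over all of $D^{t+1}(\omega^t)$ rather than over $\Dc_x^{t+1}(\omega^t)=\Hc_x^{t+1}(\omega^t)\cap D^{t+1}(\omega^t)$ risks the inner integral failing to be well-defined in the generalised sense (see Remark~\ref{remtrace}), so the paper restricts to the admissible set first and also inserts the indicator $1_{\widetilde\Omega^t}$ to secure $\Fc_t\otimes\mathcal{B}(\mathbb{R})$-measurability of $U_t$; you would need to make the same adjustments when filling in the details.
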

We will use dynamic programming in order to prove our main result. We will combine the approach of \cite{RS05}, \cite{RS06}, \cite{CR14}, \cite{CRR} and \cite{Nutz}. As in \cite{Nutz}, we will consider a one period case where the initial filtration is trivial (so that  strategies are in $\mathbb{R}^{d}$) and thus the proofs are much simpler than the ones of \cite{RS05},
 \cite{RS06}, \cite{CR14} and \cite{CRR}. The price to pay is that in the multi-period case where we use intensively measurable selection arguments (as in  \cite{Nutz}) in order to obtain Theorem \ref{main}. In our model, there is only one probability measure, so we don't have to introduce Borel spaces and analytic sets. Thus our modelisation of $(\Omega, \Fc, \mathfrak{F},P)$ is more general than the one of \cite{Nutz} restricted to one probability measure. As we are in a non concave setting we use similar ideas to theses of \cite{CR14} and \cite{CRR}.

Finally, as in \cite{RS05}, \cite{RS06}, \cite{CR14} and \cite{CRR}, we propose the following result as a simpler but still general setting where Theorem \ref{main} applies. We introduce for all $0 \leq t \leq T$
\begin{align}
\label{Wcal}
\mathcal{W}_{t}:= \left \{ X: \Omega^{t} \to \mathbb{R}\cup \{\pm \infty\},\; \mbox{$\mathcal{F}_{t}$-measurable},\; E |X|^{p} <\infty \, \mbox{ for all $p > 0$}\right\}
\end{align}
\begin{theorem}
\label{main2}
Assume the (NA) condition and that Assumption \ref{ae} hold true. Assume furthermore that $E U^{+}(\cdot,1)< +\infty$  and that  for all $0 \leq t \leq T$ $|\Delta S_{t}|, \,\frac{1}{\alpha_{t}} \in \mathcal{W}_{t}$.   Let $x\geq 0$. Then, for all $\phi \in \Phi(x)$ and all $0 \leq t \leq T$, $V_{t}^{x,\phi} \in \mathcal{W}_{t}$. Moreover, there exists some optimal strategy $\phi^* \in \Phi(U,x)$ such that
$$u(x) =  EU(\cdot,V^{x,\phi^*}_{T}(\cdot))<\infty$$
\end{theorem}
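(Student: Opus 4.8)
The plan is to deduce Theorem~\ref{main2} from Theorem~\ref{main}: the latter already delivers $u(x)<\infty$ and an optimiser $\phi^*$ with $\phi^*_t\in D^t$ once Assumptions~\ref{uisfinite}, \ref{PhiUX} and \ref{ae} are in force, so the whole task reduces to showing that, under the present concrete hypotheses, the moment property $V^{x,\phi}_t\in\mathcal{W}_t$ holds and that Assumptions~\ref{uisfinite} and \ref{PhiUX} are satisfied. These last two amount to a single statement, namely $E U^+(\cdot,V^{1,\phi}_T)<\infty$ for every $\phi\in\Phi(1)$: indeed this forces $\Phi(U,1)=\Phi(1)$ (so \ref{PhiUX} holds) and then makes \ref{uisfinite} trivial.

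First I would establish the moment bound by forward induction on $t$. Fix $\phi\in\Phi(x)$. By Remark~\ref{nat} (i.e.\ Lemma~\ref{AOAAPP}) the (NA) condition forces $V^{x,\phi}_t\ge 0$ $P_t$-a.s.\ for every $t$. Using Remark~\ref{proj} and Lemma~\ref{CfNutz} I may replace $\phi_t(\omega^{t-1},\cdot)$ by its orthogonal projection onto $D^t(\omega^{t-1})$ without changing the portfolio value, so I assume $\phi_t(\omega^{t-1})\in D^t(\omega^{t-1})$. Applying the quantitative characterisation \eqref{valaki} of Proposition~\ref{AOAmulti} with $h=\phi_t(\omega^{t-1})$ shows that the event $\{\phi_t\Delta S_t\le-\alpha_{t-1}|\phi_t|\}$ has positive conditional probability; since $V^{x,\phi}_t\ge 0$ there and $\phi_t,V^{x,\phi}_{t-1}$ are $\mathcal{F}_{t-1}$-measurable, one gets $V^{x,\phi}_{t-1}-\alpha_{t-1}|\phi_t|\ge 0$, whence
\[
|\phi_t|\le \frac{V^{x,\phi}_{t-1}}{\alpha_{t-1}}\quad P_{t-1}\text{-a.s.}
\]
Consequently $0\le V^{x,\phi}_t\le V^{x,\phi}_{t-1}\bigl(1+|\Delta S_t|/\alpha_{t-1}\bigr)$. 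Starting from the constant $V_0=x\in\mathcal{W}_0$ and using $\mathcal{W}_{t-1}\subset\mathcal{W}_t$, a Cauchy--Schwarz estimate together with $|\Delta S_t|,1/\alpha_{t-1}\in\mathcal{W}$ propagates finiteness of all moments: $E|V_t|^p\le (E V_{t-1}^{2p})^{1/2}(E(1+|\Delta S_t|/\alpha_{t-1})^{2p})^{1/2}<\infty$ for every $p>0$, giving $V^{x,\phi}_t\in\mathcal{W}_t$.

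With the moment bound in hand I would verify the integrability statement. Fix $\phi\in\Phi(1)$; then $V_T:=V^{1,\phi}_T\ge 0$ and $V_T\in\mathcal{W}_T$. On $\{V_T<\tfrac12\}$ monotonicity gives $U^+(\cdot,V_T)\le U^+(\cdot,1)$, while on $\{V_T\ge\tfrac12\}$ I apply \eqref{aePos} with $x=\tfrac12$ and $\lambda=2V_T\ge 1$ to obtain $U^+(\cdot,V_T)\le K(2V_T)^{\overline{\gamma}}\bigl(U^+(\cdot,1)+C\bigr)$. Integrating and invoking Hölder's inequality, the all-orders moments of $V_T$ combine with the integrability of $U^+(\cdot,1)$ and $C$ to yield $EU^+(\cdot,V_T)<\infty$. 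This gives Assumptions~\ref{uisfinite} and \ref{PhiUX}, and Theorem~\ref{main} then supplies $u(x)<\infty$ together with an optimal $\phi^*\in\Phi(U,x)$ satisfying $\phi^*_t\in D^t$ a.s.

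The main obstacle is the moment bound $V^{x,\phi}_t\in\mathcal{W}_t$, and specifically the derivation of the pointwise control $|\phi_t|\le V^{x,\phi}_{t-1}/\alpha_{t-1}$: it is exactly here that one must combine the a.s.\ non-negativity of the running wealth, the reduction to strategies valued in $D^t$ (so that the local bound \eqref{valaki} is applicable), and the $\mathcal{F}_t$-measurability of $\alpha_t$ furnished by Proposition~\ref{AOAmulti}. The subsequent propagation of all moments through the multiplicative recursion, and the final pairing of $V_T^{\overline{\gamma}}$ with $U^+(\cdot,1)$ and $C$ in the integrability step, require care in the choice of Hölder exponents but are otherwise routine once the hypotheses $|\Delta S_t|,1/\alpha_t\in\mathcal{W}_t$ have been exploited.
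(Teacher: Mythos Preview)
Your strategy matches the paper's almost step for step: project onto $D^t$, use the quantitative bound \eqref{valaki} (the paper packages this through the one-period Lemma~\ref{rast}) to obtain $|\phi_t^\perp|\le V^{x,\phi}_{t-1}/\alpha_{t-1}$, iterate to the product estimate $V^{x,\phi}_t\le x\prod_{s\le t}\bigl(1+|\Delta S_s|/\alpha_{s-1}\bigr)$ and hence $V^{x,\phi}_t\in\mathcal W_t$, then feed this into \eqref{aePos} with $x=\tfrac12$ to control $U^+(\cdot,V_T)$ and finally invoke Theorem~\ref{main}. The only cosmetic difference is that the paper carries the explicit product all the way into the $U^+$ estimate, whereas you abstract through the intermediate fact $V_T\in\mathcal W_T$.

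There is, however, one place where your write-up is more explicit than the paper and thereby exposes a genuine difficulty: you appeal to ``H\"older's inequality'' to conclude $E\bigl[V_T^{\overline\gamma}\,(U^+(\cdot,1)+C)\bigr]<\infty$ from $V_T\in\mathcal W_T$ and $U^+(\cdot,1)+C\in L^1$. H\"older does not deliver this: with conjugate exponents $(p,q)$ you would need either $V_T^{\overline\gamma}\in L^\infty$ (not available) or $U^+(\cdot,1)+C\in L^q$ for some $q>1$ (not assumed). In general $A\in\bigcap_{p<\infty}L^p$ and $B\in L^1$ do \emph{not} force $AB\in L^1$. The paper's proof reaches the analogous product $\bigl(\prod_s(1+|\Delta S_s|/\alpha_{s-1})\bigr)^{\overline\gamma}\cdot\bigl(U^+(\cdot,1)+C\bigr)$ and simply asserts its integrability, so it glosses over the very same point; your argument is therefore faithful to the paper, but neither version is complete as stated without an extra hypothesis such as $U^+(\cdot,1),\,C\in L^{q}$ for some $q>1$.
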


\section{One period case}
\label{seone}
Let $(\overline{\Omega}, \Hc,Q)$ be a probability space (we denote by $E$ the expectation under $Q$) and $Y(\cdot)$ a $\Hc$-measurable $\mathbb{R}^{d}$-valued random variable. $Y(\cdot)$ could represent the change of value of the price process.
Let $D \subset \mathbb{R}^d$ be
the smallest affine subspace of $\mathbb{R}^d$ containing the support of the
distribution of $Y(\cdot)$. We assume that $D$ contains 0, so that $D$ is in fact a non-empty vector subspace of $\mathbb{R}^d$. The condition corresponding to (NA) in the present setting is
\begin{assumption}
\label{AOAone}
There exists some constant $0<\alpha\leq 1$
such that for all $h \in D$
\begin{eqnarray}
\label{naQ}
Q( h Y(\cdot) \leq -\alpha |h|) \geq \alpha.
\end{eqnarray}
\end{assumption}
\begin{remark}
\label{remAOAone}
If $D=\{0\}$ then   \eqref{naQ} is trivially true.
\end{remark}
Remark \ref{proj} below is exactly Remark 8 of \cite{CR14} (see also Lemma 2.6 of \cite{Nutz}).
\begin{remark}
\label{proj}
Let $h \in  \mathbb{R}^d$ and let $h' \in  \mathbb{R}^d$  be the orthogonal
projection of $h$ on $D$. Then $h-h'\perp D$  hence
$\{Y(\cdot)\in D\}\subset\{(h-h')Y(\cdot)=0\}$. It follows that
$$
Q(h Y(\cdot)=h' Y(\cdot))=Q((h-h')Y(\cdot)=0)\geq
Q(Y(\cdot)\in D)=1
$$
by the definition of $D$. Hence $Q(h Y(\cdot)=h' Y(\cdot))=1$.
\end{remark}
\begin{assumption}
\label{samedi}
We consider a \emph{random utility} $V:~\overline{\Omega} \times \mathbb{R} \rightarrow\mathbb{R}$ satisfying
the following two conditions
	\begin{itemize}
		\item
		for every $x\in \mathbb{R}$, the function $V\!\left(\cdot,x\right):\overline{\Omega}\rightarrow\mathbb{R}$ is $\Hc$-measurable,
		\item
		 for every $\omega\in \overline{\Omega}$, the function $V\!\left(\o,\cdot\right):\mathbb{R}\rightarrow\mathbb{R}$ is non-decreasing and usc on $\mathbb{R}$,
		 \item
		 $V(\cdot,x)=-\infty$, for all $x <0$.
	\end{itemize}
\end{assumption}
Let $x \geq 0$ be fixed. We define
\begin{align}
\label{Hx}
\Hc_{x}&: =\left\{h \in \mathbb{R}^{d}, \; Q(x+ h Y(\cdot) \geq 0)=1\right\},\\
\label{Dx}
D_{x}&:= \Hc_{x} \cap D.
\end{align}
It is clear that $\Hc_{x}$ and $D_{x}$ are closed subsets of $\mathbb{R}^{d}$.
We  now define the function which is our main concern in the one period case
\begin{align}\label{parc}
		v(x) =(-\infty) 1_{(-\infty,0)}(x)+ 1_{[0,+\infty)}(x)\sup_{h \in \Hc_{x} } EV\left(\cdot,x + hY(\cdot)\right).
	\end{align}
\begin{remark}
\label{HvsD}
First note that, from Remark \ref{proj},
\begin{eqnarray}\label{noam}
v(x)  =  (-\infty) 1_{(-\infty,0)}(x)+ 1_{[0,+\infty)}(x)\sup_{h \in D_x} EV(\cdot,x + hY(\cdot)).
\end{eqnarray}
\end{remark}
\begin{remark}
It will be shown in Lemma \ref{fat} that under Assumptions \ref{AOAone}, \ref{samedi}, \ref{ae1} and \ref{dimanche}, for all $h \in \mathcal{H}_x$, $E(V(\cdot,x + hY(\cdot))$ is well-defined and more precisely that $EV^{+}(\cdot,x + hY(\cdot))<+\infty$. So, under this set of assumptions, $\Phi(V,x)$, the set of $h \in \Hc_{x}$ such that $E V(\cdot,x+hY(\cdot))$ is well-defined, equals $\Hc_{x}$.
\end{remark}
We present now the assumptions which allow to assert that there exists some optimal solution for  \eqref{parc}.  First we introduce the ``asymptotic elasticity"  assumption.
\begin{assumption}\label{ae1}
There
exist some  constants $\overline{\gamma} \geq 0$, $K>0$, as well as some $\Hc$-measurable $C $ with $C(\o) \geq 0$ for all $\omega \in \overline{\O}$ and $E(C) <\infty$, such that for all $\omega \in \overline{\O}$, for all $\lambda\geq 1$, $x \in \mathbb{R}$ we have
\begin{align}\label{elastic1}
V(\omega,\lambda x)  \leq  K\lambda^{\overline{\gamma}}\left(V\left(\omega,x+ \frac{1}{2}\right)+ C(\omega) \right).
\end{align}
\end{assumption}
\begin{remark}
The same comments as in Remark \ref{remae} apply.
\label{remext}
Furthermore, note  that since $K>0$ and $C \geq 0$ we also have that for all $\omega \in \overline{\O}$, all $\lambda\geq 1$ and $x \in \mathbb{R}$
\begin{align}\label{elasticplus}
V^+(\omega,\lambda x)  \leq  K\lambda^{\overline{\gamma}}\left(V^+\left(\omega,x+\frac{1}{2}\right)+ C(\omega) \right).
\end{align}
\end{remark}
We introduce now some integrability assumption on  $V^{+}$.
\begin{assumption}
\label{dimanche}
For every $h \in \Hc_{1} $,
\begin{align}
\label{ia}
EV^+(\cdot,1+ h Y(\cdot))<\infty.
\end{align}
\end{assumption}
The following lemma corresponds to Lemma 2.1 of \cite{RS06} in the deterministic case.
\begin{lemma}
\label{rast}
Assume that Assumption \ref{AOAone} holds true.
Let $x \geq 0$ be fixed. Then $D_x \subset B(0,\frac{x}{\alpha})$ (see \eqref{Dx} for the definition of $D_{x}$), where $B(0,\frac{x}{\alpha})=
\{h \in \mathbb{R}^{d}, \ |h| \leq \frac{x}{\alpha}\}$ and
$D_x$ is a convex, compact subspace of $\mathbb{R}^d$.
\end{lemma}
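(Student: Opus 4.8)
The plan is to establish first the key inclusion $D_x \subset B(0,\frac{x}{\alpha})$, from which boundedness---and hence, together with closedness, compactness---will follow, and then to verify convexity separately. The whole argument rests on combining the two constraints that membership in $D_x = \Hc_x \cap D$ imposes.

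For the inclusion, I would fix $h \in D_x$ and exploit both constraints simultaneously. On one hand, $h \in \Hc_x$ means that $x + hY(\cdot) \geq 0$ holds $Q$-almost surely, i.e.\ on a set of full measure. On the other hand, since $h \in D$, Assumption \ref{AOAone} yields
$$Q\big(hY(\cdot) \leq -\alpha|h|\big) \geq \alpha > 0.$$
Intersecting this event of positive probability with the full-measure event $\{x + hY(\cdot) \geq 0\}$ produces an event of probability at least $\alpha > 0$ on which both inequalities hold; in particular at least one such $\omega$ exists. For that $\omega$ we get $0 \leq x + hY(\omega) \leq x - \alpha|h|$, whence $\alpha|h| \leq x$, that is $|h| \leq \frac{x}{\alpha}$. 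Since $h \in D_x$ was arbitrary, this gives $D_x \subset B(0,\frac{x}{\alpha})$.

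For the structural properties, I would proceed as follows. The set $\Hc_x$ is convex: if $h_1, h_2 \in \Hc_x$ and $\lambda \in [0,1]$, then the identity $x + (\lambda h_1 + (1-\lambda)h_2)Y(\cdot) = \lambda(x + h_1 Y(\cdot)) + (1-\lambda)(x + h_2 Y(\cdot))$ shows that almost-sure non-negativity is preserved, so $\lambda h_1 + (1-\lambda)h_2 \in \Hc_x$. Since $D$ is a vector subspace and hence convex, $D_x = \Hc_x \cap D$ is convex as an intersection of convex sets. Finally, $D_x$ is closed (as already noted just after \eqref{Dx}) and, by the inclusion proved above, bounded; by the Heine--Borel theorem it is therefore compact.

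The only step that genuinely uses the hypothesis is the inclusion, and there the crux is the elementary but essential observation that the quantitative (NA) bound of Assumption \ref{AOAone} forces any admissible $h$ to generate, with probability at least $\alpha$, a loss of magnitude at least $\alpha|h|$, a loss that the budget constraint $x + hY(\cdot) \geq 0$ can absorb only when $|h| \leq \frac{x}{\alpha}$. Once this bound is secured, convexity and compactness are routine, so I expect no serious obstacle beyond setting up the intersection-of-events argument correctly.
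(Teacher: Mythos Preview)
Your proof is correct and follows essentially the same approach as the paper: both arguments combine the admissibility constraint $x+hY(\cdot)\geq 0$ a.s.\ with the quantitative bound $Q(hY(\cdot)\leq -\alpha|h|)\geq\alpha>0$ from Assumption \ref{AOAone} to force $|h|\leq x/\alpha$. The paper phrases this by contradiction (assuming $|h|>x/\alpha$ and deriving $Q(x+hY(\cdot)<0)>0$), whereas you argue directly by intersecting the two events, and you spell out the convexity verification that the paper leaves as ``clear''; these are cosmetic differences only.
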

Note that if $x=0$, it follows that $D_{x}=\{0\}$.

\begin{proof}
Let $h \in D_x$. Assume that $|h| > \frac{x}{\alpha}$ and let $\omega \in \{h Y(\cdot) \leq -\alpha |h|\}$. Then
$x+ hY(\o) <x-\alpha |h|<0$ and from Assumption \ref{AOAone}
$Q(x+ hY(\cdot) <0) \geq Q( h Y(\cdot) \leq -\alpha |h|) \geq \alpha>0$, a contradiction. The convexity and the closedness of $D_x$ are clear and the compactness follows from the boundness property.
\end{proof}\\

This lemma corresponds in the deterministic case to Lemma 4.8 of \cite{CRR} (see also Lemma 2.3 of \cite{RS06} and Lemma 2.8 of \cite{Nutz}).
\begin{lemma}
\label{fat}
Assume that  Assumptions \ref{AOAone}, \ref{samedi}, \ref{ae1} and \ref{dimanche} hold true. Then there exists a $\Hc$-measurable $L\geq 0$ satisfying $ E (L)<\infty$ and such that for all $x \geq 0$ and $h\in \mathcal{H}_{x}$
\begin{align}
\label{Vplus}
V^+(\cdot,x+hY(\cdot))\leq \left((2x)^{\overline{\gamma}}K+1\right)L(\cdot)  \; Q-a.s.
\end{align}
\end{lemma}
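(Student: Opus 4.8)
The plan is to exhibit the explicit dominating function $L(\cdot):=V^{+}(\cdot,1+\sigma(Y(\cdot)))+C(\cdot)$, where $\sigma(y):=\sup_{g\in D_{1}}gy$ is the support function of the compact convex set $D_{1}$ (compactness and convexity come from Lemma~\ref{rast}). Since $V^{+}\ge 0$ and $C\ge 0$ we have $L\ge 0$, and $L$ is $\Hc$-measurable because $\sigma$ is continuous (support function of a fixed compact set), $Y$ is $\Hc$-measurable, and $V$, being $\Hc$-measurable in $\omega$ and nondecreasing usc in $x$, is jointly measurable; so $\omega\mapsto V^{+}(\omega,1+\sigma(Y(\omega)))$ is measurable.

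First I would prove the pointwise domination, deferring integrability. Fix $x\ge 0$ and $h\in\Hc_{x}$ and let $h'$ be the orthogonal projection of $h$ onto $D$. By Remark~\ref{proj}, $hY=h'Y$ $Q$-a.s., hence $V^{+}(\cdot,x+hY)=V^{+}(\cdot,x+h'Y)$ a.s.; moreover $h'\in D_{x}$ and $|h'|\le x/\alpha$ by Lemma~\ref{rast}. For $x>0$ set $g:=h'/x$; then $g\in D_{1}$ (indeed $Q(1+gY\ge0)=Q(x+h'Y\ge0)=1$) and $x+h'Y=x(1+gY)$ with $1+gY\ge0$ a.s. If $0\le x\le1$ then $x(1+gY)\le1+gY$, so by monotonicity and $g\in D_{1}$ we get $V^{+}(\cdot,x+hY)\le V^{+}(\cdot,1+gY)\le V^{+}(\cdot,1+\sigma(Y))\le L$ a.s. (the case $x=0$ uses $D_{0}=\{0\}$). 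If $x>1$, write $x(1+gY)=2x\cdot\frac{1+gY}{2}$ and apply \eqref{elasticplus} with $\lambda=2x\ge1$; since $\frac{1+gY}{2}+\frac12=1+\frac{g}{2}Y$ and $\frac{g}{2}\in D_{1}$ (convexity of $D_{1}$ and $0\in D_{1}$), this yields $V^{+}(\cdot,x+hY)\le K(2x)^{\overline{\gamma}}\left(V^{+}(\cdot,1+\tfrac{g}{2}Y)+C\right)\le K(2x)^{\overline{\gamma}}L$ a.s. In both cases the claimed factor $((2x)^{\overline{\gamma}}K+1)$ dominates.

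The heart of the proof is the integrability $E\,V^{+}(\cdot,1+\sigma(Y))<\infty$. Because $V^{+}(\omega,\cdot)$ is nondecreasing and the value $gY\le\sigma(Y)$ is attained over the compact set $D_{1}$, one has the pointwise identity $\sup_{g\in D_{1}}V^{+}(\omega,1+gY(\omega))=V^{+}(\omega,1+\sigma(Y(\omega)))$, so it suffices to dominate $\sigma(Y)$ by a finite maximum of linear functionals coming from $D_{1}$. I would prove that there are finitely many $g_{1},\dots,g_{N}\in D_{1}$ (a $\delta$-net of $D_{1}$) such that $1+\sigma(Y)\le 2\max_{i}(1+g_{i}Y)$ $Q$-a.s. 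Granting this, monotonicity gives $V^{+}(\cdot,1+\sigma(Y))\le\max_{i}V^{+}(\cdot,2(1+g_{i}Y))$, and a second application of \eqref{elasticplus} with $\lambda=4$ (using $2(1+g_{i}Y)=4\cdot\frac{1+g_{i}Y}{2}$ and $\frac{g_{i}}{2}\in D_{1}$) bounds this by $K4^{\overline{\gamma}}\sum_{i}\left(V^{+}(\cdot,1+\tfrac{g_{i}}{2}Y)+C\right)$, which is integrable by Assumption~\ref{dimanche} (each $\frac{g_{i}}{2}\in D_{1}\subset\Hc_{1}$) together with $E(C)<\infty$.

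The main obstacle is the finite-net bound $1+\sigma(Y)\le2\max_{i}(1+g_{i}Y)$, and for it the key auxiliary estimate is that there is a constant $\kappa<\infty$ with $|P_{W}Y|\le\kappa\,(1+\sigma(Y))$ a.s., where $W:=\mathrm{span}(D_{1})$ and $P_{W}$ is the orthogonal projection onto $W$ (note $\sigma(Y)=\sigma(P_{W}Y)$, and only $P_{W}Y$ matters since every $g_{i}\in W$). Writing $g^{*}$ for a maximiser of $\sigma(Y)$ and $g_{i}$ for the nearest net point, $(g^{*}-g_{i})Y\le\delta|P_{W}Y|\le\delta\kappa(1+\sigma(Y))$, and choosing $\delta=\tfrac{1}{2\kappa}$ gives $1+g_{i}Y\ge\tfrac12(1+\sigma(Y))$, which is the desired bound. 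To obtain $\kappa<\infty$ I would argue by compactness on the set $\Sigma^{*}:=\{w\in W:\ gw\ge-1\ \forall g\in D_{1}\}$, which a.s.\ contains $P_{W}Y$ (a support argument over a countable dense subset of $D_{1}$): its recession directions $u$ satisfy $\sigma(-u)=0$, and since a unit $u\in W$ cannot have $\sigma(u)=\sigma(-u)=0$ (that would force $D_{1}\perp u$, contradicting $u\in\mathrm{span}(D_{1})$), every recession direction has $\sigma(u)>0$; continuity of $\sigma$ on the compact unit sphere of $W$ then yields a uniform lower bound $\sigma(u)\ge\varepsilon_{0}>0$ along recession directions, whence $\sup_{w\in\Sigma^{*}}|w|/(1+\sigma(w))<\infty$. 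This compactness-plus-(NA) step (the fact that $D_{1}$ genuinely spans the directions in which $Y$ is unbounded) is where the argument is most delicate; the remaining pieces are routine applications of Remark~\ref{proj}, Lemma~\ref{rast}, the monotonicity of $V^{+}$, and the elasticity inequality \eqref{elasticplus}.
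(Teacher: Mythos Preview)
Your argument is correct and takes a genuinely different route from the paper. Both proofs reduce the integrability of $L$ to finitely many evaluations $EV^{+}(\cdot,1+g_{i}Y)$ with $g_{i}\in D_{1}$ (so that Assumption~\ref{dimanche} applies), but they reach those points differently. The paper fixes an orthonormal basis $(e_{1},\dots,e_{d'})$ of $\mbox{Aff}(D_{1})$, bounds $1+\frac{h}{2x}Y$ by $1+c\sum_{i}|y_{i}|$, and then covers the $2^{d'}$ ``corner'' vectors $c\theta^{j}$ (with $\pm1$ coordinates); since these corners generally lie \emph{outside} $D_{1}$, an extra step is needed---one picks $e^{*}\in ri(D_{1})$ and uses \eqref{elasticplus} once more to pull $c\theta^{j}$ back to a point $\bar{e}^{j}=\frac{e^{*}}{2}+\frac{\varepsilon_{j}}{2}(c\theta^{j}-e^{*})\in D_{1}$. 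Your approach instead works with the support function $\sigma$ of $D_{1}$ and a $\delta$-net \emph{inside} $D_{1}$, so Assumption~\ref{dimanche} applies directly; the price is the auxiliary recession-cone estimate $|P_{W}Y|\le\kappa(1+\sigma(Y))$ on $\Sigma^{*}=\{w\in W:gw\ge-1\ \forall g\in D_{1}\}$, which you correctly derive from the observation that any nonzero recession direction $u$ of $\Sigma^{*}$ satisfies $\sigma(-u)=0$ but cannot satisfy $\sigma(u)=0$ (else $u\perp\mbox{span}(D_{1})=W$). The paper's relative-interior trick is a little more elementary; your recession-cone argument is more conceptual and makes explicit why the (NA)-induced compactness of $D_{1}$ controls the unbounded directions of $Y$. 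The pointwise domination step (cases $x\le1$ and $x>1$) is essentially the same in both proofs.
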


\begin{proof}
The proof is reported in Section \ref{proofofres} of the Appendix
\end{proof}\\
\begin{lemma}
\label{cont}
Assume that Assumptions \ref{AOAone}, \ref{samedi}, \ref{ae1} and \ref{dimanche} hold true. Let  $\Dc$ be the set valued function that assigns to each $x \geq 0$ the set $D_x$. Then $Graph(\Dc):=\{(x,h) \in [0,+\infty) \times \mathbb{R}^{d},\; h \in D_{x}\}$ is a closed subset of $\mathbb{R}\times \mathbb{R}^{d}$.
Let $\psi: \mathbb{R} \times \mathbb{R}^{d} \to \mathbb{R} \cup \{\pm \infty\}$ be defined by
\begin{align}
\label{psidef}
\psi(x,h):= \begin{cases}
EV(\cdot,x +h Y(\cdot)), \mbox{if $(x,h) \in Graph(\Dc)$}\\
-\infty, \mbox{otherwise}.
\end{cases}
\end{align}
Then $\psi$ is  usc on $\mathbb{R} \times \mathbb{R}^{d}$ and $\psi<+\infty$ on  $Graph(\Dc)$.\end{lemma}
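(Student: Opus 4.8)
The plan is to treat the three assertions in turn: closedness of $Graph(\Dc)$, finiteness of $\psi$ on it, and upper semicontinuity, with the last being the substantive point. First I would prove that $Graph(\Dc)$ is closed by a sequential argument. Take $(x_n,h_n)\in Graph(\Dc)$ with $(x_n,h_n)\to(x,h)$. Since $x_n\geq 0$ we get $x\geq 0$, and since each $h_n$ lies in the closed vector subspace $D$, so does $h$. It remains to verify $h\in\Hc_x$. For each $n$ the set $A_n:=\{x_n+h_nY(\cdot)\geq 0\}$ satisfies $Q(A_n)=1$, hence $A:=\bigcap_n A_n$ is still of full measure. For $\omega\in A$ we have $x_n+h_nY(\omega)\geq 0$ for every $n$, and letting $n\to\infty$ (using $x_n\to x$, $h_n\to h$ and that $Y(\omega)$ is fixed) gives $x+hY(\omega)\geq 0$. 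Thus $Q(x+hY(\cdot)\geq 0)=1$, so $h\in\Hc_x\cap D=D_x$ and $(x,h)\in Graph(\Dc)$, proving closedness.

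For finiteness, fix $(x,h)\in Graph(\Dc)$, so that $h\in\Hc_x$. Lemma \ref{fat} then yields $V^+(\cdot,x+hY(\cdot))\leq\big((2x)^{\overline{\gamma}}K+1\big)L(\cdot)$ $Q$-a.s.\ with $E(L)<\infty$, and integrating gives $EV^+(\cdot,x+hY(\cdot))<\infty$. In particular $EV(\cdot,x+hY(\cdot))$ is well defined in $[-\infty,\infty)$ and $\psi(x,h)<+\infty$, which is the third claim.

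The main work is upper semicontinuity. Fix $(x,h)$ and a sequence $(x_n,h_n)\to(x,h)$; I must show $\limsup_n\psi(x_n,h_n)\leq\psi(x,h)$. Passing to a subsequence I may assume $\psi(x_n,h_n)\to\ell:=\limsup_n\psi(x_n,h_n)$. If $\ell=-\infty$ there is nothing to prove, so assume $\ell>-\infty$; then $\psi(x_n,h_n)>-\infty$ for large $n$, which forces $(x_n,h_n)\in Graph(\Dc)$. By the closedness just established, the limit $(x,h)$ lies in $Graph(\Dc)$ too, so $\psi(x,h)=EV(\cdot,x+hY(\cdot))$; in particular the case $(x,h)\notin Graph(\Dc)$ is covered, since there $\ell>-\infty$ is impossible and $\ell=-\infty=\psi(x,h)$. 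The convergent sequence $(x_n)$ is bounded, say $0\leq x_n\leq M$, so Lemma \ref{fat} provides a uniform, integrable upper bound $V(\cdot,x_n+h_nY(\cdot))\leq V^+(\cdot,x_n+h_nY(\cdot))\leq\big((2M)^{\overline{\gamma}}K+1\big)L(\cdot)=:g$. Meanwhile, for every $\omega$ the reals $x_n+h_nY(\omega)$ converge to $x+hY(\omega)$, and since $V(\omega,\cdot)$ is usc (Assumption \ref{samedi}), $\limsup_n V(\omega,x_n+h_nY(\omega))\leq V(\omega,x+hY(\omega))$. The integrable domination by $g$ then licenses the reverse Fatou lemma (Fatou applied to $g-V(\cdot,x_n+h_nY(\cdot))\geq 0$), giving
$$\ell=\limsup_n EV(\cdot,x_n+h_nY(\cdot))\leq E\Big[\limsup_n V(\cdot,x_n+h_nY(\cdot))\Big]\leq EV(\cdot,x+hY(\cdot))=\psi(x,h).$$
I expect this final step — interchanging $\limsup$ and integral via reverse Fatou, combined with the pointwise usc of $V$ — to be the crux, and it is precisely the integrable bound furnished by Lemma \ref{fat} (through the boundedness of $(x_n)$) that makes the interchange legitimate.
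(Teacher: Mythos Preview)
Your proof is correct and follows essentially the same approach as the paper's: closedness via a sequential argument, finiteness via Lemma \ref{fat}, and upper semicontinuity via the pointwise usc of $V(\omega,\cdot)$ combined with the reverse Fatou lemma, where the required integrable majorant is furnished by Lemma \ref{fat} and the boundedness of the convergent sequence $(x_n)$. The only cosmetic differences are that the paper proves closedness via $\limsup_n E_n\subset E^*$ and Fatou on indicators (rather than your countable intersection of full-measure sets), and that the paper first establishes usc on $Graph(\Dc)$ and then invokes Lemma \ref{usc} to extend to $\mathbb{R}\times\mathbb{R}^d$, whereas you handle the off-graph case directly by your $\ell=-\infty$ reduction; neither difference is substantive.
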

\begin{proof}
Let $ (x_{n},h_{n})_{n\geq 1} \in Graph(\Dc)$ be a sequence converging to some $(x^{*},h^{*}) \in \mathbb{R}\times\mathbb{R}^{d}$. We prove first that $(x^{*},h^{*}) \in Graph(\Dc)$, $i.e$ that $Graph(\Dc)$ is a closed set. It is clear that $x^{*}\geq 0$.  Set for $n \geq 1$ $E_{n}:=\{\o \in \overline{\O}, \; x_{n} +h_{n}Y(\o) \geq 0 \}$ and $E^*:=\{\o \in \overline{\O}, \; x^{*} +h^{*}Y(\o) \geq 0\}$. It is clear that $\limsup_{n} E_{n} \subset E^{*}$ and applying the Fatou Lemma (the $limsup$ version) we get
\begin{align*}
Q\left(x^{*}+h^{*}Y(\cdot)\geq 0\right)  =   E 1_{E^*}(\cdot) \geq   E \limsup_{n} 1_{E_{n}}(\cdot) \geq \limsup_{n}E 1_{E_{n}}(\cdot)=1,
\end{align*}
and $h^{*} \in \mathcal{H}_{x^*}$. Since $D$ is closed by definition we have $h^{*} \in {D}_{x^*}$ and $(x^{*},h^{*}) \in Graph(\Dc)$.\\
We prove now that $\psi$ is usc on $Graph(\Dc)$.  The upper semicontinuity on $\mathbb{R} \times \mathbb{R}^{d}$ will follow immediately from Lemma \ref{usc}. By Assumption \ref{samedi} $x \in \mathbb{R} \to V(x,\o)$ is  usc on $\mathbb{R}$ for all $\o \in \overline{\Omega}$ and thus
$$\limsup_n V(\o, x_n+ h_n Y(\o))  \leq  V(\o, x^*+ h^* Y(\o)).$$
By Lemma \ref{fat} for all $\o \in \overline{\Omega}$
$$V(\o, x_n+ h_n Y(\cdot)) \leq V^+(\o, x_n+ h_n Y(\cdot))\leq (|2x_n|^{\overline{\gamma}}K +1)L(\o) \leq (|2x^*|^{\overline{\gamma}}K +2)L(\o)$$
for $n$ big enough. We can apply Fatou's Lemma (the $limsup$ version)  and
$\psi$ is usc on $Graph(\Dc)$.
From Lemma \ref{fat} it is also clear that $\psi<+\infty$ on $Graph(\Dc)$.
\end{proof}\\
We are now able to state our main result.
\begin{theorem}
\label{main1}
Assume that Assumptions \ref{AOAone}, \ref{samedi}, \ref{ae1} and \ref{dimanche} hold true.  Then for all $x \geq 0$, $v(x) < \infty$  and there exists some optimal strategy $\widehat{h} \in D_x$ such that
$$v(x) =  E(V(\cdot,x + \widehat{h}Y(\cdot))).$$
Moreover, $v:\; \mathbb{R} \rightarrow [-\infty,\ \infty)$  is non-decreasing  and usc on $\mathbb{R}$.
\end{theorem}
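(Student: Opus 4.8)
The plan is to assemble the statement from the three preparatory results, since the genuine analytic work has already been placed in Lemmata \ref{rast}, \ref{fat} and \ref{cont}. Throughout I would work with the representation \eqref{noam}, so that for $x\geq 0$ one has $v(x)=\sup_{h\in D_x}\psi(x,h)$, where $\psi$ is the function introduced in Lemma \ref{cont}. First I would fix $x\geq 0$ and establish finiteness: for every $h\in D_x\subset\Hc_x$, Lemma \ref{fat} yields $V^+(\cdot,x+hY(\cdot))\leq\left((2x)^{\overline{\gamma}}K+1\right)L(\cdot)$ with $E(L)<\infty$, so that $EV^+(\cdot,x+hY(\cdot))\leq\left((2x)^{\overline{\gamma}}K+1\right)E(L)<\infty$. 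Consequently each $\psi(x,h)=EV(\cdot,x+hY(\cdot))$ is a well-defined element of $[-\infty,\infty)$, bounded above by the same constant uniformly in $h$, whence $v(x)<\infty$.

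For the existence of an optimiser I would invoke compactness. By Lemma \ref{rast} the set $D_x$ is non-empty (it contains $0$ as soon as $x\geq 0$) and compact, while Lemma \ref{cont} tells us $h\mapsto\psi(x,h)$ is usc. An usc function attains its supremum on a non-empty compact set, so there is $\widehat{h}\in D_x$ with $\psi(x,\widehat{h})=v(x)$; since $(x,\widehat{h})\in Graph(\Dc)$ this reads $v(x)=EV(\cdot,x+\widehat{h}Y(\cdot))$. Monotonicity of $v$ is then immediate: for $x<0$ one has $v(x)=-\infty$, and for $0\leq x_1\leq x_2$ the optimiser $\widehat{h}_1\in\Hc_{x_1}$ also lies in $\Hc_{x_2}$ (enlarging the initial capital preserves $x+\widehat{h}_1Y(\cdot)\geq 0$), so the monotonicity of $V$ from Assumption \ref{samedi} gives $v(x_2)\geq EV(\cdot,x_2+\widehat{h}_1Y(\cdot))\geq EV(\cdot,x_1+\widehat{h}_1Y(\cdot))=v(x_1)$.

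The main effort, though still short given the lemmata, is upper semicontinuity, which I would prove by a closed-graph and compactness argument. Given $x_n\to x_0$ I must show $\limsup_n v(x_n)\leq v(x_0)$. The case $x_0<0$ is trivial, as eventually $x_n<0$ and $v(x_n)=-\infty$. When $x_0\geq 0$, I would pass to a subsequence realising the $\limsup$ and discard the trivial case where it equals $-\infty$, so that I may assume $x_n\geq 0$ and select optimisers $\widehat{h}_n\in D_{x_n}$. Lemma \ref{rast} bounds these by $|\widehat{h}_n|\leq x_n/\alpha$, hence along a further subsequence $\widehat{h}_n\to\widehat{h}_*$; the closedness of $Graph(\Dc)$ from Lemma \ref{cont} forces $\widehat{h}_*\in D_{x_0}$, and the upper semicontinuity of $\psi$ (again Lemma \ref{cont}) yields $\limsup_n v(x_n)=\limsup_n\psi(x_n,\widehat{h}_n)\leq\psi(x_0,\widehat{h}_*)=EV(\cdot,x_0+\widehat{h}_*Y(\cdot))\leq v(x_0)$. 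The only delicate point is the bookkeeping with the two nested subsequences (first the one attaining the $\limsup$, then a convergent sub-subsequence of strategies); no analytic input beyond Lemmata \ref{rast} and \ref{cont} is required.
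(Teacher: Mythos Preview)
Your proposal is correct and follows essentially the same route as the paper: finiteness via Lemma \ref{fat}, existence by compactness of $D_x$ (Lemma \ref{rast}) plus upper semicontinuity of $\psi$ (Lemma \ref{cont}), and upper semicontinuity of $v$ via the two nested subsequence extraction you describe. The only cosmetic difference is that the paper proves usc on $[0,\infty)$ and then invokes Lemma \ref{usc} to pass to all of $\mathbb{R}$, whereas you handle the case $x_0<0$ directly; your monotonicity argument is also slightly more explicit than the paper's one-line appeal to Assumption \ref{samedi}, but the content is the same.
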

\begin{proof}
Let $x  \geq 0$ be fixed. We show first that $v(x) < \infty$. Indeed, using Lemma \ref{fat},
$$E(V(\cdot,x + hY(\cdot))) \leq  E(V^+(\cdot,x + hY(\cdot))) \leq \left((2x)^{\overline{\gamma}}K+1\right) E L(\cdot),$$
for all $h \in D_x$. Thus, recalling \eqref{noam}, $v(x) \leq \left((2x)^{\overline{\gamma}}+1\right) E L(\cdot)
< \infty$.

\noindent From Lemma \ref{cont}, $h \in \mathbb{R}^d  \to E(V(\cdot,x + hY(\cdot)))$ is usc on  $\mathbb{R}^{d}$ and thus on $D_x$ (recall that $D_{x}$ is closed and see Lemma \ref{usc}). Since by \eqref{noam}, $v(x)  =  \sup_{h \in D_x} E(\cdot,V(x + hY(\cdot)))$ and $D_x$ is compact (see Lemma \ref{rast}), applying Theorem 2.43 of \cite{Hitch} there exists some  $\widehat{h} \in D_x$ such that
\begin{align}
\label{solve}
v(x) =  E(V(\cdot,x + \widehat{h}Y(\cdot))).
\end{align}
We show that $v$ is usc on $[0,+\infty)$. As previously, the upper semicontinuity on $\mathbb{R}$ will follow immediately from Lemma \ref{usc}. Let $(x_n)_{n\geq 0}$ be a sequence of non-negative numbers converging to some $x^* \in [0,+\infty)$.
Let $\widehat{h}_n \in D_{x_{n}}$ be the associated optimal strategies to $x_{n}$ in \eqref{solve}.
Let $(n_k)_{k\geq 1}$  be a subsequence such that $\limsup_{n} v(x_{n}) = \lim_{k} v(x_{n_k})$.
By Lemma \ref{rast} $|\widehat{h}_{n_k}| \leq x_{n_k}/\beta \leq (x^*+1)/\beta$ for
$k$ big enough. So we can extract a subsequence (that we still denote by $(n_k)_{k\geq 1}$) such that there exists some  $\underline{h}^*$ with   $\widehat{h}_{n_k} \to \underline{h}^*$.  As the sequence $(x_{n_{k}},\hat{h}_{n_{k}})_{k\geq 1} \in Graph(\Dc)$ converges to $(x^{*},\underline{h}^*)$ and $Graph(\Dc)$ is closed (see  Lemma \ref{cont}), we get that $\underline{h}^* \in \mathcal{D}_{x^*}$.
Using Lemma \ref{cont}
\begin{eqnarray*}
\limsup_{n} v(x_n)= \lim_{k} v(x_{n_k})=\lim_k E V(\cdot, x_{n_k}+ \widehat{h}_{n_k} Y(\cdot)) \leq E V(\cdot, x^*+ \underline{h}^* Y(\cdot)) \leq v(x^*),
\end{eqnarray*}
where the last inequality holds true because  $\underline{h}^* \in D_{x^{*}}$ and therefore $v$ is usc on $[0,+\infty)$.
Now as, by Assumption \ref{samedi}, $V(\o,\cdot)$ is non-decreasing for all $\o \in \overline{\Omega}$, $v$ is also non-decreasing
on $[0,+\infty)$ and since $v(x)=-\infty$ on $(-\infty,0)$, $v$ is non-decreasing on $\mathbb{R}$.
\end{proof}\\
\section{Multi-period case}
\label{secmulti}
We first prove the following proposition.
\begin{proposition}
\label{propufini}
Let Assumptions \ref{uisfinite}, \ref{PhiUX}  and \ref{ae} hold true. Then $EU^{+}\left(\cdot,V_{T}^{x,\phi}(\cdot) \right) <\infty$ for all $x\geq 0$ and $\phi \in \Phi(x)$. This implies that $\Phi(U,x)=\Phi(x)$.
\end{proposition}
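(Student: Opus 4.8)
The plan is to prove the quantitative bound $EU^{+}(\cdot,V_{T}^{x,\phi}(\cdot))<\infty$ for every $x\ge 0$ and every $\phi\in\Phi(x)$; the final assertion $\Phi(U,x)=\Phi(x)$ then follows at once. Indeed $\Phi(U,x)\subseteq\Phi(x)$ holds by the very definition of $\Phi(U,x)$ in Definition \ref{defGeneral}, and conversely, once $EU^{+}(\cdot,V_{T}^{x,\phi})<\infty$, the generalised expectation $EU(\cdot,V_{T}^{x,\phi})$ exists, so $\phi\in\Phi(U,x)$. The core idea is to transfer the problem to initial wealth $1$, where Assumptions \ref{PhiUX} and \ref{uisfinite} supply integrability of $U^{+}$, by exploiting the homogeneity relation $V_{T}^{x,\phi}=x\,V_{T}^{1,\phi/x}$ (valid for $x>0$, since $\phi/x$ is still predictable) together with the asymptotic elasticity inequality \eqref{aePos}.

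For $0\le x\le 1$ I would argue by monotonicity alone. From $V_{T}^{1,\phi}=V_{T}^{x,\phi}+(1-x)$ and $1-x\ge 0$, the hypothesis $\phi\in\Phi(x)$ (that is, $V_{T}^{x,\phi}\ge 0$ $P$-a.s.) yields $V_{T}^{1,\phi}\ge V_{T}^{x,\phi}\ge 0$ $P$-a.s., hence $\phi\in\Phi(1)=\Phi(U,1)$ by Assumption \ref{PhiUX}. Since $U^{+}(\omega,\cdot)$ is non-decreasing (it is the composition of the non-decreasing $U(\omega,\cdot)$ with $\max(\cdot,0)$), one gets $U^{+}(\cdot,V_{T}^{x,\phi})\le U^{+}(\cdot,V_{T}^{1,\phi})$ pointwise, and Assumption \ref{uisfinite} gives $EU^{+}(\cdot,V_{T}^{1,\phi})<\infty$. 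This settles in particular the value $x=0$, for which the homogeneity trick is unavailable.

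For $x>1$ I would use the scaling. Set $\psi:=\phi/x$; then $W:=V_{T}^{1,\psi}=V_{T}^{x,\phi}/x\ge 0$ $P$-a.s. and $\psi\in\Phi(1)=\Phi(U,1)$. The decisive step is to apply \eqref{aePos} not with $\lambda=x$ and argument $W$ (which would leave $U^{+}(W+\tfrac12)$ and loop back to the same difficulty), but with $\lambda=2x\ge 1$ and argument $W/2$, giving
\[
U^{+}(\cdot,V_{T}^{x,\phi})=U^{+}\!\left(\cdot,2x\cdot\tfrac{W}{2}\right)\le K(2x)^{\overline{\gamma}}\left(U^{+}\!\left(\cdot,\tfrac{W}{2}+\tfrac12\right)+C(\cdot)\right).
\]
The point is that $\tfrac{W}{2}+\tfrac12=\tfrac{W+1}{2}=V_{T}^{1,\psi/2}$ is exactly the terminal value of the level-$1$ strategy $\psi/2$, which lies in $\Phi(1)=\Phi(U,1)$ because $V_{T}^{1,\psi/2}\ge\tfrac12>0$ $P$-a.s. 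Integrating and invoking Assumption \ref{uisfinite} for $\psi/2$ together with $E(C)<\infty$ from Assumption \ref{ae} yields $EU^{+}(\cdot,V_{T}^{x,\phi})\le K(2x)^{\overline{\gamma}}\big(EU^{+}(\cdot,V_{T}^{1,\psi/2})+E(C)\big)<\infty$.

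The main obstacle, and the only genuinely delicate point, is precisely the $+\tfrac12$ shift in the asymptotic elasticity bound \eqref{aePos}: any attempt to remove it by iterated rescaling regresses indefinitely, since the shifted argument reappears at each step. The resolution is the algebraic observation that $\tfrac{W}{2}+\tfrac12$ coincides with a wealth process starting from $1$, so that a single application of \eqref{aePos} already reduces everything to level $1$. Everything else (predictability of $\phi/x$ and $\phi/(2x)$, monotonicity of $U^{+}$, and the passage from the pointwise estimate to its expectation) is routine.
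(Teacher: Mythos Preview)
Your proof is correct and follows essentially the same approach as the paper. Both arguments split into the case $0\le x\le 1$ (handled by monotonicity and $\phi\in\Phi(1)$) and $x>1$ (handled by applying the asymptotic elasticity bound with $\lambda=2x$ so that the shifted argument $\tfrac{W}{2}+\tfrac12$ equals $V_T^{1,\phi/(2x)}$); your intermediate substitution $\psi=\phi/x$ is cosmetic, and the paper writes the same identity directly as $V_T^{x,\phi}=2x\bigl(\tfrac12+\sum_t\tfrac{\phi_t}{2x}\Delta S_t\bigr)$.
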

\begin{proof}
Fix $0 \leq x \leq 1$ and let $\phi \in \Phi(x)$. Then $V^{x,\phi}_T \leq V^{1,\phi}_T$ and $\phi \in \Phi(1)=\Phi(1,U)$ (recall Assumption \ref{PhiUX}). For any $\omega \in \Omega$, the function $y \to U(\omega,y)$ is non-decreasing on $\mathbb{R}$, so  that $EU^{+}\left(\cdot,V_{T}^{x,\phi}(\cdot) \right)  \leq EU^{+}\left(\cdot,V_{T}^{1,\phi}(\cdot) \right) <\infty$ by Assumption \ref{uisfinite}.
Now, if $x \geq 1$, let $\phi \in \Phi(x)$ be fixed. From Assumption \ref{ae} we get that for all $\omega \in \Omega$
$$U(\omega,V^{x,\phi}_T(\omega)) = U\left(\omega, 2x\left(\frac{1}{2}+ \sum_{t=1}^{T} \frac{\phi_{t}(\o^{t-1})}{2x}\Delta S_{t}(\o^{t})\right)\right) \leq (2x)^{\overline{\gamma}}K \left( U(\omega,V^{1,\frac{\phi}{2x}}_T(\o)) +C(\omega)\right).$$
By  Assumption \ref{PhiUX}, $\frac{\phi}{2x} \in  \Phi(\frac{1}{2}) \subset \Phi(1)=\Phi(1,U)$.  Thus $$EU^{+}\left(\cdot,V_{T}^{x,\phi}(\cdot) \right) \leq (2x)^{\overline{\gamma}} K \left(EU^{+}\left(\cdot,V_{T}^{1,\frac{\phi}{2x}}(\cdot) \right) +E(C)\right) <\infty$$ using Assumption \ref{uisfinite} and the fact that $C$ is integrable (see Assumption \ref{ae}).
In both cases, we conclude that $\Phi(x)=\Phi(U,x)$.
\end{proof}\\

We introduce now the dynamic programming procedure. First we set for all $t\in\left\{0,\ldots,T-1\right\}$, $\o^t \in {\O}^t$ and $x \geq 0$
\begin{align}
\label{domaine}
	\Hc_x^{t+1}(\o^t)&:=  \left\{h \in \mathbb{R}^d, \ q_{t+1}(x+ h \Delta S_{t+1}(\o^t, \cdot) \geq 0|\o^t)=1\right\},\\
\label{domaineproj}
	\Dc_x^{t+1}(\o^t)&:= \Hc_x^{t+1}(\o^t) \cap D^{t+1}(\o^t),
\end{align}
where  $D^{t+1}$ was introduced in Definition \ref{DefD}. For $x<0$ we set $\Hc_x^{t+1}(\o^t)=\emptyset$. \\
We  define for all $t\in\{0,\ldots,T\}$  the following functions $U_t$ from $\Omega^t \times \mathbb{R} \to \mathbb{R}$. Starting with $t=T$, we set for all $x \in \mathbb{R}$, all $\o^{T} \in \Omega$
\begin{align}\label{tuskes}
 &U_T(\o^T) := U(\o^{T}).
  \end{align}
Recall  that $U(\o^{T},x)=-\infty$ for all $(\o^{T},x)  \in \Omega\times (-\infty,0)$.\\
Using  for $t \geq 1$  the  full-measure set $\widetilde{\Omega}^{t} \in \mathcal{F}_{t}$  that will be defined by induction in Propositions \ref{dyn2} and \ref{dyn3}, we set for all $x \in \mathbb{R}$ and $\o^{t} \in \Omega^{t}$
\begin{small}
\begin{align}
\label{vanek}
 U_{t}(\o^{t},x)&:= (-\infty)1_{(-\infty,0)}(x)+1_{\widetilde{\Omega}^{t} \times [0,+\infty)}(\o^{t},x)\sup_{h \in \Hc_x^{t+1}(\o^t)}\int_{\O_{t+1}}U_{t+1}(\o^t,\o_{t+1},x+h\Delta S_{t+1}(\o^t,\o_{t+1}))q_{t+1}(d\o_{t+1}|\o^t).
\end{align}
\end{small}
Finally for $t=0$
\begin{small}
\begin{align}
\label{vanek0}
 &U_{0}(x):= (-\infty)1_{(-\infty,0)}(x)+ 1_{[0,+\infty)}(x) \sup_{h \in \Hc_x^{1}} \int_{\O_{1} }U_{1}(\o_{1}, x+ h \Delta S_{1}( \o_{1}))P_{1}(d\o_{1}) .
\end{align}
\end{small}
\begin{remark}
We will prove by induction that $U_{t}$ is well-defined (see \eqref{amiens}), $i.e$ the integrals in \eqref{vanek} and \eqref{vanek0} are well-defined in the generalised sense.
\end{remark}
\begin{remark}
Before going further we provide some explanations on the choice of $U_t$.
The natural definition of $U_{t}$ should have been
$$\mathcal{U}_{t}(\o^{t},x)  := (-\infty) 1_{(-\infty,0)}(x)+ 1_{[0,+\infty)}(x) \sup_{h \in \Hc_x^{t+1}(\o^t)} \int_{\O_{t+1} }\mathcal{U}_{t+1}(\o^t, \o_{t+1}, x+ h \Delta S_{t+1}(\o^t, \o_{t+1}))q_{t+1}(d\o_{t+1}|\o^t).$$
Introducing the $P_{t}$ full measure set $\widetilde{\Omega}^{t}$ in \eqref{vanek} is related to measurability issues that will be tackled in Proposition \ref{dyn4}. This is not a surprise as this is related to the use of  conditional expectations which are defined only almost everywhere.
\end{remark}
\begin{lemma}
\label{Dhmes}
Let $0 \leq t \leq T-1$ and $H$ be a fixed $\mathbb{R}$-valued and $\Fc_{t}$-measurable random variable. Consider the following random sets
$$\Hc^{t+1}_{H} : \o^{t} \in  \Omega^{t} \twoheadrightarrow \Hc^{t+1}_{H(\o^{t})}(\o^t),$$
$$\Dc^{t+1}_{H}: \o^{t} \in \Omega^{t} \twoheadrightarrow \Dc_{H(\o^{t})}^{t+1}(\o^t).$$
Then those random sets  are all closed-valued and with graph valued in  $\mathcal{F}_{t} \otimes \mathcal{B}(\mathbb{R}^{d})$.
\end{lemma}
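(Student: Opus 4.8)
The plan is to prove the two asserted properties in turn: first that each fibre is closed, then that each graph is $\mathcal{F}_t\otimes\mathcal{B}(\mathbb{R}^d)$-measurable, the latter being the substantive part.

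For closed-valuedness, fix $\o^t \in \Omega^t$. The fibre $\Hc^{t+1}_{H(\o^t)}(\o^t)$ is exactly the one-period object $\Hc_x$ of \eqref{Hx} with $x=H(\o^t)$, $Y=\Delta S_{t+1}(\o^t,\cdot)$ and $Q=q_{t+1}(\cdot|\o^t)$. If $H(\o^t)<0$ it is empty, hence closed, by the convention fixed after \eqref{domaineproj}; otherwise closedness follows as noted after \eqref{Dx}: taking $h_n \in \Hc^{t+1}_{H(\o^t)}(\o^t)$ with $h_n\to h$, on the $q_{t+1}(\cdot|\o^t)$-full-measure set $\bigcap_n\{H(\o^t)+h_n\Delta S_{t+1}(\o^t,\cdot)\geq 0\}$ one passes to the limit to obtain $H(\o^t)+h\Delta S_{t+1}(\o^t,\cdot)\geq 0$, so that $h\in\Hc^{t+1}_{H(\o^t)}(\o^t)$. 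Since $D^{t+1}(\o^t)$ is closed by Lemma \ref{Dmeasurability}, the fibre $\Dc^{t+1}_{H(\o^t)}(\o^t)=\Hc^{t+1}_{H(\o^t)}(\o^t)\cap D^{t+1}(\o^t)$ is closed as well.

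For the graph of $\Hc^{t+1}_H$, write
$$Graph(\Hc^{t+1}_H)=\left\{(\o^t,h)\in\Omega^t\times\mathbb{R}^d,\; q_{t+1}(H(\o^t)+h\Delta S_{t+1}(\o^t,\cdot)\geq 0|\o^t)=1\right\}.$$
Everything reduces to the joint $\mathcal{F}_t\otimes\mathcal{B}(\mathbb{R}^d)$-measurability of $(\o^t,h)\mapsto q_{t+1}(H(\o^t)+h\Delta S_{t+1}(\o^t,\cdot)\geq 0|\o^t)$. To invoke Lemma \ref{LemmaA1}, which handles the case without the extra term $H(\o^t)$, I would absorb $H(\o^t)$ into the scalar product by enlarging the dimension: set $\widetilde S_{t+1}(\o^t,\cdot):=(1,\Delta S_{t+1}(\o^t,\cdot))$, an $\mathbb{R}^{d+1}$-valued $\mathcal{F}_{t+1}$-measurable random variable, and $\widetilde h:=(a,h)\in\mathbb{R}^{d+1}$, so that $a+h\Delta S_{t+1}=\widetilde h\,\widetilde S_{t+1}$. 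Applying Lemma \ref{LemmaA1} to $\widetilde S_{t+1}$ in dimension $d+1$ shows that $G:(\o^t,(a,h))\mapsto q_{t+1}(\widetilde h\,\widetilde S_{t+1}(\o^t,\cdot)\geq 0|\o^t)$ is $\mathcal{F}_t\otimes\mathcal{B}(\mathbb{R}^{d+1})$-measurable. Precomposing $G$ with the map $(\o^t,h)\mapsto(\o^t,(H(\o^t),h))$, which is $\mathcal{F}_t\otimes\mathcal{B}(\mathbb{R}^d)$-to-$\mathcal{F}_t\otimes\mathcal{B}(\mathbb{R}^{d+1})$-measurable because $H$ is $\mathcal{F}_t$-measurable, yields the desired joint measurability, and $Graph(\Hc^{t+1}_H)$ is then the preimage of $\{1\}$, hence in $\mathcal{F}_t\otimes\mathcal{B}(\mathbb{R}^d)$.

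For $\Dc^{t+1}_H$ I would simply intersect graphs, using
$$Graph(\Dc^{t+1}_H)=Graph(\Hc^{t+1}_H)\cap Graph(D^{t+1}),$$
and conclude from Lemma \ref{Dmeasurability}, which gives $Graph(D^{t+1})\in\mathcal{F}_t\otimes\mathcal{B}(\mathbb{R}^d)$, together with the previous step. The main obstacle I anticipate is precisely the joint-measurability step for $\Hc^{t+1}_H$: one must express the kernel integral of $1_{\{H(\o^t)+h\Delta S_{t+1}\geq 0\}}$ as a measurable function of $(\o^t,h)$, and the dimension-augmentation device is what lets the already-established Lemma \ref{LemmaA1} apply, once one verifies that appending a constant coordinate to $\Delta S_{t+1}$ does not disturb its hypotheses.
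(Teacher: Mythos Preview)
Your argument is correct in substance, but it takes an unnecessary detour. You write that Lemma~\ref{LemmaA1} ``handles the case without the extra term $H(\o^{t})$'' and therefore introduce the dimension-augmentation trick with $\widetilde S_{t+1}=(1,\Delta S_{t+1})$. In fact, Lemma~\ref{LemmaA1} is stated \emph{with} an $\mathcal{F}_t$-measurable shift $H$: equation~\eqref{eq1} already asserts that $(\o^{t},h)\mapsto q_{t+1}(H(\o^{t})+h\Delta S_{t+1}(\o^{t},\cdot)\in B\,|\,\o^{t})$ is $\mathcal{F}_t\otimes\mathcal{B}(\mathbb{R}^d)$-measurable. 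The paper therefore dispenses with any trick and applies Lemma~\ref{LemmaA1} directly with $B=[0,\infty)$. Your augmentation argument is valid---appending the constant coordinate $1$ keeps $\widetilde S_{t+1}$ $\mathcal{F}_{t+1}$-measurable and the proof of Lemma~\ref{LemmaA1} goes through verbatim in dimension $d+1$---but it is superfluous once you reread the lemma's statement.

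One small correction: your displayed expression for $Graph(\Hc^{t+1}_H)$ omits the constraint $H(\o^{t})\geq 0$. By the convention fixed after~\eqref{domaineproj} the fibre is \emph{defined} to be empty when $H(\o^{t})<0$, regardless of whether there exists an $h$ with $q_{t+1}(H(\o^{t})+h\Delta S_{t+1}\geq 0\,|\,\o^{t})=1$. The paper accordingly writes
\[
Graph(\Hc^{t+1}_H)=\bigl\{(\o^{t},h):\ H(\o^{t})\geq 0,\ q_{t+1}(H(\o^{t})+h\Delta S_{t+1}(\o^{t},\cdot)\geq 0\,|\,\o^{t})=1\bigr\},
\]
which is the intersection of your set with $\{H\geq 0\}\times\mathbb{R}^d\in\mathcal{F}_t\otimes\mathcal{B}(\mathbb{R}^d)$; the conclusion is unaffected. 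The treatment of $\Dc^{t+1}_H$ via $Graph(\Dc^{t+1}_H)=Graph(\Hc^{t+1}_H)\cap Graph(D^{t+1})$ together with Lemma~\ref{Dmeasurability} matches the paper exactly.
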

\begin{proof}
First it is clear that  $\Hc^{t+1}_{H}$  is closed-valued. As $D^{t+1}$ is closed-valued (see  Lemma \ref{Dmeasurability}) it follows that  $\Dc^{t+1}_{H}$ is closed-valued as well.
The fact that $Graph(\mathcal{H}_{H}^{t+1}) \in  \mathcal{F}_{t} \otimes \mathcal{B}(\mathbb{R}^{d})$  follows immediately from
$$Graph(\mathcal{H}^{t+1}_{H})=\left\{(\o^{t},h)  \in \Omega^{t}\times\mathbb{R}^{d}, H(\o^{t}) \geq 0,\;  q_{t+1}\left(\left\{H(\o^{t})+ h\Delta S_{t+1}(\o^{t},.) \geq 0 \right\}=1|\o^{t}\right)\right\},$$
and Lemma \ref{LemmaA1} (recall that $H$ is $\mathcal{F}_{t}$-measurable).
We know from Lemma \ref{Dmeasurability} that $Graph(D^{t+1}) \in \mathcal{F}_{t} \otimes \mathcal{B}(\mathbb{R}^{d})$ and it follows that
 $$Graph(\Dc^{t+1}_{H}) =Graph(D^{t+1}) \cap Graph(\mathcal{H}_{H}^{t+1}) \in \mathcal{F}_{t} \otimes \mathcal{B}(\mathbb{R}^{d})$$.
 \end{proof}\\
Finally we introduce
\begin{eqnarray}
\nonumber
C_T(\o^T) &:= & C(\o^T), \mbox{ for } \o^T \in {\O}^T,  \mbox{ where $C$ is defined in Assumption } \ref{ae}\\
\label{Crec}
C_{t}(\o^{t}) & := &  \int_{\O_{t+1} } C_{t+1}(\omega^{t}, \o_{t+1})q_{t+1}(d\omega_{t+1}|\omega^{t}) \mbox{ for } t \in \{0,\dots,T-1\}, \o^t \in \O^{t}.
\end{eqnarray}
\begin{lemma}
\label{ToolJC}
 The functions
 $\o^{t}  \in \Omega^{t}  \to C_{t}(\o^{t})$ are well-defined, non-negative (for all $\o^t$), $\mathcal{F}_{t}$-measurable and satisfy
$E (C_{t})=E(C_{T}) <\infty$. Furthermore, for all $t\in\{1,\ldots,T\}$, there exists $\Omega^{t}_{C} \in \mathcal{F}_{t}$ and
with $P_{t}(\Omega^{t}_{C})=1$ and such that $C_{t}<\infty$
on  $\Omega^{t}_{C}$.  For $t=0$ we have $C_{0}<\infty$. \end{lemma}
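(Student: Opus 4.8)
The plan is to argue by backward induction on $t$, from $t = T$ down to $t = 0$, propagating simultaneously the well-definedness, non-negativity, $\mathcal{F}_t$-measurability and the identity $E(C_t) = E(C_T)$; the almost sure finiteness will then fall out as an immediate byproduct of integrability. For the base case $t = T$ there is nothing to prove: by Assumption \ref{ae} the random variable $C_T = C$ is non-negative for every $\o$, is $\mathcal{F}_T$-measurable, and satisfies $E(C_T) = E(C) < \infty$.

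For the inductive step I would assume the stated properties for $C_{t+1}$ with some $0 \leq t \leq T-1$. Since $C_{t+1} \geq 0$ is $\mathcal{F}_{t+1}$-measurable, the integral $\int_{\O_{t+1}} C_{t+1}(\o^t, \o_{t+1}) q_{t+1}(d\o_{t+1}|\o^t)$ defining $C_t(\o^t)$ is automatically well-defined in $[0, +\infty]$ for every $\o^t$, so $C_t$ is well-defined and non-negative. Its $\mathcal{F}_t$-measurability is exactly the measurability statement for integrals against a stochastic kernel: it holds for indicators $C_{t+1} = 1_B$, $B \in \Fc_{t+1}$, by the very definition of $q_{t+1}$, and extends to an arbitrary non-negative $\mathcal{F}_{t+1}$-measurable integrand by the monotone class theorem together with monotone convergence. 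This is precisely the content of Lemma \ref{fubini0} (see also Proposition \ref{fubinirem}).

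Applying Fubini's theorem for stochastic kernels to the non-negative function $C_{t+1}$ then gives, using the identification of $C_t$ with an $\mathcal{F}_t^T$-measurable function so that $E(C_t) = \int_{\Omega^t} C_t\, dP_t$,
$$E(C_t) = \int_{\Omega^t} \left( \int_{\O_{t+1}} C_{t+1}(\o^t, \o_{t+1}) q_{t+1}(d\o_{t+1}|\o^t) \right) P_t(d\o^t) = \int_{\Omega^{t+1}} C_{t+1}(\o^{t+1}) P_{t+1}(d\o^{t+1}) = E(C_{t+1}),$$
which equals $E(C_T)$ by the induction hypothesis. This closes the induction and yields the first two assertions for every $t$.

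Finally, the finiteness statements follow from integrability. For each $t \in \{1, \ldots, T\}$ the non-negative random variable $C_t$ has $E(C_t) = E(C_T) < \infty$, hence $\Omega^{t}_{C} := \{\o^t \in \Omega^t : C_t(\o^t) < \infty\}$ lies in $\mathcal{F}_t$ and satisfies $P_t(\Omega^{t}_{C}) = 1$. For $t = 0$ the space $\Omega^0 = \{\o_0\}$ is a single deterministic point, so $C_0$ is a genuine constant equal to $\int_{\O_1} C_1(\o_1) P_1(d\o_1) = E(C_1) = E(C_T) < \infty$, giving $C_0 < \infty$ outright. I do not expect any genuine obstacle: the only delicate points are the $\mathcal{F}_t$-measurability of the kernel integral, which is already packaged in Lemma \ref{fubini0}, and the bookkeeping that $E(C_t)$ computed under $P_t$ coincides with $E(C_{t+1})$ under $P_{t+1}$ — both being routine consequences of Fubini's theorem for stochastic kernels.
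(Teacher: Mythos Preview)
Your proof is correct and follows essentially the same route as the paper: backward induction on $t$, with the base case given by Assumption \ref{ae}, the inductive step invoking Fubini for stochastic kernels (Lemma \ref{fubini0} / Proposition \ref{fubiniext}) to get $\mathcal{F}_t$-measurability of $C_t$ and the identity $E(C_t)=E(C_{t+1})$, and almost sure finiteness read off directly from $E(C_t)<\infty$ (the paper packages this last step as Lemma \ref{lemmaAA}).
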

\begin{proof}
We proceed by induction. For $t=T$ by Assumption \ref{ae} $C_T=C$ is $\mathcal{F}_{T}$-measurable, $C_T\geq 0$  and $E(C_T)<\infty$. Assume
now that $C_{t+1}$ is $\mathcal{F}_{t+1}$-measurable, $C_{t+1}\geq 0$ and  $E(C_{t+1})=E(C_T) <\infty$.
From Proposition \ref{fubiniext} $i)$ applied to $f=C_{t+1}$ we get that  $\omega^{t} \to C_{t}(\omega^{t})=\int_{\Omega_{t+1}} C_{t+1}(\o^{t},\o_{t+1})q_{t+1}(d\o_{t+1}|\o^{t})$ is  $\mathcal{F}_{t}$-measurable. As $C_{t+1}(\o^{t+1}) \geq 0$ for all $\o^{t+1}$, it is clear that  $C_{t}(\o^{t}) \geq 0$ for all $\o^t$. Applying the Fubini theorem (see Lemma \ref{fubini0}) we get that
\begin{eqnarray*}
E (C_{t}) &= &\int_{\O^t} \int_{\O_{t+1} } C_{t+1}(\omega^{t}, \o_{t+1})q_{t+1}(d\omega_{t+1}|\omega^{t}) P_t(d \o^t)\\
 & = & \int_{\O^{t+1}}  C_{t+1}(\omega^{t+1}) P_{t+1}(d \o^{t+1})=E(C_{t+1})=E(C_T) < \infty.
\end{eqnarray*}
and the induction step is complete.
For the second part of the lemma, we apply Lemma \ref{lemmaAA} to $f=C_{t+1}$ and we obtain that $\Omega^{t}_{C}:=\{\o^{t}\in \Omega^{t},\; C_{t}(\o^{t})<\infty\} \in \mathcal{F}_{t}$ and $P_{t}(\Omega^{t}_{C})=1$.
\end{proof}\\

Propositions \ref{dyn1} to \ref{dyn4} below  solve the dynamic programming procedure and hold true under the following set of conditions. Let $1\leq t\leq T$ be fixed. \begin{align}
\label{amiens}
&U_{t} \left(\o^{t},\cdot\right)  :   \mathbb{R} \rightarrow\mathbb{R} \mbox{ is well-defined, non-decreasing and  usc on $\mathbb{R}$ for all $\o^{t} \in \O^{t}$},\\
\label{reims}
&U_{t} \left(\cdot,\cdot\right)  :   \Omega^{t}\times\mathbb{R} \rightarrow \mathbb{R} \{\pm \infty\} \mbox{ is  ${\Fc}_{t}\otimes \mathbb{B}(\mathbb{R})$-measurable},\\
\label{dimancheR}
 &\int_{\O^{t}}U^{+}_{t}(\o^{t},H(\o^{t-1})+ \xi(\o^{t-1})\Delta S_{t}(\o^{t}))P_{t}(d\o^{t}) < \infty, \\
\nonumber &\mbox{for all $\xi\in \Xi_{t-1}$ and $H=x+ \sum_{s=1}^{t-1} \phi_s \Delta S_s$}\;
 \mbox{where $x \geq 0$, $\phi_1 \in \Xi_0, \ldots,\phi_{t-1} \in \Xi_{t-2}$ }\\
 \nonumber &\mbox{and $P_{t}(H(\cdot)+\xi(\cdot) \Delta S_{t}(\cdot) \geq 0)=1$}, \\
\label{bordeaux}
&U_{t}(\o^{t}, \lambda x)  \leq   \lambda^{\overline{\gamma}}K \left(U_{t}\left(\o^{t}, x+\frac{1}{2}\right) + {C}_{t}(\o^{t})\right), \mbox{ for all  $\o^{t} \in \O^{t}$, $\lambda \geq 1$, $x \in \mathbb{R}$.}\;
\end{align}
\begin{remark}
Note  that from \eqref{amiens} and \eqref{reims} we have that $-U_{t}$ is a $\overline{\mathcal{F}}_{t}$-normal integrand (see Definition 14.27 in \cite{rw} or Section 3 of Chapter 5 in \cite{Molchanov} 
and Corollary 14.34 of \cite{rw}). However to prove that this property is preserved in the dynamic programming procedure we need  to show separately   that  \eqref{amiens} and \eqref{reims} are true. Furthermore, as our sigma-algebras are not assumed to be complete, obtaining some $\mathcal{F}_{t}$-normal integrand from $-U_{t}$  would introduce yet another layer of difficulty. For these reasons we choose to prove  \eqref{amiens} and \eqref{reims} instead of some normal integrand property.  Nevertheless  we will use again the properties of normal integrands in the proof of Lemma \ref{dyn4}.
\end{remark}

The next proposition is a first step in the construction of $\widetilde{\Omega}^{t}$.
\begin{proposition}\label{dyn1}
Let $0\leq t\leq T-1$ be fixed. Assume that (NA) condition holds true and that
  \eqref{amiens}, \eqref{reims}, \eqref{dimancheR} and \eqref{bordeaux} hold true at stage $t+1$.
Then there exists $\widetilde{\Omega}_{1}^{t} \in \mathcal{F}_{t}$ such  that  $P_{t}(\widetilde{\Omega}_{1}^{t})=1$ and such that for all $\o^{t} \in \widetilde{\Omega}_{1}^{t}$ the function $(\omega_{t+1},x) \to U_{t+1}(\o^{t},\o_{t+1},x)$ satisfies the assumptions of Theorem \ref{main1} with $\overline{\Omega}=\Omega_{t+1}$, $\mathcal{H}=\mathcal{G}_{t+1}$, $Q(\cdot)=q_{t+1}(\cdot|\o^{t})$,  $Y(\cdot)=\Delta S_{t+1}(\o^t,\cdot)$, $V(\cdot,y)=U_{t+1}(\omega^{t},\cdot,y)$ where $V$ is defined on $\Omega_{t+1} \times \mathbb{R}$.
\end{proposition}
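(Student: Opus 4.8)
The plan is to verify, for $P_t$-almost every $\o^t$, the four hypotheses of Theorem \ref{main1} for the section $V(\o_{t+1},y):=U_{t+1}(\o^t,\o_{t+1},y)$, read with $D=D^{t+1}(\o^t)$, $Q=q_{t+1}(\cdot|\o^t)$ and $Y=\Delta S_{t+1}(\o^t,\cdot)$. Three of the four assumptions are inherited from the corresponding stage-$(t+1)$ properties and hold on explicit full-measure sets; the integrability condition (Assumption \ref{dimanche}) is the genuine obstacle because of its universal quantifier over $h$.

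First I would dispose of Assumption \ref{samedi}. For fixed $\o^t$ the map $\o_{t+1}\mapsto U_{t+1}(\o^t,\o_{t+1},x)$ is $\Gc_{t+1}$-measurable, being a section of the $\Fc_{t+1}=\Fc_t\otimes\Gc_{t+1}$-measurable function $U_{t+1}(\cdot,x)$ from \eqref{reims}, while $y\mapsto U_{t+1}(\o^t,\o_{t+1},y)$ is non-decreasing, usc, and equals $-\infty$ on $(-\infty,0)$ for every $\o_{t+1}$ by \eqref{amiens}; hence \ref{samedi} holds for \emph{all} $\o^t$. Assumption \ref{AOAone} holds on $\Omega^t_{NA}$: by Proposition \ref{AOAmulti} (together with Lemma \ref{localNA}, which ensures $0\in D^{t+1}(\o^t)$ so that $D^{t+1}(\o^t)$ is a vector space) one may take $\alpha=\alpha_t(\o^t)\in(0,1]$. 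Finally, Assumption \ref{ae1} follows from \eqref{bordeaux} at stage $t+1$ with $C(\cdot):=C_{t+1}(\o^t,\cdot)$; this $C$ is $\Gc_{t+1}$-measurable (a section), non-negative, and satisfies $E_Q(C)=\int_{\Omega_{t+1}}C_{t+1}(\o^t,\o_{t+1})q_{t+1}(d\o_{t+1}|\o^t)=C_t(\o^t)$, which is finite precisely on the full-measure set $\Omega^t_C$ of Lemma \ref{ToolJC}.

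The hard part is Assumption \ref{dimanche}, namely that for a.e. $\o^t$ one has
\[
g(\o^t,h):=\int_{\Omega_{t+1}}U_{t+1}^+\big(\o^t,\o_{t+1},1+h\Delta S_{t+1}(\o^t,\o_{t+1})\big)\,q_{t+1}(d\o_{t+1}|\o^t)<\infty\quad\text{for all }h\in\Hc_1^{t+1}(\o^t).
\]
I would argue by contradiction through a measurable selection. Using \eqref{reims} and the Fubini-type measurability result (Lemma \ref{LemmaA1}), $g$ is $\Fc_t\otimes\Bc(\mathbb{R}^d)$-measurable, and by Lemma \ref{Dhmes} (applied with $H\equiv 1$) the graph of $\Hc_1^{t+1}$ lies in $\Fc_t\otimes\Bc(\mathbb{R}^d)$. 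Hence
\[
G:=Graph(\Hc_1^{t+1})\cap\{(\o^t,h):g(\o^t,h)=\infty\}\in\Fc_t\otimes\Bc(\mathbb{R}^d),
\]
and its projection $N:=\mbox{proj}_{|\Omega^t}G$ is in $\overline{\Fc}_t$ by the Projection Theorem. On $N$ the Aumann Theorem yields a $\overline{\Fc}_t$-measurable selector $\o^t\mapsto\overline{h}(\o^t)\in\Hc_1^{t+1}(\o^t)$ with $g(\o^t,\overline{h}(\o^t))=\infty$; extending $\overline{h}$ by $0$ off $N$ and correcting (Lemma \ref{completemes}) to an $\Fc_t$-measurable $\xi\in\Xi_t$ with $\xi=\overline{h}$ $P_t$-a.s.\ produces a strategy satisfying $P_{t+1}(1+\xi\Delta S_{t+1}\geq 0)=1$. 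Applying \eqref{dimancheR} at stage $t+1$ with $H\equiv 1$ and this $\xi$, and then Fubini, gives $\int_{\Omega^t}g(\o^t,\xi(\o^t))\,P_t(d\o^t)<\infty$; but on $N$ (off a $P_t$-null set) the integrand equals $+\infty$, forcing $P_t(N)=0$. The only delicate points are the joint measurability of $g$ and the passage from the $\overline{\Fc}_t$-measurable selector to an $\Fc_t$-measurable strategy, which is exactly what the Aumann/Projection machinery of the earlier sections supplies.

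Finally I would set $\widetilde{\Omega}_1^t:=\Omega^t_{NA}\cap\Omega^t_C\cap(\Omega^t\setminus N')$, where $N'\in\Fc_t$ is a $P_t$-null set containing $N$; then $\widetilde{\Omega}_1^t\in\Fc_t$, $P_t(\widetilde{\Omega}_1^t)=1$, and for every $\o^t\in\widetilde{\Omega}_1^t$ all four assumptions of Theorem \ref{main1} hold, which is the claim.
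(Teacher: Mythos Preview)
Your proposal is correct and follows essentially the same approach as the paper: verify Assumptions \ref{samedi}, \ref{AOAone}, \ref{ae1} directly on the full-measure sets $\Omega^t_{NA}$ and $\Omega^t_C$, and establish Assumption \ref{dimanche} by a measurable-selection contradiction argument using the Projection and Aumann theorems together with \eqref{dimancheR} applied with $H\equiv 1$. The only cosmetic point is that the joint $\Fc_t\otimes\Bc(\mathbb{R}^d)$-measurability of $g$ is obtained in the paper via Proposition \ref{fubiniext} $ii)$ rather than Lemma \ref{LemmaA1}; otherwise your construction of $\widetilde{\Omega}_1^t$ and the paper's $\Omega^t_{NA}\cap\Omega^t_{int}\cap\Omega^t_C$ coincide.
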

 \begin{remark}\label{lemmecont}
Note that Lemmata \ref{fat}, \ref{cont} and Theorem \ref{main1} hold true under the same set of assumptions. Therefore we can replace Theorem \ref{main1} by either Lemmata \ref{fat} or  \ref{cont} in  the above proposition.
\end{remark}
\begin{proof}
To prove the proposition we will review one by one the assumptions needed to apply Theorem \ref{main1}  in the context $\overline{\Omega}=\Omega_{t+1}$, $\mathcal{H}=\mathcal{G}_{t+1}$, $Q(\cdot)=q_{t+1}(\cdot|\o^{t})$,  $Y(\cdot)=\Delta S_{t+1}(\o^t,\cdot)$, $V(\cdot,y)=U_{t+1}(\omega^{t},\cdot,y)$ where $V$ is defined on $\Omega_{t+1} \times \mathbb{R}$.
In the sequel we shortly call this the context $t+1$.\\
From \eqref{amiens} at $t+1$  for all $\o^{t} \in  \Omega^{t}$ and   $\o_{t+1}\in \Omega_{t+1}$,  the function $x \in \mathbb{R} \to U_{t+1}(\o^{t},\o_{t+1},x)$ is  non-decreasing and usc on $\mathbb{R}$. From \eqref{reims} at $t+1$  for all fixed $\o^{t} \in \Omega^{t}$ and  $x \in \mathbb{R}$,  the function  $\o_{t+1} \in \Omega_{t+1} \to U_{t+1}(\o^{t},\o_{t+1},x)$ is $\mathcal{G}_{t+1}$-measurable and thus Assumption \ref{samedi} is satisfied in the context $t+1$ (recall that $U_{t+1}(\o^{t},\o_{t+1},x)=-\infty$ for all $x <0$ by assumption). \\
We move now to the assumptions that are verified for $\o^{t}$ chosen in some specific $P_{t}$-full measure set. First from Lemma \ref{localNA} for all $\o^{t} \in \Omega^{t}_{NA1}$ we have $0 \in D^{t+1}(\o^{t})$ (recall that in Section \ref{seone} we have assume that $D$ contains $0$). From Proposition \ref{AOAmulti},  Assumption \ref{AOAone} holds true  for all $\o^{t} \in \Omega^{t}_{NA}$ in the context $t+1$.\\
We handle now Assumption \ref{ae1}  on asymptotic elasticity in context $t+1$. Let $\o^{t} \in \Omega^{t}_{C}$ be fixed where $\Omega^{t}_{C}$ is defined in Lemma  \ref{ToolJC}.  From \eqref{bordeaux} at $t+1$ we have  that for all  $\omega_{t+1} \in \Omega_{t+1}$, $\lambda \geq 1$ and $x \in \mathbb{R}$
$$U_{t+1}(\o^{t},\o_{t+1}, \lambda x)  \leq   \lambda^{\overline{\gamma}}K \left(U_{t+1}\left(\o^{t},\o_{t+1}, x+ \frac{1}{2}\right) + {C}_{t+1}(\o^{t},\o_{t+1})\right).$$
Now from Lemma \ref{ToolJC} since $\o^{t} \in  \Omega^{t}_{C}$, we get that
$$\int_{\Omega_{t+1}} C_{t+1}(\o^{t},\o_{t+1}) q_{t+1}(\o_{t+1}|d\o^{t})=C_{t}(\o^{t}) <\infty$$
and thus  Assumption \ref{ae1} in context $t+1$ is verified for all $\o^{t} \in  \Omega^{t}_{C}$.  want to show that for $\o^{t}$ in some $P_{t}$ full measure set to be determined and for all $h \in \mathcal{H}_{1}^{t+1}(\o^{t})$ we have that $$\int_{\Omega_{t+1}} U^{+}_{t+1}(\o^{t},\o_{t+1},1 + h \Delta S_{t+1}(\o^{t},\o_{t+1})) q_{t+1}(d\o_{t+1}|\o^{t}) <\infty.$$
We introduce the following random set $I_{1}: \Omega^{t} \twoheadrightarrow \mathbb{R}^{d}$
\begin{align}
\label{Ix}
I_{1}(\o^{t}):=\left\{ h \in  \mathcal{H}_{1}^{t+1}(\o^{t}),\; \int_{\Omega_{t+1}} U^{+}_{t+1}(\o^{t},\o_{t+1},1 + h \Delta S_{t+1}(\o^{t},\o_{t+1})) q_{t+1}(d\o_{t+1}|\o^{t}) =\infty\right\}.
\end{align}
Arguing by contradiction and using measurable selection arguments we will prove that  $I_{1}(\o^{t})=\emptyset$ for $P_{t}$-almost all $\o^{t} \in \Omega^{t}$. We show first that  $Graph(I_{1}) \in \mathcal{F}_{t}\otimes \mathcal{B}(\mathbb{R}^{d})$. It is clear from  \eqref{reims} at $t+1$ that $(\o^{t},\o_{t+1},h) \to U^{+}_{t+1}(\o^{t},\o_{t+1},1 + h \Delta S_{t+1}(\o^{t},\o_{t+1}))$ is $\mathcal{F}_{t}\otimes \mathcal{G}_{t+1} \otimes \mathcal{B}(\mathbb{R}^d)$-measurable. Using Proposition \ref{fubiniext} $ii)$  we get that  $(\o^{t},h) \to \int_{\Omega_{t+1}} U^{+}_{t+1}(\o^{t},\o_{t+1},1 + h \Delta S_{t+1}(\o^{t},\o_{t+1})) q_{t+1}(d\o_{t+1}|\o^{t})$ is $\mathcal{F}_{t}\otimes \mathcal{B}(\mathbb{R}^d)$-measurable (taking potentially the value $+\infty$). From Lemma \ref{Dhmes}, we obtain  $Graph (\mathcal{H}_{1}^{t+1}) \in \mathcal{F}_{t}\otimes \mathcal{B}(\mathbb{R}^d)$ and $Graph(I_{1}) \in \mathcal{F}_{t}\otimes \mathcal{B}(\mathbb{R}^d)$ follows. \\
Applying the Projection Theorem (see for example Theorem 3.23 in \cite{CV77}) we obtain that  $\{I_{1} \neq \emptyset\}  \in  \overline{\mathcal{F}}_{t}$ and using the Aumann Theorem (see Corollary 1 in \citet{bv}) there exists some $\overline{\mathcal{F}}_{t}$-measurable $\overline{h}_{1}: \{I_{1} \neq \emptyset\} \to \mathbb{R}^{d}$ such that for all $\o^{t} \in \{I_{1} \neq \emptyset\} $, $\overline{h}_{1}(\o^{t}) \in I_{1}(\o^{t})$.  We extend  $\overline{h}_{1}$ on all $\Omega^{t}$ by setting $\overline{h}_{1}(\o^{t})=0$ on $\O^t \setminus \{I_{1}\neq \emptyset\}$. As $\{I_{1} \neq \emptyset\}  \in  \overline{\mathcal{F}}_{t}$ it is clear that  $\overline{h}_{1}$  remains $\overline{\mathcal{F}}_{t}$-measurable.
Using Lemma \ref{completemes} we get some $\mathcal{F}_{t}$-measurable $h_{1}: \Omega^{t} \to \mathbb{R}^{d}$  and  $\Omega^{t}_{I_{1}} \in \mathcal{F}_{t}$ such that $P_t(\Omega^{t}_{I_{1}})=1$ and $\Omega^{t}_{I_{1}} \subset \{\o^{t}\in \Omega^{t}, h_{1}(\omega^{t}) =  \overline{h}_{1}(\omega^{t})\}$.
Arguing as in the proof of Lemma \ref{localNA} and using the Fubini Theorem (see Lemma \ref{fubini0}) we get that
\begin{eqnarray*}
P_{t+1}\left(1+h_{1}(\cdot) \Delta S_{t+1}(\cdot) \geq 0 \right) & = & \int_{\Omega^{t}} q_{t+1}(1+h_{1}(\o^{t}) \Delta S_{t+1}(\o^{t},\cdot) \geq 0 | \o^{t}) P_{t} (d\o^{t}) \\
 & = & \int_{\Omega^{t}} q_{t+1}(1+\overline{h}_{1}(\o^{t}) \Delta S_{t+1}(\o^{t},\cdot) \geq 0 | \o^{t})  \overline{P}_{t}(d\o^{t})\\
 &= & 1.
\end{eqnarray*}
Now assume that $\overline{P}_{t}( \{I_{1} \neq \emptyset\}) >0$. Since $h_{1} \in \Xi_{t}$ and  $P_{t+1}(1+h_{1}(\cdot) \Delta S_{t+1}(\cdot) \geq 0)=1$
from $\eqref{dimancheR}$ at $t+1$ applied to $H=1$
$$\int_{\Omega^{t+1}} U^{+}_{t+1}(\o^{t+1},1+h_{1}(\omega^{t})\Delta S_{t+1}(\omega^{t+1})) P_{t+1}(d\o^{t+1})<\infty.$$
We argue as in Lemma \ref{localNA} again. Let
\begin{align*}
\varphi_{1}(\omega^{t}) &=\int_{\Omega_{t+1}} U^{+}_{t+1}(\omega^{t},\omega_{t+1},1+h_{1}(\omega^{t})\Delta S_{t+1}(\omega^{t},\omega_{t+1})) q_{t+1}(d\omega_{t+1}|\o^{t}),\\
\overline{\varphi}_{1}(\omega^{t})&=\int_{\Omega_{t+1}} U^{+}_{t+1}(\omega^{t},\omega_{t+1},1+\overline{h}_{1}(\omega^{t})\Delta S_{t+1}(\omega^{t},\omega_{t+1})) q_{t+1}(d\omega_{t+1}|\o^{t}).
\end{align*}
We have already seen that  $(\o^{t},h) \in \Omega^{t}\times\mathbb{R}^{d} \to \int_{\Omega_{t+1}} U^{+}_{t+1}(\o^{t},\o_{t+1},1 + h \Delta S_{t+1}(\o^{t},\o_{t+1})) q_{t+1}(d\o_{t+1}|\o^{t})$ is $\mathcal{F}_{t}\otimes \mathcal{B}(\mathbb{R}^d)$-measurable (taking potentially value $+\infty$). By composition it is clear that $\varphi_{1}$ is $\mathcal{F}_{t}$-measurable and that $\overline{\varphi}_{1}$ is $\overline{\mathcal{F}}_{t}$-measurable.
 Furthermore as $\{\o^{t} \in \O^t,\; \varphi_1(\o^{t}) \neq \overline{\varphi}_1(\o^{t})\} \subset \{\o^{t}\in \O^t,\; h_1(\o^{t}) \neq \overline{h}_1(\o^{t})\} $, $\varphi_1=\overline{\varphi}_1$ $P_{t}$-almost surely. This implies that  $\int_{\Omega^{t}} \overline{\varphi}_1 d \overline{P}_{t}=\int_{\Omega^{t}} {\varphi}_1 dP_{t}$ and using again the Fubini Theorem (see Lemma \ref{fubini0}) we get that
\begin{align*}
&\int_{\Omega^{t+1}} U^{+}_{t+1}(\o^{t+1},x+h_{1}(\omega^{t})\Delta S_{t+1}(\omega^{t+1}) P_{t+1}(d\o^{t+1})\\
&=\int_{\Omega^{t}} \int_{\Omega_{t+1}} U^{+}_{t+1}(\omega^{t},\omega_{t+1},1+h_{1}(\omega^{t})\Delta S_{t+1}(\omega^{t},\omega_{t+1})) q_{t+1}(d\omega_{t+1}|\o^{t})P_{t}(d\omega^{t})\\
&=\int_{\Omega^{t}}  \int_{\Omega_{t+1}} U^{+}_{t+1}(\omega^{t},\omega_{t+1},1+\overline{h}_{1}(\omega^{t})\Delta S_{t+1}(\omega^{t},\omega_{t+1})) q_{t+1}(d\omega_{t+1}|\o^{t})\overline{P}_{t}(d\omega^{t})\\
&\geq \int_{\{I_{1} \neq \emptyset\}} (+\infty) \overline{P}_{t}(d\omega^{t})=+\infty.
\end{align*}
Therefore we must have $\overline{P}_{t}( \{I_{1} \neq \emptyset\})  =0$ $i.e$ $\overline{P}_{t}(\{I_{1} =\emptyset\}) =1$. Now since $\{I_{1} =\emptyset\} \in \overline{\mathcal{F}}_{t}$ there exists $\Omega^{t}_{int} \subset \{I_{1} =\emptyset\}$ such that $\Omega^{t}_{int} \in \mathcal{F}_{t}$ and $P_t(\Omega^{t}_{int})=\overline{P}_{t}( \{I_{1}=\emptyset\})=1$.
 For all $\o^{t} \in \Omega^{t}_{int}$, Assumption  \ref{dimanche} in the context $t+1$ is true and we can now define $\widetilde{\Omega}_{1}^{t} \subset \Omega^{t}$
 \begin{align}
 \label{deftildeomegafirst1}
 \widetilde{\Omega}_{1}^{t}:= \Omega^{t}_{NA}  \cap \Omega^{t}_{int}\cap \Omega^{t}_{C}.
 \end{align}
 It is clear that $\widetilde{\Omega}_{1}^{t} \in \mathcal{F}_{t}$, $P_t(\widetilde{\Omega}_{1}^{t})=1$ and the proof is complete.
\end{proof}\\

The next proposition enables us to initialize the induction argument that will be carried on in Proposition \ref{dyn4}.
\begin{proposition}\label{dyn2}
 Assume that the (NA) condition and Assumptions \ref{uisfinite}, \ref{PhiUX} and \ref{ae} hold true. Then $U_{T}$ satisfies \eqref{amiens}, \eqref{reims}, \eqref{dimancheR} and \eqref{bordeaux} for $t=T$. We set $\widetilde{\Omega}^T=\Omega$.
 \end{proposition}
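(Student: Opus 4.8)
The plan is to observe that the whole statement collapses to a verification, since by the definition \eqref{tuskes} the terminal value function $U_T$ coincides with the original utility $U$, by \eqref{Crec} we have $C_T=C$, and the proposition itself prescribes $\widetilde{\Omega}^T=\Omega$. Thus it suffices to check the four properties \eqref{amiens}, \eqref{reims}, \eqref{dimancheR} and \eqref{bordeaux} directly for $U$, and three of them are essentially restatements of the standing hypotheses. Property \eqref{amiens} — monotonicity and upper semicontinuity of $x\mapsto U(\omega^T,x)$ — is precisely the second item of Definition \ref{utilite}; there is no well-definedness issue at the terminal stage because no integration is involved, $U_T$ being given outright. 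Property \eqref{bordeaux} is literally inequality \eqref{ae+} of Assumption \ref{ae}, once one recalls $C_T=C$.

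For the integrability property \eqref{dimancheR}, I would simply unfold the data. Given $\xi\in\Xi_{T-1}$ and $H=x+\sum_{s=1}^{T-1}\phi_s\Delta S_s$ with $x\ge 0$, $\phi_s\in\Xi_{s-1}$ and $P_T(H+\xi\Delta S_T\ge 0)=1$, set $\phi:=(\phi_1,\dots,\phi_{T-1},\xi)$. Then $\phi\in\Phi$ is predictable and $H+\xi\Delta S_T=V_T^{x,\phi}$, so the standing constraint says exactly that $\phi\in\Phi(x)$. Since $U_T=U$, the integral appearing in \eqref{dimancheR} is nothing but $EU^{+}(\cdot,V_T^{x,\phi})$, and Proposition \ref{propufini} — which holds precisely under Assumptions \ref{uisfinite}, \ref{PhiUX} and \ref{ae} — yields $EU^{+}(\cdot,V_T^{x,\phi})<\infty$. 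This closes \eqref{dimancheR}.

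The only point requiring a genuine, if standard, argument is the joint measurability \eqref{reims}: I must upgrade the pointwise-in-$x$ $\mathcal{F}$-measurability of $U(\cdot,x)$ (first item of Definition \ref{utilite}) to $\mathcal{F}_T\otimes\mathcal{B}(\mathbb{R})$-measurability of $(\omega,x)\mapsto U(\omega,x)$. Here I would use that $x\mapsto U(\omega,x)$ is non-decreasing and usc, hence right-continuous, so that $U(\omega,x)=\lim_{q\downarrow x,\,q\in\mathbb{Q}}U(\omega,q)$; writing $U$ as a pointwise limit of the jointly measurable maps $(\omega,x)\mapsto U(\omega,q_n(x))$, where $q_n(x)$ is the least dyadic rational exceeding $x$ at level $n$ (a Borel step function of $x$ taking countably many values), gives the claim. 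This is exactly the remark in the introduction that monotonicity together with $\omega$-measurability forces joint measurability, and I expect this verification — rather than the three direct ones — to be the single place needing care. Finally $\widetilde{\Omega}^T=\Omega$ trivially has full measure, which completes the initialization of the induction.
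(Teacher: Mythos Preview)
Your proposal is correct and follows essentially the same route as the paper. In particular, for \eqref{reims} the paper also argues via right-continuity (obtained from monotonicity plus usc, their Lemma \ref{getrightcont}) and then applies the extended Carath\'eodory lemma (their Lemma \ref{cara}), whose proof is precisely the dyadic-approximation argument you sketch; the treatments of \eqref{amiens}, \eqref{dimancheR} (via Proposition \ref{propufini} after recognizing $H+\xi\Delta S_T=V_T^{x,\phi}$) and \eqref{bordeaux} are identical.
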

\begin{proof}
We start with \eqref{amiens} for $t=T$.  As
$U_T=U$ (see \eqref{tuskes}), using Definition \ref{utilite},  $x \in \mathbb{R} \to U_{T}(\o^{T},x)$   is well-defined, non-decreasing  and usc on $\mathbb{R}$ and \eqref{amiens} for $t=T$ is true. We prove now  \eqref{reims} for $t=T$  $i.e$ that $U_{T}=U$  is $\mathcal{F}_{T} \otimes \mathcal{B}(\mathbb{R})$-measurable.  To do that we show that  for all $\o^{T} \in \Omega^{T}$, $x \in \mathbb{R} \to U_{T}(\o^{T},x)$ is right-continuous and for all $x \in \mathbb{R}$, $\o^{T} \in \Omega^{T} \to U_{T}(x,\o^{T})$ is  $\mathcal{F}_{T}$-measurable (this is just the second point  of Definition \ref{utilite}) so that we can use  Lemma \ref{cara} and establish \eqref{reims} for $t=T$.
Let $\o^{T} \in \Omega^{T} $ be fixed. From \eqref{amiens} at $T$ that we have just proved, $x \in \mathbb{R} \to U_{T}(\o^{T},x)$ is non-decreasing and usc on $\mathbb{R}$, thus applying Lemma \ref{getrightcont} we get that $x \in \mathbb{R} \to U_{T}(\o^{T},x)$ is right-continuous on $\mathbb{R}$. \\
We prove  now that \eqref{dimancheR} is true for $t=T$.
Let $\xi \in \Xi_{T-1}$ and $H=x+ \sum_{t=1}^{T-1} \phi_t \Delta S_t$ where $x \geq 0$, $\phi_1 \in \Xi_0$, \ldots,$\phi_{T-1} \in \Xi_{T-2}$ and $P_{T}(H(\cdot)+\xi(\cdot) \Delta S_{T}(\cdot)\geq 0)=1$. Let $(\phi^{\xi}_{i})_{1\leq i\leq T}  \in \Phi$ be defined by $\phi^{\xi}_{T}=\xi $ and $\phi^{\xi}_{i}=\phi_i$ for $1 \leq i \leq T-1$ then $V_{T}^{x,\phi^{\xi}}= H + \xi \Delta S_{T}$ and thus $\phi^{\xi} \in \Phi(x).$
Using Proposition \ref{propufini}  we get that  $E U^{+} (\cdot,V_{T}^{x,\phi^{\xi}}(\cdot))=E U^{+}_{T}\left(\cdot,H(\cdot)+\xi(\cdot)\Delta S_{T}(\cdot)\right)<\infty$ (recall that $U=U_{T}$). Therefore  \eqref{dimancheR} is verified for $t=T$. Finally,  from Assumption \ref{ae}, \eqref{bordeaux} for $t=T$ is true.\end{proof}\\

The next proposition proves that if \eqref{amiens}, \eqref{reims}, \eqref{dimancheR} and \eqref{bordeaux} hold true at $t+1$ then they are also true at $U_{t}$ for some well chosen  $\widetilde{\Omega}^{t}$.
\begin{proposition}\label{dyn3}
Let $0\leq t\leq T-1$ be fixed.  Assume that the (NA) condition  holds true and that \eqref{amiens}, \eqref{reims},  \eqref{dimancheR} and \eqref{bordeaux} are true at $t+1$ (where $U_{t+1}$ is defined from a given $\widetilde{\O}^{t+1}$ see \eqref{vanek}). Then there exists some  $\widetilde{\O}^{t} \in \mathcal{F}_{t}$ with $P_{t}( \widetilde{\O}^{t})=1$  such that  \eqref{amiens}, \eqref{reims}, \eqref{dimancheR} and  \eqref{bordeaux} are true for $t$.\\
 Moreover for all $H=x+\sum_{s=1}^{t} \phi_s \Delta S_s $, with $x \geq 0$ and $\phi_1 \in \Xi_0, \ldots, \phi_{t} \in \Xi_{t-1}$, such that  $P_{t}( H\geq 0)=1$ there exists some $\widetilde{\O}^{t}_{H} \in \mathcal{F}_{t}$ such that $P(\widetilde{\O}^{t}_{H})=1$, $\widetilde{\O}^{t}_{H} \subset \widetilde{\O}^{t}$ and some $\widehat{h}^H_{t+1} \in \Xi_{t}$ such that for all $\o^{t} \in \widetilde{\O}^{t}_{H}$, $\widehat{h}^H_{t+1}(\o^{t}) \in \Dc_{H(\o^{t})}^{t+1}(\o^t)$ and \footnote{Recall that the integral on the right hand side is defined in the generalised sense.}
\begin{align}
\label{noisette}
U_{t}(\omega^t,H(\omega^t)) & = \int_{\O_{t+1} }U_{t+1}(\o^t, \o_{t+1}, H(\o^t)+ \widehat{h}^H_{t+1}(\o^t) \Delta S_{t+1}(\o^t, \o_{t+1}))q_{t+1}(d\o_{t+1}|\o^t).
\end{align}
 \end{proposition}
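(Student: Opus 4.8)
The plan is to take $\widetilde{\O}^{t}:=\widetilde{\O}_{1}^{t}$, the $P_{t}$-full measure set furnished by Proposition~\ref{dyn1} (so that $\widetilde{\O}^{t}\subset\O_{NA}^{t}\cap\O_{int}^{t}\cap\O_{C}^{t}$ by \eqref{deftildeomegafirst1}), and to transport to stage $t$, fibre by fibre, the conclusions of the one-period Theorem~\ref{main1}. For each $\o^{t}\in\widetilde{\O}^{t}$ the ``context $t+1$'' satisfies every hypothesis of Theorem~\ref{main1}, hence $x\mapsto U_{t}(\o^{t},x)$ equals the one-period value function of \eqref{vanek}: finite on $[0,\infty)$, non-decreasing and usc on $\mathbb{R}$, with the defining supremum attained on the compact set $\Dc_{x}^{t+1}(\o^{t})$ (Remark~\ref{HvsD}). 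For $\o^{t}\notin\widetilde{\O}^{t}$, formula \eqref{vanek} makes $x\mapsto U_{t}(\o^{t},x)$ the function equal to $-\infty$ on $(-\infty,0)$ and to $0$ on $[0,\infty)$, again non-decreasing and usc; this gives \eqref{amiens}. Property \eqref{bordeaux} is obtained by propagating the elasticity bound: since $\Hc_{\lambda x}^{t+1}(\o^{t})=\lambda\,\Hc_{x}^{t+1}(\o^{t})$ for $\lambda\ge1$, substituting $h=\lambda h'$ in the supremum and inserting the stage-$(t+1)$ inequality \eqref{bordeaux} under the kernel integral yields $U_{t}(\o^{t},\lambda x)\le K\lambda^{\overline{\gamma}}\bigl(U_{t}(\o^{t},x+\frac{1}{2})+C_{t}(\o^{t})\bigr)$ on $\widetilde{\O}^{t}$, where $C_{t}=\int_{\O_{t+1}}C_{t+1}\,dq_{t+1}$ is finite there by Lemma~\ref{ToolJC}; off $\widetilde{\O}^{t}$ the inequality is trivial since $C_{t}\ge0$.

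The core of the proof is the joint measurability \eqref{reims}. First I would record that $G(\o^{t},x,h):=\int_{\O_{t+1}}U_{t+1}(\o^{t},\o_{t+1},x+h\Delta S_{t+1}(\o^{t},\o_{t+1}))\,q_{t+1}(d\o_{t+1}|\o^{t})$ is well-defined in $[-\infty,\infty)$ (its positive part is $q_{t+1}(\cdot|\o^{t})$-integrable by Lemma~\ref{fat}) and $\Fc_{t}\otimes\mathcal{B}(\mathbb{R})\otimes\mathcal{B}(\mathbb{R}^{d})$-measurable, by \eqref{reims} at $t+1$, composition with $(\o^{t},\o_{t+1},x,h)\mapsto x+h\Delta S_{t+1}$ and Proposition~\ref{fubiniext}. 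Because $x\mapsto U_{t}(\o^{t},x)$ is non-decreasing and usc, hence right-continuous by Lemma~\ref{getrightcont}, the Carath\'eodory-type Lemma~\ref{cara} reduces \eqref{reims} to the $\Fc_{t}$-measurability, for each fixed $x\ge0$, of $\o^{t}\mapsto 1_{\widetilde{\O}^{t}}(\o^{t})\sup_{h\in\Dc_{x}^{t+1}(\o^{t})}G(\o^{t},x,h)$. As $\Dc_{x}^{t+1}$ is closed-valued with measurable graph (Lemma~\ref{Dhmes}) and $G(\o^{t},x,\cdot)$ is usc (Lemma~\ref{cont}), the Projection Theorem (Theorem~3.23 in \cite{CV77}) applied to $\{(\o^{t},h):h\in\Dc_{x}^{t+1}(\o^{t}),\,G(\o^{t},x,h)>c\}$ shows this supremum is $\overline{\Fc}_{t}$-measurable, and one recovers an honest $\Fc_{t}$-representative through Lemma~\ref{completemes}. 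This passage from the completion $\overline{\Fc}_{t}$ back to the uncompleted $\Fc_{t}$ is the delicate point, forced by our refusing to assume the $\s$-algebras complete, and is the main obstacle of the whole proposition; it is exactly why $\widetilde{\O}^{t}$ has to be inserted in \eqref{vanek} and why the normal-integrand viewpoint (cf.\ the discussion preceding the proposition) is needed.

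For the optimiser \eqref{noisette}, fix $H=x+\sum_{s=1}^{t}\phi_{s}\Delta S_{s}$ with $P_{t}(H\ge0)=1$, so $H\in\Xi_{t}$. I would consider the argmax random set $\Theta_{H}(\o^{t}):=\{h\in\Dc_{H(\o^{t})}^{t+1}(\o^{t}):G(\o^{t},H(\o^{t}),h)=U_{t}(\o^{t},H(\o^{t}))\}$, whose graph equals $Graph(\Dc_{H}^{t+1})\cap\{(\o^{t},h):G(\o^{t},H(\o^{t}),h)=U_{t}(\o^{t},H(\o^{t}))\}$ and lies in $\Fc_{t}\otimes\mathcal{B}(\mathbb{R}^{d})$ by Lemma~\ref{Dhmes}, the measurability of $G$ and of $U_{t}$ (just established), and composition with the $\Fc_{t}$-measurable map $\o^{t}\mapsto(\o^{t},H(\o^{t}))$. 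By Theorem~\ref{main1} the fibre supremum is attained, so $\Theta_{H}(\o^{t})\ne\emptyset$ for every $\o^{t}\in\widetilde{\O}^{t}$; the Projection Theorem gives $\{\Theta_{H}\ne\emptyset\}\in\overline{\Fc}_{t}$ and the Aumann Theorem (Corollary~1 in \citet{bv}) a $\overline{\Fc}_{t}$-measurable selector, which I extend by $0$ and replace, via Lemma~\ref{completemes}, by an $\Fc_{t}$-measurable $\widehat{h}_{t+1}^{H}\in\Xi_{t}$ coinciding with it on a full-measure set $\widetilde{\O}_{H}^{t}\subset\widetilde{\O}^{t}$; there \eqref{noisette} holds. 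Finally, \eqref{dimancheR} at $t$ follows by feeding the selector back: given $\xi\in\Xi_{t-1}$ and $P_{t}$-a.s.\ nonnegative time-$t$ wealth $W:=x+\sum_{s=1}^{t-1}\phi_{s}\Delta S_{s}+\xi\Delta S_{t}\in\Xi_{t}$, the selection applied to $W$ gives $\widehat{h}_{t+1}^{W}$ with $U_{t}^{+}(\o^{t},W(\o^{t}))\le\int_{\O_{t+1}}U_{t+1}^{+}(\o^{t},\cdot,W+\widehat{h}_{t+1}^{W}\Delta S_{t+1})\,dq_{t+1}$ at the optimiser, and integrating over $\O^{t}$ and invoking Fubini (Lemma~\ref{fubini0}) reduces the claim to \eqref{dimancheR} at $t+1$ for the concatenated strategy $(\phi_{1},\dots,\phi_{t-1},\xi,\widehat{h}_{t+1}^{W})$, which is finite.
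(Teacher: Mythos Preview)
Your overall architecture matches the paper's: apply Theorem~\ref{main1} fibrewise via Proposition~\ref{dyn1} for \eqref{amiens}, propagate \eqref{bordeaux} by rescaling the admissible set, obtain the selector \eqref{noisette} by Aumann's theorem applied to the argmax random set, and derive \eqref{dimancheR} at $t$ by plugging the selector back and invoking \eqref{dimancheR} at $t+1$. These parts are correct and essentially identical to the paper's proof.

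The gap is in your treatment of \eqref{reims}, precisely at the point you flag as ``the main obstacle''. You take $\widetilde{\O}^{t}:=\widetilde{\O}_{1}^{t}$ and then, for each fixed $x\ge 0$, use the Projection Theorem to see that $\o^{t}\mapsto U_{t}(\o^{t},x)$ is $\overline{\Fc}_{t}$-measurable, after which you invoke Lemma~\ref{completemes} to ``recover an honest $\Fc_{t}$-representative''. But Lemma~\ref{completemes} produces, for each $x$, a \emph{different} function $g_{x}$ and a \emph{different} full-measure set $\Omega^{t}(x)$ on which $g_{x}=U_{t}(\cdot,x)$; it does not make $U_{t}(\cdot,x)$ itself $\Fc_{t}$-measurable, and intersecting the uncountably many $\Omega^{t}(x)$ need not give a full-measure set. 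Consequently Lemma~\ref{cara} cannot be applied to $U_{t}$ as you have defined it: with $\widetilde{\O}^{t}=\widetilde{\O}_{1}^{t}$, the function $U_{t}$ is only $\overline{\Fc}_{t}\otimes\mathcal{B}(\mathbb{R})$-measurable, and \eqref{reims} fails.

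The paper closes this gap by \emph{not} fixing $\widetilde{\O}^{t}$ in advance. It first defines an auxiliary $\overline{U}_{t}$ using $\widetilde{\O}_{1}^{t}$, shows (via Lemma~\ref{cara}, Remark~\ref{overlineF}) that $\overline{U}_{t}$ is $\overline{\Fc}_{t}\otimes\mathcal{B}(\mathbb{R})$-measurable, and then applies Lemma~\ref{completemes2} --- which exploits right-continuity in $x$ to work with countably many dyadic values --- to obtain a \emph{single} $\Omega^{t}_{mes}\in\Fc_{t}$ of full measure and an $\Fc_{t}\otimes\mathcal{B}(\mathbb{R})$-measurable $\widetilde{U}_{t}$ agreeing with $\overline{U}_{t}$ on $\Omega^{t}_{mes}\times\mathbb{R}$. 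Only then is $\widetilde{\O}^{t}:=\widetilde{\O}_{1}^{t}\cap\Omega^{t}_{mes}$ defined, and one checks that $U_{t}$ built from this smaller set coincides with $1_{\Omega^{t}_{mes}}\widetilde{U}_{t}$ (plus the $-\infty$ piece), which is genuinely $\Fc_{t}\otimes\mathcal{B}(\mathbb{R})$-measurable. In short: Lemma~\ref{completemes} is not enough; you need Lemma~\ref{completemes2}, and the set $\widetilde{\O}^{t}$ must be shrunk accordingly.

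A secondary point: your function $G(\o^{t},x,h)$ is not globally well-defined, since for $h\notin\Hc_{x}^{t+1}(\o^{t})$ both $\int U_{t+1}^{+}$ and $\int U_{t+1}^{-}$ can be infinite (cf.\ Remark~\ref{remtrace}). The paper restricts to the trace $\sigma$-algebra on $Graph(\Dc_{H}^{t+1})$ (Lemma~\ref{dyn4}, Proposition~\ref{fubiniext}~$iv)$) to sidestep this.
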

\begin{proof}
{First we define $\widetilde{\Omega}^{t}$ and  prove that  \eqref{amiens} and  \eqref{reims} are true for $U_{t}$. Applying Proposition  \ref{dyn1}, we get  that for all $\o^{t} \in \widetilde{\Omega}_{1}^{t}$,  the function $(\omega_{t+1},x) \to U_{t+1}(\o^{t},\o_{t+1},x)$ satisfies the assumptions of  Lemma \ref{fat} and Theorem \ref{main1}  with $\overline{\Omega}=\Omega_{t+1}$, $\mathcal{H}=\mathcal{G}_{t+1}$, $Q=q_{t+1}(\cdot|\o^{t})$,  $Y(\cdot)=\Delta S_{t+1}(\o^t,\cdot)$, $V(\cdot,y)=U_{t+1}(\omega^{t},\cdot,y)$ where $V$ is defined on $\Omega_{t+1} \times \mathbb{R}$. In particular, for $\o^{t} \in \widetilde{\Omega}^{t}_{1}$ and all $h \in \mathcal{H}^{t+1}_{x}(\o^{t})$, recalling \eqref{Vplus} we have
\begin{align}
\label{Utplus}
 \int_{\Omega_{t+1}} U^{+}_{t+1}(\o^{t},\o_{t+1},x+h \Delta S_{t+1}(\o^{t},\o_{t+1}))q_{t+1}(d\o_{t+1}|\o^{t})<\infty.
 \end{align}
Now, we introduce  $\overline{U}_{t}: \Omega^{t} \times \mathbb{R}$  defined by
\begin{small}
$$\overline{U}_{t}(\o^{t},x):=  (-\infty) 1_{(-\infty,0)}(x)+ 1_{[0,\infty)}(x)1_{\widetilde{\Omega}_{1}^{t}}(\o^{t})\sup_{h \in \Dc_x^{t+1}(\o^t) } \int_{\O_{t+1} }U_{t+1}(\o^t, \o_{t+1}, x+ h \Delta S_{t+1}(\o^t, \o_{t+1}))q_{t+1}(d\o_{t+1}|\o^t).$$
\end{small}
From \eqref{Utplus}, $\overline{U}_{t}$ is well-defined (in the generalised sense).
First, we  prove  that $\overline{U}_{t}$ is  $\overline{\mathcal{F}}_{t} \otimes \mathbb{R}$-measurable and then we will  show that this implies that $U_{t}$ is $\mathcal{F}_{t} \otimes \mathbb{R}$-measurable for a well chosen $\widetilde{\Omega}^{t}$. To show that $\overline{U}_{t}$ is $\overline{\mathcal{F}}_{t} \otimes \mathcal{B}(\mathbb{R})$-measurable, we  use Lemma  \ref{cara} (and Remark \ref{overlineF}) after having proved that it is an extended Carath\'eodory function (see Definition \ref{extcara}).
Applying Theorem \ref{main1}, we get that for all $\o^{t} \in \widetilde{\Omega}_{1}^{t}$, the function $x\in \mathbb{R} \to \overline{U}_{t}(\o^{t},x)$ is non-decreasing and  usc  on $\mathbb{R}$.  Actually, this is true for all $\o^{t} \in \Omega^{t}$ since  outside  $\widetilde{\Omega}_{1}^{t}$, $x\in \mathbb{R} \to U_{t}(\o^{t},x)$ is constant equal to zero on $[0,\infty)$ and to $-\infty$ on $(-\infty,0)$.
Let now $\o^{t} \in \Omega^{t}$ be fixed. As $x \in \mathbb{R} \to \overline{U}_{t}(\o^{t},x)$ is  non-decreasing and  usc on $\mathbb{R}$ we can apply Lemma \ref{getrightcont}  and we get that  $x \in \mathbb{R} \to \overline{U}_{t}(\o^{t},x)$ is right-continuous on $\mathbb{R}$. For $x \geq 0$ fixed, applying Lemma \ref{dyn4} with $H=x$ (here $\Omega^{t}_{H}=\widetilde{\Omega}_{1}^{t}$)  we obtain that $\o^{t} \in \Omega^{t} \to \sup_{h \in \mathbb{R}^{d}} u_{x}(\o^{t},h)$ is $\overline{\mathcal{F}}_{t}$-measurable.
Finally, from the definitions of $\overline{U}_{t}$ and $u_{x}$, we get that $$\overline{{U}}_{t}(\o^{t},x)= (-\infty) 1_{(-\infty,0)}+1_{[0,\infty)}(x)1_{\widetilde{\Omega}_{1}^{t}}(\o^{t}) \sup_{h \in \mathbb{R}^{d}} u_{x}(\o^{t},h),$$
and this implies that $\o^{t} \in \Omega^t \to \overline{{U}}_{t}(\o^{t},x)$ is $\overline{\mathcal{F}}_{t}$-measurable for all $x \in \mathbb{R}$ and thus that $\overline{{U}}_{t}$ is an extended Carath\'eodory function as claimed\\
Finally, we prove the $\mathcal{F}_{t} \otimes \mathcal{B}(\mathbb{R})$-measurability of $U_{t}$. To do that we apply Lemma \ref{completemes2} and we obtain some $\Omega^{t}_{mes} \in \mathcal{F}_{t}$ such that $P_{t}(\Omega^{t}_{mes})=1$ and  some $\mathcal{F}_{t} \otimes \mathbb{R}$-measurable $\widetilde{U}_{t}: \Omega^{t} \times \mathbb{R} \to \mathbb{R} \cup \{\pm\infty\}$ such that for all $x \in \mathbb{R}$, $\{ \o^{t} \in \Omega^{t},\ \overline{U}_{t}(\o^{t},x) \neq \widetilde{U}_{t}(\o^{t},x)\} \subset \Omega^{t} \backslash{\Omega^{t}_{mes}}$. We are now in a position to define $\widetilde{\Omega}^{t}$ and set
\begin{align}
 \label{deftildeomegafirst}
 \widetilde{\Omega}^{t}:=\widetilde{\Omega}_{1}^{t} \cap\Omega^{t}_{mes}.
 \end{align}
 It is clear that $\widetilde{\Omega}^{t} \in \mathcal{F}_{t}$ and that $P_t(\widetilde{\Omega}^{t})=1$
 Furthermore, recalling \eqref{vanek}, Remark \ref{HvsD} (see \eqref{noam}) and the definition of $\overline{U}_{t}$ we have that  for all $x \in \mathbb{R}$,  $\o^{t} \in \Omega^{t}$
\begin{small}
 \begin{align*}
 U_{t}(\o^{t},x)&=  (-\infty) 1_{(-\infty,0)}(x)  +1_{[0,\infty)}(x)1_{\Omega^{t}_{mes}}(\o^{t})1_{\widetilde{\Omega}_{1}^{t}}(\o^{t})\sup_{h \in \Hc_x^{t+1}(\o^t) } \int_{\O_{t+1} }U_{t+1}(\o^t, \o_{t+1}, x+ h \Delta S_{t+1}(\o^t, \o_{t+1}))q_{t+1}(d\o_{t+1}|\o^t)\\
&=(-\infty) 1_{(-\infty,0)}(x)  +1_{[0,\infty)}(x)1_{\Omega^{t}_{mes}}(\o^{t})1_{\widetilde{\Omega}_{1}^{t}}(\o^{t})\sup_{h \in \Dc_x^{t+1}(\o^t) } \int_{\O_{t+1} }U_{t+1}(\o^t, \o_{t+1}, x+ h \Delta S_{t+1}(\o^t, \o_{t+1}))q_{t+1}(d\o_{t+1}|\o^t)\\
&=1_{\Omega^{t}_{mes}}(\o^{t}) \overline{U}_{t}(\o^{t},x)+ (-\infty) 1_{\Omega^{t} \backslash \Omega^{t}_{mes}}(\o^{t})1_{(-\infty,0)}(x)\\
&=1_{\Omega^{t}_{mes}}(\o^{t}) \widetilde{U}_{t}(\o^{t},x)+ (-\infty) 1_{\Omega^{t} \backslash \Omega^{t}_{mes}}(\o^{t}) 1_{(-\infty,0)}(x),
\end{align*}
\end{small}
and the $\mathcal{F}_{t}\otimes\mathcal{B}(\mathbb{R})$-measurability of $U_{t}$ follows immediately, $i.e$ \eqref{reims} is true at $t$. It is clear as well from the third equality that \eqref{amiens} is true for $t$ since we have proven that for all $\o^{t} \in \Omega^{t}$,  $x\in \mathbb{R} \to \overline{U}_{t}(\o^{t},x)$ is well-defined, non-decreasing and  usc on $\mathbb{R}$.}\\
We turn now to the assumption on asymptotic elasticity $i.e$ \eqref{bordeaux} for $t$.
If $\o^{t} \notin  \widetilde{\Omega}^{t}$, then  \eqref{bordeaux} is true since $C_{t}(\o^{t}) \geq 0$ for all $\o^{t}$.  Let $\o^{t} \in \widetilde{\Omega}^{t}$ be fixed. Let $x \geq 0$, $\lambda \geq 1$, $h \in \mathbb{R}^{d}$ such that $q_{t+1}(\lambda x+h\Delta S_{t+1}(\o^{t},.) \geq 0 |\o^{t})=1$ be fixed. By \eqref{bordeaux} for $t+1$ for all  $\omega_{t+1} \in {\Omega}_{t+1}$,  we have  that
%\begin{small}
$$U_{t+1}\left(\omega^{t},\omega_{t+1},\lambda x + h \Delta S_{t+1}(\omega^{t},\omega_{t+1})\right) \leq \lambda^{\overline{\gamma}}K U_{t+1}\left(\omega^{t},\omega_{t+1}, x + \frac{1}{2}+ \frac{h}{\lambda} \Delta S_{t+1}(\omega^{t},\omega_{t+1})\right) +  \lambda^{\overline{\gamma}} C_{t+1}(\o^{t},\o_{t+1}).$$
%\end{small}
By integrating both sides (recall \eqref{Utplus})  we get that
%\begin{small}
\begin{align*}
&\int_{\Omega_{t+1}} U_{t+1}\left(\o^{t},\o_{t+1},\lambda x + h \Delta S_{t+1}(\omega^{t},\omega_{t+1})\right)q_{t+1}(d\o_{t+1}|\o^{t}) \leq \\
 &\lambda^{\overline{\gamma}}K \int_{\Omega_{t+1}} U_{t+1}\left(\o^{t},\o_{t+1},x +  \frac{1}{2}+ \frac{h}{\lambda} \Delta S_{t+1}(\omega^{t},\omega_{t+1})\right)q_{t+1}(d\o_{t+1}|\o^{t})
 +\lambda^{\overline{\gamma}}K \int_{\Omega_{t+1}}  C_{t+1}(\o^{t},\o_{t+1}) q_{t+1}(d\o_{t+1}|\o^{t}).
\end{align*}
%\end{small}
Since $C_t(\o^{t})=\int_{\Omega_{t+1}}  C_{t+1}(\o^{t},\o_{t+1}) q_{t+1}(d\o_{t+1}|\o^{t})$  (see Lemma \ref{ToolJC}) and $h \in \mathcal{H}_{\lambda x}^{t+1}(\o^{t})$ implies that $\frac{h}{\lambda} \in \mathcal{H}_{x}^{t+1}(\o^{t}) \subset \mathcal{H}_{x+\frac{1}{2}}^{t+1}(\o^{t})$, we obtain by definition of $U_t$ (see \eqref{vanek})  that
\begin{align*}
&\int_{\Omega_{t+1}} U_{t+1}\left(\o^{t},\o_{t+1},\lambda x + h \Delta S_{t+1}(\omega^{t},\omega_{t+1})\right)q_{t+1}(d\o_{t+1}|\o^{t}) \leq
 &\lambda^{\overline{\gamma}}K U_t\left(\o^{t},x+\frac{1}{2}\right)
 +\lambda^{\overline{\gamma}} KC_{t}(\o^{t}).
\end{align*}
Taking the supremum over all $h \in \mathcal{H}_{\lambda x}^{t+1}(\o^{t})$ we conclude that \eqref{bordeaux} is true for $t$ for $x \geq 0$. If $x <0$, then  \eqref{bordeaux} is true by definition of $U_{t}$.  Note that we might have $\o^{t} \in \Omega^{t} \backslash\Omega^{t}_{C}$ and $C_{t}(\o^{t})=+\infty$ since \eqref{bordeaux}  does not require that $C_{t}(\o^{t})<+\infty$.

We now prove \eqref{noisette} for $U_t$.  First, from Proposition \ref{dyn1} and Theorem \ref{main1} and since $\widetilde{\Omega}^{t} \subset \widetilde{\Omega}_{1}^{t}$, we  have for all $\o^{t} \in \widetilde{\Omega}^{t}$ and $x\geq 0$ that there exists some $\xi^*  \in \Dc_{x}^{t+1}(\o^{t})$  such that
\begin{eqnarray}
\label{galet}
U_{t}(\o^{t},x)  & =  & \int_{\O_{t+1} }U_{t+1}(\o^{t}, \o_{t+1},x+ \xi^* \Delta S_{t+1}(\o^{t},\o_{t+1}))q_{t+1}(d\o_{t+1}|\o^{t}),
\end{eqnarray}
where the integral on the right hand side is defined in the generalised sense (recall  \eqref{Utplus} and Lemma \ref{fat}). Let $H=x+\sum_{s=1}^{t-1} \phi_s \Delta S_s $, with $x \geq 0$ and $\phi_s \in \Xi_s$ for $s \in \{1,\ldots,t-1\}$, be fixed such that $P(H\geq 0)=1$. Let
$
\widetilde{\Omega}^{t}_{H}:=\widetilde{\Omega}^{t} \cap \{\omega^{t} \in \Omega^{t}, H(\omega) \geq 0\}$. Then
$\widetilde{\Omega}^{t}_{H} \in \mathcal{F}_{t}$ and $P(\widetilde{\Omega}^{t}_{H})=1$.   We introduce the following random set $\psi: \Omega^t \twoheadrightarrow \mathbb{R}^{d}$
\begin{small}
$$ \psi_{H}(\o^{t}):= \left\{ h \in \mathcal{D}^{t+1}_{H(\o^{t})}(\o^{t}),\; U_{t}(\o^{t}, H(\o^{t}))= \int_{\Omega^{t+1}} U_{t+1}\left(\o^{t},\o_{t+1}, H(\o^{t})+h \Delta S_{t+1}(\o^{t},\o_{t+1})\right) q_{t+1}(d\o_{t+1}|\o^{t}) \right\},$$
\end{small}
for $\o^{t} \in \widetilde{\Omega}^{t}_{H}$ and  $\psi_{H}(\o^{t})= \emptyset$ otherwise.
To prove \eqref{noisette} it is enough to find a $\mathcal{F}_{t}$-measurable selector for $\psi_{H}$.
From the definitions of $\psi_{H}$ and $u_{H}$ (see \eqref{tildeu}) we obtain that (recall that
$\widetilde{\Omega}^{t}_{H} \subset \widetilde{\Omega}^{t}$ and $\widetilde{\Omega}^{t}_{H}  \subset  \Omega^{t}_{H}$, see  \eqref{deftildeomegafirst} and the definition of  $\Omega^{t}_{H}$ in Lemma \ref{dyn4}).
$$Graph(\psi_{H})=\left\{ (\o^{t},h) \in \left(\widetilde{\Omega}^{t}_{H}\times \mathbb{R}^{d}\right) \cap Graph(\mathcal{D}^{t+1}_{H}),\; U_{t}(\o^{t}, H(\o^{t}))=u_{H}(\o^{t},h) \right\}.$$
From Lemma \ref{Dhmes} we have that $Graph(\mathcal{D}^{t+1}_{H})  \in \mathcal{F}_{t}\otimes \mathcal{B}(\mathbb{R}^{d})$.
We have already proved  that   $(\o^{t},y) \to U_{t}(\o^{t},y)$ is $\mathcal{F}_{t}\otimes \mathcal{B}(\mathbb{R})$-measurable and, as $H$ is  $\Fc_t$-measurable, we obtain that $\o^{t} \to U_{t}(\o^{t},H(\o^{t}))$ is $\mathcal{F}_{t}$-measurable.
Now applying Lemma \ref{dyn4} we obtain that  $u_{H}$  is $\mathcal{F}_{t}\otimes \mathcal{B}(\mathbb{R}^{d})$-measurable.
The fact that $Graph(\psi_{H}) \in \mathcal{F}_{t}\otimes \mathcal{B}(\mathbb{R}^{d})$ follows immediately.

So we can apply  the Projection Theorem (see for example Theorem 3.23 in \cite{CV77}) and we get that $\left\{\psi_{H} \neq \emptyset \right\} \in \overline{\mathcal{F}}_{t}$ and using the Aumann Theorem (see Corollary 1 in \citet{bv}) that there exists some $\overline{\mathcal{F}}_{t}$-measurable $\overline{h}_{t+1}^{H}: \left\{\psi_{H} \neq \emptyset \right\}\to \mathbb{R}^{d}$  such that for all $\o^{t} \in \left\{\psi_{H} \neq \emptyset \right\}$, $\overline{h}_{t+1}^{H}(\o^{t}) \in \psi_{H}(\o^{t})$. Then  we  extend $\overline{h}_{t+1}^{H}$ on all $\Omega^{t}$ by setting $\overline{h}_{t+1}^{H}=0$ on $ \Omega^t \setminus \left\{\psi_{H} \neq \emptyset \right\}$. Now applying Lemma \ref{completemes} we get some $\mathcal{F}_{t}$-measurable $\widehat{h}_{t+1}^{H}: \Omega^{t} \to \mathbb{R}^{d}$  and some $\overline{\Omega}^{t}_{H} \in \mathcal{F}_{t}$ such that $P(\overline{\Omega}^{t}_{H} )=1$ and $\overline{\Omega}^{t}_{H} \subset \{\overline{h}^{H}_{t+1}=\widehat{h}^{H}_{t+1}\}$.
We prove now that the set $\left\{\psi_{H} \neq \emptyset \right\}$ is of full measure.  Indeed, let $\o^{t} \in \widetilde{\Omega}^{t}_{H}$ be fixed. Using \eqref{galet}  for $x=H(\o^{t}) \geq 0$,  there exists $h^*(\o^{t}) \in  \psi_{H}(\o^{t})$. Therefore $\widetilde{\Omega}^{t}_{H}\subset \left\{\psi_{H} \neq \emptyset \right\}$ and $\overline{P}_{t}(\left\{\psi_{H} \neq \emptyset \right\})=1$.  So for all $\o^{t} \in \overline{\Omega}^{t}_{H} \cap \widetilde{\Omega}^{t}_{H}$ we have
\begin{align*}
U_{t}(\omega^t,H(\omega^t)) &=  \int_{\O_{t+1} }U_{t+1}(\o^t, \o_{t+1}, H(\o^t)+ \overline{h}^H_{t+1}(\o^t) \Delta S_{t+1}(\o^t, \o_{t+1}))q_{t+1}(d\o_{t+1}|\o^t)\\
&=\int_{\O_{t+1} }U_{t+1}(\o^t, \o_{t+1}, H(\o^t)+ \widehat{h}^H_{t+1}(\o^t) \Delta S_{t+1}(\o^t, \o_{t+1}))q_{t+1}(d\o_{t+1}|\o^t).
\end{align*}
So setting
\begin{eqnarray}
\label{omegah}
\widetilde{\Omega}^{t}_{H}= \widetilde{\Omega}^{t}_{H} \cap \overline{\Omega}^{t}_{H} \subset \widetilde{\Omega}^{t}
\end{eqnarray}
\eqref{noisette} is proved for $t$.

We are  now  left with the proof of \eqref{dimancheR} for $U_{t}$.
Let  $\xi \in \Xi_{t-1}$ and $H=x+ \sum_{s=1}^{t-1} \phi_s \Delta S_s$ where $x \geq 0$ and $\phi_1 \in \Xi_0, \ldots,\phi_{t-1} \in \Xi_{t-2}$ and such that $P_t(H (\cdot)+  \xi(\cdot)\Delta S_{t}(\cdot)\geq 0)=1$. We fix some $\o^t \in \widetilde{\Omega}^t$. Let $X(\o^t)=H(\o^{t-1})+\xi(\o^{t-1})\Delta S_{t}(\o^{t})$ then $X $ is $\Fc_t$-measurable.
We apply \eqref{noisette}
 to $X(\o^t)$ (and $\Dc_{X(\o^t)}^{t+1}(\o^t)$), and
 we get some $\o^{t} \in \Omega^{t} \to \widehat{h}_{t+1}(\o^{t})$ which is $\mathcal{F}_{t}$-measurable and $ \widetilde{\Omega}^{t}_{X} \in \mathcal{F}_{t}$ such that $P_{t}(\widetilde{\Omega}^{t}_{X})=1$ and such  that for all $\o^{t} \in \widetilde{\Omega}^{t}_{X}$,
$q_{t+1}\left(X(\o{^t})+ \widehat{h}_{t+1}(\o^{t}) \Delta S_{t+1}(\o^{t},\cdot) \geq 0 |\o^{t}\right)=1$ and
\begin{eqnarray*}
U_{t}(\o^{t},X(\o^t))  & =  & \int_{\O_{t+1} }U_{t+1}(\o^{t}, \o_{t+1},X(\o^t)+ \widehat{h}_{t+1}(\o^{t})  \Delta S_{t+1}(\o^{t},\o_{t+1}))q_{t+1}(d\o_{t+1}|\o^{t}).
\end{eqnarray*}
Using Jensen's Inequality
\begin{align*}
U^{+}_{t}(\o^{t},X(\o^t))   \leq & \int_{\O_{t+1} }U^{+}_{t+1}(\o^{t}, \o_{t+1},X(\o^t)+ \widehat{h}_{t+1}(\o^{t}) \Delta S_{t+1}(\o^{t},\o_{t+1}))q_{t+1}(d\o_{t+1}|\o^{t}).
\end{align*}
Thus as $P_t(\widetilde{\Omega}^{t}_{X})= 1$
\begin{eqnarray*}
\int_{\widetilde{\Omega}^t_X} U^{+}_{t}(\o^{t},X(\o^t)) P_t(d\o^t) & = &  \int_{{\Omega}^t} U^{+}_{t}(\o^{t},X(\o^t)) P_t(d\o^t) \\
 & \leq & \int_{{\Omega}^{t+1}} U^{+}_{t+1}(\o^{t+1},X(\o^t)+ \widehat{h}_{t+1}(\o^{t}) \Delta S_{t+1}(\o^{t+1}))P_{t+1}(d\o^{t+1})< \infty,
\end{eqnarray*}
because of \eqref{dimancheR} for $t+1$ which applies  since $X =x+ \sum_{s=1}^{t-1} \phi_s \Delta S_s + \xi \Delta S_t$ where $x \geq 0$,  $\phi_1 \in \Xi_1, \ldots,\phi_{t-1} \in \Xi_{t-2}$, $\xi \in \Xi_{t-1}$ and $\widehat{h}_{t+1}\in \Xi_t$~: \eqref{dimancheR} for $t$ is  proved.
\end{proof}\\

The following lemma was essential to obtain measurability issues in the proof of Lemma \ref{dyn3}.
\begin{lemma}
\label{dyn4}
Fix some $0 \leq t \leq T-1$ and $x \geq 0$. Let $H:=x+ \sum_{s=1}^{t-1} \phi_s \Delta S_s$,
where $\phi_1 \in \Xi_0, \ldots,\phi_{t-1} \in \Xi_{t-2}$ and $P_{t}(H \geq 0)=1$.
Assume that the (NA) condition  holds true and that \eqref{amiens}, \eqref{reims}, \eqref{dimancheR} and \eqref{bordeaux} are true at $t+1$.
Let $u_{H}: \Omega^{t} \times \mathbb{R}^{d} \to \mathbb{R} \cup \{\pm \infty\}$ be defined by
\begin{align}
\label{tildeu}
u_{H}(\o^{t},h):=
\begin{cases}
\int_{\O_{t+1} }U_{t+1}(\o^t, \o_{t+1}, H(\o^{t})+ h \Delta S_{t+1}(\o^t, \o_{t+1}))q_{t+1}(d\o_{t+1}|\o^t),\\
\;\;\; \;\;\; \mbox{if $(\o^{t},h) \in {\left(\Omega^{t}_{H}\times \mathbb{R}^{d}\right) \cap Graph(\mathcal{D}_{H}^{t+1})}$},\\
 -\infty \; \mbox{ if $(\o^{t},h) \notin Graph(\mathcal{D}_{H}^{t+1})$},\\
 0 \;\mbox{ otherwise}.
 \end{cases}
\end{align}
where $\mathcal{D}_{H}^{t+1}$ is defined in Lemma \ref{Dhmes} and  ${\Omega}^{t}_{H}:=\widetilde{\Omega}^{t}_{1} \bigcap \{\o^{t} \in \Omega^{t},\ H(\o^{t}) \geq 0\}$ (see \eqref{deftildeomegafirst1} for the definition of $\widetilde{\Omega}^{t}_{1}).$
Then $u_{H}$ is well-defined, ${\mathcal{F}_{t} \otimes \mathcal{B}(\mathbb{R}^{d})}$-measurable  and for all $\o^{t} \in \Omega^{t}$, $h \in \mathbb{R}^{d} \to  u_{H}(\o^{t},h)$ is usc.  Morevover, $\o^{t} \in \Omega^{t} \to \sup_{h \in \mathbb{R}^{d}} u_{H}(\o^{t},h)$ is $\overline{\mathcal{F}}_{t}$-measurable. \end{lemma}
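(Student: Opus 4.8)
The plan is to establish the four assertions — well-definedness, joint $\mathcal{F}_t\otimes\mathcal{B}(\mathbb{R}^d)$-measurability, upper semicontinuity in $h$, and $\overline{\mathcal{F}}_t$-measurability of the supremum — in this order, reducing each to the one-period results already available in ``context $t+1$''. Throughout I set $A:=(\Omega^t_H\times\mathbb{R}^d)\cap Graph(\mathcal{D}_H^{t+1})$ and note that $\Omega^t_H=\widetilde{\Omega}^t_1\cap\{H\geq 0\}\in\mathcal{F}_t$ (since $H$ is $\mathcal{F}_t$-measurable and $\widetilde{\Omega}^t_1\in\mathcal{F}_t$ by Proposition \ref{dyn1}), while $Graph(\mathcal{D}_H^{t+1})\in\mathcal{F}_t\otimes\mathcal{B}(\mathbb{R}^d)$ by Lemma \ref{Dhmes}. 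Hence $A$, its complement inside $Graph(\mathcal{D}_H^{t+1})$, and the complement of $Graph(\mathcal{D}_H^{t+1})$ form a measurable partition matching the three branches of the definition. For well-definedness it suffices to treat the integral branch: for $\o^t\in\Omega^t_H\subseteq\widetilde{\Omega}^t_1$, Proposition \ref{dyn1} places us in the hypotheses of Lemma \ref{fat} in context $t+1$, so $\int_{\Omega_{t+1}}U^+_{t+1}(\o^t,\o_{t+1},H(\o^t)+h\Delta S_{t+1})q_{t+1}(d\o_{t+1}|\o^t)<\infty$ for every $h\in\mathcal{H}^{t+1}_{H(\o^t)}(\o^t)\supseteq\mathcal{D}^{t+1}_{H(\o^t)}(\o^t)$, and the generalized integral is well-defined on $A$; the two constant branches are trivially well-defined.

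For joint measurability I would first show, exactly as in the corresponding step of Proposition \ref{dyn1}, that $(\o^t,\o_{t+1},h)\mapsto U^{\pm}_{t+1}(\o^t,\o_{t+1},H(\o^t)+h\Delta S_{t+1}(\o^t,\o_{t+1}))$ is $\mathcal{F}_t\otimes\mathcal{G}_{t+1}\otimes\mathcal{B}(\mathbb{R}^d)$-measurable, by composing the $\mathcal{F}_{t+1}\otimes\mathcal{B}(\mathbb{R})$-measurable function $U_{t+1}$ (see \eqref{reims} at $t+1$) with the measurable map $(\o^t,\o_{t+1},h)\mapsto(\o^t,\o_{t+1},H(\o^t)+h\Delta S_{t+1}(\o^t,\o_{t+1}))$. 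Applying Proposition \ref{fubiniext} then yields that $F^{\pm}(\o^t,h):=\int_{\Omega_{t+1}}U^{\pm}_{t+1}(\cdots)q_{t+1}(d\o_{t+1}|\o^t)$ are $\mathcal{F}_t\otimes\mathcal{B}(\mathbb{R}^d)$-measurable with values in $[0,\infty]$. On $A$ one has $F^+<\infty$, so $u_{H}=F^+-F^-$ is well-defined and measurable there; combined with the constant values $-\infty$ and $0$ on the two remaining pieces of the partition, $u_{H}$ is $\mathcal{F}_t\otimes\mathcal{B}(\mathbb{R}^d)$-measurable. The only delicate point here is the extended-real arithmetic: one must restrict the difference $F^+-F^-$ to $A$, where $F^+$ is finite, before recombining, so as to avoid any $\infty-\infty$.

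For upper semicontinuity in $h$, fix $\o^t$ and distinguish three cases. If $\o^t\in\Omega^t_H$, then $h\mapsto u_{H}(\o^t,h)$ is precisely the slice $h\mapsto\psi(H(\o^t),h)$ of the function $\psi$ of Lemma \ref{cont} in context $t+1$, which is usc on $\mathbb{R}\times\mathbb{R}^d$, and a slice of a jointly usc function is usc. If $H(\o^t)\geq 0$ but $\o^t\notin\widetilde{\Omega}^t_1$, then $u_{H}(\o^t,\cdot)$ equals $0$ on the set $\mathcal{D}^{t+1}_{H(\o^t)}(\o^t)$ (closed-valued by Lemma \ref{Dhmes}, in fact compact by Lemma \ref{rast}) and $-\infty$ elsewhere, which is usc since it is $0$ on a closed set and $-\infty$ on its open complement. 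If $H(\o^t)<0$, then $\mathcal{H}^{t+1}_{H(\o^t)}(\o^t)=\emptyset$, hence $\mathcal{D}^{t+1}_{H(\o^t)}(\o^t)=\emptyset$ and $u_{H}(\o^t,\cdot)\equiv-\infty$, again usc.

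Finally, the measurability of $\o^t\mapsto\sup_{h}u_{H}(\o^t,h)$ follows from joint measurability alone. For every $c\in\mathbb{R}$,
\[
\{\o^t:\ \textstyle\sup_{h}u_{H}(\o^t,h)>c\}=\mathrm{proj}_{\Omega^t}\{(\o^t,h):\ u_{H}(\o^t,h)>c\},
\]
and the set on the right lies in $\mathcal{F}_t\otimes\mathcal{B}(\mathbb{R}^d)$; by the Projection Theorem (Theorem 3.23 in \cite{CV77}) its projection is in $\overline{\mathcal{F}}_t$, whence the supremum is $\overline{\mathcal{F}}_t$-measurable. I expect the main obstacle to be the measurability and well-definedness bookkeeping of the second paragraph — keeping the generalized integral under control via the finiteness supplied by Lemma \ref{fat} and splitting cleanly along the measurable partition — whereas the supremum step is a routine appeal to measurable projection, exactly as in Lemma \ref{localNA} and Proposition \ref{AOAmulti}.
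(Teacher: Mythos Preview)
Your argument is correct and tracks the paper's proof closely for well-definedness, joint measurability, and upper semicontinuity. Two minor remarks: for measurability you invoke Proposition \ref{fubiniext} $ii)$ globally on $\Omega^t\times\mathbb{R}^d$ and then restrict, whereas the paper uses part $iv)$ with the trace sigma-algebra on $A$; both work and yours is slightly cleaner. In the usc case $\o^t\notin\widetilde{\Omega}^t_1$ with $H(\o^t)\geq 0$, your appeal to compactness via Lemma \ref{rast} is not guaranteed (Assumption \ref{AOAone} may fail there), but it is unnecessary: closedness of $\mathcal{D}^{t+1}_{H(\o^t)}(\o^t)$ from Lemma \ref{Dhmes}, which you cite first, already suffices.

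The one genuine divergence is the final step. The paper argues that, since $u_H$ is jointly measurable and usc in $h$, $-u_H$ is an $\overline{\mathcal{F}}_t$-normal integrand (Corollary 14.34 in \cite{rw}), and then invokes Theorem 14.37 of \cite{rw} to get $\overline{\mathcal{F}}_t$-measurability of the supremum. You instead go directly through the Projection Theorem on $\{(\o^t,h):u_H(\o^t,h)>c\}\in\mathcal{F}_t\otimes\mathcal{B}(\mathbb{R}^d)$. Your route is more elementary and uses only joint measurability, not the usc property; the paper's route packages the same projection argument inside the normal-integrand machinery. Both are valid, and your version makes explicit that the usc in $h$ is not actually needed for the $\overline{\mathcal{F}}_t$-measurability of the supremum (it is used elsewhere in Proposition \ref{dyn3}).
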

\begin{remark}
\label{remtrace}
In the proof below we will show that  for $(\o^{t},h) \in \left(\Omega^{t}_{H}\times \mathbb{R}^{d}\right) \cap Graph(\mathcal{D}_{H}^{t+1})$ the integral in \eqref{tildeu} is well-defined. Note that this is not the case for all $(\o^{t},h) \in \Omega^{t}\times \mathbb{R}^{d}$. Indeed, let $(\o^{t},h)$ be fixed such that $q_{t+1}(H(\o^{t})+ h \Delta S_{t+1}(\o^t,\cdot)<0|\o^{t})>0$. Then it is clear that $\int_{\O_{t+1} }U^{-}_{t+1}(\o^t, \o_{t+1}, H(\o^{t})+ h \Delta S_{t+1}(\o^t, \o_{t+1}))q_{t+1}(d\o_{t+1}|\o^t) =\infty$ and as without further assumption we cannot prove that  \\
$ \int_{\O_{t+1} }U^{+}_{t+1}(\o^t, \o_{t+1}, H(\o^{t})+ h \Delta S_{t+1}(\o^t, \o_{t+1}))q_{t+1}(d\o_{t+1}|\o^t) <\infty$ (it is easy to find some counterexamples),  the integral in \eqref{tildeu} may fail to be well-defined. We could have circumvented this issue by using the  convention $\infty-\infty=-\infty$ but we prefer to refrain from doing so.
\end{remark}
\begin{proof}
From \eqref{reims} at $t+1$, $U_{t+1}$ is $\mathcal{F}_{t}\otimes \mathcal{G}_{t+1} \otimes \mathcal{B}(\mathbb{R}^{d})$-measurable and since $H $ and $\Delta S_{t+1} $ are respectively $\mathcal{F}_{t}$ and $\mathcal{F}_{t+1}$-measurable, we obtain that $(\o^{t},\o_{t+1},h) \in \Omega^{t}\times \Omega_{t+1} \times \mathbb{R}^{d} \to   U_{t+1}(\o^{t},\o_{t+1}, H(\o^{t})+h\Delta S_{t+1}(\omega^{t},\o_{t+1}))$ is also $\mathcal{F}_{t}\otimes \mathcal{G}_{t+1} \otimes  \mathcal{B}(\mathbb{R}^{d})$-measurable. In order to prove that for $(\o^{t},h) \in \left(\Omega^{t}_{H}\times \mathbb{R}^{d}\right) \cap Graph(\mathcal{D}_{H}^{t+1})$ the integral in \eqref{tildeu} is well-defined, we introduce
$$\widetilde{u}_{H}: (\o^{t},h) \in \left(\Omega^{t}_{H}\times \mathbb{R}^{d}\right) \cap Graph(\mathcal{D}_{H}^{t+1}) \to  \int_{\O_{t+1} }U_{t+1}(\o^t, \o_{t+1}, H(\o^{t})+ h \Delta S_{t+1}(\o^t, \o_{t+1}))q_{t+1}(d\o_{t+1}|\o^t).$$
First we show that $\widetilde{u}_{H}$ is well-defined in the generalised sense.
Indeed, let $(\o^{t},h) \in \left(\Omega^{t}_{H}\times \mathbb{R}^{d}\right) \cap Graph(\mathcal{D}_{H}^{t+1})$ be fixed. As $\o^{t}$ is fixed in $\Omega^{t}_{H}$, we can show as in  Proposition \ref{dyn3} that \eqref{Utplus} holds true (here $H(\o^{t})$ is a fixed number as $\o^{t}$ is fixed) and thus
\begin{align*}
&\int_{\Omega_{t+1}} U^{+}_{t+1}(\o^{t},\o_{t+1}, H(\o^{t})+h\Delta S_{t+1}(\omega^{t},\o_{t+1})) q_{t+1}(d\o_{t+1}|\o^{t}) <\infty,
\end{align*}
So  $\widetilde{u}_{H}$ is well-defined (but may be infinite-valued).\\
We now prove that $u_{H}$ is  $ \mathcal{F}_{t} \otimes \mathcal{B}(\mathbb{R}^{d})$-measurable. We can apply  Proposition \ref{fubiniext} $iv)$ to $\mathcal{S}=\left(\Omega^{t}_{H}\times \mathbb{R}^{d} \right) \cap Graph(\Dc_{H}^{t+1})$,  with  $f(\o^{t},h,\o_{t+1})$ equal to both $U^{\pm}_{t+1}(\o^{t},\o_{t+1}, H(\o^{t})+h\Delta S_{t+1}(\omega^{t},\o_{t+1}))$, since $\left(\Omega^{t}_{H}\times \mathbb{R}^{d} \right)\cap Graph(\Dc_{H}^{t+1}) \in \mathcal{F}_{t} \otimes \mathcal{B}(\mathbb{R}^{d})$ (see Lemma \ref{Dhmes}), and both  $(\o^{t},h,\o_{t+1}) \in \Omega^{t}\times\mathbb{R}^{d}\times \Omega_{t+1} \to   U^{\pm}_{t+1}(\o^{t},\o_{t+1}, H(\o^{t})+h\Delta S_{t+1}(\omega^{t},\o_{t+1}))$ are $\mathcal{F}_{t}\otimes \mathcal{B}(\mathbb{R}^{d})\otimes \mathcal{G}_{t+1}$-measurable. So we obtain that
$\widetilde{u}_{H}$ is $\left[\mathcal{F}_{t} \otimes \mathcal{B}(\mathbb{R}^{d})\right]_{\mathcal{S}}$-measurable, where $\left[\mathcal{F}_{t} \otimes \mathcal{B}(\mathbb{R}^{d})\right]_{\mathcal{S}}$ denotes the trace sigma algebra of $\mathcal{F}_{t} \otimes \mathcal{B}(\mathbb{R}^{d})$ on $\mathcal{S}$.  Now we extend $\widetilde{u}_{H}$ to $\Omega^{t} \times \mathbb{R}^{d}$ by setting  $\widetilde{u}_{H}(\o^{t},h)=-\infty$ if $(\o^{t},h) \notin Graph(\mathcal{D}_{H}^{t+1})$ and  $\widetilde{u}_{H}(\o^{t},h)=0$ if $(\o^{t},h) \in Graph(\mathcal{D}_{H}^{t+1})$ and $\o^{t} \notin  {\Omega}_{H}^{t}$. Since
$\left[\mathcal{F}_{t} \otimes \mathcal{B}(\mathbb{R}^{d})\right]_{\mathcal{S}} \subset \mathcal{F}_{t} \otimes \mathcal{B}(\mathbb{R}^{d})$, ${\Omega}^{t}_{H} \in \mathcal{F}_{t}$ and  $Graph(\mathcal{D}_{H}^{t+1}) \in \mathcal{F}_{t} \times \mathcal{B}(\mathbb{R}^{d})$,  this extension of $\widetilde{u}_{H}$ is again  $ \mathcal{F}_{t} \otimes \mathcal{B}(\mathbb{R}^{d})$-measurable.
As it is clear that this extension of $\widetilde{u}_{H}$ and $u_{H}$ coincide, the measurability of $u_{H}$ is proved.\\
We turn now to the usc property. Let $\o^{t} \in \Omega^{t}_{H} \subset \widetilde{\Omega}_{1}^{t}$ be fixed.   We  apply Proposition \ref{dyn1} to $U_{t+1}$ and we get,  as $\o^{t} \in \widetilde{\Omega}_{1}^{t}$,  that the function $(\omega_{t+1},x) \to U_{t+1}(\o^{t},\o_{t+1},x)$ satisfies the assumptions of Lemma \ref{cont} (see Remark \ref{lemmecont}) with $\overline{\Omega}=\Omega_{t+1}$, $\mathcal{H}=\mathcal{G}_{t+1}$, $Q=q_{t+1}(\cdot|\o^{t})$,  $Y(\cdot)=\Delta S_{t+1}(\o^t,\cdot)$, $V(\cdot,y)=U_{t+1}(\omega^{t},\cdot,y)$ where $V$ is defined on $\Omega_{t+1} \times \mathbb{R}$. Therefore the  function $\phi_{\o^{t}}(\cdot,\cdot)$ defined on $\mathbb{R}\times\mathbb{R}^{d}$ by \begin{align*}
\phi_{\o^{t}}(x,h)=
\begin{cases}
\int_{\O_{t+1} }U_{t+1}(\o^t, \o_{t+1}, x+ h \Delta S_{t+1}(\o^t, \o_{t+1}))q_{t+1}(d\o_{t+1}|\o^t)\; \mbox{if $x \geq 0$ and $h\in D_{x}^{t+1}(\o^{t})$}\\
-\infty \; \mbox{otherwise/.}
\end{cases}
\end{align*}
is usc on $\mathbb{R}\times\mathbb{R}^{d}$ (see \eqref{psidef}). In particular, for $x=H(\o^{t})\geq 0$ fixed, the function $h \in\mathbb{R}^{d}\to u_{H}(\o^{t},h)= \phi_{\o^{t}}(H(\o^{t}),h)$ is usc on $\mathbb{R}^{d}$.
Now for $\o^{t} \notin  \Omega_{H}^{t}$, as $u_{H}$ is  equal to $0$ if $h \in \mathcal{D}_{H(\o^{t})}^{t+1}(\o^{t})$ and to $-\infty$ otherwise, Lemma  \ref{usc} applies (recall that the random set $\mathcal{D}^{t+1}_{H}$ is closed-valued)  and $h \in \mathbb{R}^{d} \to u_{H}(\o^{t},h)$ is usc on all $\mathbb{R}^{d}$.\\
Finally,  we  apply  Corollary 14.34 in \cite{rw} and find that $-u_{H}$ is a $\overline{\mathcal{F}}_{t}$- normal integrand \footnote{Corollary 14.34 of \cite{rw} holds true only for complete $\sigma$-algebra. That is the reason why $-u_{H}$ is a $\overline{\mathcal{F}}_{t}$- normal integrand and not a $\mathcal{F}_{t}$- normal integrand.}. Now from Theorem 14.37 of \cite{rw},  we obtain that
$\o^{t} \in \Omega^{t} \to \sup_{h \in \mathbb{R}^{d}} u_{H}(\o^{t},h)$ is $\overline{\mathcal{F}}_{t}$-measurable and this concludes the proof.
\end{proof}\\
\begin{proof}\emph{of Theorem \ref{main}.} We proceed in three steps. First, we handle some integrability issues that are essential to the proof. Then, we build by induction a candidate for the optimal strategy and  finally we establish its optimality. \\
\textbf{Integrability Issues}\\
We fix some $\phi \in \Phi(x)=\Phi(U,x)$ (recall Proposition \ref{propufini}). Since Proposition \ref{dyn2} holds true, we can apply Proposition \ref{dyn3} for $t=T-1$, and by  backward induction, we can therefore apply Proposition \ref{dyn3} for all $t=T-2, \dots, 0$. In particular, we get that \eqref{dimancheR} holds true for all $ 0 \leq t \leq T$. So choosing $H=V_{t-1}^{x,\phi}$ and  $\xi=\phi_{t}$ we get that (recall Remark \ref{nat}, from $\phi \in \Phi(x)$ we get that $P_{t}(V_{t}^{x,\phi}(\cdot)\geq 0)=1$)
\begin{align}
\label{fubiniforphi0}
\int_{\Omega^{t}} U^{+}_{t}\left(\omega^{t},V_{t}^{x,\phi} (\o^t)\right) P_{t}(d\o^{t})<\infty.
\end{align}
This implies that $\int_{\Omega^{t}} U_{t}\left(\omega^{t},V_{t}^{x,\phi} (\o^t)\right) P_{t}(d\o^{t})$ is defined in the generalised sense and that we can apply  the  Fubini Theorem for generalised integral  (see Proposition \ref{fubinirem})\begin{align}
\label{fubiniforphi}
\int_{\Omega^{t}} U_{t}\left(\omega^{t},V_{t}^{x,\phi} (\o^t)\right) P_{t}(d\o^{t})=\int_{\Omega^{t-1}} \int_{\Omega_{t}} U_{t}\left(\omega^{t-1},\o_{t},V_{t}^{x,\phi} (\o^{t-1},\o_{t})\right)q_{t-1}(d\o_{t}|\o^{t-1}) P_{t-1}(d\o^{t-1}).
\end{align}
\textbf{Construction of $\phi^{*}$}\\
We fix some $x \geq 0$ and  build  our candidate for the optimal strategy by  induction.
We start at $t=0$ and use \eqref{noisette} in Proposition \ref{dyn3} with $H=x \geq 0$. We set $\phi^*_1:=\widehat{h}^x_{1}$ and we obtain that (recall that $\mathcal{F}_{0}=\left\{ \emptyset, \Omega^{0} \right\}$)
$$P_{1}(x+ \phi^*_1 \Delta S_1(.)\geq 0)=1.$$
$$U_{0}(x)=\int_{\Omega_{1}} U_{1}\left(\omega_{1},x+\phi^*_1\Delta S_{1}(\omega_{1}\right)P_{1}(d\omega_{1}).$$
Recall from \eqref{fubiniforphi0} that the above integral is well-defined in the generalised sense.
Assume that until some $t \geq 1$ we have found some $\phi^*_1 \in \Xi_0,\ldots, \phi^*_{t} \in \Xi_{t-1}$ and some $\overline{\Omega}^1 \in \mathcal{F}_{1},\ldots,\overline{\Omega}^{t-1} \in \mathcal{F}_{t-1}$ such that for all $i=1,\dots,t-1$,  $\overline{\Omega}^i \subset \widetilde{\Omega}^{i}$, $P_{i}(\overline{\Omega}^i )=1$,
for all $i=0,\dots,t-1$, $\phi^{*}_{i+1}(\o^i) \in D^{i+1}(\o^{i})$ and $$P_{t}\left(x+\phi^*_1 \Delta S_{1}(\omega_{1})+ \dots + \phi^{*}_{t}(\o^{t-1}) \Delta S_{t}(\o^{t-1},\o_{t}) \geq 0\right)=1,$$
and finally, for all $\o^{t} \in \overline{\Omega}^t$  \begin{align*}
U_{t-1}\left(\omega^{t-1},V_{t-1}^{x,\phi^*} (\o^{t-1})\right)= \int_{\Omega_{t}} U_{t}\left(\omega^{t-1},\omega_{t},V_{t-1}^{x,\phi^*} (\o^{t-1}) + \phi^*_{t}(\o^{t-1}) \Delta S_{t}(\o^{t-1},\cdot)\right)q_{t}(d\o_{t}|\o^{t-1}),
\end{align*}
where again the integral is well-defined in the generalised sense (see \eqref{fubiniforphi0}).  We  apply Proposition \ref{dyn3} with  $H(\cdot)=V^{x,\phi^*}_{t}(\cdot)= V_{t-1}^{x,\phi^*} (\cdot) + \phi^*_{t}(\cdot ) \Delta S_{t}(\cdot)$ (recall that $P_{t}(V_{t}^{x,\phi^*} \geq 0=1)$ and there exists $\overline{\Omega}^{t}:=\widetilde{\Omega}^{t}_{V_{t}^{x,\phi^*}} \in \mathcal{F}_{t}$ such that  $\overline{\Omega}^{t} \subset \widetilde{\Omega}^{t}$, $P_t(\overline{\Omega}^{t}  )=1$ and some some $\Fc_t$-measurable $\o^{t} \to \phi^*_{t+1}(\o^{t}):=\widehat{h}^{V^{x,\phi^*}_t}_{t+1}(\o^{t})$  such that for all $\omega^{t} \in \overline{\Omega}^{t}$, $\phi^{*}_{t+1}(\o^{t}) \in D^{t+1}(\o^{t})$
$$q_{t+1}(V_{t}^{x,\phi^*} (\o^t) + \phi^*_{t+1}(\o^t) \Delta S_{t+1}(\o^t,\cdot) \geq 0|\o^{t})=1,$$
\begin{eqnarray}
\label{travi}
U_{t}\left(\omega^{t},V_{t}^{x,\phi^*} (\o^t)\right)= \int_{\Omega_{t+1}} U_{t+1}\left(\omega^{t},\omega_{t+1},V_{t}^{x,\phi^*} (\o^t) + \phi^*_{t+1}(\o^t) \Delta S_{t+1}(\o^t,\cdot)\right)q_{t+1}(d\o_{t+1}|\o^{t}).
\end{eqnarray}
Now since $P_t(\overline{\Omega}^{t})=1$, we obtain by the Fubini Theorem that
$$P_{t+1}(V_{t+1}^{x,\phi^*} \geq 0)=  \int_{\Omega^{t}}q_{t+1}(V_{t}^{x,\phi^*}(\o^{t}) + \phi^*_{t+1}(\o^{t}) \Delta S_{t+1}(\o^{t},\cdot) \geq 0|\o^{t}) P_{t}(d\omega^{t})=1$$
and we can continue the recursion.\\
\noindent Thus, we have found $\phi^*=(\phi^{*}_{t})_{1\leq t\leq T}$ such that for all $t=0,\dots,T$, $P_{t}(V_{t}^{x,\phi^{*}} \geq 0)=1$, $i.e$ $\phi^{*} \in \Phi(x)$. We have also found  some  $\overline{\Omega}^{t} \in \mathcal{F}_{t}$, such that $\overline{\Omega}^{t} \subset \widetilde{\Omega}^{t}$, $P_t(\overline{\Omega}^{t})=1$ and for all $\omega^{t} \in \overline{\Omega}^{t} $,
\eqref{travi} holds true for all $t=0,\dots,T-1$. Moreover, from Proposition \ref{propufini}, $\phi^{*} \in \Phi(U,x)$ and we have that $E(U(V_T^{x,\phi^*}))<\infty$. \\
 \textbf{Optimality of $\phi^*$}\\
We prove that $\phi^*$ is optimal in two steps.\\
\textit {Step 1: } Using \eqref{fubiniforphi} with $\phi=\phi^{*}$  and the fact that $P_{T-1}(\overline{\Omega^{T-1}})=1$, we get that
\begin{small}
\begin{align*}E(U(V_T^{x,\phi^*})) &=\int_{\Omega^{T-1}} \int_{\Omega_{T}} U\left(\omega^{T-1}, \omega_{T}, V_{T-1}^{x,\phi^*}(\omega^{T-1}) +\phi^{*}_{T}(\omega^{T-1})  \Delta S_T(\omega^{T-1}, \omega_{T})\right)
q_{T}(d\omega_{T}|\omega^{T-1}) P_{T-1}(d\o^{T-1})\\
&=\int_{\overline{\Omega}^{T-1}} \int_{\Omega_{T}} U_{T}\left(\omega^{T-1}, \omega_{T}, V_{T-1}^{x,\phi^*}(\omega^{T-1}) +\phi^{*}_{T}(\omega^{T-1})  \Delta S_T(\omega^{T-1}, \omega_{T})\right) q_{T}(d\omega_{T}|\omega^{T-1}) P_{T-1}(d\o^{T-1}).
\end{align*}
\end{small}
 Using   \eqref{travi} for $t=T-1$ and again the fact that $P_{T-1}(\overline{\Omega}^{T-1} )=1$,  we have that
$$E(U(V_T^{x,\phi^*})) =\int_{{\Omega}^{T-1}} U_{T-1}\left(\omega^{T-1}, V_{T-1}^{x,\phi^*}(\omega^{T-1})\right) P_{T-1}(d\o^{T-1}).$$
We iterate the process for $T-1$: using the  Fubini Theorem (see \eqref{fubiniforphi}),  $P_{T-2}(\overline{\Omega}^{T-2})=1$ and \eqref{travi}, we obtain that
$$E(U(V_T^{x,\phi^*}))=\int_{{\Omega}^{T-2}} U_{T-2}\left(\omega^{T-2}, V_{T-2}^{x,\phi^*}(\omega^{T-2})\right) P_{T-2}(d\o^{T-2}).$$
By backward induction, we therefore obtain that (recall  $\Omega^{0}:=\{\o_{0}\}$)
$$E(U(V_T^{x,\phi^*}))=U_0(x).$$
As $\phi^* \in \Phi(U,x)$, we get that  $U_0(x)\leq u(x)$. So $\phi^* $ will be optimal if $U_0(x)\geq u(x).$\\
\textit {Step 2: } We fix again some $\phi \in \Phi(U,x)$ (recall Proposition \ref{propufini}).  We get that
$V_t^{x,\phi} \geq 0$ $P_t$-a.s. for all $t=1,\dots,T$ (recall Remark \ref{nat}).  As  $\phi_1 \in \Hc_x^1$ we  obtain that
$$U_0(x) \geq  \int_{\Omega_{1}}U_1(\o_1, x+ \phi_1 \Delta S_1(\o_1)) P_{1}(d\omega_{1}).$$
As $P_2(V_1^{x,\phi} + \phi_2 \Delta S_2 \geq 0)=1$, there exists some $P_1$-full measure set $\widehat{\O}^1 \in \Fc_1$ such that for all $\o_1 \in  \widehat{\O}^1$, $q_2\left(V_1^{x,\phi}(\o_1) + \phi_2(\o_1) \Delta S_2(\o_1,\cdot)) \geq 0|\o_1\right)=1$ $i.e$
$q_{2}\left(\phi_2(\o_{1}) \in \Hc_{V_{1}^{x,\phi}(\o_{1})}^{2}(\o_{1})|\o_{1}\right)=1$ (see Lemma \ref{LemmaA1}). So for $\o_{1} \in  \widehat{\O}^1$, we have that
\begin{align}
\label{u1eq}
U_1(\o_1, V_{1}^{x,\phi}(\o_1))  \geq \int_{\Omega_{2}}U_2\left(\o_1,\o_2, V_{1}^{x,\phi}(\o_1) + \phi_2(\o_{1}) \Delta S_1(\o_1, \o_2)\right)q_{2}(d\omega_{2}|\o^1).
\end{align}
From \eqref{fubiniforphi0}, $\int_{\Omega^{2}}U^{+}_2\left(\o^{2}, V_{2}^{x,\phi}(\o^{2})\right)P_{2}(d\o^{2}) <\infty$ and we can apply the Fubini Theorem (see \eqref{fubiniforphi}) and
\begin{align*}
\int_{\Omega^{2}}U_2\left(\o^{2}, V_{2}^{x,\phi}(\o^{2})\right)P_{2}(d\o^{2})  &=\int_{\O^1} \int_{\Omega_{2}}U_2\left(\o_1,\o_2, V_{1}^{x,\phi}(\o_1) + \phi_2 \Delta S_1(\o_1, \o_2)\right)q_{2}(d\omega_{2}|\o_1)P_{1}(d\omega_{1})\\
&=  \int_{\widehat{\O}^1} \int_{\Omega_{2}}U_2\left(\o_1,\o_2, V_{1}^{x,\phi}(\o_1) + \phi_2 \Delta S_1(\o_1, \o_2)\right)q_{2}(d\omega_{2}|\o_1)P_{1}(d\omega_{1}).
\end{align*}
Using again  \eqref{fubiniforphi0}, $\int_{\Omega^{1}}U^{+}_1\left(\o^{1}, V_{1}^{x,\phi}(\o^{1})\right)P_{1}(d\o^{1}) <\infty$ and integrating (in the generalised sense) both side of \eqref{u1eq} we obtain
\begin{align*}
\int_{\O^1}U_1(\o_1, V_{1}^{x,\phi}(\o_1)) P_{1}(d\omega_{1}) &=\int_{\widehat{\O}^1}U_1(\o_1, V_{1}^{x,\phi}(\o_1)) P_{1}(d\omega_{1})\\
  &\geq \int_{\widehat{\O}^1} \int_{\Omega_{2}}U_2\left(\o_1,\o_2, V_{1}^{x,\phi}(\o_1) + \phi_2 \Delta S_1(\o_1, \o_2)\right)q_{2}(d\omega_{2}|\o_1)P_{1}(d\omega_{1})\\
&=\int_{\Omega^{2}}U_2\left(\o^2, V_{2}^{x,\phi}(\o^2)\right)P_{2}(d\omega^{2}).
\end{align*}
Therefore
\begin{align*}
U_0(x) \geq \int_{\Omega^{2}}U_2\left(\o^2, V_{2}^{x,\phi}(\o^2) \right)P_{2}(d\omega^{2}).
\end{align*}
We can go forward since for $P_{2}$-almost all $\o^{2}$ we have that $q_{3}\left(\phi_{3}(\o^{2}) \in \Hc_{V_{2}^{x,\phi}(\o^{2})}^{3}(\o^{2})|\o^{2}\right)=1$, $\dots$, for $P_{T-1}$ almost all $\o^{T-1}$ we have that $q_{T}\left(\phi_{T}(\o^{T-1}) \in \Hc_{V_{T-1}^{x,\phi}(\o^{T-1})}^{T}(\o^{T-1})|\o^{T-1}\right)=1$,  we obtain using again  \eqref{fubiniforphi0} and the Fubini Theorem (see \eqref{fubiniforphi}) that
\begin{small}
\begin{align}
\label{U0}
U_0(x) & \geq  &
\int_{\Omega_{1}}\int_{\Omega_{2}} \cdots \int_{\Omega_{T}} U\left(\omega^{T}, V_{T}^{x,\phi}(\o^T)\right) q_{T}(d\omega_{T}|\omega^{T-1}) \cdots q_{2}(d\omega_{2}|\omega^{1})P_{1}(d\omega_{1}).
\end{align}
\end{small}
So we have that
$U_0(x)  \geq  E(U(\cdot, V_T^{x,\phi}(\cdot)))$ for any $\phi \in  \Phi(U,x)$ and the proof is complete since $u(x)= E(U(\cdot, V_T^{x,\phi^*}(\cdot)))<\infty.$
\end{proof}\\

\begin{proof}\emph{of Theorem \ref{main2}.}
To prove Theorem \ref{main2}, we want to apply Theorem \ref{main} and thus we need to establish that Assumptions \ref{uisfinite} and \ref{PhiUX} hold true. To do so we will  prove \eqref{eqJt} below.
First we show that for all $x \geq 0$,  $ \phi \in \Phi(x)$ and  $0 \leq t\leq T$, we have for $P_t$-almost all $\o^{t} \in \Omega^{t}$
\begin{align}
\label{Vxphieq}
|V_{t}^{x,\phi}(\o^{t})| \leq x  \prod_{s=1}^{t}\left(1+ \frac{|\Delta S_{s}(\o^{s})|}{\alpha_{s-1}(\o^{s-1})}\right).
\end{align}
To do so we first  fix  $x \geq 0$, some $\phi=(\phi_{t})_{t=1,\dots T} \in \Phi(x)$ and $1 \leq t \leq T$. For $\o^{t-1} \in \Omega^{t-1}$ fixed,  we denote by $\phi^{\perp}_{t}(\o^{t-1})$ the orthogonal projection of  $\phi_{t}(\o^{t-1})$ on $D^{t}(\o^{t})$. Recalling Remark \ref{proj} we have $$q_{t}\left(\phi^{\perp}_{t}(\o^{t-1}) \Delta S_{t}(\o^{t-1},\cdot)=\phi_{t}(\o^{t-1}) \Delta S_{t}(\o^{t-1},\cdot)|\o^{t-1}\right)=1,$$ and thus $\phi^{\perp}_{t}(\o^{t-1}) \in \mathcal{D}_{V_{t-1}^{x,\phi}(\o^{t-1})}^{t}(\o^{t-1})$ (see  \eqref{domaineproj} for the definition of $\mathcal{D}_{x}^{t}$). As the NA condition holds true, Lemma \ref{localNA} applies and $0 \in D^{t}(\o^{t+1})$. We can then apply Lemma \ref{rast} and we obtain that \begin{align}
\label{recVt}
|\phi^{\perp}_{t}(\o^{t-1})| \leq \frac{V_{t-1}^{x,\phi}(\o^{t-1})}{\alpha_{t-1}(\o^{t-1})}.
\end{align} Furthermore, as it is well-know that $\o^{t-1} \in \Omega^{t-1} \to  \phi^{\perp}_{t}(\o^{t-1})$ is $\mathcal{F}_{t-1}$-measurable  we obtain,  applying the Fubini Theorem (see Lemma \ref{fubini0}), that $P_{t}\left(\phi^{\perp}_{t}\Delta S_{t} =\phi_{t}\Delta S_{t}\right)=1$ and we denote by $\Omega^{t}_{EQ}$ the $P_{t}$-full measure set on which this equality is verified.  We need to slightly modify the set $\Omega^{t}_{EQ}$ to use it for different periods. We proceed by induction. We start at $t=1$  (recall that $\Omega^{0}:=\{\o_{0}\}$) with $\Omega^{1}_{EQ}$. For $t=2$ we reset, with an abuse of notation, $\Omega^{2}_{EQ}=\Omega^{2}_{EQ} \cap \left(\Omega^{1}_{EQ} \times \Omega_{2}\right)$ and we reiterate the process until $T$.
To  prove  \eqref{Vxphieq}  we proceed by induction. It is clear at $t=0$. Fix some $t \geq 0$ and assume that \eqref{Vxphieq} holds true at $t$. Let $\o^{t+1} \in \Omega^{t+1}_{EQ}$,  using \eqref{Vxphieq} at $t$ and \eqref{recVt} we get that
\begin{align*}
|V_{t+1}^{x,\phi}(\o^{t+1})|&= \left| V_{t}^{x, \phi}(\o^{t}) + \phi_{t+1}(\o^{t}) \Delta S_{t+1}(\o^{t+1}) \right| = \left| V_{t}^{x, \phi}(\o^{t}) + \phi_{t+1}^{\perp}(\o^{t}) \Delta S_{t+1}(\o^{t+1}) \right|\\
& \leq  \left|V_{t}^{x,\phi}(\o^{t})\right|\left(1+ \frac{|\Delta S_{t+1}(\o^{t+1})|}{\alpha_{t}(\o^{t})} \right) \leq x  \prod_{s=1}^{t+1}\left(1+ \frac{|\Delta S_{s}(\o^{s})|}{\alpha_{s-1}(\o^{s-1})}\right)
\end{align*}
and \eqref{Vxphieq} is proven for $t+1$. It follows since for all $0 \leq s\leq t$, $|\Delta S_{s}| \in \mathcal{W}_{s}$ and $\frac{1}{\alpha_{s}} \in \mathcal{W}_{s}$  that  $V_{t}^{x,\phi} \in \mathcal{W}_{t}$. We will prove that for all
$\Phi \in \Phi(x)$ and $\o^{T}$ in a full measure set
\begin{align}
\label{eqJt}
U^{+}(\o^{T}, V_{T}^{x,\phi}(\o^{T})) \leq 2^{\overline{\gamma}} K\max(x,1)^{\overline{\gamma}} \left( \prod_{s=1}^{T}  \left(1+\frac{|\Delta S_{s}(\o^{s})|}{\alpha_{s-1}(\o^{s-1})}\right)\right)^{\overline{\gamma}} \left( U^{+}(\o^{T},1) +C_{T}(\o^{T})\right).
\end{align}
Since by assumptions $EU^{+}(\cdot,1)<\infty$, $EC_{T} <\infty$ and since for all $0 \leq t\leq T$, $|\Delta S_{t}| \in \mathcal{W}_{t}$ and $\frac{1}{\alpha_{t}} \in \mathcal{W}_{t}$,  we get that
$E U^{+}(\cdot, V_{T}^{x,\phi}(\cdot)) <\infty$ for all $\Phi \in \Phi(x)$ and both Assumptions \ref{uisfinite} and \ref{PhiUX} hold true.
We prove now \eqref{eqJt}. We fix some $x \geq 0$ and $ \phi \in \Phi(x)$. Then from the monotonicity of $U^{+}$, \eqref{Vxphieq}, Assumption \ref{ae}, the fact that $\prod_{s=1}^{T}  \left(1+\frac{|\Delta S_{s}(\o^{s})|}{\alpha_{s-1}(\o^{s-1})}\right)\geq 1$, we have for all $\o^{T} \in \Omega^{T}_{EQ}\bigcap \widetilde{\Omega}_{T}$ that
\begin{align*}
U^{+}\left(\o^{T},V_{T}^{x,\phi}(\o^{T}) \right)& \leq U^{+}\left(\o^{T}, \max(x,1) \prod_{s=1}^{T}  \left(1+\frac{|\Delta S_{s}(\o^{s})|}{\alpha_{s-1}(\o^{s-1})}\right) \right)\\
& \leq K\left(2\max(x,1) \prod_{s=1}^{T}  \left(1+\frac{|\Delta S_{s}(\o^{s})|}{\alpha_{s-1}(\o^{s-1})}\right)\right)^{\overline{\gamma}} \left( U^{+}(\o^{T},1) +C_{T}(\o^{T})\right).
\end{align*}

\end{proof}
\section{Appendix}
\label{seannexe}

In this appendix we report  basic facts about measure theory, measurable selection theorems and random sets. We also provide the proof  of some technical results.\\

\subsection{Generalised integral and Fubini's Theorem}
For ease of the reader we provide some well know results on  measure theory, stochastic kernels and integrals.  The first lemma provides a version  of the Fubini Theorem for non-negative functions (see for instance to Theorem  10.7.2 in \cite{boga}).  We then present our definition of generalised integral and provide another version of the Fubini Theorem for generalised integral (see Proposition \ref{fubinirem}), which is essential throughout the paper. \\

Let $(H, \Hc)$ and $(K,\mathcal{K})$ be two measurable spaces, $p$ be a probabilty measure on $(H,\mathcal{H})$ and $q$ a stochastic kernel on $(K,\mathcal{K})$ given $(H, \Hc)$ , $i.e$ such that for any $h \in H$, $C \in \mathcal{K} \to q(C|h)$ is a probability measure on $(K,\mathcal{K})$ and for any $C \in \mathcal{K}$, $h \in H \to q(C|h)$ is $\Hc$-measurable. Furthermore,  for any $A \in \mathcal{H} \otimes \mathcal{K}$ and any $h \in H$,  the section of $A$ along $h$ is defined by
\begin{align}
\label{setsection}
\left(A\right)_{h}&:=\left\{ k \in K,\; (h,k) \in A\right\}. \end{align}
\begin{lemma}
\label{fubini0}
Let $A \in \mathcal{H} \otimes \mathcal{K}$ be fixed. For any $h \in H$ we have $\left(A\right)_{h} \in  \mathcal{K}$ and we define $P$ by
\begin{align}
\label{Phk}
P(A)&:= \int_{H}\int_{K} 1_{A}(h,k) q(dk|h)p(dh)= \int_{H} q(\left(A\right)_{h}|h) p(dh).
\end{align}
Then $P$ is a probability measure on $(H\times K,  \mathcal{H}\otimes\mathcal{H}).$\\
Furthermore, if $f: H\times K \to \mathbb{R}_{+} \cup \{+\infty\}$ is non-negative and $ \mathcal{H} \otimes \mathcal{K}$-measurable then $h \in H \to \int_{K} f(h,k)q(dk|h)$ is $\mathcal{H}$-measurable with value in $ \mathbb{R}_{+} \cup \{\infty\}$ and we have
\begin{align}
\label{Phkint}
 \int_{H\times K} f dP:=\int_{H\times K} f(h,k) P(dh,dk) = \int_{H} \int_{K} f(h,k)q(dk|h)p(dh).
 \end{align}
  \end{lemma}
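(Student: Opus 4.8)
The plan is to build the statement up through the standard measure-theoretic machinery: start from measurable rectangles, extend by a monotone class argument, and finally pass from indicators to arbitrary non-negative measurable functions by approximation from below. Throughout, nonnegativity lets me rely on the monotone convergence theorem to justify every interchange of limit and integral.

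First I would fix $h \in H$ and check that the section map sends $\mathcal{H} \otimes \mathcal{K}$ into $\mathcal{K}$. The collection $\mathcal{S}_h := \{A \subset H \times K,\; (A)_h \in \mathcal{K}\}$ contains every measurable rectangle $B \times C$ (with $B \in \mathcal{H}$, $C \in \mathcal{K}$), since $(B \times C)_h$ equals $C$ when $h \in B$ and $\emptyset$ otherwise; moreover $\mathcal{S}_h$ is stable under complementation and countable unions because $(\cdot)_h$ commutes with these set operations. Hence $\mathcal{S}_h$ is a sigma-algebra containing the rectangles, so $\mathcal{H} \otimes \mathcal{K} \subset \mathcal{S}_h$ and $(A)_h \in \mathcal{K}$ for every $A \in \mathcal{H} \otimes \mathcal{K}$; in particular $q((A)_h|h)$ is well-defined. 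To see next that $h \mapsto q((A)_h|h)$ is $\mathcal{H}$-measurable, I would run a Dynkin argument. On a rectangle $A = B \times C$ one has $q((A)_h|h) = 1_B(h)\, q(C|h)$, which is $\mathcal{H}$-measurable by the defining property of the kernel $q$, and the rectangles form a $\pi$-system generating $\mathcal{H} \otimes \mathcal{K}$. The class $\mathcal{D}$ of sets $A$ for which $h \mapsto q((A)_h|h)$ is $\mathcal{H}$-measurable is a $\lambda$-system: it contains $H \times K$; it is stable under proper differences, since $A' \subset A$ gives $(A')_h \subset (A)_h$ and hence $q((A \setminus A')_h|h) = q((A)_h|h) - q((A')_h|h)$; and it is stable under increasing countable unions by monotone convergence for each probability measure $q(\cdot|h)$. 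Dynkin's theorem then yields $\mathcal{D} \supset \mathcal{H} \otimes \mathcal{K}$, so the integrand in \eqref{Phk} is $\mathcal{H}$-measurable and $P(A)$ is well-defined.

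That $P$ is a probability measure on $(H\times K, \mathcal{H}\otimes\mathcal{K})$ is then immediate: $P(H \times K) = \int_H q(K|h)\, p(dh) = \int_H 1\, p(dh) = 1$, and countable additivity along a disjoint union $A = \bigcup_n A_n$ follows by applying monotone convergence twice, first to the increasing partial sums inside $q(\cdot|h)$ and then to the $p$-integral.

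Finally I would establish the integration formula \eqref{Phkint} together with the measurability of $h \mapsto \int_K f(h,k)\, q(dk|h)$ by the usual three-step approximation. For $f = 1_A$ the identity is exactly the definition \eqref{Phk} of $P$, and the measurability of the inner integral is the content of the previous paragraph; by linearity both assertions extend to non-negative $\mathcal{H}\otimes\mathcal{K}$-measurable simple functions. For general non-negative measurable $f$, I would take simple functions $f_n \uparrow f$; then each $h \mapsto \int_K f_n(h,k)\, q(dk|h)$ is $\mathcal{H}$-measurable and, by monotone convergence applied to $q(\cdot|h)$, increases to $h \mapsto \int_K f(h,k)\, q(dk|h)$, which is therefore $\mathcal{H}$-measurable. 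Passing to the limit by monotone convergence on both sides of $\int_{H\times K} f_n\, dP = \int_H \int_K f_n(h,k)\, q(dk|h)\, p(dh)$ gives \eqref{Phkint}. The only point demanding any care is the repeated interchange of limits with the kernel integrals, but this is justified throughout by monotone convergence since all integrands are non-negative, so there is no genuine obstacle beyond this bookkeeping.
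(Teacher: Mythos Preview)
Your proof is correct and follows essentially the same route as the paper: sections via a sigma-algebra argument on rectangles, measurability of $h\mapsto q((A)_h|h)$ via a monotone-class/Dynkin argument on rectangles, countable additivity of $P$ by monotone convergence, and extension of \eqref{Phkint} from indicators to simple functions to general non-negative measurable $f$ by monotone convergence. The only cosmetic difference is that the paper verifies directly that the class $\mathcal{E}=\{A:\ h\mapsto q((A)_h|h)\ \text{is }\mathcal{H}\text{-measurable}\}$ is a sigma-algebra, whereas you use the $\pi$--$\lambda$ formulation; the content is identical.
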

\begin{proof}
Let $h \in H$ be fixed. Let ${\cal T}=\{A \in \mathcal{H}\otimes\mathcal{K} \,| \, \left(A\right)_{h}\in \mathcal{K}\}$. It is easy to see that ${\cal T}$ is a sigma algebra on $H\times K$ and is included in $\mathcal{H}\otimes\mathcal{K}$. Let $A=B \times C \in \mathcal{H}\times\mathcal{K}$ then $\left(A\right)_{h}=\emptyset$ if $ h \notin B$ and $\left(A\right)_{h}=C$ if $ h \in B$. Thus $\left(A\right)_{h} \in \mathcal{K}$ and
$ \mathcal{H}\times\mathcal{K} \subset {\cal T}$. As  $\mathcal{T}$ is a sigma-algebra, $\mathcal{H}\otimes\mathcal{K} \subset \mathcal{T}$ and ${\cal T}= \mathcal{H}\otimes\mathcal{K}$ follows. \\
We show  now  that
$$h \to \int_{K} 1_{A}(h,k)q(dk|h)=\int_{K} 1_{\left(A\right)_{h}}(k)q(d k|h)=q\left(\left(A\right)_{h}|h\right)$$ is $\mathcal{H}$-measurable for any $A \in \mathcal{H}\otimes\mathcal{K}$. \\
Let
${\cal E}=\left\{A \in \mathcal{H}\otimes\mathcal{K} \,| \,  h \in H \to q \left(\left(A\right)_{h}|h\right) \mbox{ is $\mathcal{H}$-measurable} \right\}$. It is easy to see that ${\cal E}$ is a sigma algebra on $H\times K$ and is included in $\mathcal{H}\otimes\mathcal{K}$. Let $A=B \times C \in\mathcal{H}\times\mathcal{K}$ then $q\left(\left(A\right)_{h})|h\right)$ equals to $0$ if $h \notin B$ and to $q(C|h)$ if $ h \in B$. So by definition of $q(\cdot|\cdot)$,
$\mathcal{H}\times\mathcal{K} \subset {\cal E}$. As $\mathcal{E}$ is  a sigma-algebra, $\mathcal{H}\otimes\mathcal{K} \subset \mathcal{E}$ and ${\cal E}= \mathcal{H}\otimes\mathcal{K}$ follows. Thus the last integral in \eqref{Phk} is well-defined. We verify that $P$ defines a probability measure on $(H\times K,  \mathcal{H}\otimes\mathcal{H}).$ It is clear that $P(\emptyset)=0$ and $P(H\times K)=1$. The sigma-additivity property follows from the monotone convergence theorem.  \\
 We prove now that for $f: H\times K \to \mathbb{R}_{+} \cup \{+\infty\}$ non-negative and $ \mathcal{H} \otimes \mathcal{K}$-measurable, $h \in H \to \int_{K} f(h,k)q(dk|h)$ is $\mathcal{H}$-measurable and  \eqref{Phkint} holds true. If $f=1_A$ for $A\in \mathcal{H}\otimes\mathcal{K}$ the claim is proved. By taking linear combinations, it is proved  for $\mathcal{H}\otimes\mathcal{K}$-measurable step functions. Then if $f: H\times K \to \mathbb{R} \cup \{+\infty\}$ is non-negative and $\mathcal{H}\otimes\mathcal{K}$-measurable, then there exists some increasing sequence $(f_{n})_{n \geq 1}$ such that $f_{n}: H\times K \to \mathbb{R}$ is a $\mathcal{H}\otimes\mathcal{K}$-measurable step function and  $(f_{n})_{n \geq 1}$ converge to $f$. Using the monotone convergence theorem and  \eqref{Phkint} for steps functions, we conclude that  \eqref{Phkint} holds true for $f$.
\end{proof}\\
\begin{definition}
\label{DefInt}
Let $f: H\times K \to \mathbb{R} \cup \{\pm\infty\}$ be a $ \mathcal{H}\otimes\mathcal{K}$-measurable function. If $ \int_{H\times K} f^{+} dP< \infty$ or $ \int_{H\times K} f^{-}dP< \infty$,  we define the  generalised integral of $f$ by
$$ \int_{H\times K} f dP :=  \int_{H\times K} f^{+}dP -  \int_{H\times K} f^{-}dP.$$
\end{definition}
\begin{remark}
Note  that if both $ \int_{H\times K} f^{+}dP= \infty$ and $ \int_{H\times K} f^{-} dP= \infty$, the integral above is not defined. We could have introduced some convention to handle this situation, however, as in most of the cases we treat we have $ \int_{H\times K} f^{+} dP< \infty$, we refrain from doing so. \end{remark}
\begin{proposition}
\label{fubinirem}
Let $f: H\times K \to \mathbb{R} \cup \{\pm\infty\}$ be a $ \mathcal{H}\otimes\mathcal{K}$-measurable function such that $ \int_{H\times K} f^{+}dP< \infty$.  Then, we have
\begin{align}
\label{fubint}
 \int_{H \times K}f dP= \int_{H} \int_{K} f(h,k) q(dk|h) p(dh).
 \end{align}
\end{proposition}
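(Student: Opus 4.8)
The plan is to reduce everything to the non-negative Fubini identity \eqref{Phkint} of Lemma \ref{fubini0}, applied separately to $f^+$ and to $f^-$, and then to recombine the two resulting nested integrals by means of the generalised integral of Definition \ref{DefInt}. First I would set, for $h\in H$,
$$\phi^+(h):=\int_K f^+(h,k)\,q(dk|h),\qquad \phi^-(h):=\int_K f^-(h,k)\,q(dk|h),$$
which by Lemma \ref{fubini0} are $\mathcal{H}$-measurable functions with values in $[0,\infty]$ and which satisfy $\int_{H\times K}f^{\pm}\,dP=\int_H\phi^{\pm}(h)\,p(dh)$. Since by hypothesis $\int_{H\times K}f^+\,dP<\infty$, the first identity gives $\int_H\phi^+\,dp<\infty$, so $\phi^+$ is finite $p$-a.e.; I set $N:=\{h:\phi^+(h)=\infty\}\in\mathcal{H}$, with $p(N)=0$.

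For every $h\notin N$ the positive part $\int_K f^+(h,k)\,q(dk|h)=\phi^+(h)$ is finite, hence by Definition \ref{DefInt} the inner generalised integral is well-defined and equals
$$\psi(h):=\int_K f(h,k)\,q(dk|h)=\phi^+(h)-\phi^-(h)\in[-\infty,\infty);$$
I extend $\psi$ by $0$ on $N$, so that $\psi$ is $\mathcal{H}$-measurable, being the difference of two measurable functions on the measurable set $N^c$. Because $\psi^+\le\phi^+$ pointwise on $N^c$ (as $\phi^-\ge 0$), one has $\int_H\psi^+\,dp\le\int_H\phi^+\,dp<\infty$, so the outer generalised integral $\int_H\psi\,dp$ is itself well-defined in the sense of Definition \ref{DefInt}; this is precisely the right-hand side of \eqref{fubint}.

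It then remains to check that $\int_H\psi\,dp=\int_H\phi^+\,dp-\int_H\phi^-\,dp$, since the latter equals $\int_{H\times K}f^+\,dP-\int_{H\times K}f^-\,dP=\int_{H\times K}f\,dP$ by the two applications of Lemma \ref{fubini0} together with Definition \ref{DefInt}. For this I would verify the pointwise identity
$$\psi^+ + \phi^- = \psi^- + \phi^+ \quad \text{on } N^c,$$
which holds in $[0,\infty]$ in every case (including $\phi^-=\infty$, where $\psi=-\infty$, $\psi^+=0$, $\psi^-=\infty$, so that both sides equal $\infty$), integrate it over $H$ using the additivity of the integral for non-negative functions and $p(N)=0$, and then rearrange.

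The hard part will be the rearrangement, which is the only delicate point: it is licensed exactly by the finiteness of $\int_H\psi^+\,dp$ and of $\int_H\phi^+\,dp$, which lets these two terms be cancelled from the integrated identity even though $\int_H\phi^-\,dp$ and $\int_H\psi^-\,dp$ may both be infinite, yielding $\int_H\psi^+\,dp-\int_H\psi^-\,dp=\int_H\phi^+\,dp-\int_H\phi^-\,dp$ in $[-\infty,\infty)$. Apart from this careful bookkeeping with the null set $N$ and with possibly infinite negative parts, there is no analytic difficulty once the non-negative case of Lemma \ref{fubini0} is in hand, and the chain of equalities assembles to \eqref{fubint}.
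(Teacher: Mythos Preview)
Your proposal is correct and follows essentially the same approach as the paper. Both arguments apply Lemma \ref{fubini0} to $f^{+}$ and $f^{-}$ separately and then must justify that $\int_H(\phi^{+}-\phi^{-})\,dp=\int_H\phi^{+}\,dp-\int_H\phi^{-}\,dp$; the paper packages this step as a small linearity lemma for generalised integrals (its equation \eqref{intlinear}, proved via the identity $g_1^{+}+g_2^{+}+(g_1+g_2)^{-}=g_1^{-}+g_2^{-}+(g_1+g_2)^{+}$ and a case check), whereas you carry out the same rearrangement inline via $\psi^{+}+\phi^{-}=\psi^{-}+\phi^{+}$ and the finiteness of $\int_H\psi^{+}\,dp$ and $\int_H\phi^{+}\,dp$.
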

\begin{remark}
\label{remneg}
Note that we can assume instead  that $ \int_{H\times K} f^{-}dP< \infty$ and the result holds as well. We will use this in the proof of Lemma \ref{lienespcond} later in the Appendix.
\end{remark}
\begin{proof}
Using Definition \ref{DefInt} and applying Lemma \ref{fubini0} to $f^{+}$ and $f^{-}$ we obtain that
\begin{align*}
 \int_{H \times K}f dP&=  \int_{H \times K}f^{+} dP- \int_{H \times K}f^{-} dP\\
 &=\int_{H} \int_{K} f^{+} q(dk|h) p(dh)+\int_{H} \int_{K} f^{-} q(dk|h) p(dh).
  \end{align*}
 To establish \eqref{fubint}, assume for a moment that the followng linearity result have been proved:  let $g_{i}: H\times K \to \mathbb{R} \cup \{\pm\infty\}$ be some $ \mathcal{H}\otimes\mathcal{K}$-measurable functions such that $ \int_{H\times K} g_{i}^{+}dP< \infty$  for $i=1,2$. Then
\begin{align}
\label{intlinear}
\int_{H}  \left(g_{1} +g_{2}\right) dp= \int_{H}g_{1} dp+ \int_{H} g_{2} dp.
\end{align}
We apply \eqref{intlinear} with $g_{1}(h)=\int_{K}f^{+}(h,k)q(dh|k)$ and $g_{2}=-\int_{K}f^{-}(h,k)q(dh|k)$ since by Lemma \ref{fubini0},
\begin{eqnarray*}
 \int_{H}g^{+}_{1}dp & =  & \int_{H} \left(\int_{K}f^{+}(h,k)q(dh|k)\right) p(dh)\\
 &= &  \int_{H \times K}f^{+}(h,k)q(dh|k)p(dh)=\int_{H \times K} f^{+}dP<\infty
\end{eqnarray*}
and clearly  $ \int_{H}g^{+}_{2}dp=0<\infty$.
So we obtain that
\begin{align*}
 \int_{H} \int_{K}& f^{+}(h,k) q(dk|h) p(dh) - \int_{H} \int_{K} f^{-}(h,k) q(dk|h) p(dh)\\
 &=  \int_{H}\left( \int_{K} f^{+}(h,k)q(dk|h) - \int_{K}f^{-}(h,k)q(dk|h)\right)p(dh)\\
 &= \int_{H} \int_{K} f(h,k) q(dk|h) p(dh),
 \end{align*}
where the second equality comes from the definition of the generalised integral of $f(h,\cdot)$ with respect to $q(\cdot|h)$ and  \eqref{fubint} is proven.\\
We prove now \eqref{intlinear}. If $\int_{H} g_{i}^{-} dp <\infty$ for $i=1,2$ this is trivial.
From  $\int_{H }g_{i}^{+} dp <\infty$ we get that $g^{+}_{i} <\infty$ $p$-almost surely for $i=1,2$, so the sum $g_{1}+g_{2}$ is $p$-almost surely well-defined, taking its value in $[-\infty,\infty)$. As $(g_{1}+g_{2})^{+} \leq g_{1}^{+}+g_{2}^{+}$, using   the linearity of the integral for non-negative functions we get that
\begin{align*}
 \int_{H} \left(g_{1}+g_{2}\right)^{+}(h) p(dh)\leq \int_{H }g_{1}^{+} dp +\int_{H} g_{2}^{+} dp <\infty.
\end{align*}
Now from
$$g_{1}^{+}+g_{2}^{+}-g_{1}^{-}-g_{2}^{-}=g_{1}+g_{2} =\left(g_{1}+g_{2}\right)^{+}-\left(g_{1}+g_{2}\right)^{-},$$ using again the linearity of the integral for non-negative functions we get that
$$ \int_{H} \left(g_{1}+g_{2}\right)^{+} dp + \int_{H} g^{-}_{1} dp  +  \int_{H}g^{-}_{2}dp=\int_{H} \left(g_{1}+g_{2}\right)^{-}dp + \int_{H}  g^{+}_{1}dp + \int_{H} g^{+}_{2}dp.$$
Checking the different cases, $i.e$ $\int_{H}g_{1}^{-} dp=\infty$ and $\int_{H}g_{2}^{-} dp<\infty$ (and the opposite case) as well as $\int_{H}g_{i}^{-} dp=\infty$ for $i=1,2$ we get that \eqref{intlinear} is true.
\end{proof}

\subsection{Further measure theory issues}
\label{apap}
We present now  specific  applications or results that are used throughout the paper.
We  start with four extensions of the Fubini results presented previously. As noted in Remark \ref{remtrace}, the introduction of the trace sigma-algebra is the price to pay in order to avoid using the convention $\infty-\infty=-\infty$.
\begin{proposition}
\label{fubiniext}
Fix some $t\in\{1,\ldots,T\}$.
\begin{itemize}
\item[i)] Let $f: \O^t \to \mathbb{R}_{+}\cup\{+\infty\}$ be a non-negative $\Fc_t$-measurable function. Then $\o^{t-1} \in \O^{t-1} \to \int_{\Omega_t}f(\o^{t-1},\o_t)q_t(d \o_t|\o^{t-1})$ is $\Fc_{t-1}$-measurable with values in $\mathbb{R}_{+}\cup\{+\infty\}$.
\item[ii)]Let $f:\O^t \times \mathbb{R}^{d} \to \mathbb{R}_{+}\cup \{+ \infty\}$ be a non-negative $\Fc_t \otimes \mathcal{B}(\mathbb{R}^{d})$-measurable function. Then  $(\o^{t-1},h) \in \O^{t-1}\times\mathbb{R}^{d} \to \int_{\Omega_t}f(\o^{t-1},\o_t,h)q_t(d \o_t|\o^{t-1})$ is $\Fc_{t-1}\otimes \mathcal{B}(\mathbb{R}^{d})$-measurable with values in $\mathbb{R}_{+} \cup \{ +\infty\}$
\item[iii)]Let $f: \O^t \to \mathbb{R}_{+}\cup\{+\infty\}$ be a non-negative $\overline{\Fc}_{t-1}\otimes \mathcal{G}_{t}$-measurable function. Then $\o^{t-1} \in \O^{t-1} \to \int_{\Omega_t}f(\o^{t-1},\o_t)q_t(d \o_t|\o^{t-1})$ is $\overline{\Fc}_{t-1}$-measurable with values in $\mathbb{R}_{+}\cup\{+\infty\}$.
\item[iv)]Let  $\mathcal{S}  \in \mathcal{F}_{t-1} \otimes \mathcal{B}(\mathbb{R}^{d})$.  Introduce  $\left[\mathcal{F}_{t-1} \otimes \mathcal{B}(\mathbb{R}^{d})\right]_{\mathcal{S}}:=\left\{ A \cap \mathcal{S},\; A \in  \mathcal{F}_{t-1} \otimes \mathcal{B}(\mathbb{R}^{d})\right\}$ the trace sigma-algebra of $ \mathcal{F}_{t-1} \otimes \mathcal{B}(\mathbb{R}^{d})$ on $\mathcal{S}$.  Let $f: \O^{t-1} \times \mathbb{R}^{d} \times \O_{t}  \to \mathbb{R}_{+}\cup \{+ \infty\}$ be a non-negative $\Fc_{t-1} \otimes \mathcal{B}(\mathbb{R}^{d})  \otimes \mathcal{G}_{t} $-measurable function. Then  $(\o^{t-1},h) \in \mathcal{S} \to \int_{\Omega_t}f(\o^{t-1},h,\o_t)q_t(d \o_t|\o^{t-1})$ is $\left[\mathcal{F}_{t-1} \otimes \mathcal{B}(\mathbb{R}^{d})\right]_{\mathcal{S}}$-measurable with values in $\mathbb{R}_{+} \cup \{ +\infty\}$.
\end{itemize}
\end{proposition}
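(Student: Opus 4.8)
The plan is to deduce all four statements from the non-negative Fubini result of Lemma \ref{fubini0}, whose measurability conclusion---that $h\mapsto\int_{K}f(h,k)q(dk|h)$ is $\mathcal{H}$-measurable---depends only on the target space $(K,\mathcal{K})$, the kernel $q$ and the product structure, and \emph{not} on the reference probability $p$ (this is clear from the proof of Lemma \ref{fubini0}, where the relevant class $\mathcal{E}$ is built from $q$ alone). Hence the entire task reduces to choosing, in each case, the correct base space $(H,\mathcal{H})$, target space $(K,\mathcal{K})$ and stochastic kernel. For $i)$ I would simply take $H=\O^{t-1}$, $\mathcal{H}=\Fc_{t-1}$, $K=\O_{t}$, $\mathcal{K}=\mathcal{G}_{t}$ and $q=q_{t}$, recalling that $\Fc_{t}=\Fc_{t-1}\otimes\mathcal{G}_{t}$ so that $f$ is exactly $\mathcal{H}\otimes\mathcal{K}$-measurable; the measurability part of Lemma \ref{fubini0} then gives the claim verbatim.

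For $ii)$ the new feature is the spectator variable $h\in\mathbb{R}^{d}$, and this is where the only genuine (though very mild) verification lies. I would enlarge the base space to $H=\O^{t-1}\times\mathbb{R}^{d}$ with $\mathcal{H}=\Fc_{t-1}\otimes\mathcal{B}(\mathbb{R}^{d})$, keep $K=\O_{t}$, $\mathcal{K}=\mathcal{G}_{t}$, and introduce the kernel that ignores $h$, namely $\widehat{q}(C|(\o^{t-1},h)):=q_{t}(C|\o^{t-1})$ for $C\in\mathcal{G}_{t}$. The point to check is that $\widehat{q}$ is a bona fide stochastic kernel on $\mathcal{G}_{t}$ given $(H,\mathcal{H})$: for fixed $(\o^{t-1},h)$ it is a probability measure inherited from $q_{t}$, and for fixed $C$ the map $(\o^{t-1},h)\mapsto q_{t}(C|\o^{t-1})$ is the composition of the coordinate projection onto $\O^{t-1}$ (which is $\mathcal{H}/\Fc_{t-1}$-measurable) with the $\Fc_{t-1}$-measurable map $\o^{t-1}\mapsto q_{t}(C|\o^{t-1})$, hence $\mathcal{H}$-measurable. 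After the harmless reordering of coordinates that lets $f(\o^{t-1},\o_{t},h)$ be read as an $\mathcal{H}\otimes\mathcal{K}$-measurable function on $H\times K$, Lemma \ref{fubini0} applies and yields the $\Fc_{t-1}\otimes\mathcal{B}(\mathbb{R}^{d})$-measurability of the partial integral.

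For $iii)$ the argument is that of $i)$ with $\Fc_{t-1}$ replaced by its completion $\overline{\Fc}_{t-1}$. The only observation needed is that $q_{t}$ is \emph{also} a stochastic kernel on $\mathcal{G}_{t}$ given $(\O^{t-1},\overline{\Fc}_{t-1})$, since any $\Fc_{t-1}$-measurable map is a fortiori $\overline{\Fc}_{t-1}$-measurable; applying Lemma \ref{fubini0} with $H=\O^{t-1}$, $\mathcal{H}=\overline{\Fc}_{t-1}$, $K=\O_{t}$, $\mathcal{K}=\mathcal{G}_{t}$ (and $p=\overline{P}_{t-1}$ if a reference measure is wanted) then gives the $\overline{\Fc}_{t-1}$-measurability.

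Finally, $iv)$ I would reduce directly to $ii)$. Up to the permutation of the middle two coordinates making $f(\o^{t-1},h,\o_{t})$ fit the hypothesis of $ii)$, part $ii)$ shows that the function $g(\o^{t-1},h):=\int_{\O_{t}}f(\o^{t-1},h,\o_{t})\,q_{t}(d\o_{t}|\o^{t-1})$, defined on all of $\O^{t-1}\times\mathbb{R}^{d}$, is $\Fc_{t-1}\otimes\mathcal{B}(\mathbb{R}^{d})$-measurable. The desired map is the restriction $g|_{\mathcal{S}}$, and it then suffices to invoke the elementary fact that the restriction to $\mathcal{S}$ of an $\Fc_{t-1}\otimes\mathcal{B}(\mathbb{R}^{d})$-measurable function is measurable for the trace sigma-algebra, since for any Borel $B$ one has $\{(\o^{t-1},h)\in\mathcal{S}:\,g(\o^{t-1},h)\in B\}=\mathcal{S}\cap g^{-1}(B)\in[\Fc_{t-1}\otimes\mathcal{B}(\mathbb{R}^{d})]_{\mathcal{S}}$. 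I do not expect any real difficulty here: the whole content is bookkeeping of base spaces and kernels, with the single verifiable point being the augmented-variable kernel claim of $ii)$, on which $iv)$ also rests.
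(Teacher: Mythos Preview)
Your proof is correct and, for parts $i)$, $ii)$ and $iii)$, essentially identical to the paper's: same choices of $(H,\mathcal{H})$, $(K,\mathcal{K})$ and kernel, same verification that the augmented kernel $\widehat q$ in $ii)$ is a genuine stochastic kernel, same observation in $iii)$ that $\mathcal{F}_{t-1}\subset\overline{\mathcal{F}}_{t-1}$ makes $q_t$ a kernel for the completed base.

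For $iv)$ you take a genuinely different and more economical route. The paper works intrinsically on the trace space: it builds a kernel $\tilde q_t$ on $\mathcal{G}_t$ given $\bigl(\mathcal{S},[\mathcal{F}_{t-1}\otimes\mathcal{B}(\mathbb{R}^d)]_{\mathcal{S}}\bigr)$, checks that the restriction $f_{\mathcal S}$ is $[\mathcal{F}_{t-1}\otimes\mathcal{B}(\mathbb{R}^d)]_{\mathcal{S}}\otimes\mathcal{G}_t$-measurable via the identity $[\mathcal{F}_{t-1}\otimes\mathcal{B}(\mathbb{R}^d)\otimes\mathcal{G}_t]_{\mathcal{S}\times\Omega_t}=[\mathcal{F}_{t-1}\otimes\mathcal{B}(\mathbb{R}^d)]_{\mathcal{S}}\otimes\mathcal{G}_t$, and then applies Lemma \ref{fubini0} once more. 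You instead observe that $f$ already satisfies the hypothesis of $ii)$ on the full space $\Omega^{t-1}\times\mathbb{R}^d\times\Omega_t$, so the integrated function $g$ is globally $\mathcal{F}_{t-1}\otimes\mathcal{B}(\mathbb{R}^d)$-measurable, and then restrict. Your argument is shorter and avoids both the trace-product identity and the extra kernel construction; the paper's version has the (minor) advantage of being self-contained on $\mathcal{S}$, which matches how the result is later invoked in Lemma \ref{dyn4} where one wants to stay on $\mathcal{S}$ throughout.
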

\begin{proof}
Statement $i)$ is a direct application of Lemma \ref{fubini0} for $H=\O^{t-1}$, $\mathcal{H}=\mathcal{F}_{t-1}$, $K=\Omega_{t}$, $\mathcal{K}=\mathcal{G}_{t}$ and $q(\cdot|\cdot)=q_{t}(\cdot|\cdot)$.
To prove statement $ii)$, let $\bar{q}_{t}$ be defined by
\begin{align}
\label{barq}
\bar{q}_{t}: (G,\omega^{t-1},h) \in \mathcal{G}_{t}\times \Omega^{t-1}\times \mathbb{R}^{d} \to  \bar{q}_{t}(G| \o^{t-1},h):=q_{t}(G|\omega^{t-1}).
\end{align}
We first prove that $\bar{q}_{t}$ is a stochastic kernel on $\mathcal{G}_{t}$ given $\Omega^{t-1}\times \mathbb{R}^{d}$ where measurability is with respect to $\Fc_{t-1} \otimes \mathcal{B}(\mathbb{R}^{d})$. Let $(\o^{t-1},h) \in \O^{t-1}\times \mathbb{R}^{d}$ be fixed, $B \in \mathcal{G}_{t} \to \bar{q}_{t}(B|\o^{t-1},h)={q}_{t}(B|\o^{t-1})$ is a probability measure on $(\O_{t},\mathcal{G}_{t})$ by definition of $q_{t}$. Let $B \in \mathcal{G}_{t}$ be fixed, then $(\o^{t-1},h) \in \O^{t-1}\times\mathbb{R} \to  \bar{q}_{t}(B|\o^{t-1},h)={q}_{t}(B|\o^{t-1})$ is $\mathcal{F}_{t-1}\otimes\mathcal{B}(\mathbb{R}^{d})$-measurable since for any $B' \in \mathcal{B}(\mathbb{R})$, we have, by definition of $q_{t}$,
$$\left\{(\o^{t-1},h) \in \O^{t-1}\times  \mathbb{R}^{d}, \bar{q}_{t}(B|\o^{t-1},h) \in B' \right\}= \left\{\o^{t-1} \in \O^{t-1}, q_{t}(B|\o^{t-1}) \in B' \right\} \times \mathbb{R}^{d} \in \Fc_{t-1} \otimes \mathcal{B}(\mathbb{R}^{d}).$$
Statement $ii)$ follows by an application of Lemma \ref{fubini0} for $H=\O^{t-1}\times \mathbb{R}^{d}$, $\mathcal{H}=\Fc_{t-1} \otimes \mathcal{B}(\mathbb{R}^{d})$, $K=\Omega_{t}$, $\mathcal{K}=\mathcal{G}_{t}$ and $q(\cdot|\cdot)=\bar{q}_{t}(\cdot|\cdot)$.
To prove statement $iii)$ note that since $\mathcal{F}_{t-1} \subset \overline{\mathcal{F}}_{t-1}$ it is clear that $q_{t}$ is a stochastic kernel on $(\Omega_{t},\mathcal{G}_{t})$ given $(\Omega^{t-1},\overline{\Fc}_{t-1})$ ($i.e$  measurability is with respect to $\overline{\Fc}_{t-1}$). And statement $iii)$ follows immediately from an application of Lemma \ref{fubini0} for $H=\O^{t-1}$, $\mathcal{H}=\overline{\mathcal{F}}_{t-1}$, $K=\Omega_{t}$, $\mathcal{K}=\mathcal{G}_{t}$ and $q(\cdot|\cdot)=q_{t}(\cdot|\cdot)$.
We prove now the last statement.
It is well known that $(S, \left[\mathcal{F}_{t-1} \otimes \mathcal{B}(\mathbb{R}^{d})\right]_{\mathcal{S}})$ is a measurable space.
Let $\tilde{q}_{t}$ be defined by
\begin{align}
\label{tildeq}
\tilde{q}_{t}: (G,\omega^{t-1},h) \in \mathcal{G}_{t}\times S\to  \tilde{q}_{t}(G| \o^{t-1},h):=q_{t}(G|\omega^{t-1}).
\end{align}
We prove that  $\tilde{q}_{t}$ is a stochastic kernel on $(\Omega_{t},\mathcal{G}_{t})$ given $\left(\mathcal{S},\left[\mathcal{F}_{t-1} \otimes \mathcal{B}(\mathbb{R}^{d})\right]_{\mathcal{S}}\right)$.  Indeed, let $(\o^{t-1},h) \in S$ be fixed, $B \in \mathcal{G}_{t} \to \tilde{q}_{t}(B|\o^{t-1},h)={q}_{t}(B|\o^{t-1})$ is a probability measure on $(\O_{t},\mathcal{G}_{t})$, by definition of $q_{t}$. Let $B \in \mathcal{G}_{t}$ be fixed, then $(\o^{t-1},h) \in S \to  \tilde{q}_{t}(B|\o^{t-1},h)={q}_{t}(B|\o^{t-1})$ is $\left[\mathcal{F}_{t-1} \otimes \mathcal{B}(\mathbb{R}^{d})\right]_{\mathcal{S}}$-measurable since for any $B' \in \mathcal{B}(\mathbb{R})$, we have, by definition of $q_{t}$ \begin{eqnarray*}
\left\{(\o^{t-1},h) \in S,\; \tilde{q}_{t}(B|\o^{t-1},h) \in B' \right\} & = & \left(\left\{\o^{t-1} \in \O^{t-1}, q_{t}(B|\o^{t-1}) \in B' \right\} \times \mathbb{R}^{d}\right) \bigcap S\\
& \in & \left[\mathcal{F}_{t-1} \otimes \mathcal{B}(\mathbb{R}^{d})\right]_{\mathcal{S}}.
\end{eqnarray*}
Now let $f_{S}$ be the restriction of $f$ to  $\mathcal{S}\times \O_{t}$. Using similar arguments and the fact that
 \begin{align}
 \label{trace}
 \left[\mathcal{F}_{t-1} \otimes \mathcal{B}(\mathbb{R}^{d})\otimes \mathcal{G}_{t}\right]_{\mathcal{S}\times \O_{t}}=\left[\mathcal{F}_{t-1} \otimes \mathcal{B}(\mathbb{R}^{d})\right]_{\mathcal{S}}\otimes \mathcal{G}_{t},
 \end{align}
we obtain that $f_{S}$ is $\left[\mathcal{F}_{t-1} \otimes \mathcal{B}(\mathbb{R}^{d})\right]_{\mathcal{S}}\otimes \mathcal{G}_{t}$-measurable.
Finally, statement $iv)$ follows from another  application of Lemma \ref{fubini0} for $H=S$, $\mathcal{H}=\left[\mathcal{F}_{t-1} \otimes \mathcal{B}(\mathbb{R}^{d})\right]_{\mathcal{S}}$, $K=\Omega_{t}$, $\mathcal{K}=\mathcal{G}_{t}$ and $q(\cdot|\cdot)=\tilde{q}_{t}(\cdot|\cdot)$.
\end{proof}\\
\begin{lemma}
\label{lemmaAA}
Let $f : \Omega^{t+1} \to \mathbb{R}_{+} \cup \{\infty\}$ be $\mathcal{F}_{t+1}$-measurable, non-negative and such that \\ $\int_{\Omega^{t+1}} f(\o^{t+1}) P_{t+1}(d\o^{t+1}) <\infty$.
Then $\o^{t} \in \Omega^{t} \to \int_{\Omega_{t+1}} f(\o^{t},\o_{t+1})q_{t+1}(d\o_{t+1}|\o^{t})$ is $\mathcal{F}_{t}$-measurable.
Furthermore, let $$N^{t}:=\{\o^{t} \in \O^{t},\; \int_{\Omega_{t+1}} f(\o^{t},\o_{t+1})q_{t+1}(d\o_{t+1}|\o^{t})=\infty\}.$$ Then $N_{t} \in \mathcal{F}_{t}$ and $P_{t}(N^{t})=0$
\end{lemma}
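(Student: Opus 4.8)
The plan is to reduce both assertions to results already established in the Appendix, since the statement is a routine consequence of the Fubini theorem for stochastic kernels. First I would introduce $g(\o^t):=\int_{\Omega_{t+1}} f(\o^t,\o_{t+1})\,q_{t+1}(d\o_{t+1}|\o^t)$, which is well-defined with values in $\mathbb{R}_{+}\cup\{+\infty\}$ because $f$ is non-negative. The $\Fc_t$-measurability of $g$, i.e.\ the first assertion of the lemma, follows immediately from Proposition \ref{fubiniext} $i)$ applied with the time index $t$ there replaced by $t+1$ (so that the $\Fc_{t+1}$-measurable $f$ yields an $\Fc_t$-measurable partial integral in $\o_{t+1}$).

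For the second part, I would first observe that $N^t=\{\o^t\in\Omega^t:\,g(\o^t)=+\infty\}=g^{-1}(\{+\infty\})$ is the preimage under the $\Fc_t$-measurable map $g$ of the Borel singleton $\{+\infty\}$ in $\mathbb{R}_{+}\cup\{+\infty\}$, hence $N^t\in\Fc_t$. To obtain $P_t(N^t)=0$ I would apply the Fubini theorem (Lemma \ref{fubini0}) with $H=\Omega^t$, $\mathcal{H}=\Fc_t$, $K=\Omega_{t+1}$, $\mathcal{K}=\Gc_{t+1}$, $p=P_t$ and $q=q_{t+1}$; the induced product measure is exactly $P_{t+1}$ by the decomposition \eqref{decompoP}, so that
\begin{align*}
\int_{\Omega^t} g(\o^t)\,P_t(d\o^t)=\int_{\Omega^{t+1}} f(\o^{t+1})\,P_{t+1}(d\o^{t+1})<\infty
\end{align*}
by hypothesis. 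Since $g\geq 0$ has a finite integral it must be $P_t$-a.s.\ finite: if $P_t(N^t)>0$ then $\int_{\Omega^t} g\,dP_t\geq\int_{N^t} g\,dP_t=(+\infty)\,P_t(N^t)=+\infty$, a contradiction, whence $P_t(N^t)=0$.

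There is no genuine obstacle here: the argument is the standard fact that a non-negative integrable function is almost surely finite, combined with the measurability of the partial integral. The only points demanding minor care are the index shift when invoking Proposition \ref{fubiniext} $i)$ and the identification of the kernel-product measure produced by Lemma \ref{fubini0} with $P_{t+1}$ through \eqref{decompoP}.
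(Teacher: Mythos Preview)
Your proof is correct and follows essentially the same route as the paper: measurability of the partial integral via Proposition~\ref{fubiniext}~$i)$, then Fubini (Lemma~\ref{fubini0}) to identify $\int_{\Omega^t} g\,dP_t$ with $\int_{\Omega^{t+1}} f\,dP_{t+1}<\infty$, and finally the contradiction argument that a non-negative function with finite integral must be $P_t$-a.s.\ finite. The only cosmetic difference is that you name the partial integral $g$ and phrase $N^t\in\Fc_t$ as a preimage statement, whereas the paper leaves this implicit.
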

\begin{proof}
The first assertion of the lemma is a direct application of $i)$ of Proposition \ref{fubiniext}.
So it is clear that $N^{t} \in \mathcal{F}_{t}$.
Furthermore, applying the Fubini Theorem (see Lemma \ref{fubini0}) we get that
$$\int_{\Omega^{t}} \int_{\Omega_{t+1}} f(\o^{t},\o_{t+1})q_{t+1}(d\o_{t+1}|\o^{t}) P_{t}(d\o^{t})=\int_{\Omega^{t+1}} f(\o^{t+1}) P_{t+1}(d\o^{t+1}) <\infty.$$
Assume that $P_{t}(N^{t})>0$. Then
$$\int_{\Omega^{t+1}} f(\o^{t+1}) P_{t+1}(d\o^{t+1}) \geq \int_{N^{t}} \int_{\Omega_{t+1}} f(\o^{t},\o_{t+1})q_{t+1}(d\o_{t+1}|\o^{t}) P_{t}(d\o^{t})=\infty.$$
We get a contradiction~: $P_{t}(N^{t})=0$.
\end{proof}\\

The next lemma, loosely speaking,  allows to obtain ``nice" sections ($i.e$ set of full measure for a certain probability measure). We use it in the proofs of Theorem \ref{main2} and Lemma \ref{LemmaA1}.
\begin{lemma}
\label{bouillante2}
Fix some $t\in\{1,\ldots,T\}$.
Let $\widetilde{\Omega}^t \in \Fc_t$ such that $P_t(\widetilde{\Omega}^t)=1$ and  $\widetilde{\Omega}^{t-1} \in \Fc_{t-1}$ such that $P_{t-1}(\widetilde{\Omega}^{t-1})=1$ and set
$$\overline{\Omega}^{t-1}:=\left\{ \o^{t-1} \in \widetilde{\Omega}^{t-1},\; q_t\left(\left(\widetilde{\Omega}^t\right)_{\o^{t-1}}|\o^{t-1}\right)=1\right\}$$
see Lemma \ref{fubini0} for the definition of $\left(\widetilde{\Omega}^t\right)_{\o^{t-1}}$.
Then $\overline{\Omega}^{t-1} \in \mathcal{F}_{t-1}$ and $P_{t}(\overline{\Omega}^{t-1})=1$.
\end{lemma}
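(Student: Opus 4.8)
The plan is to reduce everything to the single scalar function
$$g(\o^{t-1}) := q_t\left(\left(\widetilde{\Omega}^t\right)_{\o^{t-1}} \big| \o^{t-1}\right) = \int_{\Omega_t} 1_{\widetilde{\Omega}^t}(\o^{t-1},\o_t)\, q_t(d\o_t|\o^{t-1}),$$
and to show that $g=1$ holds $P_{t-1}$-almost surely. First I would record that $g$ is well-defined and $\Fc_{t-1}$-measurable: since $\widetilde{\Omega}^t \in \Fc_t = \Fc_{t-1}\otimes\Gc_t$, Lemma \ref{fubini0} (applied with $H=\Omega^{t-1}$, $\mathcal{H}=\Fc_{t-1}$, $K=\Omega_t$, $\mathcal{K}=\Gc_t$, $q=q_t$) guarantees that for every $\o^{t-1}$ the section $\left(\widetilde{\Omega}^t\right)_{\o^{t-1}}$ lies in $\Gc_t$, so that $g(\o^{t-1})$ makes sense, and that $\o^{t-1}\mapsto g(\o^{t-1})$ is $\Fc_{t-1}$-measurable with values in $[0,1]$. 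Consequently $\overline{\Omega}^{t-1}=\widetilde{\Omega}^{t-1}\cap\{g=1\}$ is the intersection of two $\Fc_{t-1}$-sets, which settles the measurability claim.

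For the full-measure statement I would integrate $g$ against $P_{t-1}$ and invoke the decomposition \eqref{decompoP}. By Lemma \ref{fubini0} applied to $f=1_{\widetilde{\Omega}^t}$,
$$\int_{\Omega^{t-1}} g(\o^{t-1})\, P_{t-1}(d\o^{t-1}) = \int_{\Omega^{t-1}}\int_{\Omega_t} 1_{\widetilde{\Omega}^t}(\o^{t-1},\o_t)\, q_t(d\o_t|\o^{t-1})\, P_{t-1}(d\o^{t-1}) = P_t(\widetilde{\Omega}^t) = 1.$$
Since $0\le g\le 1$ and $P_{t-1}$ is a probability measure, the function $1-g$ is non-negative with $\int_{\Omega^{t-1}}(1-g)\,dP_{t-1}=0$; hence $g=1$ $P_{t-1}$-a.s., i.e. $P_{t-1}(\{g=1\})=1$. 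Combining this with the hypothesis $P_{t-1}(\widetilde{\Omega}^{t-1})=1$ yields $P_{t-1}(\overline{\Omega}^{t-1})=P_{t-1}\left(\widetilde{\Omega}^{t-1}\cap\{g=1\}\right)=1$.

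Finally, to match the statement $P_t(\overline{\Omega}^{t-1})=1$, I would use the identification of $\overline{\Omega}^{t-1}\in\Fc_{t-1}$ with the cylinder $\overline{\Omega}^{t-1}\times\Omega_t\in\Fc_t$: since $\int_{\Omega_t} 1_{\overline{\Omega}^{t-1}\times\Omega_t}(\o^{t-1},\o_t)\, q_t(d\o_t|\o^{t-1}) = 1_{\overline{\Omega}^{t-1}}(\o^{t-1})$, another application of Lemma \ref{fubini0} gives $P_t(\overline{\Omega}^{t-1}\times\Omega_t)=P_{t-1}(\overline{\Omega}^{t-1})=1$. I do not expect a serious obstacle: the only points requiring care are verifying that the section $\left(\widetilde{\Omega}^t\right)_{\o^{t-1}}$ is $\Gc_t$-measurable so that $g$ is genuinely defined everywhere, and that the $\Fc_{t-1}$-measurability of $g$ comes precisely from the measurability half of Lemma \ref{fubini0} rather than from any completeness of the $\sigma$-algebras — in keeping with the paper's policy of not completing $\Gc_t$ or $\Fc_t$.
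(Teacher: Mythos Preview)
Your argument is correct and follows the same route as the paper: both use Lemma \ref{fubini0} for the measurability of $g$, then integrate $1_{\widetilde{\Omega}^t}$ via Fubini to obtain $\int g\,dP_{t-1}=1$, and deduce that $\{g<1\}$ is $P_{t-1}$-null. The only cosmetic differences are that the paper phrases the last step as a contradiction (splitting the integral over $\overline{\Omega}^{t-1}$ and $\widetilde{\Omega}^{t-1}\setminus\overline{\Omega}^{t-1}$) whereas you argue directly that $\int(1-g)\,dP_{t-1}=0$ forces $g=1$ a.s., and that you make explicit the cylinder-set identification needed to read $P_t(\overline{\Omega}^{t-1})=1$ --- the paper's proof in fact concludes with $P_{t-1}(\overline{\Omega}^{t-1})=1$, so your final paragraph simply repairs what is a typo in the statement.
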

\begin{proof}
From Lemma \ref{fubini0} we know $\o^{t-1} \to q_t\left(\left(\widetilde{\Omega}^t\right)_{\o^{t-1}}|\o^{t-1}\right)$ is $\mathcal{F}_{t-1}$-measurable and the fact that $\overline{\Omega}^{t-1} \in \mathcal{F}_{t-1}$ follows immediately.\\
Furthermore, using the Fubini Theorem (see Lemma \ref{fubini0})  we have  that
\begin{align*}
1=P_{t}(\widetilde{\Omega}^{t})&=\int_{\Omega^{t-1}}\int_{\Omega_{t}} 1_{\widetilde{\Omega}^t}(\o^{t-1},\o_{t})q_t(d\o_{t}|\o^{t-1})P_{t-1}(d\o^{t-1})\\
&=\int_{\Omega^{t-1}}\int_{\Omega_{t}} 1_{\left(\widetilde{\Omega}^t\right)_{\o^{t-1}}}(\o_{t})q_t(d\o_{t}|\o^{t-1})P_{t-1}(d\o^{t-1})\\
&= \int_{\widetilde{\Omega}^{t-1}}\int_{\Omega_{t}} 1_{\left(\widetilde{\Omega}^t\right)_{\o^{t-1}}}(\o_{t})q_t(d\o_{t}|\o^{t-1})P_{t-1}(d\o^{t-1})\\
&=\int_{\widetilde{\Omega}^{t-1}}q_t\left(\left(\widetilde{\Omega}^t\right)_{\o^{t-1}}|\o^{t-1}\right)P_{t-1}(d\o^{t-1})\\
&=\int_{\overline{\Omega}^{t-1}} 1 \times P_{t-1}(d\o^{t-1})+ \int_{\widetilde{\Omega}^{t-1}\backslash{\overline{\Omega}^{t-1}}}q_t\left(\left(\widetilde{\Omega}^t\right)_{\o^{t-1}}|\o^{t-1}\right)P_{t-1}(d\o^{t-1}),
\end{align*}
where we have used for the third line the fact that $P(\widetilde{\Omega}^{t-1})=1$.\\
But if $P(\widetilde{\Omega}^{t-1}\backslash{\overline{\Omega}^{t-1}})>0$ then we have that by definition of $\overline{\Omega}^{t-1}$ that $$\int_{\widetilde{\Omega}^{t-1}\backslash{\overline{\Omega}^{t-1}}}q_t\left(\left(\widetilde{\Omega}^t\right)_{\o^{t-1}}|\o^{t-1}\right)P_{t-1}(d\o^{t-1})< P_{t-1}(\widetilde{\Omega}^{t-1}\backslash{\overline{\Omega}^{t-1}}),$$ and thus \begin{align*}
1<P_{t-1}(\overline{\Omega}^{t-1})+P_{t-1}(\widetilde{\Omega}^{t-1}\backslash{\overline{\Omega}^{t-1}})=1,
\end{align*}
which is absurd and thus $P_{t-1}(\widetilde{\Omega}^{t-1}\backslash{\overline{\Omega}^{t-1}})=0$. We conclude using again that $P_{t-1}(\widetilde{\Omega}^{t-1})=1$.
\end{proof}\\
The following lemma is used throughout the paper. In particular, the last statement is used in the proof of the main theorem
\begin{lemma}
\label{LemmaA1}
Let $0\leq t \leq T-1$,  $B \in \mathcal{B}(\mathbb{R})$, $H: \Omega^{t} \to \mathbb{R}$  and  $h_{t}: \Omega^{t} \to \mathbb{R}^{d}$ be  $\mathcal{F}_{t}$-measurable be fixed. Then the functions
\begin{align}
\label{eq1}
& (\omega^{t},h) \in \O^{t} \times \mathbb{R}^{d} \to q_{t+1}(H(\o^{t})+h\Delta S_{t+1}(\o^t,\cdot) \in B|\o^t),\\
\label{eq2}
& \o^{t} \in \O^{t} \to q_{t+1}(H(\o^{t})+h_{t}(\o^{t})\Delta S_{t+1}(\o^t,\cdot) \in B|\o^t),
 \end{align}
are respectively $\Fc_{t} \otimes \mathcal{B}(\mathbb{R}^{d})$-measurable and $\mathcal{F}_{t}$-measurable.
Furthermore, assume that \\ $P_{t+1}\left(H(\cdot)+h_{t}(\cdot)\Delta S_{t+1}(\cdot) \in B\right)=1$, then there exists some $P_{t}$-full measure set $\overline{\Omega}^{t}$ such that for all $\o^{t} \in \overline{\Omega}^{t}$, $q_{t+1}(H(\o^{t})+h_{t}(\o^{t})\Delta S_{t+1}(\o^t,\cdot) \in B|\o^t)=1$.
 \end{lemma}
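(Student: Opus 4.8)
The plan is to express both conditional probabilities as integrals of an indicator function against the kernel $q_{t+1}$ and then to invoke the Fubini-type measurability statements of Proposition \ref{fubiniext}. First I would treat \eqref{eq1}. Consider the function
$$f(\omega^{t},\omega_{t+1},h):=1_{B}\big(H(\omega^{t})+h\Delta S_{t+1}(\omega^{t},\omega_{t+1})\big)$$
defined on $\Omega^{t}\times\Omega_{t+1}\times\mathbb{R}^{d}$. Since $S$ is $\mathcal{F}$-adapted and $\mathcal{F}_{t+1}=\mathcal{F}_{t}\otimes\mathcal{G}_{t+1}$, the map $(\omega^{t},\omega_{t+1})\to\Delta S_{t+1}(\omega^{t},\omega_{t+1})$ is $\mathcal{F}_{t}\otimes\mathcal{G}_{t+1}$-measurable; combined with the $\mathcal{F}_{t}$-measurability of $H$ and the continuity (hence Borel-measurability) of the scalar product $(h,v)\to hv$, the map $(\omega^{t},\omega_{t+1},h)\to H(\omega^{t})+h\Delta S_{t+1}(\omega^{t},\omega_{t+1})$ is $\mathcal{F}_{t}\otimes\mathcal{G}_{t+1}\otimes\mathcal{B}(\mathbb{R}^{d})$-measurable, and composing with the Borel indicator $1_{B}$ shows that $f$ is $\mathcal{F}_{t+1}\otimes\mathcal{B}(\mathbb{R}^{d})$-measurable. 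As
$$q_{t+1}\big(H(\omega^{t})+h\Delta S_{t+1}(\omega^{t},\cdot)\in B\,|\,\omega^{t}\big)=\int_{\Omega_{t+1}}f(\omega^{t},\omega_{t+1},h)\,q_{t+1}(d\omega_{t+1}|\omega^{t}),$$
an application of Proposition \ref{fubiniext} $ii)$ at stage $t+1$ gives that the function in \eqref{eq1} is $\mathcal{F}_{t}\otimes\mathcal{B}(\mathbb{R}^{d})$-measurable.

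For \eqref{eq2} I would compose the function obtained in \eqref{eq1} with the map $\omega^{t}\to(\omega^{t},h_{t}(\omega^{t}))$, which is measurable from $(\Omega^{t},\mathcal{F}_{t})$ into $(\Omega^{t}\times\mathbb{R}^{d},\mathcal{F}_{t}\otimes\mathcal{B}(\mathbb{R}^{d}))$ because $h_{t}$ is $\mathcal{F}_{t}$-measurable. The resulting composition is $\mathcal{F}_{t}$-measurable and coincides with the function in \eqref{eq2}. (Equivalently, one may set $\tilde{f}(\omega^{t},\omega_{t+1}):=1_{B}(H(\omega^{t})+h_{t}(\omega^{t})\Delta S_{t+1}(\omega^{t},\omega_{t+1}))$, which is $\mathcal{F}_{t+1}$-measurable by the same argument, and apply Proposition \ref{fubiniext} $i)$ directly.)

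Finally, for the last statement I would appeal to Lemma \ref{bouillante2}. Put
$$\widetilde{\Omega}^{t+1}:=\big\{\omega^{t+1}\in\Omega^{t+1}:\;H(\omega^{t})+h_{t}(\omega^{t})\Delta S_{t+1}(\omega^{t+1})\in B\big\},$$
which lies in $\mathcal{F}_{t+1}$ by the measurability just established and satisfies $P_{t+1}(\widetilde{\Omega}^{t+1})=1$ by hypothesis, and take $\widetilde{\Omega}^{t}=\Omega^{t}$. Applying Lemma \ref{bouillante2} at stage $t+1$ yields that
$$\overline{\Omega}^{t}:=\big\{\omega^{t}\in\Omega^{t}:\;q_{t+1}\big((\widetilde{\Omega}^{t+1})_{\omega^{t}}\,|\,\omega^{t}\big)=1\big\}$$
belongs to $\mathcal{F}_{t}$ and satisfies $P_{t}(\overline{\Omega}^{t})=1$. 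Since the section $(\widetilde{\Omega}^{t+1})_{\omega^{t}}$ equals $\{\omega_{t+1}\in\Omega_{t+1}:\;H(\omega^{t})+h_{t}(\omega^{t})\Delta S_{t+1}(\omega^{t},\omega_{t+1})\in B\}$, its $q_{t+1}(\cdot|\omega^{t})$-measure is precisely the function in \eqref{eq2}, so $\overline{\Omega}^{t}$ is exactly the required full-measure set.

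The only genuinely delicate point is the joint measurability of the composite integrand in the first step, namely making explicit that $\Delta S_{t+1}$, viewed as a function of $(\omega^{t},\omega_{t+1})$, is $\mathcal{F}_{t}\otimes\mathcal{G}_{t+1}$-measurable, and matching correctly the index conventions of Proposition \ref{fubiniext}; everything else is a routine application of the Fubini machinery and of Lemma \ref{bouillante2}.
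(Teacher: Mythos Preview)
Your proof is correct and follows essentially the same route as the paper: establish joint measurability of the indicator integrand, integrate against the kernel via Proposition \ref{fubiniext}, and invoke Lemma \ref{bouillante2} for the full-measure set. The only minor difference is that for \eqref{eq2} the paper argues directly with $\psi_{h_{t}}$ and Proposition \ref{fubiniext} $i)$, whereas you first obtain \eqref{eq2} by composing the map from \eqref{eq1} with $\omega^{t}\mapsto(\omega^{t},h_{t}(\omega^{t}))$; both are equivalent and you already note the direct alternative.
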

\begin{proof}
As  $h \in \mathbb{R}^{d} \to h \Delta S_{t+1}(\o^{t},\o_{t+1})$ is continuous for all $(\o^{t},\o_{t+1}) \in \O^{t}\times\O_{t+1}$ and $(\o^{t},\o_{t+1}) \in  \O^{t}\times\O_{t+1} \to h \Delta S_{t+1}(\o^{t},\o_{t+1})$ is $\mathcal{F}_{t+1}=\mathcal{F}_{t} \otimes \mathcal{G}_{t+1}$-measurable for all $h \in \mathbb{R}^{d}$ (recall that $S_{t} $ and $S_{t+1} $ are respectively $\Fc_t$ and $\Fc_{t+1}$ measurable by assumption),  $(\omega^{t},\omega_{t+1},h)  \in \Omega^{t}\times\Omega_{t+1}\times\mathbb{R}^{d} \to h\Delta S_{t+1}(\o^t,\o_{t+1})$ is $\mathcal{F}_{t}\otimes \mathcal{G}_{t+1} \otimes \mathcal{B}(\mathbb{R}^{d})$-measurable as a Carath\'eodory function. As $H$ is $\mathcal{F}_{t}$-measurable we obtain that $\psi: (\o^{t},\o_{t+1},h) \in  \Omega^{t}\times\Omega_{t+1}\times\mathbb{R}^{d} \to H(\o^{t}) + h\Delta S_{t+1}(\o^t,\o_{t+1})$ is also $\mathcal{F}_{t}\otimes \mathcal{G}_{t+1} \otimes \mathcal{B}(\mathbb{R}^{d})$-measurable. Therefore, for any $B \in \mathcal{B}(\mathbb{R})$, $f_{B}: (\o^{t},\o_{t+1},h) \in  \Omega^{t}\times\Omega_{t+1}\times\mathbb{R}^{d} \to 1_{\psi(\cdot,\cdot,\cdot) \in B}(\o^{t},\o_{t+1},h)$ is $\mathcal{F}_{t}\otimes \mathcal{G}_{t+1} \otimes \mathcal{B}(\mathbb{R}^{d})$. We conclude using statement $i)$ of Proposition \ref{fubiniext} applied to $f_{B}$ and \eqref{eq1} is proved.
We prove \eqref{eq2} using similar arguments. Since $h_{t}$ is $\mathcal{F}_{t}$-measurable, it is clear that $\psi_{h_{t}}: (\o^{t},\o_{t+1}) \in  \Omega^{t}\times\Omega_{t+1} \to H(\o^{t}) + h_{t}(\o^{t})\Delta S_{t+1}(\o^t,\o_{t+1})$ is  $\mathcal{F}_{t}\otimes \mathcal{G}_{t+1}$-measurable.  Therefore, for any $B \in \mathcal{B}(\mathbb{R})$, $f_{B,h_{t}}: (\o^{t},\o_{t+1}) \in  \Omega^{t}\times\Omega_{t+1}\to 1_{\psi_{h_{t}}(\cdot,\cdot) \in B}(\o^{t},\o_{t+1})$ is $\mathcal{F}_{t}\otimes \mathcal{G}_{t+1}$-measurable. We conclude applying $i)$   of Proposition  \ref{fubiniext}  to  $f_{B,h_{t}}$.\\
For the last statement, we set
$$\widetilde{\Omega}^{t+1}:=\left\{\o^{t+1}=(\o^{t},\o_{t+1}) \in \O^{t}\times \O_{t+1},\; H(\o^{t})+h_{t}(\o^{t})\Delta S_{t+1}(\o^{t},\o_{t+1}) \in B\right\}.$$ It is clear that $\widetilde{\Omega}^{t+1} \in \mathcal{F}_{t+1}$ and that $P_{t+1}(\widetilde{\Omega}^{t+1})=1$. We can then apply Lemma \ref{bouillante2} and we obtain some $P_{t}$-full measure set $\overline{\Omega}^{t}$ such that for all $\o^{t} \in \overline{\Omega}^{t}$, $q_{t+1}(H(\o^{t})+h_{t}(\o^{t})\Delta S_{t+1}(\o^t,\cdot) \in B|\o^t)=1$.
 \end{proof}\\

Lemma  \ref{completemes} is often used in conjunction with the Aumann Theorem (see Corollary 1 in \citet{bv}) to obtain a $\mathcal{F}_{t}$-measurable selector.

\begin{lemma}																			
\label{completemes}
Let  $f:\Omega^t \to \mathbb{R} $ be $\overline{\Fc}_{t}$-measurable. Then there exists $g: \Omega^t \to \mathbb{R} $ that is ${\Fc}_{t}$-measurable and such that $f=g$ $P_{t}$-almost surely, $i.e$ there exists $\Omega^{t}_{fg} \in \mathcal{F}_{t}$ with $P_{t}\left(\Omega^{t}_{fg}\right)=1$ and $\Omega^{t}_{fg} \subset \{f=g\}$.
\end{lemma}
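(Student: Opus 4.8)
The plan is to run the standard ``measure-theoretic machine'': prove the statement first for indicator functions, extend it to simple functions by linearity, and then reach a general $\overline{\mathcal{F}}_{t}$-measurable $f$ by a pointwise approximation argument. Throughout, the only input needed about the completion is the explicit description recalled in Section~\ref{se2}, namely that every $A \in \overline{\mathcal{F}}_{t}$ decomposes as $A = B \cup N$ with $B \in \mathcal{F}_{t}$ and $N \in \mathcal{N}_{P_{t}}$, and that $N \in \mathcal{N}_{P_{t}}$ means $N \subset M$ for some $M \in \mathcal{F}_{t}$ with $P_{t}(M)=0$.

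First I would treat $f = 1_{A}$ with $A \in \overline{\mathcal{F}}_{t}$. Writing $A = B \cup N$ as above and picking $M \in \mathcal{F}_{t}$ with $N \subset M$ and $P_{t}(M)=0$, I set $g := 1_{B}$, which is $\mathcal{F}_{t}$-measurable. Since $A$ and $B$ differ only within $M$, one has $\{f \neq g\} \subset M$, so $f = g$ on the $\mathcal{F}_{t}$-measurable full-measure set $\Omega^{t} \setminus M$. Next, for a simple $\overline{\mathcal{F}}_{t}$-measurable $f = \sum_{i=1}^{n} a_{i} 1_{A_{i}}$, applying the previous step to each $A_{i}$ yields $B_{i} \in \mathcal{F}_{t}$ and $M_{i} \in \mathcal{F}_{t}$ with $P_{t}(M_{i})=0$ and $1_{A_{i}} = 1_{B_{i}}$ off $M_{i}$; then $g := \sum_{i=1}^{n} a_{i} 1_{B_{i}}$ is $\mathcal{F}_{t}$-measurable and coincides with $f$ on the full-measure set $\Omega^{t} \setminus \bigcup_{i=1}^{n} M_{i}$.

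Finally I would handle a general real-valued $\overline{\mathcal{F}}_{t}$-measurable $f$. Choose simple $\overline{\mathcal{F}}_{t}$-measurable functions $(f_{n})_{n\geq 1}$ converging to $f$ pointwise everywhere (the usual dyadic truncation applied to $f^{+}$ and $f^{-}$ works, since $f$ is finite), and by the previous step pick $\mathcal{F}_{t}$-measurable $g_{n}$ and null sets $M_{n} \in \mathcal{F}_{t}$ with $f_{n} = g_{n}$ off $M_{n}$. Setting $M := \bigcup_{n} M_{n}$, we have $P_{t}(M)=0$, and on $\Omega^{t} \setminus M$ we get $g_{n} = f_{n} \to f$, so the sequence $(g_{n})$ converges to $f$ there. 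The only delicate point, and the step I expect to require the most care, is that one cannot simply define $g := \lim_{n} g_{n}$ pointwise, because off $\Omega^{t} \setminus M$ the limit may fail to exist or to be finite. I would therefore introduce $L := \{\omega \in \Omega^{t} : (g_{n}(\omega))_{n} \text{ converges in } \mathbb{R}\}$, which is $\mathcal{F}_{t}$-measurable because each $g_{n}$ is, and define $g := \lim_{n} g_{n}$ on $L$ and $g := 0$ on $\Omega^{t} \setminus L$; this $g$ is $\mathcal{F}_{t}$-measurable and real-valued. Since $\Omega^{t} \setminus M \subset L$ and $g = f$ there, the set $\Omega^{t}_{fg} := \Omega^{t} \setminus M$ lies in $\mathcal{F}_{t}$, satisfies $P_{t}(\Omega^{t}_{fg})=1$ and $\Omega^{t}_{fg} \subset \{f=g\}$, which is exactly the claimed conclusion.
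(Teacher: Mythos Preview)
Your proof is correct and follows essentially the same approach as the paper's: both run the standard machine indicator $\to$ simple $\to$ general via pointwise approximation, using the decomposition $A = B \cup N$ with $B \in \mathcal{F}_t$ and $N \in \mathcal{N}_{P_t}$ for the base case. The only cosmetic difference is in the limiting step: the paper simply sets $g := \limsup_n g_n$ and observes $\{f \neq g\} \subset \bigcup_n \{f_n \neq g_n\}$, whereas you take the extra care of restricting to the convergence set $L$ so that $g$ is manifestly real-valued everywhere; this extra care is justified (the paper's $\limsup$ could in principle be $\pm\infty$ off the good set) but the underlying argument is the same.
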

\begin{proof}
Let $f=1_B$ with $B \in \overline{\Fc}_{t}$ then $B=A\cup N$, with $A \in \mathcal{F}_{t}$ and $N \in {\cal N}_{P_{t}}$. Let $g=1_A$. Then
$g$ is ${\Fc}_{t}$-measurable. Clearly, $\{f \neq g\}=N \in {\cal N}_{P_{t}}$, thus $f=g$ $P_{t}$ a.s.
By taking linear combinations, the lemma is proven for step functions using the same argument for each indicator function.
Then it is always possible to approximate some $\overline \Fc_t$-measurable function $f$ by a sequence of step function $(f_n)_{n \geq 1}$.  From the preceding step for all $n \geq 1$, we get some  $\mathcal{F}_{t}$-measurable step functions $g_n$ such that $f_n=g_n$ $P_{t}$-almost surely. Let  $g = \limsup g_n$, $g$ is $\mathcal{F}_{t}$-measurable and we conclude since  $\{f \neq g\} \subset \cup_{n \geq 1} \{f_n \neq g_n\}$ which is again in ${\cal N}_{P_{t}}$.\end{proof}\\

Next we provide some simple but  useful results on usc functions.
\begin{lemma}
\label{usc}
Let $C$ be a closed subset  of $\mathbb{R}^{m}$ for some $m \geq 1$. Let $g: \mathbb{R}^{m} \to \mathbb{R} \cup \{\pm \infty\}$ be such that $g=-\infty$ on $\mathbb{R}^{m} \backslash{C}$. Then $g$ is usc on $\mathbb{R}^{m}$ if and only if $g$ is  usc on $C$.
\end{lemma}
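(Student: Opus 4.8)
The plan is to work with the sequential characterisation of upper semicontinuity, which is available since $\mathbb{R}^m$ is metric: a function $g$ is usc at a point $x_0$ relative to a set containing $x_0$ precisely when $\limsup_n g(x_n)\le g(x_0)$ for every sequence $(x_n)$ converging to $x_0$ and taking values in that set. The extended-real-valued $\limsup$ makes perfect sense even when $g$ attains $\pm\infty$, so no special convention is needed.

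The forward implication is immediate: if $g$ is usc on all of $\mathbb{R}^m$ and $(x_n)\subset C$ converges to $x_0\in C$, then a fortiori $\limsup_n g(x_n)\le g(x_0)$, so $g$ is usc on $C$.

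For the converse, I would assume $g$ is usc on $C$, fix $x_0\in\mathbb{R}^m$ and an arbitrary sequence $x_n\to x_0$ in $\mathbb{R}^m$, and show $\limsup_n g(x_n)\le g(x_0)$, distinguishing two cases according to whether $x_0\in C$. If $x_0\notin C$, I would use that $C$ is closed, so $\mathbb{R}^m\setminus C$ is an open neighbourhood of $x_0$; hence $x_n\notin C$ for all large $n$, so $g(x_n)=-\infty$ eventually and $\limsup_n g(x_n)=-\infty=g(x_0)$. If $x_0\in C$, I would observe that every index with $x_n\notin C$ contributes the value $-\infty$ and so cannot raise the $\limsup$; thus either only finitely many $x_n$ lie in $C$, in which case $\limsup_n g(x_n)=-\infty\le g(x_0)$, or infinitely many do, in which case, passing to the subsequence $(x_{n_k})\subset C$ realising the $\limsup$, the hypothesis of usc on $C$ applied to $x_{n_k}\to x_0\in C$ yields $\limsup_k g(x_{n_k})\le g(x_0)$. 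In all cases $\limsup_n g(x_n)\le g(x_0)$, so $g$ is usc on $\mathbb{R}^m$.

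There is no genuine obstacle here; the only points requiring care are the bookkeeping with the value $-\infty$ (ensuring that indices outside $C$ never spoil the inequality) and the explicit use of the closedness of $C$ in the case $x_0\notin C$, which is exactly what forces $g(x_n)\to-\infty$. As an equivalent and perhaps shorter alternative, one could argue via sublevel sets: for every $a\in\mathbb{R}$ one has $\{x\in\mathbb{R}^m:\,g(x)\ge a\}=\{x\in C:\,g(x)\ge a\}$ since $g\equiv-\infty$ off $C$, and because $C$ is closed a subset of $C$ is closed in $C$ if and only if it is closed in $\mathbb{R}^m$; the characterisation of usc by closedness of all sublevel sets then gives both implications at once.
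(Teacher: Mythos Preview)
Your proof is correct. Your primary argument via the sequential characterisation of upper semicontinuity differs from the paper's approach, which instead works directly with superlevel sets: the paper fixes $\alpha\in\mathbb{R}$, takes a sequence $(x_n)\subset\{g\ge\alpha\}$ converging to some $x$, notes that $g(x_n)\ge\alpha>-\infty$ forces $x_n\in C$, and then uses closedness of $C$ together with usc on $C$ (in the induced topology) to conclude $x\in\{g\ge\alpha\}$. This is exactly the alternative you sketch in your final paragraph, so you have in fact recovered the paper's argument as well. The sequential route requires a little more bookkeeping with the indices outside $C$, while the superlevel-set route is slightly slicker because the condition $g\ge\alpha$ automatically discards those indices; both are entirely standard and equally valid here.
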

\begin{proof}
We prove that if  $g$ is usc on $C$ then it is usc on $\mathbb{R}^{m}$ as the reverse implication is trivial. Let $\alpha \in \mathbb{R}$ be fixed. We prove that $S_{\alpha}:=\{x \in \mathbb{R}^{m} ,\ g(x) \geq \alpha\}$ is closed in $\mathbb{R}^{m}$. Let $(x_{n})_{n \geq 1} \subset S_{\alpha}$ converge to $x \in \mathbb{R}^{m}$. Then $x_{n} \in C$ for all $n \geq 1$ and as $C$ is a closed set, $x \in C$. As $g$ is  usc on $C$, ($i.e$ the set $\{x \in C,\ g(x) \geq \alpha\}$ is closed for the induced topology of $\mathbb{R}^{m}$ on $C$) we get that $g(x) \geq \alpha$, $i.e$ $x \in S_{\alpha}$ and $g$ is  usc on $\mathbb{R}^{m}$.
\end{proof}

\begin{lemma}
\label{getrightcont}
Let $S \subset \mathbb{R}$ be a closed subset of $\mathbb{R}$. Let $f: \mathbb{R} \to  \mathbb{R} \cup \{\pm \infty\}$ be such that $f$ is usc and non-decreasing on $S$. Then $f$ is right-continuous on $S$.
\end{lemma}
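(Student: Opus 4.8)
The plan is to deduce right-continuity of $f$ relative to $S$ directly from the two hypotheses, monotonicity and upper semicontinuity, via a simple sandwich argument, with no case analysis on the local structure of $S$. First I would fix $x \in S$ and take an arbitrary sequence $(y_n)_{n\geq 1} \subset S$ with $y_n \geq x$ for all $n$ and $y_n \to x$; establishing that $f(y_n) \to f(x)$ for every such sequence is exactly what right-continuity of $f$ on $S$ at the point $x$ means. When $x$ is isolated from the right in $S$, every such sequence is eventually equal to $x$ and the claim is trivial, so this formulation covers the degenerate case automatically and no separate treatment is required.

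The two key steps are then immediate. Since $f$ is non-decreasing on $S$ and $y_n \geq x$ with $x, y_n \in S$, we have $f(y_n) \geq f(x)$ for all $n$, whence $\liminf_n f(y_n) \geq f(x)$. Since $f$ is usc on $S$ and $y_n \to x$ with $x, y_n \in S$, the sequential characterization of upper semicontinuity for the topology induced on $S$ gives $\limsup_n f(y_n) \leq f(x)$. Combining the two bounds yields $f(x) \leq \liminf_n f(y_n) \leq \limsup_n f(y_n) \leq f(x)$, so the limit exists and equals $f(x)$; as $x$ and the sequence were arbitrary, $f$ is right-continuous on $S$.

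The only points requiring care, and the closest thing to an obstacle, are bookkeeping rather than mathematical. I must make sure the inequalities are read correctly in $\mathbb{R} \cup \{\pm\infty\}$: if $f(x) = +\infty$ then $f(y_n) \geq f(x)$ forces $f(y_n) = +\infty$, while if $f(x) = -\infty$ then upper semicontinuity forces $f(y_n) \to -\infty$, so the sandwich remains valid in both extreme cases. I must also invoke the sequential form of upper semicontinuity with respect to the subspace topology on $S$, which is legitimate because $S \subset \mathbb{R}$ is metrizable, so that the closed-superlevel-set definition of ``usc on $S$'' and the sequential one coincide. No monotone-subsequence extraction or compactness argument is needed; the entire proof is the one-line sandwich above.
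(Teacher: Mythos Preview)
Your argument is correct and is essentially identical to the paper's own proof: fix a point in $S$, take a sequence in $S$ converging to it from above, use monotonicity for the $\liminf$ bound and upper semicontinuity for the $\limsup$ bound, and sandwich. The extra remarks you add about extended-real values and the sequential characterization of usc in the subspace topology are sound but not needed for the paper's level of detail.
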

\begin{proof}
Let $(x_{n})_{n \geq 1} \subset S$ be a sequence converging to some $x^{*}$ from above. Then $x^{*} \in S$ since  $S$ is closed.   As $x \in S \to f(x)$ is non-decreasing, for all $n \geq 1$ we have that $f(x_{n}) \geq f(x^{*})$ and thus $\liminf_{n} f(x_{n}) \geq f(x^{*})$. Now as $f$ is usc on $S$, we get that $\limsup_{n} f(x_{n}) \leq f(x^{*})$. The right continuity of $f$ on $S$ follows immediately.
\end{proof}\\
We now establish a useful  extension  of Lemma \ref{completemes}.
\begin{lemma}																			
\label{completemes2}
Let $f:\Omega^t \times \mathbb{R} \to \mathbb{R} \cup \{\pm \infty\}$  be an $\overline{\Fc}_{t}\otimes \mathcal{B}(\mathbb{R})$-measurable function such that for all $\o^{t} \in \Omega^{t}$, $x \in \mathbb{R} \to f(\o^{t},x)$ is usc and non-decreasing. Then, there exists some ${\Fc}_{t}\otimes \mathcal{B}(\mathbb{R})$-measurable function $g$ from $\Omega^t \times \mathbb{R}$ to $\mathbb{R} \cup \{\pm \infty\}$ and some $\Omega^{t}_{mes} \in \mathcal{F}_{t}$ such that $P_{t}(\Omega^{t}_{mes})=1$ and $f(\o^{t},x)=g(\o^{t},x)$ for all $(\o^{t},x) \in \Omega^{t}_{mes}\times \mathbb{R}$.
\end{lemma}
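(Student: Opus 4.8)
The plan is to exploit that each section $x\mapsto f(\o^t,x)$ is non-decreasing and usc, hence right-continuous by Lemma \ref{getrightcont} applied with $S=\mathbb{R}$, so that $f(\o^t,\cdot)$ is entirely determined by its restriction to the rationals $\mathbb{Q}$. This turns a problem involving uncountably many values of $x$ into countably many scalar modifications, each handled by Lemma \ref{completemes}.

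First, for each fixed $q\in\mathbb{Q}$ the section $\o^t\mapsto f(\o^t,q)$ is $\overline{\Fc}_t$-measurable, being the composition of $f$ with the $\overline{\Fc}_t/(\overline{\Fc}_t\otimes\mathcal{B}(\mathbb{R}))$-measurable map $\o^t\mapsto(\o^t,q)$. Applying Lemma \ref{completemes}, in its routine extension to $\mathbb{R}\cup\{\pm\infty\}$-valued maps (obtained for instance by composing with the homeomorphism $x\mapsto x/(1+|x|)$ of $[-\infty,+\infty]$ onto $[-1,1]$, modifying the resulting bounded function, and inverting), I would obtain an $\Fc_t$-measurable $g_q:\Omega^t\to\mathbb{R}\cup\{\pm\infty\}$ together with a set $\Omega^t_q\in\Fc_t$ such that $P_t(\Omega^t_q)=1$ and $f(\o^t,q)=g_q(\o^t)$ for every $\o^t\in\Omega^t_q$. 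Since $\mathbb{Q}$ is countable, $\Omega^t_{mes}:=\bigcap_{q\in\mathbb{Q}}\Omega^t_q$ lies in $\Fc_t$ and satisfies $P_t(\Omega^t_{mes})=1$; on this single set $f(\o^t,q)=g_q(\o^t)$ holds simultaneously for all rational $q$.

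Next I would reconstruct a jointly measurable candidate by the right-continuous interpolation $g(\o^t,x):=\inf\{g_q(\o^t):q\in\mathbb{Q},\ q>x\}$. To check that $g$ is $\Fc_t\otimes\mathcal{B}(\mathbb{R})$-measurable, write $g=\inf_{q\in\mathbb{Q}}h_q$, where $h_q(\o^t,x)$ equals $g_q(\o^t)$ on the measurable set $\Omega^t\times(-\infty,q)$ and $+\infty$ elsewhere; each $h_q$ is $\Fc_t\otimes\mathcal{B}(\mathbb{R})$-measurable and a countable infimum preserves measurability. Finally, for $\o^t\in\Omega^t_{mes}$ the agreement on the rationals gives $g(\o^t,x)=\inf_{q\in\mathbb{Q},\,q>x}f(\o^t,q)$, and since $f(\o^t,\cdot)$ is non-decreasing and right-continuous this infimum equals $f(\o^t,x)$; hence $f=g$ on $\Omega^t_{mes}\times\mathbb{R}$, as required.

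The main obstacle is not any single estimate but the bookkeeping needed to pass from the modifications $g_q$, defined separately for each value of $x$, to one jointly measurable $g$ on all of $\Omega^t\times\mathbb{R}$: one must both secure a common full-measure set independent of $x$ (handled by the countable intersection) and guarantee that the interpolation formula produces a genuinely $\Fc_t\otimes\mathcal{B}(\mathbb{R})$-measurable function, including off the null set (handled by the $h_q$ decomposition). It is precisely the right-continuity supplied by Lemma \ref{getrightcont} that lets one recover $f$ from its countably many rational sections and thereby reduce to Lemma \ref{completemes}.
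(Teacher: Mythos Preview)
Your proof is correct and follows essentially the same strategy as the paper's: apply Lemma~\ref{completemes} at a countable dense set of $x$-values, intersect the resulting full-measure sets to obtain a single $\Omega^{t}_{mes}$, and reconstruct $g$ from these countably many scalar modifications using the right-continuity supplied by Lemma~\ref{getrightcont}. The only cosmetic difference is that the paper uses dyadic rationals $k/2^{n}$ and recovers $g$ as the limit of step functions $g_{n}(\o^{t},x)=\sum_{k}1_{(\frac{k-1}{2^{n}},\frac{k}{2^{n}}]}(x)g_{n,k}(\o^{t})$, whereas you use all of $\mathbb{Q}$ and the infimum formula $g(\o^{t},x)=\inf_{q>x}g_{q}(\o^{t})$; both devices exploit right-continuity in the same way and yield the same conclusion.
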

\begin{remark}
In particular, for all $\o^{t} \in \Omega^{t}_{mes}$, $x \in \mathbb{R} \to g(\o^{t},x)$ is usc and non-decreasing.
\end{remark}
\begin{proof}
Let $n \geq 1$ and $k \in \mathbb{Z}$ be fixed. We apply Lemma \ref{completemes} to $f(\cdot)=f(\cdot,\frac{k}{2^{n}})$ that is $\overline{\mathcal{F}}_{t}$-measurable by assumption and we get some $\mathcal{F}_{t}$-measurable $g_{n,k}: \Omega^{t} \to \mathbb{R} \cup \{\pm \infty\}$  and  some $\Omega^{t}_{n,k} \in \mathcal{F}_{t}$ such that $P_{t}(\Omega^{t}_{n,k})=1$ and  $\Omega^{t}_{n,k}  \subset \left\{\o^{t} \in \Omega^{t},\; f(\o^{t},\frac{k}{2^{n}}) = g_{n,k}(\o^{t})\right\}$. We set
\begin{align}
\label{omegames}
\Omega^{t}_{mes}:= \bigcap_{n \geq 1, k \in \mathbb{Z}} \Omega^{t}_{n,k}.
\end{align}
It is clear  that $\Omega^{t}_{mes} \in \mathcal{F}_{t}$ and that $P_{t}(\Omega^{t}_{mes})=1$.\\
Now, we define for all $n \geq 1$, $g_{n}: \Omega^{t} \times \mathbb{R} \to \mathbb{R} \cup\{\pm \infty\}$ by
\begin{align*}
g_{n}(\o^{t},x):= \sum_{k \in \mathbb{Z}} 1_{\big(\frac{k-1}{2^{n}},\frac{k}{2^{n}}\big]}(x)g_{n,k}(\o^{t}).
\end{align*}
It is clear  that $g_{n}$ is ${\Fc}_{t}\otimes \mathcal{B}(\mathbb{R})$-measurable for all $n \geq 1$. Finally, we define $g: \Omega^{t} \times \mathbb{R} \to \mathbb{R} \cup\{\pm \infty\}$ by
\begin{align}
\label{defg}
g(\o^{t},x):= \lim_{n} g_{n}(\o^{t},x).
\end{align}
Then $g$ is again ${\Fc}_{t}\otimes \mathcal{B}(\mathbb{R})$-measurable and it remains to prove  that $f(\o^{t},x)=g(\o^{t},x)$  for all $(\o^{t},x) \in \Omega^{t}_{mes}\times \mathbb{R}$. Let $(\o^{t},x) \in \Omega^{t}_{mes}\times \mathbb{R}$ be fixed. For all $n \geq 1$, there exists $k_{n} \in \mathbb{Z}$ such that $\frac{k_n-1}{2^{n}} < x \leq \frac{k_n}{2^{n}}$ and such that  $g_{n}(\o^{t},x)=g_{n,k_n}(\o^{t})=f(\o^{t},\frac{k_n}{2^{n}})$. Applying Lemma \ref{getrightcont} to $f(\cdot)=f(\o^{t},\cdot)$ (and $S=\mathbb{R}$), we get that  $x \in \mathbb{R} \to f(\o^{t},x)$ is right-continuous on $\mathbb{R}$. As $\left(\frac{k_n}{2^{n}}\right)_{n \geq 1}$ converges to $x$ from above,  it follows that $g(\o^{t},x)= \lim_{n}  f(\o^{t},\frac{k_n}{2^{n}})=f(\o^{t},x)$ and this concludes the proof.
\end{proof}\\

Finally, we introduce the following definition.
\begin{definition}
\label{extcara}
Let  $S$ be a closed interval of $\mathbb{R}$. A function $f: \Omega^{t} \times S \to \mathbb{R}$ is an  extended Carath\'eodory function if
\begin{itemize}
\item[i)] for all $\o^{t} \in \Omega^{t}$, $x \in S \to f(\o^{t},x)$ is right-continuous,
\item[ii)] for all $x \in S$, $\o^{t} \in \Omega^{t} \to f(\o^{t},x)$ is  $\mathcal{F}_{t}$-measurable.
\end{itemize}
\end{definition}
And we prove the following lemma that is an extension of a well-know result on Carath\'eodory functions (see for example 4.10 in \cite{Hitch})
\begin{lemma}																			
\label{cara}
Let $S \subset \mathbb{R}$ be a closed interval of $\mathbb{R}$ and $f: \Omega^{t} \times S \to \mathbb{R}$ be  an extended Carath\'eodory function.  Then $f$ is $\mathcal{F}_{t} \otimes \mathcal{B}(\mathbb{R})$-measurable.
\end{lemma}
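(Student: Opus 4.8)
The plan is to realize $f$ as the pointwise limit of a sequence $(f_n)_{n\geq 1}$ of functions that are manifestly $\mathcal{F}_{t}\otimes\mathcal{B}(\mathbb{R})$-measurable because each is piecewise constant in $x$ with $\mathcal{F}_{t}$-measurable coefficients. The device is dyadic rounding to the right. For $x\in S$ and $n\geq 1$, let $v_{n}(x):=\min\{k/2^{n}:\ k\in\mathbb{Z},\ k/2^{n}\geq x\}$ be the smallest dyadic point of level $n$ lying at or above $x$, and set
$$r_{n}(x):=\min\big(v_{n}(x),\ \sup S\big),$$
the capping at $\sup S$ being active only when $S$ is bounded above. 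Then $r_{n}(x)\in S$, $r_{n}(x)\geq x$, and $0\leq r_{n}(x)-x<2^{-n}$, so that $r_{n}(x)\downarrow x$ as $n\to\infty$. I would then define $f_{n}(\o^{t},x):=f(\o^{t},r_{n}(x))$.

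First I would verify joint measurability of each $f_n$. The map $x\mapsto r_{n}(x)$ takes only countably many values, namely the points of $S\cap 2^{-n}\mathbb{Z}$ together with the right endpoint of $S$ (when finite), and for each such value $v$ the level set $\{x\in S:\ r_{n}(x)=v\}$ is an interval, hence Borel. Consequently
$$f_{n}(\o^{t},x)=\sum_{v}\, f(\o^{t},v)\,1_{\{r_{n}(\cdot)=v\}}(x),$$
a countable sum whose terms are products of an $\mathcal{F}_{t}$-measurable function of $\o^{t}$ (by $ii)$ of Definition \ref{extcara}) and a Borel function of $x$; thus $f_{n}$ is $\mathcal{F}_{t}\otimes\mathcal{B}(\mathbb{R})$-measurable. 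Next I would pass to the limit: fixing $\o^{t}$, condition $i)$ of Definition \ref{extcara} gives right-continuity of $x\mapsto f(\o^{t},x)$, and since $r_{n}(x)\downarrow x$ with $r_{n}(x)\geq x$ we get $f_{n}(\o^{t},x)=f(\o^{t},r_{n}(x))\to f(\o^{t},x)$ for every $(\o^{t},x)\in\Omega^{t}\times S$. As a pointwise limit of $\mathcal{F}_{t}\otimes\mathcal{B}(\mathbb{R})$-measurable functions, $f$ is itself $\mathcal{F}_{t}\otimes\mathcal{B}(\mathbb{R})$-measurable. This is exactly the dyadic-approximation mechanism already used in the proof of Lemma \ref{completemes2}.

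The only delicate point, and hence the main obstacle, is the bookkeeping at the right endpoint of $S$: when $S=[a,b]$, for $x$ close to $b$ the dyadic ceiling $v_{n}(x)$ may fall outside $S$, which is precisely why $r_{n}$ must be capped at $b$. One then checks that $r_{n}(x)\in[x,b]\subset S$ and that $r_{n}(x)\to x$ still holds, while at $x=b$ one has $r_{n}(b)=b$ and the convergence is trivial. When $S$ is unbounded above no capping is needed, and the remaining cases ($S$ a half-line or all of $\mathbb{R}$) are handled identically. No analytic difficulty arises beyond this endpoint verification; the upper semicontinuity and monotonicity of $f$ are not even needed here, only the right-continuity guaranteed by Definition \ref{extcara}.
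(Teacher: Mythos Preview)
Your argument is correct and follows essentially the same route as the paper's proof: approximate $f$ by step functions obtained via dyadic ceiling in the $x$-variable, verify joint measurability of each step function from condition $ii)$, and pass to the limit using right-continuity from condition $i)$. Your version is in fact slightly more careful than the paper's: the paper writes $f_{n}(\o^{t},x)=\sum_{k\in\mathbb{Z}}1_{(\frac{k-1}{2^{n}},\frac{k}{2^{n}}]}(x)\,1_{S}(\tfrac{k}{2^{n}})\,f(\o^{t},\tfrac{k}{2^{n}})$ without capping, which strictly speaking fails to converge at $x=\sup S$ when $S$ is bounded above and $\sup S$ is not dyadic; your capping $r_{n}(x)=\min(v_{n}(x),\sup S)$ cleanly disposes of this endpoint issue.
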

\begin{proof}
We define for all $n \geq 1$, $f_{n}:\Omega^{t} \times \mathbb{R} \to \mathbb{R}$ by
$$f_{n}(\o^{t},x):= \sum_{k \in \mathbb{Z}} 1_{\big(\frac{k-1}{2^{n}},\frac{k}{2^{n}}\big]}(x) 1_{S}(\frac{k}{2^{n}}) f(\o^{t},\frac{k}{2^{n}}) .$$
It is clear that $f_{n}$ is $\mathcal{F}_{t} \otimes \mathcal{B}(\mathbb{R})$-measurable. From the right continuity of $f$, we can show  as in the proof of Lemma \ref{completemes2} that  $f(\o^{t},x)=\lim_{n} f_{n}(\o^{t},x)$ for all $(\o^{t},x) \in \Omega^{t} \times S$ and the proof is complete (recall that  $ \O\times S \in \mathcal{F}_{t}\otimes \mathcal{B}(\mathbb{R})$ as $S$ is a closed subset of $\mathbb{R}$).\end{proof}
\begin{remark}
\label{overlineF}
Note that we have the same result if we replace $\mathcal{F}_{t}$ with $\overline{\mathcal{F}_{t}}$.
\end{remark}

\subsection{Proof of technical results}
\label{proofofres}
Finally, we provide the missing results and proofs of the paper.
We start with the following results from Section \ref{se2}.\\

\begin{proof}[of Lemma \ref{lienespcond}]
We refer to Section 6.1 of \cite{CR14} for the definition and various properties of generalized conditional expectations. In particular since $ E(h^{+})=\int_{\O^{t}} h^{+}dP_{t}<\infty$, $E(h| \mathcal{F}_{s})$ is well-defined (in the generalised sense) for all $0 \leq s \leq t$ (see Lemma 6.2 of \cite{CR14} ).  Similarly, from Proposition \ref{fubinirem} we have  that $\f: \O^{s} \to \mathbb{R} \cup \{\pm \infty\}$ is well-defined (in the generalised sense) and $\Fc_s$-measurable.\\
As $\f(X_1,\ldots,X_s)$ is $\Fc_s$-measurable, it remains  to prove that $E(gh)=E(g \f(X_1,\ldots,X_s))$ for all $g: \Omega^{s} \to \mathbb{R}_{+}$ non-negative, $\mathcal{F}_{s}$-measurable and such that $E(gh)$ is well-defined in the generalised sense, $i.e$ such that $E\left(gh\right)^{+}<\infty$ or $E\left(gh\right)^{-}<\infty$. Recalling the notations of the beginning of Section \ref{se2} and  using the Fubini Theorem for the third and fourth equality (see Proposition \ref{fubinirem} and Remark \ref{remneg}), we get that
\begin{small}
\begin{eqnarray*}
E(gh) & = & E(g(X_1,\ldots,X_s)h(X_1,\ldots,X_t))=\int_{\O^T} g(\o_1,\ldots,\o_s)h(\o_1,\ldots,\o_t)P(d \o^T)\\
 &= & \int_{\O^t} g(\o_1,\ldots,\o_s)h(\o_1,\ldots,\o_t)q_{t}(\o_{t}|\o^{t-1}) \ldots q_{s+1}(\o_{s+1}|\o^s)P_s(d \o^s)\\
  &= & \int_{\O^s} g(\o_1,\ldots,\o_s)\left(\int_{\O_{s+1}\times \ldots \times \O_t} h(\o_1,\ldots,\o_s,\o_{s+1},\ldots,\o_t)q_{t}(\o_{t}|\o^{t-1}) \ldots q_{s+1}(\o_{s+1}|\o^s)\right)P_s(d \o^s)\\
   &= & \int_{\O^s} g(\o_1,\ldots,\o_s)\f(\o_1,\ldots,\o_s)P_s(d \o^s)\\
      & = & E(g(X_1,\ldots,X_s)\f(X_1,\ldots,X_t)),
\end{eqnarray*}
\end{small}
which concludes the proof.
\end{proof}\\

 We give now the proof of results of Section  \ref{secna}.\\
\begin{proof}[of Lemma \ref{Dmeasurability}]
We  first prove that $\widetilde{D}^{t+1}$ is a non-empty, closed-valued and  $\mathcal{F}_{t}$-measurable random set. It is clear from its  definition (see \eqref{defd1}) that for all $\o^{t} \in \O^{t}$,  $\widetilde{D}^{t+1}(\o^{t})$ is a non-empty and closed subset of $\mathbb{R}^{d}$. We now show that $\widetilde{D}^{t+1}$ is  measurable. Let $O$ be a fixed open set in $\mathbb{R}^{d}$ and introduce
\begin{eqnarray*}
\mu_{O}: \omega^{t} \in \O^{t} \to \mu_{O}(\o^{t}) &:= & q_{t+1}\left(\Delta S_{t+1}(\o^{t},.) \in O|\o^{t}\right)\\
 &= & \int_{\O_{t+1}}1_{\Delta S_{t+1}(\cdot,\cdot) \in O}(\o^{t},\o_{t+1})q_{t+1}(d\o_{t+1}|\o^{t}).
\end{eqnarray*}
We prove  that  $\mu_{O}$ is $\mathcal{F}_{t}$-measurable.
 As $(\o^{t},\o_{t+1}) \in \O^{t}\times \O_{t+1}  \to \Delta S_{t+1}(\o^{t},\o_{t+1})$ is $\mathcal{F}_{t}\otimes \mathcal{G}_{t+1}$-measurable and  $O \in \mathcal{B}(\mathbb{R}^{d})$,  $(\o^{t},\o_{t+1}) \to 1_{\Delta S_{t+1}(\cdot,\cdot) \in O}(\o^{t},\o_{t+1})$ is $\mathcal{F}_{t}\otimes \mathcal{G}_{t+1}$-measurable and the result follows from Proposition \ref{LemmaA1}.\\
By definition of  $\widetilde{D}^{t+1}(\o^{t})$ we get that
$$\{\o^{t} \in \Omega^t,\; \widetilde{D}^{t+1}(\o^{t}) \cap O \neq \emptyset \} =\{\o^{t} \in \Omega^t, \; \mu_{O}(\omega^{t})>0\} \in \mathcal{F}_{t}.$$
Next we prove that $D^{t+1}$ is a non-empty, closed-valued and  $\mathcal{F}_{t}$-measurable random set. Using \eqref{defd2}, $D^{t+1}$ is  a non-empty and  closed-valued random set. It remains to  prove that $D^{t+1}$ is $\mathcal{F}_{t}$-measurable. As $\widetilde{D}^{t+1}$ is  $\mathcal{F}_{t}$-measurable, applying the Castaing representation   (see Theorem 2.3
in Chapter 1 of \cite{Molchanov} or Theorem 14.5 of \cite{rw}), 
we obtain a countable family of  $\mathcal{F}_{t}$-measurable  functions $(f_{n})_{n \geq 1}:\Omega^t \to \mathbb{R}^{d}$ such that for all $\o^{t} \in \Omega^{t}$, $\widetilde{D}^{t+1}(\o^t)= \overline{\left\{f_{n}(\o^{t}),\; n\geq 1\right\}}$ (where the closure is taken in $\mathbb{R}^{d}$ with respect to the usual topology). Let $\o^{t} \in \O^{t}$ be fixed. It can be easily shown that
\begin{align}
D^{t+1}(\o^{t})= \mbox{Aff} (\widetilde{D}^{t+1}(\o^t)) =\overline{\left\{f_{1}(\o^{t})+ \sum_{i=2}^{p} \lambda_{i} (f_{i}(\o^{t})-f_{1}(\o^{t})),\; (\lambda_{2},\dots, \lambda_{p}) \in \mathbb{Q}^{p-1},\; p\geq 2\right\}}.
\end{align}
So, using again the Castaing representation (see Theorem 14.5 of \cite{rw}), we obtain that $D^{t+1}(\o^{t})$ is $\mathcal{F}_{t}$-measurable.
From Theorem 14.8 of \cite{rw}, $Graph(D^{t+1}) \in \mathcal{F}_{t} \otimes \mathcal{B}(\mathbb{R}^{d})$ (recall that $D^{t+1}$ is closed-valued).
\end{proof}\\

\begin{proof} [of Lemma \ref{localVectorSpace}]
Introduce $C^{t+1}(\o^{t}):=\overline{\mbox{Conv}} (\widetilde{D}^{t+1}(\o^{t}))$ the closed convex hull generated by $\widetilde{D}^{t+1}(\o^{t})$.  As $C^{t+1}(\o^{t}) \subset D^{t+1}(\o^{t})$ we will prove that $0 \in C^{t+1}(\o^{t})$.  Since $C^{t+1}(\o^{t}) \subset D^{t+1}(\o^{t})$ by assumption,  for all $h \in C^{t+1}(\o^{t}) \backslash\{0\}$
\begin{align}
\label{h0}
q_{t+1}(h\Delta S_{t+1}(\o^t,\cdot)\geq 0|\o^t)<1 .
\end{align}
Thus if we find some $h_0 \in C^{t+1}(\o^{t})$ such that $q_{t+1}(h_0\Delta S_{t+1}(\o^t,\cdot)\geq 0|\o^t)=1$ then $h_0=0$. We distinguish two cases.
First assume that for all $h \in \mathbb{R}^{d}$, $h \neq 0$, $q_{t+1}(h\Delta S_{t+1}(\o^{t},.) \geq 0|\o^{t})<1$. Then the polar cone of $C^{t+1}(\o^{t})$, $i.e$ the set
$$\left(C^{t+1}(\o^{t})\right)^{\circ}:=\{y \in \mathbb{R}^{d},\; yx\leq 0, \,  \forall \, x \in C^{t+1}(\o^{t})\}$$
is reduced to $\{0\}$. Indeed if this is not the case there exists $y_{0} \in \mathbb{R}^{d}$ such that  $-y_0x\geq 0$ for all $x \in C^{t+1}(\o^{t})$. As $A:=\{\o_{t+1} \in \Omega_{t+1},\; \Delta S_{t+1}(\o^{t},\o_{t+1}) \in \widetilde{D}^{t+1}(\o^{t})\} \subset\{\o_{t+1} \in \Omega_{t+1},\; -y_{0}\Delta S_{t+1}(\o^{t},\o_{t+1})\geq 0\}$ and $q_{t+1}(A|\o^{t})=1$ we obtain that $q_{t+1}(-y_0 \Delta S_{t+1}(\o^{t},\cdot) \geq 0|\o^{t})=1$ a contradiction.
As $\left(\left(C^{t+1}(\o^{t})\right)^{\circ}\right)^{\circ}= \mbox{cone}\left(C^{t+1}(\o^{t})\right)$ where $ \mbox{cone}\left(C^{t+1}(\o^{t})\right)$ denote the cone generated by $C^{t+1}(\o^{t})$ we get that $ \mbox{cone}\left(C^{t+1}(\o^{t})\right)=\mathbb{R}^{d}$. Let $u \neq 0  \in  \mbox{cone}\left(C^{t+1}(\o^{t})\right)$  then $-u \in  \mbox{cone}\left(C^{t+1}(\o^{t})\right)$  and there exist $\lambda_{1}>0$, $\lambda_{2}>0$ and $v_{1},\; v_{2}  \in C^{t+1}(\o^{t})$ such that $u=\lambda_{1} v_{1}$ and $-u=\lambda_{2} v_{2}$. Thus $0=\frac{\l_1}{\l_1+ \l_2} v_{1} + \frac{\l_2}{\l_1+ \l_2} v_{2} \in C^{t+1}(\o^{t})$ by convexity of  $C^{t+1}(\o^{t})$. \\
Now we assume that there exists some $h _{0}\in \mathbb{R}^{d}$, $h_{0} \neq 0$ such that  $q_{t+1}(h_0\Delta S_{t+1}(\o^{t},.) \geq 0|\o^{t})=1$. Note that since $h_{0} \in \mathbb{R}^{d}$ we cannot use \eqref{h0}. Introduce the orthogonal projection on $C^{t+1}(\o^{t})$ (recall that $C^{t+1}(\o^{t})$ is a closed convex subset of $\mathbb{R}^{d}$) $$p: h \in \mathbb{R}^{d} \to p(h) \in C^{t+1}(\o^{t}).$$ Then $p$ is continuous and we have $\left(h-p(h)\right)\left(x-p(h)\right) \leq 0$ for all $x \in C^{t+1}(\o^{t})$. Fix $\o_{t+1} \in \{\o_{t+1} \in \O_{t+1},\; \Delta S_{t+1}(\o^{t},\o_{t+1}) \in \widetilde{D}^{t+1}(\o^{t})\} \cap \{\o_{t+1} \in \O_{t+1},\; h_{0}\Delta S_{t+1}(\o^{t},\o_{t+1})\geq 0\} $ and $\lambda \geq 0$. Let $h=\lambda h_{0}$ and $x= \Delta S_{t+1}(\o^{t},\o_{t+1}) \in C^{t+1}(\o^{t})$ in the previous equation, we obtain (recall that  $\widetilde{D}^{t+1}(\o^{t}) \subset C^{t+1}(\o^{t})$) \begin{align*}
0 \leq \lambda h_{0} \Delta S_{t+1}(\o^{t},\o_{t+1}) &= \left(\lambda h_0-p(\lambda h_0)\right) \Delta S_{t+1}(\o^{t},\o_{t+1}) + p(\lambda h_0)  \Delta S_{t+1}(\o^{t},\o_{t+1}) \\
&\leq \left(\lambda h_0-p(\lambda h_0)\right)p(\lambda h_0) +p(\lambda h_0)  \Delta S_{t+1}(\o^{t},\o_{t+1}).
\end{align*}
As this is true for all $\lambda \geq 0$ we may take the limit when $\lambda$ goes to zero and use the continuity of $p$
$$ p(0)  \Delta S_{t+1}(\o^{t},\o_{t+1}) \geq  |p(0)|^{2} \geq 0$$
As $q_{t+1}\left( \left\{\o_{t+1} \in \O_{t+1},\; \Delta S_{t+1}(\o^{t},\o_{t+1}) \in \widetilde{D}^{t+1}(\o^{t})\right\}|\o^t\right)=1$ by definition of $\widetilde{D}^{t+1}(\o^{t})$ and as \\ $q_{t+1}(h_0\Delta S_{t+1}(\o^{t},.) \geq 0|\o^{t})=1$ as well we have obtained that $$q_{t+1}( p(0)  \Delta S_{t+1}(\o^{t},\cdot) \geq  0|\o^{t})=1.$$
The fact that  $p(0) \in  C^{t+1}(\o^{t})$ together with \eqref{h0} implies that $p(0)=0$ and  $0\in C^{t+1}(\o^{t})$ follows.\\
\end{proof}\\

The following lemma has been used in the proof of Lemma \ref{localNA}. It corresponds to Lemma 2.5 of \cite{Nutz}
\begin{lemma}
\label{CfNutz}
Let $\omega^{t} \in \Omega^{t}$ be fixed. Recall that
$L^{t+1}(\omega^{t}):=\left(D^{t+1}(\omega^{t})\right)^{\bot}$ is the orthogonal space of $D^{t+1}(\omega^{t})$ (see \eqref{lt}). Then for $h \in \mathbb{R}^{d}$ we have that
$$ q_{t+1}(h \Delta S_{t+1}(\omega^{t},\cdot)=0|\omega^{t})=1 \iff h \in L^{t+1}(\omega^{t}).$$
\end{lemma}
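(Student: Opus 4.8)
The plan is to exploit the two defining properties of the support $\widetilde{D}^{t+1}(\omega^t)$ (see Definition \ref{DefD} and Lemma \ref{Dmeasurability}): on the one hand it carries full mass, i.e. $q_{t+1}(\Delta S_{t+1}(\omega^t,\cdot)\in\widetilde{D}^{t+1}(\omega^t)|\omega^t)=1$, and on the other hand, by \eqref{defd1}, it is the smallest closed set with this property. Since $D^{t+1}(\omega^t)=\mbox{Aff}(\widetilde{D}^{t+1}(\omega^t))$, a vector $h$ is orthogonal to every point of $D^{t+1}(\omega^t)$ if and only if it is orthogonal to every point of $\widetilde{D}^{t+1}(\omega^t)$; thus $L^{t+1}(\omega^t)=(D^{t+1}(\omega^t))^{\bot}=(\widetilde{D}^{t+1}(\omega^t))^{\bot}$. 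This identification is the bridge between the measure-theoretic condition and the geometric one.

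For the implication $\Leftarrow$, I would take $h\in L^{t+1}(\omega^t)$, so that $hx=0$ for all $x\in\widetilde{D}^{t+1}(\omega^t)$. Using that $\Delta S_{t+1}(\omega^t,\omega_{t+1})\in\widetilde{D}^{t+1}(\omega^t)$ for $q_{t+1}(\cdot|\omega^t)$-almost every $\omega_{t+1}$, one gets $h\Delta S_{t+1}(\omega^t,\omega_{t+1})=0$ on a set of full $q_{t+1}(\cdot|\omega^t)$-measure, which is exactly $q_{t+1}(h\Delta S_{t+1}(\omega^t,\cdot)=0|\omega^t)=1$.

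For the implication $\Rightarrow$, the case $h=0$ is immediate, so I would assume $h\neq 0$ and introduce the closed set $A:=\{x\in\mathbb{R}^d,\ hx=0\}$. The hypothesis $q_{t+1}(h\Delta S_{t+1}(\omega^t,\cdot)=0|\omega^t)=1$ reads $q_{t+1}(\Delta S_{t+1}(\omega^t,\cdot)\in A|\omega^t)=1$, so $A$ is one of the closed full-measure sets appearing in the intersection \eqref{defd1}. By minimality of the support, $\widetilde{D}^{t+1}(\omega^t)\subset A$, hence $hx=0$ for every $x\in\widetilde{D}^{t+1}(\omega^t)$, i.e. $h\in(\widetilde{D}^{t+1}(\omega^t))^{\bot}=L^{t+1}(\omega^t)$.

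I expect the only genuinely delicate point to be the identity $(D^{t+1}(\omega^t))^{\bot}=(\widetilde{D}^{t+1}(\omega^t))^{\bot}$, which rests on the fact that passing to the affine hull does not enlarge the linear span and hence leaves the orthogonal complement unchanged; everything else is a direct translation between ``$h$ annihilates the support'' and ``$h\Delta S_{t+1}=0$ almost surely'', using only that the support is a closed full-measure set and is minimal among such sets.
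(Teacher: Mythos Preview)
Your proof is correct. The $\Leftarrow$ direction is handled the same way as in the paper (using that $\Delta S_{t+1}(\omega^t,\cdot)$ lies in $\widetilde{D}^{t+1}(\omega^t)$ almost surely). For the $\Rightarrow$ direction, however, you take a slightly different and cleaner route than the paper: you use directly the \emph{minimality} of the support, namely that the closed hyperplane $A=\{x:hx=0\}$ has full measure and therefore contains $\widetilde{D}^{t+1}(\omega^t)$ by \eqref{defd1}. The paper instead argues by contrapositive, picking $v\in\widetilde{D}^{t+1}(\omega^t)$ with $hv\neq 0$ and invoking the dual characterisation of the support (every open ball around a support point has positive measure) to produce a set of positive $q_{t+1}(\cdot|\omega^t)$-measure where $h\Delta S_{t+1}\neq 0$. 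Both arguments rest on the same identity $(\widetilde{D}^{t+1}(\omega^t))^{\bot}=(D^{t+1}(\omega^t))^{\bot}$, which you justify correctly via the inclusion $\mbox{Aff}(\widetilde{D}^{t+1})\subset\mbox{span}(\widetilde{D}^{t+1})$; your approach simply avoids the detour through open balls.
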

\begin{proof}
Assume that $h \in L^{t+1}(\omega^{t})$. Then $\{\omega \in \O_t,\;  \Delta S_{t+1}(\omega^{t},\omega) \in D^{t+1}(\omega^t)\} \subset \{\omega \in \O_t, h \Delta S_{t+1}(\omega^{t},\omega)=0\}$. As by definition of $D^{t+1}(\o^{t})$, $q_{t+1}(\Delta S_{t+1}(\omega^{t},.) \in D^{t+1}(\omega^{t})|\o^{t})=1$, we conclude that $q_{t+1}(h \Delta S_{t+1}(\omega^{t},.)=0|\omega^{t})=1.$
Conversely, we assume that  $ h \notin L^{t+1}(\omega^{t})$ and we show that $q_{t+1}(h \Delta S_{t+1}(\omega^{t},.)=0|\omega^{t})<1$.
We first show that there exists $v \in \widetilde{D}^{t+1}(\omega^{t})$ such that $hv \neq 0$. If not, for all $v \in \widetilde{D}^{t+1}(\omega^{t})$,  $hv = 0$
and for any  $w \in{D}^{t+1}(\omega^{t})$ with $w=\sum_{i=1}^m \l_i v_i$ where $\l_i \in \mathbb{R}$,
$\sum_{i=1}^m \l_i=1$ and $v_i  \in \widetilde{D}^{t+1}(\omega^{t})$, we get that $hw = 0$, a contradiction.
Furthermore there exists an open ball centered in $v$ with radius $\varepsilon>0$, $B(v,\varepsilon)$,  such that $hv' \neq 0$ for all $v' \in B(v,\varepsilon)$. Assume that  $q_{t+1}( \Delta S_{t+1}(\omega^{t},.) \in B(v,\varepsilon)|\o^{t})=0$ or equivalently that
$q_{t+1}( \Delta S_{t+1}(\omega^{t},.) \in \mathbb{R}^d \setminus B(v,\varepsilon)|\o^{t})=1$. By definition of the support,
$ \widetilde{D}^{t+1}(\o^{t}) \subset \mathbb{R}^d \setminus B(v,\varepsilon)$: this contradicts $v \in \widetilde{D}^{t+1}(\omega^{t})$. Therefore
$q_{t+1}( \Delta S_{t+1}(\omega^{t},.) \in B(v,\varepsilon)|\o^{t})>0$. Let $\o \in  \{ \Delta S_{t+1}(\omega^{t},.) \in B(v,\varepsilon)\}$, then
$h\Delta S_{t+1}(\omega^{t},\o) \neq 0$ $i.e$ $q_{t+1}(h \Delta S_{t+1}(\omega^{t},.)=0|\omega^{t}))<1$.
\end{proof}\\

We prove now the following result of Section 5.\\
\begin{proof}[of Proposition \ref{fat}]
We start with the proof of  \eqref{Vplus} when $h \in D_{x}$.
Since $D$ is a vectorial subspace of $\mathbb{R}^d$ and $0 \in \mathcal{H}_{x}$, the affine hull of $D_{x}$ is also a vector space that we denote  by $\mbox{Aff} (D_x) $.
If $x \leq 1$ we have by Assumption \ref{samedi} that for all $\omega \in \overline{\Omega}$, $h \in D_{x}$,
\begin{eqnarray}
\label{inf1}
V^{+}(\o,x+hY(\o)) \leq V^{+}\left (\o,1+ h Y(\o)\right).
\end{eqnarray}
If $x>1$ using Assumption \ref{ae1} (see \eqref{elasticplus} in Remark \ref{remext}) we get that
for all $\omega \in \overline{\Omega}$, $h \in D_{x}$
\begin{eqnarray}
\label{sup1}
V^{+}(\o,x+hY(\o))=V^{+}\left(2x\left(\frac{1}{2}+\frac{h}{2x}Y(\o)\right)\right) \leq  (2x)^{\overline{\gamma}}K\left(V^{+}\left(\o,1+ \frac{h}{2x} Y(\o)\right)+ C(\o)\right).
\end{eqnarray}
First we treat the case of $Dim (\mbox{Aff} (D_x) )=0$, $i.e$ $D_{x}=\{0\}$.
For all $\omega \in \overline{\Omega}$, $h \in D_{x}=\{0\}$, using \eqref{inf1} and \eqref{sup1}, we obtain that
\begin{align}
\label{Ineq1}
V^{+}(\o,x+hY(\o)) \leq V^{+}(\o,1) + (2x)^{\overline{\gamma}}K\left(V^{+}\left(\o,1\right)+ C(\o)\right) \leq ((2x)^{\overline{\gamma}}K+1)(V^{+}(\o,1)+C(\o)).
\end{align}
We assume now that $Dim(\mbox{Aff} (D_x) )>0$. If $x=0$ then $Y=0$ $Q$-a.s. If  this is not the case then we should have $D_{0}=\{0\}$ a contradiction. Indeed if there exists some $h \in D_{0}$ with $h \neq 0$,  then $Q\left(\frac{h}{|h|}Y(\cdot)<0\right)>0$ by Assumption \ref{AOAone} which contradicts $h \in D_{0}$. So for $x=0$, $Y=0$ $Q$-a.s and by Assumption \ref{samedi} we get that for all $\omega \in \overline{\Omega}$, $h \in D_{0}$,
$$V^{+}(\o,0+hY(\o)) \leq V^{+}(\o,1).$$
From now we assume  that $x>0$. Then as for $g \in \mathbb{R}^{d}$, $g \in D_{x}$ if and only if $\frac{g}{x} \in D_{1}$, we have that $\mbox{Aff} (D_x) =\mbox{Aff} (D_1)$. We set $d':=Dim(\mbox{Aff} (D_1))$.  Let $(e_{1},\dots,e_{d'})$ be an orthonormal basis of $\mbox{Aff} (D_1)$ (which is a sub-vector space of $\mathbb{R}^{d}$) and $\varphi$ $:$ $(\lambda_{1},\dots ,\lambda_{d'})\in \mathbb{R}^{d'} \to \Sigma_{i=1}^{d'} \lambda_{i} e_{i} \in \mbox{Aff} (D_1) $. Then $\varphi$ is an isomorphism (recall that $(e_{1},\dots,e_{d'})$ is a basis of $\mbox{Aff} (D_1) $). As $\varphi$ is linear and the spaces considered are of finite dimension, it is also an homeomorphism between $\mathbb{R}^{d'}$ and $\mbox{Aff} (D_1)$. Since  $D_{1}$ is compact by Lemma \ref{rast}, $\varphi^{-1}(D_{1})$ is a compact subspace of $\mathbb{R}^{d'}$ . So there exists some $c \geq 0$ such that for all $h=\Sigma_{i=1}^{d'} \lambda_{i} e_{i} \in D_1$, $|\lambda_{i}| \leq c$ for all $i=1,\ldots,d'$.
We complete the family of vector $(e_{1},\dots,e_{d'})$ in order to obtain an  orthonormal  basis of $\mathbb{R}^{d}$, denoted by $(e_{1},\dots,e_{d'}, e_{d'+1},\dots e_{d})$. For all $\o \in \O$, let $(y_{i}(\o))_{i=1,\dots,d}$ be the coordinate of $Y(\o)$ in this basis.\\
Now let $h\in D_x$ be fixed.  Then $\frac{h}{2x} \in  D_{\frac{1}{2}} \subset D_1$ and $\frac{h}{2x}=\Sigma_{i=1}^{d'} \lambda_{i} e_{i}$ for some $(\lambda_{1},\dots \lambda_{d'})\in \mathbb{R}^{d'}$ with $|\lambda_{i}| \leq c$ for all $i=1,\ldots,d'$. Note that as $\frac{h}{2x} \in D_1$, $\lambda_{i}=0$ for $i \geq d'+1$. Then as $(e_{1},\dots,e_{d})$ is an orthonormal basis of $\mathbb{R}^d$, we obtain for all $\omega \in \overline{\Omega}$
\begin{align*}
1 +\frac{h}{2x}  Y(\o) &= 1 + \Sigma_{i=1}^{d'} \lambda_{i}y_{i}(\o)\\
& \leq 1 + \Sigma_{i=1}^{d'} |\lambda_{i}| |y_{i}(\o)|\\
&\leq  1+ c \Sigma_{i=1}^{d'} |y_{i}(\o)|.
\end{align*}
Thus from Assumption \ref{samedi} for all $\omega \in \overline{\Omega}$ we get that
\begin{align*}
V^{+}\left(\o,1+\frac{h}{2x} Y(\o)\right)&\leq V^{+}\left(\o,1+ c \Sigma_{i=1}^{d'} |y_{i}(\o)|\right).
\end{align*}
We set
$$L(\cdot):=  V^{+}\left(\o,1+ c \Sigma_{i=1}^{d'} |y_{i}(\o)|\right)1_{d'>0}+ V^{+}(\cdot,1) + C(\cdot).$$
As $d'=Dim (\mbox{Aff} (D_1))$ it is clear  that  $L$ does not  depend on $x$. It is also clear that $L$ is $\Hc$-measurable.\\
Then using  \eqref{inf1}, \eqref{sup1} and \eqref{Ineq1} we obtain that for all $\omega \in \overline{\Omega}$
$$V^{+}(\o,x+hY(\o))  \leq ((2x)^{\overline{\gamma}}K+1) L(\o).$$
Note that the first term in $L$ is used in the above inequality if $x \neq 0$ and $Dim(\mbox{Aff} (D_x) )>0$. The second and the third one are there for both  the case of $Dim(\mbox{Aff} (D_x) )=0$ and the case of $x=0$ and $Dim(\mbox{Aff} (D_x) )>0$.
As by Assumptions \ref{ae1} and \ref{dimanche}, $E(V^{+}(\cdot,1)+C(\cdot)) < \infty$,  it remains to prove that $d'>0$ implies
$E \left(V^{+}\left(\cdot,1+ c \Sigma_{i=1}^{d'} |y_{i}(\cdot)|\right)\right)<\infty$.\\
Introduce $W$, the finite set of $\mathbb{R}^d$  whose coordinates on $(e_{1},\dots,e_{d'})$  are $1$ or $-1$  and $0$ on $(e_{d'+1},\dots e_{d})$. Then  $W \subset \mbox{Aff} (D_1)$ and the vectors of $W$ will be denoted by $\theta^{j}$ for $j \in \{1,\dots, 2^{d'}\}$.
Let ${\theta}^{\omega}$ be the vector whose coordinates on $(e_{1},\dots,e_{d'})$ are $(sign(y_{i}(\omega)))_{i=1\dots d'}$ and $0$ on  $(e_{d'+1},\dots e_{d})$. Then $\theta^{\omega} \in W$ and we get that
\begin{align*}
 & V^{+}\left(\o,1+c \Sigma_{i=1}^{d'}  |y_{i}(\o)|\right) =
V^{+}(\o,1+ c\theta^{\o} Y(\o)) \leq  \sum_{j=1}^{2^{d'}} V^{+}(\o,1+ c \theta^{j} Y(\o)).
\end{align*}
So to prove that $E L <\infty$ it is sufficient to prove that if $d'>0$  for all $1\leq j \leq 2^{d'}$, $E V^{+}(\cdot,1+ c \theta^{j}Y(\cdot))<\infty.$
Recall that $\theta^{j} \in \mbox{Aff} (D_1)$. Let $ri(D_{1})=\{y \in D_{1}, \, \exists \alpha>0 \, s.t \;  \mbox{Aff} (D_1) \cap B(y,\alpha) \subset D_1\}$ \footnote{Here $B(y,\alpha)$ is the ball  of $\mathbb{R}^{d}$ centered at $y$ and with radius $\alpha$.} denote the relative interior of $D_{1}$. As $D_{1}$ is convex and non-empty (recall $d'>0$), $ri(D_{1})$ is also non-empty and convex and we fix some $e^{*} \in ri(D_{1})$. We prove that $\frac{e^{*}}{2} \in ri(D_{1})$. Let $\alpha>0$ be  such that $\mbox{Aff} (D_1) \cap B(e^*,\alpha) \subset D_1$ and  $g \in \mbox{Aff} (D_1) \cap B(\frac{e^*}{2},\frac{\alpha}{2})$. Then $2g  \in \mbox{Aff} (D_1) \cap B({e^*},{\alpha})$ (recall that $ \mbox{Aff} (D_1)$ is actually a vector space) and thus $2g \in D_1$. As $D_{1}$ is convex and $0 \in D_{1}$, we get that $g \in  D_1$ and $\mbox{Aff} (D_1) \cap B(\frac{e^*}{2},\frac{\alpha}{2}) \subset D_1$ which proves that $\frac{e^{*}}{2} \in ri(D_{1})$.  Now let $\varepsilon_{j} $ be such that $\varepsilon_{j}(\frac{c}{2} \theta^{j}-\frac{e^{*}}{2})\in B(0,\frac{\alpha}{2})$. It is easy to see that one can chose $\varepsilon_{j} \in (0,1)$. Then as $\bar{e}^{j}:=\frac{e^{*}}{2}+ \frac{\varepsilon_{j}}{2}(c \theta^{j}-e^{*}) \in \mbox{Aff} (D_1)\cap B(\frac{e^*}{2},\frac{\alpha}{2})$ (recall that $\theta^{j} \in W \subset \mbox{Aff} (D_1)$), we deduce that $\bar{e}^{j} \in D_{1}$. Using \eqref{elasticplus} we obtain that for $Q$-almost all $\o$
\begin{align*}
V^{+}(\o,1+c \theta^{j}Y(\o))&=V^{+}(\o,1+e^{*}Y(\o)+(c \theta^{j}-e^{*})Y(\o))\\
& \leq \left(\frac{2}{\varepsilon_{j}}\right)^{\overline{\gamma}}K\left[V^{+}\left(\o,\frac{\varepsilon_{j}}{2}(1+e^{*}Y(\o))+\frac{\varepsilon_{j}}{2}(c \theta^{j}-e^{*})Y(\o)+ \frac{1}{2}\right)+C(\o)\right]\\
& \leq \left(\frac{2}{\varepsilon_{j}}\right)^{\overline{\gamma}}K\left[V^{+}\left(\o,\frac{1}{2}+\frac{e^{*}}{2}Y(\o)+\frac{\varepsilon_{j}}{2}(c \theta^{j}-e^{*})Y(\o)+ \frac{1}{2}\right)+C(\o)\right]\\
& \leq \left(\frac{2}{\varepsilon_{j}}\right)^{\overline{\gamma}}K\left[V^{+}(\o,1+\bar{e}^{j}Y(\o))+C(\o))\right],
\end{align*}
where the second inequality follows from the fact that $1+e^{*}Y(\cdot) \geq 0$ $Q$-a.s  (recall that $e^{*} \in ri(D_{1})$) and the monotonicity property of $V$ in Assumption \ref{utilite}. Note that the above inequalities are true even if $1+c \theta^{j}Y(\o)<0$ since \eqref{elasticplus} (see remark \ref{remext}) and the monotonicity property of $V$ hold true for all $x \in \mathbb{R}$.\\
From Assumption \ref{dimanche} we get that $E V^{+}(\cdot,1+\bar{e}^{j}Y(\cdot)) <\infty$ (recall that $\bar{e}_{j} \in D_{1}$) and
Assumption \ref{ae1} implies $E C<\infty$, therefore $E V^{+}(\cdot,1+ c \theta^{j}Y(\cdot)) <\infty$ and \eqref{Vplus} is proven for $h \in D_{x}$. Now let $h \in \mathcal{H}_{x}$ and $h'$ its orthogonal projection on $D$, then $h Y(\cdot)=h'Y(\cdot)$ $Q$-a.s (see Remark \ref{proj}). It is clear that $h' \in D_{x}$ thus $V^{+}(\cdot,x+hY(\cdot))=V^{+}(\cdot,x+h'Y(\cdot))$ $Q$-a.s and \eqref{Vplus} is true also for $h \in \mathcal{H}_{x}$.
\end{proof}\\

To conclude,  the following lemma was used in the proof of Theorem \ref{main}.
\begin{lemma}
\label{AOAAPP}
Assume that (NA) holds true. Let $\phi \in \Phi$ such that $V_T^{x,\phi} \geq 0$ $P$-a.s, then
$V_t^{x,\phi} \geq 0$ $P_t$-a.s.
\end{lemma}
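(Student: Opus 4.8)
The plan is to argue by contradiction for each fixed $t$, building from $\phi$ a costless strategy whose terminal wealth is nonnegative everywhere and strictly positive on the (putatively nonnegligible) set where $V_t^{x,\phi}<0$, thereby violating (NA). Throughout, the case $t=T$ is exactly the hypothesis, so I would fix $1\leq t\leq T-1$ and set $A:=\{\omega^t\in\Omega^t:\ V_t^{x,\phi}(\omega^t)<0\}$. Since $V_t^{x,\phi}$ is $\Fc_t$-measurable we have $A\in\Fc_t$, and because $V_t^{x,\phi}$ depends only on the first $t$ coordinates while $P=P_T$ admits $P_t$ as its marginal on $\Omega^t$ by the kernel decomposition \eqref{decompoP} (Fubini, Lemma \ref{fubini0}), we have $P(A)=P_t(V_t^{x,\phi}<0)$. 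The goal is thus to show $P(A)=0$.

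Next I would introduce the process $\psi$ defined by $\psi_s:=0$ for $1\leq s\leq t$ and $\psi_s:=1_A\,\phi_s$ for $t+1\leq s\leq T$. This $\psi$ is predictable, hence in $\Phi$: indeed $1_A$ is $\Fc_t$-measurable, therefore $\Fc_{s-1}$-measurable for every $s\geq t+1$, and $\phi_s\in\Xi_{s-1}$, so $\psi_s\in\Xi_{s-1}$. A direct computation then gives
$$V_T^{0,\psi}=\sum_{s=t+1}^T 1_A\,\phi_s\Delta S_s=1_A\left(V_T^{x,\phi}-V_t^{x,\phi}\right).$$
On $A$ we have $V_T^{x,\phi}\geq 0$ $P$-a.s. and $V_t^{x,\phi}<0$, so $V_T^{0,\psi}=V_T^{x,\phi}-V_t^{x,\phi}>0$ $P$-a.s.; off $A$, $V_T^{0,\psi}=0$. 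Hence $V_T^{0,\psi}\geq 0$ $P$-a.s.

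Finally, (NA) forces $V_T^{0,\psi}=0$ $P$-a.s. Since $A\cap\{V_T^{x,\phi}\geq 0\}\subset\{V_T^{0,\psi}>0\}$ and $P(V_T^{x,\phi}\geq 0)=1$, we obtain $P(A)=P(A\cap\{V_T^{x,\phi}\geq 0\})\leq P(V_T^{0,\psi}>0)=0$, i.e. $V_t^{x,\phi}\geq 0$ $P_t$-a.s., as required.

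This is essentially the classical backward propagation of admissibility under no-arbitrage, so there is no real analytic obstacle. The only points requiring care are the predictability of the truncated strategy $\psi$ (which is precisely why one localizes on the $\Fc_t$-set $A$ and only starts trading from time $t+1$ onward) and the identification of the event $A$ under $P$ with the event under its marginal $P_t$, which is immediate from \eqref{decompoP} and Lemma \ref{fubini0}.
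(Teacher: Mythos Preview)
Your proof is correct and follows essentially the same approach as the paper: define the truncated strategy $\psi_s=1_A\phi_s$ for $s>t$ (and $0$ before), compute $V_T^{0,\psi}=1_A(V_T^{x,\phi}-V_t^{x,\phi})$, and derive an arbitrage if $P_t(A)>0$. The only cosmetic difference is that the paper first picks $n=\sup\{t:P_t(V_t^{x,\phi}<0)>0\}$ and argues at that last bad time, whereas you argue directly at an arbitrary $t$; since the terminal nonnegativity $V_T^{x,\phi}\geq 0$ is already given by hypothesis, your direct argument is in fact slightly cleaner and the supremum is unnecessary.
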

\begin{proof}
Assume that there is some $t$ such that $P_t(V_{t}^{{x,\phi}} \geq 0)<1$ or equivalently
$P_t(V_{t}^{{x,\phi}} < 0)>0$ and   let  $n=\sup\{t | P_t(V_{t}^{{x,\phi}}<0)>0\}$.
Then $P_n(V_{n}^{x,\phi}<0)>0$ and for all $s \geq n+1$, $P_s(V_{s}^{x,\phi} \geq 0)=1$.
Let $\Psi_s(\o)=0$ if  $s\leq n$ and  $\Psi_s(\o)=1_A \phi_s(\o)$ if $s\geq n+1$  with
$A=\{V_{n}^{\Phi}<0\}$. Then
\begin{eqnarray*}
{V}^{0,\Psi}_s  =  \sum_{k=1}^s \Psi_s \Delta  S_s =\sum_{k=n+1}^s \Psi_s \Delta  S_s
 =  1_A\left({V}^{x,\phi}_s-{V}^{x,\phi}_n\right)
\end{eqnarray*}
If $s\geq n+1$ $P_s(V_{s}^{x,\phi}\geq 0)=1$ and on $A$, $-{V}^{\Phi}_n>0$ thus $P_T(V_{T}^{0,\Psi}  \geq 0)=1$ and $V_{T}^{0,\Psi}  > 0$ on $A$.
As by the (usual) Fubini Theorem $P_T(A)=P_n(V_{n}^{x,\phi}<0)>0$, we get an arbitrage opportunity. Thus for all $t \leq T$, $P_t(V_{t}^{{x,\phi}} \geq 0)=1$.
\end{proof}\\

\section*{Acknowledgments.}
L. Carassus thanks LPMA (UMR 7599) for support. 
%The authors thank Mikl\'os R\'asonyi for useful discussions.
M. R\'asonyi was supported by the ``Lend\"ulet" grant LP2015-6 of the Hungarian Academy of Sciences.

%\bibliographystyle{plainnat}
%\bibliography{bibliomiklos}

\end{document}